\newcommand\blfootnotea[1]{%
  \begingroup
  \renewcommand\thefootnote{}\footnote{#1}%
  \endgroup
}
\crefname{equation}{equation}{equations}
\crefname{lemma}{lemma}{lemmata}
\crefname{claim}{claim}{claims}
\crefname{theorem}{theorem}{theorems}
\crefname{proposition}{proposition}{propositions}
\crefname{corollary}{corollary}{corollaries}
\crefname{claim}{claim}{claims}
\crefname{remark}{remark}{remarks}
\crefname{definition}{definition}{definitions}
\crefname{fact}{fact}{facts}
\crefname{question}{question}{questions}
\crefname{condition}{condition}{conditions}
\crefname{algorithm}{algorithm}{algorithms}
\crefname{assumption}{assumption}{assumptions}
\crefname{problem}{problem}{problems}
\newtheorem{theorem}{Theorem}[section]
\newtheorem{lemma}[theorem]{Lemma}
\newtheorem{corollary}[theorem]{Corollary}
\newtheorem{claim}[theorem]{Claim}
\newtheorem{definition}[theorem]{Definition}
\newtheorem{fact}[theorem]{Fact}
\theoremstyle{definition}
\newtheorem{remark}[theorem]{Remark}
\newtheorem{problem}[theorem]{Problem}
\newcommand{\cons}{\text{\cons}}
\newcommand{\eps}{\epsilon}
\newcommand{\poly}{\mathrm{poly}}
\newcommand{\polylog}{\mathrm{polylog}}
\newcommand{\Var}{\mathbf{Var}}
\def\D{\mathcal D}
\newcommand{\bbN}{\mathbb N}
\def\R{\mathbb R}
\def\N{\mathbb N}
\def\Z{\mathbb Z}
\newcommand{\cAksparse}{\cA_{\textnormal{$k$-sparse}}}
\newcommand{\cA}{\mathcal{A}}
\newcommand{\cB}{\mathcal{B}}
\newcommand{\cC}{\mathcal{C}}
\newcommand{\cD}{\mathcal{D}}
\newcommand{\cE}{\mathcal{E}}
\newcommand{\cN}{\mathcal{N}}
\newcommand{\cP}{\mathcal{P}}
\newcommand{\cS}{\mathcal{S}}
\newcommand{\cU}{\mathcal{U}}
\newcommand{\cX}{\mathcal{X}}
\newcommand*{\pE}{\tilde{\E}}
\newcommand{\e}{\epsilon}
\newcommand{\paren}[1]{(#1)}
\newcommand{\Paren}[1]{\left(#1\right)}
\newcommand{\brac}[1]{[#1]}
\newcommand{\Brac}[1]{\left[#1\right]}
\newcommand{\abs}[1]{\lvert#1\rvert}
\newcommand{\Abs}[1]{\left\lvert#1\right\rvert}
\newcommand{\Bigabs}[1]{\Big\lvert#1\Big\rvert}
\newcommand{\Set}[1]{\left\{#1\right\}}
\newcommand{\norm}[1]{\lVert#1\rVert}
\newcommand{\Norm}[1]{\left\lVert#1\right\rVert}
\newcommand{\ovl}{\overline}
\newcommand{\iprod}[1]{\langle#1\rangle}
\newcommand{\Iprod}[1]{\left\langle#1\right\rangle}
\newcommand{\hide}[1]{}
\DeclareMathOperator{\supp}{supp}
\DeclareMathOperator*{\pr}{\mathrm{Pr}}
\DeclareMathOperator*{\E}{\mathbf{E}}
\newcommand{\eqdef}{\stackrel{{\mathrm {\footnotesize def}}}{=}}
\def\d{\mathrm{d}}
\newcommand{\tr}{\mathrm{tr}}
\DeclarePairedDelimiter\floor{\lfloor}{\rfloor}
\newcommand{\bbR}{\mathbb R}
\def\colorful{1}
\title{Robust Sparse Mean Estimation via Sum of Squares\blfootnotea{Authors are in alphabetical order.  Part of this work was done while a subset of the authors were visiting the Simons Institute for the Theory of Computing.}}
\author{
Ilias Diakonikolas\thanks{Supported by NSF Medium Award CCF-2107079,
NSF Award CCF-1652862 (CAREER), a Sloan Research Fellowship, and
a DARPA Learning with Less Labels (LwLL) grant.}\\
University of Wisconsin-Madison\\
{\tt ilias@cs.wisc.edu}\\
\and
Daniel M. Kane\thanks{Supported by NSF Medium Award CCF-2107547,
NSF Award CCF-1553288 (CAREER), a Sloan Research Fellowship, and a grant from CasperLabs.}\\
University of California, San Diego\\
{\tt dakane@cs.ucsd.edu}
\and
Sushrut Karmalkar\thanks{Supported by NSF under Grant \#2127309 to the Computing Research Association for the CIFellows 2021 Project.}\\
University of Wisconsin-Madison\\
{\tt skarmalkar@wisc.edu}\\
\and
Ankit Pensia\thanks{Supported by NSF grants NSF Award CCF-1652862 (CAREER), DMS-1749857, 
and CCF-1841190.}\\
University of Wisconsin-Madison\\
{\tt ankitp@cs.wisc.edu}\\
\and
Thanasis Pittas\thanks{Supported by NSF Award CCF-1652862 (CAREER).}\\
University of Wisconsin-Madison\\
{\tt pittas@wisc.edu}\\
}
\begin{document}

\maketitle

\begin{abstract}%
We study the problem of high-dimensional sparse mean estimation in the presence of an $\eps$-fraction of adversarial outliers.
Prior work obtained sample and computationally efficient algorithms 
for this task for identity-covariance subgaussian distributions.
In this work, we develop the first efficient algorithms for robust sparse mean estimation 
without a priori knowledge of the covariance. 
For distributions on $\R^d$ with ``certifiably bounded'' 
$t$-th moments and sufficiently light tails, 
our algorithm achieves error of $O(\eps^{1-1/t})$ with sample complexity $m = (k\log(d))^{O(t)}/\eps^{2-2/t}$.
For the special case of the Gaussian distribution, our algorithm achieves near-optimal error of  
$\tilde O(\eps)$ with sample complexity $m = O(k^4 \polylog(d))/\eps^2$. Our algorithms 
follow the Sum-of-Squares based, proofs to algorithms approach. We complement our upper bounds 
with Statistical Query and low-degree polynomial testing lower bounds, 
providing evidence that the sample-time-error tradeoffs achieved by our algorithms
are qualitatively the best possible.
\end{abstract}

\setcounter{page}{0}

\thispagestyle{empty}

\newpage

\section{Introduction} %
\label{sec:introduction}

High-dimensional robust statistics~\cite{HampelEtalBook86, Huber09}
aims to design estimators that are tolerant to a {\em constant fraction} 
of outliers, independent of the dimension. Early work in this field, see, e.g.,~\cite{Tukey60, Huber64, Tukey75}, 
developed sample-efficient robust estimators for various basic tasks, alas with runtime exponential in the dimension. 
During the past five years, a line of work in computer science, starting with~\cite{DKKLMS16, LaiRV16}, 
has developed the first {\em computationally efficient} robust high-dimensional estimators 
for a range of tasks. This progress has led to a revival of robust statistics from an algorithmic perspective,
see, e.g.,~\cite{DK19-survey, DK+:cacm21} for recent surveys.

Throughout this work, we focus on the following standard contamination model.

\begin{definition}[Strong Contamination Model]
Fix a parameter $0< \eps<1/2$. We say that a set of $m$ points 
is an $\e$-corrupted set of samples from a distribution $D$ %
if it is generated as follows:
First, a set $S$ of $m$ points is sampled i.i.d.\ from $D$. 
Then an adversary observes $S$, replaces any $\e m$ of points in $S$ 
with any vectors they like to obtain the set $T$. 
We say that $T$ is an $\eps$-corrupted version of $S$.
\end{definition}

Here we study high-dimensional robust statistics tasks in the presence
of {\em sparsity constraints}. Leveraging sparsity in high-dimensional datasets
is a fundamental and practically important problem 
(see, e.g.,~\cite{Hastie15} for a textbook on the topic).
We focus on arguably the most fundamental such problem, 
that of {\em robust sparse mean estimation}. Specifically,
we are given an $\eps$-corrupted set of samples 
from a structured distribution $D$, whose unknown 
mean $\mu = \E_{X \sim D}[X] \in \R^d$ is $k$-sparse (i.e., supported on at most $k$ coordinates), 
and we want to compute  a good approximation $\widehat{\mu}$ of $\mu$. 
Importantly, in the sparse setting, we have access to much fewer samples 
compared to the dense case --- namely $\poly(k, \log d)$ 
instead of $\poly(d)$. Consequently, the design and analysis of 
algorithms for robust sparse estimation requires additional ideas, as compared to 
the standard (dense) setting~\cite{DKKLMS16}.

Prior work on robust sparse mean 
estimation~\cite{BDLS17,li18thesis,DKKPS19-sparse,CDKGS21-sparse}
focused on the case that the covariance matrix of the inliers 
is {\em known or equal to the identity}.
For identity covariance distributions with sufficiently good concentration 
(specifically, subgaussian concentration),
the aforementioned works give efficient algorithms for robust $k$-sparse mean estimation 
that use $\poly(k, \log(d), 1/\eps)$ samples and achieve near-optimal $\ell_2$-error of $\tilde{O}(\eps)$. 
On the other hand, if the covariance matrix of the inlier distribution 
is {\em unknown} and spectrally bounded by the identity, 
the techniques in these works can at best achieve error of $O(\sqrt{\eps})$, even for the special case
of the Gaussian distribution. One can of course use a robust covariance estimation algorithm 
to reduce the problem to the setting of known covariance.
The issue is that the covariance matrix is not necessarily sparse, 
and therefore naive attempts of robustly estimating the covariance 
(e.g., with respect to Frobenius or Mahalanobis distance) 
would require $\poly(d)$ samples. 

Motivated by these drawbacks of prior work, in this paper we aim to design computationally 
efficient algorithms for robust sparse mean estimation, 
using $\poly(k, \log(d), 1/\eps)$ samples, 
{\em that achieve near-optimal error guarantees without a priori knowledge of the covariance matrix}. 
Our main contribution is a comprehensive picture of the tradeoffs
between sample complexity, running time, and error guarantee for a range of inlier distributions.
In more detail, for distributions with appropriate tail bounds and ``certifiably bounded'' $t$-th moments 
{in sparse directions} (see \Cref{def:bounded-moments-k-sparse}), 
we give an efficient algorithm that achieves error $O(\eps^{1-1/t})$. 
For the special case of the Gaussian distribution, we give an algorithm with near-optimal error of $\tilde{O}(\eps)$.
For both settings, we establish Statistical Query (SQ) lower bounds (and low-degree polynomial testing lower bounds)
which give evidence that the error-sample-time tradeoffs achieved 
by our algorithms are qualitatively the best possible.

\subsection{Our Results}
We start by recalling prior results for the dense robust mean estimation of bounded moment distributions.
We say that a distribution $D$ on $\R^d$ with mean $\mu$ has $t$-th central moments bounded by $M$ 
if for all unit vectors $v$ it holds $\E_{X \sim D}\Brac{\iprod{v, X - \mu}^t} \leq M$.  
Although it is information-theoretically possible to robustly estimate the mean of such distributions, in the $\ell_2$-norm,
up to error $O(M^{1/t}\eps^{1-1/t})$ using $O(d/\eps^{2-2/t})$ samples 
(see \Cref{app:info-th-sample-comp} for the proof), 
all known efficient algorithms require the following stronger assumption. 

\begin{definition}[Certifiably $(M,t)$-Bounded Central Moments] \label{def:Sosbddmoments}
We say that a distribution $D$ on $\R^d$ with mean $\mu$ has $t$-th central moments \emph{certifiably} bounded by $M$ 
if   $M \, \|v\|_{2}^t- \E_{X \sim D}\Brac{\iprod{v, X - \mu}^t}$ 
can be written as a sum of square polynomials in $v= (v_1, \ldots, v_d)$ of degree $O(t)$. 
\end{definition}

Prior works~\cite{kothari2017outlier, HopkinsL18} gave efficient algorithms for dense
robust mean estimation of distributions with certifiably bounded central moments.
Their algorithms incur sample complexities at least $m=\poly(d^t)/\eps^2$, have running times $\poly((md)^t)$, 
and guarantee $\ell_2$-error of $O(M^{1/t}\eps^{1-1/t})$.  \footnote{ For simplicity of the exposition, 
we will not account for bit complexity in this section. In essence, we assume that the bit complexity 
of all relevant parameters is bounded by $\poly(md)$. 
}

We now turn our attention to the sparse setting, which is the focus of the current work.
In prior work, the term ``sparse mean estimation'' refers to the task of 
computing a $\widehat{\mu}$ such that $\widehat{\mu} - \mu$ 
is small in $\ell_2$-norm, \emph{assuming that $\mu$ is $k$-sparse}. 
We note that estimating a sparse vector in the $\ell_2$-norm is a special case 
of estimating an arbitrary vector in the $(2,k)$-norm, defined below (see \Cref{fact:sparseTruncation}). 

\begin{definition}[(2, $k$)-norm]
We define the $(2, k)$-norm of a vector $x$ to be the maximum correlation with any $k$-sparse unit vector, i.e., 
$\Norm{x}_{2,k} \eqdef \max_{\Norm{v}_2=1, v: k-\mathrm{sparse}} \iprod{v, x}$. 
\end{definition}
We henceforth focus on this more general formulation; 
we will use the term ``sparse mean estimation'' to mean 
that the error guarantees are defined with respect to the $(2, k)$-norm.

For distributions with $(M,t)$-bounded central moments, the information-theoretically optimal 
error for robust sparse mean estimation is $O(M^{1/t}\eps^{1-1/t})$ and 
can be obtained with $(k\log (d/k))/\eps^{2-2/t}$ samples (see \Cref{app:info-th-sample-comp} for the simple proof).

Our first result is a computationally efficient robust sparse mean estimation algorithm that applies 
to any distribution $D$ with certifiably bounded $t$-th moments in $k$-sparse directions 
(\Cref{def:bounded-moments-k-sparse}) and light tails. 
In particular, we assume that $D$ has subexponential tails, i.e., for some universal constant $c$, for all unit vectors $v$ 
and all $p\in \mathbb{N}$ it holds $\E_{ X  \sim D}\Brac{|\langle v, X - \E_{X \sim D}[X] \rangle|^p}^{1/p} \leq cp$. 
(In fact, our algorithmic result  holds as long the distribution $D$ has bounded
$\poly(t\log d)$ moments along coordinate axes; see \Cref{sec:sampling} and \Cref{sec:concentration_appendix}.)
Our algorithm achieves error $O(M^{1/t}\eps^{1-1/t})$ with $m= \poly((k \log d)^t)/ \eps^{2-2/t}$ samples 
and $\poly((md)^t)$ running time.

\begin{theorem}[Robust Sparse Mean Estimation for Certifiably Bounded Moments]\label{thm:main-poincare-informal}
Let $t$ be a power of two, $D$ be a distribution on $\R^d$ with unknown mean $\mu$, 
and $\eps<\eps_0$ for a sufficiently small constant $\eps_0>0$.
Suppose that $D$ has $t$-th moments certifiably bounded in $k$-sparse 
directions by $M$ (cf.\ \Cref{def:bounded-moments-k-sparse}) and subexponential tails.
There is  an algorithm which, given $\eps$, $M$, $t$, $k$, and an $\eps$-corrupted set of 
$m = (t k \log d)^{O(t)}~\max(1,M^{-2})/\eps^{2-2/t}$ samples from $D$, runs in time $\poly((md)^{t^2})$,
and returns a vector $\widehat{\mu}$ satisfying 
$\Norm{\widehat{\mu} - \mu}_{2,k} \leq O( M^{1/t} \eps^{1-1/t})$ with high probability.
\end{theorem}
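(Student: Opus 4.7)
The plan is to follow the sum-of-squares ``proofs-to-algorithms'' paradigm (in the spirit of Kothari--Steinhardt--Steurer and Hopkins--Li) adapted to the sparse setting. I would formulate a degree-$O(t)$ SoS relaxation that searches over pseudo-distributions on weight vectors $w \in [0,1]^m$ selecting a $(1-\eps)$-fraction of the corrupted input, subject to the polynomial axiom that the reweighted empirical distribution has $t$-th central moments bounded by $O(M)$ in every $k$-sparse direction. The mean estimate is read off from the degree-one pseudo-moments, and correctness follows by combining (i) concentration showing the clean sample satisfies the axiom, (ii) an SoS identifiability lemma showing all feasible solutions have nearby means in $(2,k)$-norm, and (iii) a rounding step.

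\textbf{Concentration.} I would first show that an uncorrupted i.i.d.\ sample of size $m=(tk\log d)^{O(t)}/\eps^{2-2/t}$ empirically inherits certifiably bounded $t$-th moments (with constant, say, $2M$) in all $k$-sparse directions. Restricting to $k$-sparse test directions cuts the effective ``dimension'' to $\binom{d}{k}\le d^k$ coordinate subsets; combining this with the subexponential tail assumption (which yields concentration for each of the $d^{O(t)}$ degree-$t$ monomials inside a fixed coordinate block) and taking a union bound over subsets and monomials gives the stated sample complexity. The crux is that the SoS \emph{certificate}, not merely a scalar moment bound, must transfer: each coefficient in the SoS decomposition guaranteed by \Cref{def:bounded-moments-k-sparse} must itself concentrate uniformly. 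Feasibility of the SoS program is then witnessed by $w_i=\mathbf{1}[X_i\in S\cap T]$, the indicator of the samples the adversary left untouched.

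\textbf{Identifiability and rounding.} The key identifiability lemma, proved inside the SoS proof system, states that if two weighted empirical distributions on $T$ each satisfy the moment axiom and share a $(1-O(\eps))$-fraction of mass, then their means differ by at most $O(M^{1/t}\eps^{1-1/t})$ in every $k$-sparse direction. The proof is the standard pairing-plus-H\"older argument: expand $\iprod{v,\mu_A-\mu_B}^t$ as a symmetric-difference average of terms $\iprod{v, X_i-X_j}^t$, apply an SoS H\"older inequality to factor out an $\eps^{t-1}$ mass penalty, and bound the remaining degree-$t$ moment using the axiom on each side. Taking the pseudo-expectation of this SoS identity over $k$-sparse unit directions yields $\Norm{\widehat\mu - \widehat\mu_{S\cap T}}_{2,k}= O(M^{1/t}\eps^{1-1/t})$; a final concentration bound for $\widehat\mu_{S\cap T}$ around $\mu$ in the $(2,k)$-norm closes the argument. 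The semidefinite program has size $\poly((md)^t)$, giving the claimed runtime.

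\textbf{Main obstacle.} The principal technical difficulty is the concentration step: transferring an SoS certificate for the population moment bound to the empirical distribution without paying $d^{O(t)}$ samples. The $k$-sparse restriction on test directions is precisely what makes this feasible, and the subexponential (or more generally, $\poly(t\log d)$-moment) tail assumption is required for uniform control over all $\binom{d}{k}\cdot d^{O(t)}$ events that arise in the union bound. A secondary subtlety is that the reweighted mean $\mu_w=\sum_i w_i X_i / \sum_i w_i$ appearing in the moment axiom is nonlinear in $w$; one must either introduce a fresh center variable $\mu'$ together with the axiom $\sum_i w_i(X_i-\mu')=0$, or encode the bounded-moment condition in a quotient-free form, so that the entire program remains polynomial in the optimization variables and amenable to the SoS machinery.
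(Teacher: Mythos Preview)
Your overall architecture---SoS program with ghost samples / weights, an SoS identifiability proof via H\"older, and a concentration step showing the clean empirical distribution is feasible---matches the paper exactly, and your treatment of identifiability and of the nonlinearity in $\mu_w$ is correct and essentially what the paper does.

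The concentration step, however, is where the paper's main new idea lives, and your proposal does not quite capture it. You propose a union bound over $\binom{d}{k}$ coordinate subsets together with ``each coefficient in the SoS decomposition must itself concentrate.'' But the SoS decomposition in \Cref{def:bounded-moments-k-sparse} is for the \emph{population} polynomial, which is deterministic; there is nothing random in its coefficients to concentrate. The real issue is that the \emph{empirical} moment polynomial need not admit any SoS certificate a priori, and having a scalar moment bound (or even a per-subset SoS proof) in each of the $\binom{d}{k}$ coordinate subspaces does not obviously assemble into a single SoS proof modulo the $\cAksparse$ axioms, because the proof must be a polynomial in both $v$ and the subset-indicator variables $z$.

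The paper's solution is quite different: it proves a new SoS inequality (\Cref{claim:sparse-upper-bound}) stating that for \emph{any} degree-$t$ polynomial $p(v)=\sum_T a_T v_T$, one has $\cAksparse \sststile{2t}{v,z} p(v)^2 \le k^t \max_T a_T^2$. Applying this to $p(v)=\widehat{\E}[\langle v,X-\bar\mu\rangle^t]-\E[\langle v,X-\mu\rangle^t]$ reduces the problem to bounding the $\ell_\infty$-norm of the difference of the two $t$-th moment tensors (over \emph{all} $d^t$ entries, no subset union bound needed), which is handled by elementary concentration under subexponential tails. The empirical certificate then follows by the SoS triangle inequality from the population certificate plus this bounded difference. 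This decoupling---SoS handles sparsity, probability handles only $\ell_\infty$ tensor concentration---is the paper's key technical contribution for this theorem and is what your sketch is missing.
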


It is natural to ask which distributions have such ``certifiably bounded moments in $k$-sparse directions''. 
In the dense case, \cite{KStein17} showed that \Cref{def:Sosbddmoments} is satisfied by $\sigma$-Poincar\'e distributions. 
(A distribution $D$ is $\sigma$-Poincar\'e if for every differentiable $f:\R^d \rightarrow \R$,  
$\Var_{X \sim D}[f(X)] \leq \sigma^2 \E_{X \sim D} [ \Norm{\nabla f(X)}_2^2 ]$).  
We show in \Cref{sec:poincareAppendix} that this class  also has certifiably bounded moments in $k$-sparse directions, 
in the sense of \Cref{def:bounded-moments-k-sparse}. 
Combining this with the fact that the tails of $\sigma$-Poincar\'e distributions 
are inherently subexponential, \Cref{thm:main-poincare-informal} is applicable.

We complement our upper bound with a Statistical Query (SQ) lower bound (and low-degree testing lower bound), 
which gives evidence that the factor $k^{O(t)}$ in the sample complexity of \Cref{thm:main-poincare-informal} 
might be necessary for efficient algorithms. 

We remind the reader that SQ algorithms~\cite{Kearns:98} 
do not draw samples from the data distribution, 
but instead have access to an oracle that can return 
the expectation of any bounded function (up to a desired additive error). 
Specifically, an SQ algorithm is able to perform 
adaptive queries to a  $\mathrm{STAT}(\tau)$ oracle, which we define below. 
\begin{restatable}[STAT Oracle]{definition}{STATDEF} \label{def:stat}
Let $D$ be a distribution on $\R^d$. A statistical query is a bounded function $f : \R^d \to [-1,1]$. 
For $\tau>0$, the $\mathrm{STAT}(\tau)$ oracle responds to the query $f$ with a value $v$ 
such that $|v - \E_{X \sim D}[f(X)] | \leq \tau$.
We call $\tau$ the tolerance of the statistical query.
\end{restatable}
An SQ lower bound is an unconditional lower bound showing that for any SQ algorithm, 
either the number of queries $q$ must be large or the tolerance of some query, $\tau$, must be small. 
The standard interpretation of SQ lower bounds hinges on the fact that simulating 
a query to $\mathrm{STAT}(\tau)$ using i.i.d.\ samples may require $\Omega(1/\tau^2)$ many samples. 
Thus, an SQ lower bound stating that any SQ algorithm either makes $r$ queries or needs tolerance $\tau$ 
is interpreted as a tradeoff between runtime $\Omega(r)$ and sample complexity $\Omega(1/\tau^2)$.

Recall that a distribution $D$ is subgaussian if there exists an absolute constant $c$
such that for all unit vectors $v$ it holds that 
$\E_{ X  \sim D}\Brac{|\langle v, X - \E_{X \sim D}[X] \rangle|^p}^{1/p} \leq c\sqrt{p}$. 
We show the following (see \Cref{thm:bounded_moment_sparse} for a detailed formal statement). 

\begin{theorem}[SQ Lower Bound for Subgaussian Distributions, Informal Statement] \label{thm:subgaussianSQ_informal}
Fix $t \in \N$ with $t \geq 2$ and assume that $d \geq k^2$ for $k$ sufficiently large. 
Any SQ algorithm that obtains error $o(\eps^{1-1/t})$ for robust sparse mean estimation 
of a subgaussian distribution (with $t$-th moments certifiably bounded in $k$-sparse directions) 
either requires $d^{\Omega(\sqrt{k})}$ statistical queries or makes at least one query with tolerance $k^{-\Omega(t)}$.    
\end{theorem}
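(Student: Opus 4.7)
The plan is to invoke the non-Gaussian component analysis (NGCA) framework, specialized to sparse hidden directions, and combine it with the generalized Statistical Query dimension lower bound of Feldman--Grigorescu--Reyzin--Vempala--Xiao. The argument decomposes into two ingredients: a one-dimensional moment-matching construction that witnesses the error rate $\eps^{1-1/t}$, and a combinatorial packing of $k$-sparse unit vectors of size $d^{\Omega(\sqrt{k})}$ with small pairwise correlations.

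The one-dimensional building block is a distribution $A$ on $\R$ satisfying: (i) the first $t-1$ moments of $A$ agree with $\cN(0,1)$; (ii) $A = (1-\eps)\cN(\eta, 1) + \eps B$ for $\eta = \Theta(\eps^{1-1/t})$ and $B$ supported on $O(t)$ atoms inside a window of size $O(\sqrt{\log(1/\eps)})$; (iii) $A$ has subgaussian tails and $\chi^2(A, \cN(0,1)) = O(1)$. Such an $A$ exists by Gauss-quadrature-style moment matching, and the scale $\eta = \Theta(\eps^{1-1/t})$ is the tight upper bound imposed by the budget constraint $\eps \cdot \E_B[B^t] = O(1)$. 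For each $k$-sparse unit vector $v \in \R^d$, lift $A$ to the product distribution $P_v$ on $\R^d$ whose marginal along $v$ is $A$ and whose marginal on $v^\perp$ is $\cN(0, I)$. Then $P_v$ is an $\eps$-corruption of the subgaussian Gaussian $\cN(\eta v, I)$, whose mean is $k$-sparse with $\Norm{\eta v}_{2,k} = \eta$ and whose $t$-th moments are trivially SOS-certifiable in all directions (being Gaussian); hence $\{P_v\}$ lies inside the hypothesis class of the theorem. By the standard NGCA Hermite computation applied to the marginal along $v$,
\[
\chi_{\cN(0,I)}(P_u, P_v) \;\leq\; \chi^2(A, \cN(0,1)) \cdot \abs{\iprod{u, v}}^{t} \;=\; O\bigparen{\abs{\iprod{u, v}}^t},
\]
so distinguishing the family $\{P_v\}$ is at least as hard as robust sparse mean estimation to error $o(\eta)$.

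It remains to exhibit a family $V$ of $k$-sparse unit vectors with small pairwise inner products. Using the hypothesis $d \geq k^2$, a probabilistic packing (or an explicit Reed--Solomon-type construction) yields $N = d^{\Omega(\sqrt{k})}$ subsets of $[d]$ of size $k$ with pairwise intersection at most $\sqrt{k}$; placing unit vectors with $\pm 1/\sqrt{k}$ entries on these supports produces $|V| = N$ with $\abs{\iprod{u, v}} \leq 1/\sqrt{k}$ for all $u \neq v$ in $V$. The pairwise chi-squared correlation of $\{P_v\}_{v \in V}$ is therefore at most $k^{-t/2}$. Feeding $(N, \gamma, \beta) = (d^{\Omega(\sqrt{k})}, k^{-\Omega(t)}, O(1))$ into the generalized SQ-dimension lower bound of Feldman et al.\ yields the claimed query/tolerance tradeoff.

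The main obstacle is the combinatorial packing step: the simultaneous demands $|V| = d^{\Omega(\sqrt{k})}$ and $\abs{\iprod{u, v}} \leq 1/\sqrt{k}$ are essentially tight and are precisely what force the hypothesis $d \geq k^2$ and the $\sqrt{k}$ exponent in the query bound. A secondary technical care point is ensuring that the one-dimensional $A$ is genuinely subgaussian (rather than merely bounded-moment), so that the lower bound rules out learning the subgaussian class assumed by the theorem; this is the reason for the $O(\sqrt{\log(1/\eps)})$ truncation window in property (ii).
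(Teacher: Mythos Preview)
Your overall architecture---NGCA with sparse hidden directions, a packing of $k$-sparse unit vectors with pairwise inner products $O(k^{-1/2})$, and the Feldman et al.\ SQ-dimension bound---is exactly the paper's. The gap is in the one-dimensional building block.

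With inliers exactly $\cN(\eta,1)$ and $\eta = \Theta(\eps^{1-1/t})$, no valid outlier distribution $B$ can make $A=(1-\eps)\cN(\eta,1)+\eps B$ match even the first \emph{two} moments of $\cN(0,1)$. Matching the mean forces $\E_B[X]=-(1-\eps)\eta/\eps$, and then matching the second moment forces $\E_B[X^2]=1-(1-\eps)\eta^2/\eps$, giving
\[
\Var(B)\;=\;1-\frac{(1-\eps)\eta^2}{\eps}-\frac{(1-\eps)^2\eta^2}{\eps^2},
\]
which is negative once $\eta\gg\eps$, i.e., for every $t\ge 2$. So for $t\ge 3$ (where you need $t-1\ge 2$ matched moments) there is no admissible $B$ at all. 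Relatedly, your claim that $B$ can sit in a window of width $O(\sqrt{\log(1/\eps)})$ contradicts $|\E_B[X]|\approx \eta/\eps=\Theta(\eps^{-1/t})$. Note also that only the \emph{inlier} distribution must be subgaussian for the theorem; requiring $A$ itself to be subgaussian is an unnecessary constraint.

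The paper resolves this by perturbing the inliers rather than keeping them Gaussian: it sets $Q_1(x)=\phi(x-\delta)+\tfrac{1}{1-\eps}\,p(x)\mathbf{1}_{[-1,1]}(x)$ for a degree-$t$ polynomial $p$ (expanded in the Legendre basis with $a_0=a_1=0$ so that $\E_{Q_1}[X]=\delta$), and takes $Q_2=\cN(\delta',1)$ with $\delta'=-(1-\eps)\delta/\eps$. The polynomial bump, living entirely on $[-1,1]$, supplies the extra degrees of freedom to match all $t$ moments while keeping $Q_1$ subgaussian. Because the inliers are now non-Gaussian, one must separately verify that $Q_1$ (and its high-dimensional lift $P_{Q_1,v}$) has $t$-th moments SoS-certifiably bounded in $k$-sparse directions; this does not come for free as it would in your Gaussian proposal, and the paper carries out that check.
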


\noindent For the statement of our low-degree testing lower bound, see~\Cref{sec:low_degree}.
Informally speaking, \Cref{thm:subgaussianSQ_informal} shows that any SQ algorithm 
that returns a $\widehat{\mu}$ satisfying $\|\widehat{\mu} - \mu\|_{2,k} = o(\epsilon^{1-1/t})$ 
requires runtime exponential in $k$, unless it uses queries of tolerance $k^{-\Omega(t)}$ --- 
requiring $k^{\Omega(t)}$ samples for simulation. 
We briefly remark that \Cref{thm:subgaussianSQ_informal} also has implications for the dense setting: 
By taking $k = \sqrt{d}$, \Cref{thm:subgaussianSQ_informal} suggests that $d^{\Omega(t)}$ samples 
may be necessary to efficiently obtain $o(\epsilon^{1-1/t})$ error in the dense setting;
this qualitatively matches the algorithmic results of \cite{kothari2017outlier}.

Interestingly, \Cref{thm:subgaussianSQ_informal} does not apply 
when the inlier distribution is Gaussian, 
i.e., the SQ-hard instance of \Cref{thm:subgaussianSQ_informal} is {\em not} a Gaussian distribution. 
In fact, our next algorithmic result shows that it {\em is} possible to achieve the near-optimal error 
of $\tilde O(\eps) \, \sqrt{\Norm{\Sigma}_2}$ for $\mathcal{N}(\mu, \Sigma)$, using 
$(k^4/\eps^2) \polylog(d/\eps)$ samples.

\begin{theorem}[Robust Sparse Gaussian Mean Estimation]\label{thm:main-gaussian-informal}
Let $k,d \in \Z_+$ with $k\leq d$ and $\eps<\eps_0$ for a sufficiently small constant $\eps_0>0$. 
Let $\mu \in \R^d$ and $\Sigma \in \R^{d \times d}$ be a  positive semidefinite matrix. 
There exists an algorithm which, given $\eps$,$k$, and an $\eps$-corrupted set of samples from $\cN(\mu,\Sigma)$ 
of size $m = O(({k^4}/{\eps^2}) \log^5(d/(\eps)))$, runs in time $\poly(md)$,
and returns an estimate $\widehat{\mu}$ such that 
$\|\widehat{\mu} - \mu \|_{2, k} \leq \tilde{O}(\eps) \, \sqrt{\|\Sigma\|_2}$ with high probability.
\end{theorem}

Information-theoretically, $O(k \log(d/k))/\eps^2$ samples suffice to obtain $O(\eps)$ error 
(see \Cref{thm:sample-comp-bdd-mmnts}). Prior work has given evidence
that $\Omega(k^2)$ samples might be necessary for efficient algorithms 
to obtain dimension-independent error~\cite{DKS17-sq, BrennanB20}.
Perhaps surprisingly, here we establish an SQ lower bound suggesting that the $\Omega(k^4)$ 
sample complexity of our algorithm might be inherent for efficient algorithms to achieve error $o(\eps^{1/2})$ 
(see \Cref{thm:k4lower_bound} for a formal  statement):

\begin{theorem}[SQ Lower Bound for Gaussian Sparse Mean Estimation, Informal Statement]\label{thm:SQ4th_informal}
Let $0 < c < 1$ and assume that $d \geq k^2$ for $k$ sufficiently large.
Any SQ algorithm that performs robust sparse mean estimation of Gaussians with $\Sigma \preceq I$ up to error $o(\sqrt{\eps})$
does one of the following: It either requires $d^{\Omega(c{k^c})}$ queries or makes 
at least one query with tolerance $O(k^{-2 + 2c})$.    
\end{theorem}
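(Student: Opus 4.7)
The plan is to follow the standard non-Gaussian component analysis (NGCA)-based SQ lower bound framework (e.g.,~\cite{DKS17-sq}), adapted to the sparse setting. Specifically, the task reduces to exhibiting a family $\{D_v\}_{v \in V}$ of distributions on $\R^d$, indexed by $k$-sparse unit vectors $v$, such that: each $D_v$ is $\eps$-close in TV to a Gaussian $G_v = \mathcal{N}(\mu_v, \Sigma_v)$ with $k$-sparse mean $\mu_v = \delta v$ (for some $\delta = \Theta(\sqrt{\eps})$) and covariance $\Sigma_v \preceq I$; the pairwise $\chi^2$ correlations $\chi_{\mathcal{N}(0,I)}(D_v,D_{v'})$ are at most $O(k^{-4+4c})$; and $|V| \geq d^{\Omega(ck^c)}$. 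The generic SQ dimension machinery then yields the stated tradeoff, since any algorithm achieving $o(\sqrt\eps)$ error in $(2,k)$-norm must implicitly separate the $D_v$'s (because $\|\mu_v - \mu_{v'}\|_{2,k} \geq \delta(1 - \max_{v \neq v'}|v^\top v'|) = \Omega(\sqrt{\eps})$ for distinct $v, v' \in V$).

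The first and most delicate step is a one-dimensional moment-matching construction. Taking $\delta = \Theta(\sqrt{\eps})$ and $\sigma^2 = 1 - \Theta(1)$, I would construct a distribution $A = (1-\eps)\mathcal{N}(\delta,\sigma^2) + \eps R$ matching the first three moments of $\mathcal{N}(0,1)$, i.e.\ $\E[A] = 0$, $\E[A^2] = 1$, and $\E[A^3] = 0$. The first three moments of the auxiliary distribution $R$ are uniquely determined by these constraints, and solvability of the resulting truncated Hamburger moment problem reduces to the feasibility condition $\delta^2 \lesssim \eps(1-\sigma^2)$, which is satisfied by our choice of parameters. For each $k$-sparse unit $v$, I then define $D_v$ as the distribution with marginal $A$ along $v$ and conditional distribution $\mathcal{N}(0, I_{v^\perp})$ on $v^\perp$. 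Then $D_v$ is $\eps$-close in TV to $G_v = \mathcal{N}(\delta v,\, I - (1-\sigma^2)vv^\top)$, whose covariance satisfies $\Sigma_v \preceq I$. By Mehler's formula applied to the Hermite expansion $A/\phi - 1 = \sum_{i\geq 1} a_i H_i$ (whose degree-$\leq 3$ coefficients vanish),
\[
\chi_{\mathcal{N}(0,I)}(D_v, D_{v'}) \;=\; \sum_{i \geq 4} a_i^2\,(v^\top v')^i \;\leq\; O\bigl((v^\top v')^4\bigr).
\]

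The remaining ingredient is a combinatorial packing $V$ of $k$-sparse $\pm 1/\sqrt{k}$-valued unit vectors in $\R^d$ with $|V| = d^{\Omega(ck^c)}$ and pairwise inner products bounded by $k^{-1+c}$. Using the probabilistic method (sampling supports and signs independently and uniformly), the hypothesis $d \geq k^2$ ensures that random $k$-subsets of $[d]$ typically have small intersection, and a Hoeffding-type concentration bound for the signed inner products, combined with a union bound over the $|V|^2$ pairs, yields the claim. Substituting into the pairwise correlation bound gives $\chi \leq O(k^{-4+4c})$, and invoking the standard SQ hardness theorem (see, e.g., the framework in~\cite{DKS17-sq}) concludes with the stated lower bound of $d^{\Omega(ck^c)}$ queries or tolerance $O(k^{-2+2c})$.

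The main obstacle lies in the 1D construction: simultaneously achieving $\delta = \Omega(\sqrt{\eps})$ and matching three Gaussian moments sits at the boundary of feasibility, forcing $\sigma^2$ to be strictly below 1; moreover, one must choose $R$ so that the $L^2(\phi)$-norm of $A/\phi - 1$ remains bounded uniformly in $\eps$ (otherwise the absolute constant in the pairwise correlation bound would blow up, weakening the SQ lower bound). This pair of constraints is the source of the sharp $\sqrt{\eps}$ threshold in the error exponent. The packing step is also nontrivial: the precise exponents $ck^c$ (queries) and $2-2c$ (tolerance) in the statement arise from the tradeoff between $|V|$ and the pairwise inner-product bound $k^{-1+c}$ in the probabilistic construction, and obtaining the correct constants requires care.
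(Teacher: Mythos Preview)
Your approach is essentially the same as the paper's: both use the NGCA framework of~\cite{DKS17-sq} with a one-dimensional moment-matching distribution $A = (1-\eps)\cN(\delta,\sigma^2) + \eps R$ (with $\delta = \Theta(\sqrt{\eps})$ and $\sigma^2<1$) that matches the first three moments of $\cN(0,1)$, then embed it along $k$-sparse directions and combine a packing of nearly-orthogonal $k$-sparse unit vectors with the standard SQ dimension machinery. The paper imports the 1D construction directly from~\cite[Lemma~E.2]{DKS19} (with $\sigma^2 = 2/3$ and $\delta = \sqrt{\eps}/10001$) rather than redoing the Hamburger analysis you sketch, but this is the same construction.

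Two remarks on points where your write-up diverges slightly. First, your reduction via ``any $o(\sqrt{\eps})$ estimator separates the $D_v$'s'' is replaced in the paper by the cleaner null-hypothesis reduction: an estimator with error below $\delta/2$ distinguishes any $D_v$ from $\cN(0,I_d)$ just by thresholding the norm of the estimate, so hardness of the decision problem $\cN(0,I)$ vs.\ $\{D_v\}$ suffices. This avoids needing the $\|\mu_v-\mu_{v'}\|_{2,k}$ separation you compute. Second, your stated obstacle that ``one must choose $R$ so that the $L^2(\phi)$-norm of $A/\phi-1$ remains bounded uniformly in $\eps$'' is not in fact achievable nor required: because $R$ must have mean $\Theta(-1/\sqrt{\eps})$, the paper gets only $\chi^2(A,\cN(0,1)) = e^{O(1/\eps)}$, and this factor appears explicitly in the tolerance bound of the formal statement (Theorem~\ref{thm:k4lower_bound}). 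In the informal statement $\eps$ is a fixed constant, so this factor is absorbed into the $O(\cdot)$. Your flagging of this as a constraint to be satisfied is the one place you might have gotten stuck.
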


The intuitive interpretation of \Cref{thm:SQ4th_informal} is that any SQ algorithm for this task 
either has runtime $d^{\poly(k)}$ or uses $\Omega(k^{4})$ samples 
(a similar hardness holds for low-degree polynomial tests; see \Cref{thm:hypothesis-testing-hardness}).

\subsection{Overview of Techniques}

To establish \Cref{thm:main-poincare-informal,thm:main-gaussian-informal},
we use the \emph{sum-of-squares} framework, i.e., solve a \emph{sum-of-squares} (SoS) 
SDP relaxation of a system of polynomial inequalities.

\subsubsection{Robust Sparse Mean Estimation with Bounded Moments}

\paragraph{Identifiability in the Presence of Outliers} 

Similar to \cite{kothari2017outlier}, our starting point is a set of polynomial constraints (\Cref{def:axioms}) 
in which the variables try to identify the uncorrupted samples. The program has a vector of variables 
for every sample in the set (we will refer to these variables as ``ghost samples''), 
and enforces that these ghost samples match a $(1-\eps)$-fraction of the data. 
The constraints also enforce that the uniform distribution over the ghost samples 
has $t$-th moments certifiably bounded by $M$ \emph{in $k$-sparse directions} 
(\Cref{def:bounded-moments-k-sparse}).  A key property of an SoS relaxation 
is that it satisfies any polynomial inequality that is true subject to the constraints 
of the original polynomial system, \emph{as long as} this inequality 
has an ``SoS proof'' of degree $t$ (i.e., the difference of the two sides is a sum of square polynomials, 
where each polynomial is of degree at most $t$). We give an SoS proof of the fact 
that the mean of the ghost samples is close to the true mean in $k$-sparse directions 
(proof of identifiability; see \Cref{sec:proof_of_main_theorem}).

\vspace{-0.2cm}

\paragraph{Sampling Preserves Certifiably Bounded Moments} 

For our identifiability proof (and to ensure feasibility of our program), 
we require that the uniform distribution over the uncorrupted samples 
satisfies certifiably bounded $t$-th moments in $k$-sparse directions. 
However, we initially know only that the distribution from which these samples 
are drawn has $t$-th moments certifiably bounded by $M$ in $k$-sparse directions. 
Given that the underlying inlier distribution satisfies certifiably bounded moments, 
we need to show that the property transfers to the empirical distribution. 
In the dense case, it is relatively easy to prove such a concentration result in $\poly(d)$ samples 
for all $v \in \R^d$ via spectral concentration inequalities. 
In contrast, doing this in the sparse setting with $\poly(k)$ samples requires an alternate approach.

The first step towards showing this is \Cref{claim:sparse-upper-bound}, 
which states that for all polynomials $r(v) = \sum_{I \in [d]^t} r_I \prod_{j \in [t]} v_{I_j}$ 
and $k$-sparse vectors $v$, the inequality $r(v)^2 \leq k^t \max_{I \in [d]^t} r_I^2$ has an SoS proof. 
Applying this to the true and empirical moments,
$p(v) := \E_{X \sim D}\Brac{\iprod{v, X- \mu}^t}$ and 
$\widehat p(v) = \E_{i \sim [m]}\Brac{\iprod{v, X_i-\overline \mu}^t}$, 
we see that there is an SoS proof of 
$(p(v)-\widehat{p}(v))^2 \leq k^t\| \E_{i \sim [m]}[(X_i - \mu)^{\otimes t}] - \E_{X \sim \D}[(X - \mu)^{\otimes t}] \|_{\infty}^2$. 

In \Cref{sec:concentration_appendix}, we show concentration 
of the $\ell_\infty$-norm of the aforementioned tensor.  
This happens because of a union bound that we perform in the concentration result. 
Note that while we need only $O(\log(d))$ moments to be bounded for concentration, 
prior work in the sparse setting assumes that the distribution has known covariance and 
is either subgaussian or subexponential.

\subsubsection{Achieving Near-optimal Error for Gaussian Inliers} \label{sec:techniques_gaussian}

The first polynomial-time algorithm for robustly learning an arbitrary Gaussian (in the dense setting) was given in
\cite{DKKLMS16}. In particular, that work shows how to robustly estimate the mean in $\ell_2$-norm 
and the covariance in Mahalanobis norm up to an error of $\tilde{O}(\eps)$ using $\tilde{O}(d^2/\eps^2)$ samples. 
It is not clear how to directly adapt the approach of \cite{DKKLMS16} to the sparse setting, 
while achieving the desired sample complexity of $\poly(k, \log(d), 1/\eps)$.
Our starting point will be the recent work by~\cite{kothari2021polynomial} that was able to (nearly) match the aforementioned
guarantees (for the dense Gaussian setting) using the SoS method. The difficulty of achieving this lies in the fact that 
standard SoS approaches usually require concentration of degree-$t$ polynomials to get error $O(\eps^{1-1/t})$; 
and $t$ would need to be $\sqrt{\log(1/\eps)}$ to make this $\tilde{O}(\eps)$.
\cite{kothari2021polynomial} was able to achieve this result using SoS certifiability of bounded moments 
only up to degree $4$.

We now explain how we adapt the approach of \cite{kothari2021polynomial} to the sparse setting.
Assume that the covariance of the inliers is spectrally bounded, namely that $\norm{\Sigma}_2 \leq 1$. 
For the dense result obtained in~\cite{kothari2021polynomial}, 
it suffices to show SoS proofs of \emph{multiplicative} concentration inequalities 
for Gaussian polynomials of degree up to $4$. In the absence of sparsity constraints, 
this is done by  standard spectral matrix concentration. 
Unfortunately, the technique from the previous section only gives us an \emph{additive} concentration inequality. 
This difference is significant and makes it challenging to obtain 
a guarantee scaling with $\Norm{\Sigma}_2$. To circumvent this issue, we add  independent noise to each sample generated as $\cN(0, I)$, 
which ensures that $I \preceq \Sigma \preceq 2I$, while keeping the mean unaffected. 
We thus end up with an estimator of $\tilde{O}(\eps)$ error for the case that $\Sigma \preceq 2I$. 
On the other hand, if $\norm{\Sigma}_2$ is much smaller than $1$, 
then the right error guarantee is $\tilde{O}(\eps) \, \sqrt{\norm{\Sigma}_2} $. 
In \Cref{app:lepski}, we use Lepskii's method~\cite{lepskii1991problem} 
to improve the error guarantee so that it scales with $\Norm{\Sigma}_2$, as desired. 
We first obtain a rough estimate of $\|\Sigma\|_2$ that is within $\poly(d)$ factor away 
from the true value (by taking the median of $\norm{X_i - X_j}_2$, where the $X_i$ are samples).
We next run our robust estimation algorithm after convolving the data with noise at various scales $\sigma$ 
and getting a corresponding estimate. With high probability, whenever our candidate upper bound, $\sigma$, 
is bigger than $\sqrt{\norm{\Sigma}_2}$, we get a point within distance $\tilde O (\eps) \sigma$ of the true mean. 
We then find the smallest value of $\sigma$ such that the output is consistent 
with the larger values of $\sigma$ and return the corresponding estimate. 

\subsubsection{Statistical Query and Low-Degree Testing Lower Bounds}

Our SQ lower bounds leverage the framework of \cite{DKS17-sq} which showed the following: 
If $A$ is a one-dimensional distribution matching its first $m$ moments 
with $\cN(0,1)$, then distinguishing between $\cN(0,I)$ and the $d$-dimensional distribution 
that coincides with $A$ in an unknown $k$-sparse direction 
but is standard Gaussian in all perpendicular directions requires either $q = d^{\poly(k)}$ queries 
or tolerance $\tau < \frac{1}{k^{(m+1)/2}}$ in the SQ model. 
The robust sparse mean estimation problems that we consider can be phrased in this form and
the challenge is to construct the appropriate moment-matching distributions.

In \Cref{thm:SQ4th_informal}, we show a lower bound of $\Omega(k^4)$ on the sample complexity
of any efficient SQ algorithm that robustly estimates a sparse mean within error $o(\sqrt{\eps})$.
Interestingly, this lower bound nearly matches the sample complexity of our algorithm (\Cref{thm:main-gaussian-informal}). 
We view this information-computation tradeoff as rather surprising. 
Recall that in the (easier) case where the covariance of the inliers is known to be the identity, 
$O(k^2 \log d)$ samples are sufficient for efficient algorithms~\cite{Li17-sparse}, and there is evidence that this number of samples is also necessary for efficient algorithms~\cite{DKS17-sq, BrennanB20}.
To prove our SQ lower bound in this case, we need to construct a univariate density $A$ 
that matches (i) $m=3$ moments with $\cN(0,1)$, and (ii)
$A$ is $\eps$-corruption of $\cN(\Theta(\sqrt{\eps}),1)$.
To achieve this, we leverage a lemma from~\cite{DKS19} that 
lets $A$ have a Gaussian inlier component with mean $\Theta(\sqrt{\eps})$ 
and variance slightly smaller than $1$. A suitable outlier component 
can then correct the first three moments of the overall mixture, 
so that they match the moments of $\cN(0,1)$. 

A more sophisticated choice of $A$ is needed to establish our \Cref{thm:subgaussianSQ_informal}. 
Specifically, we need to select $A = (1-\eps) G + \eps B$, where (i) $A$ matches its first $t$ moments with $\cN(0,1)$, 
(ii) $G$ is an explicit subgaussian distribution, and (iii) $\E_{X \sim G} [X] = \Omega(\eps^{1 - 1/t})$.
For $G$, we start with a shifted Gaussian, $\cN(\Theta(\eps^{1 - 1/t}),1)$, 
that we modify by adding an $t$-degree polynomial $p(x)$ in $[-1,1]$. 
Since we modify the Gaussian only on $[-1,1]$, the distribution continues to be subgaussian.
By imposing the moment-matching conditions and expanding $p(x)$ in the basis of Legendre polynomials, 
we show that such a $p(\cdot)$ exists, so that (i)-(iii) above hold. 
We also show that the constructed distributions have SoS certifiable bounded $t$-th moments, 
and hence fall into the class of distributions for which our upper bounds apply 
(see \Cref{sec:sos-cert-hard-instance}). 

Finally, by exploiting the relationship between the SQ model and low-degree polynomial tests from~\cite{BBHLS20}, 
we also obtain quantitatively similar lower bounds against low-degree polynomial tests. 
The information-theoretic characterization of error and sample complexity 
appear in \Cref{app:info-th-sample-comp}.

\subsection{Prior and Related Work}

After the early works of \cite{DKKLMS16,LaiRV16}, the field of high-dimensional algorithmic robust statistics has seen a plethora of research activity.
Focusing on the dense setting, the prior work has obtained computationally-efficient
and sample-efficient algorithms for a variety of problems: 
mean estimation~\cite{DKK+17,DepLec19,DHL19,DiaKP20}, covariance and higher moment estimation~\cite{kothari2017outlier,CDGW19}, 
linear regression~\cite{KlivansKM18,DKS19-lr,PenJL20,BakPra21}, mixture models~\cite{KStein17,HopkinsL18,DKS18-list,LM20,BDJKKV20}, 
and  stochastic convex optimization~\cite{PraSBR20,DiakonikolasKKLSS2018sever}.
We remark that a subset of these algorithms also rely on SoS algorithms.

Finally, we discuss results that leverage sparsity to improve sample complexity for computationally-efficient algorithms.
\cite{BDLS17} presented the first computationally-efficient algorithms for a range of 
sparse estimation tasks including mean estimation. 
However, their estimation algorithm crucially relies on the fact that the inlier distribution is Gaussian with identity covariance.
As opposed to the convex programming approach of \cite{BDLS17}, \cite{DKKPS19-sparse} proposed a spectral algorithm for sparse robust mean estimation of identity-covariance Gaussians.
Recently, \cite{CDKGS21-sparse} proposed a non-convex formulation and showed that any approximate stationary point (that can be obtained by efficient first-order algorithms) suffices.
We reiterate that none of these algorithms give $o(\sqrt{\eps})$ error when the covariance of the inliers is unknown.

Finally, we mention that median-of-means preprocessing has been applied to get $O(\sqrt{\eps})$ error optimally for robust mean estimation~\cite{DepLec19,DiaKP20,HopLZ20,LeiLVZ20}. However, median-of-means preprocessing does not give $o(\sqrt{\eps})$ error even when the inliers are  Gaussian with identity covariance, and thus not applicable to our setting.

\subsection{Organization}

The structure of this paper is as follows: 
In \Cref{sec:prelims}, we define the necessary notation and record basic facts about the SoS framework. 
In \Cref{sec:sparsity}, we define the notion of \emph{certifiably bounded central moments} 
in $k$-sparse directions, and show that this property is preserved under sampling. 
In \Cref{sec:sparse-mean-est}, we give an SoS algorithm for robust sparse mean estimation 
under certifiably bounded central moments in sparse directions, 
establishing \Cref{thm:main-poincare-informal}. 
In~\Cref{sec:Gaussian-O-eps}, we give an efficient estimator that achieves near-optimal error 
for Gaussian distributions with unknown covariance, establishing \Cref{thm:main-gaussian-informal}. 
Finally, in \Cref{sec:sq-lowerbd}, we prove our SQ lower bounds for the previous two settings, 
establishing \Cref{thm:subgaussianSQ_informal,thm:SQ4th_informal}.
For ease of the exposition, some technical proofs are deferred to the appendix.

\section{Preliminaries} \label{sec:prelims}
\paragraph{Basic Notation} 
 We use $\N$ to denote natural numbers and $\Z_+$ to denote positive integers. For $n \in \Z_+$ we  denote $[n] := \{1,\ldots,n\}$. We denote by $\mathbf{1}(\cE)$ the indicator function of the event $\cE$. 
For $a_1(x),\dots,a_d(x)$ polynomials in $x$ and an ordered tuple $T \in [d]^t$, we use $a_T(x)$ to define the polynomial $a_T(x) := \prod_{i\in T}a_i(x)$. We denote by $\R[x_1,\ldots, x_d]_{\leq t}$ the class of real-valued polynomials of degree at most $t$ in variables $x_1,\ldots, x_d$.
 We use $\poly(\cdot)$ to indicate a quantity that is polynomial in its arguments. Similarly, $\polylog(\cdot)$ denotes a quantity that is polynomial in the logarithm of its arguments. For an ordered set of variables $V = \{x_1, \dots, x_n\}$, we will denote $p(V)$ to mean $p(x_1, \dots, x_n)$.
 
\paragraph{Linear Algebra Notation} We use $I_d$ to denote the $d\times d$ identity matrix. We will drop the subscript when it is clear from the context.
We typically use small case letters for deterministic vectors and scalars. We will specify the dimensionality unless it is clear from the context. 
We denote by $e_1,\ldots,e_d$ the vectors of the standard orthonormal basis, i.e., the $j$-th coordinate of $e_i$ is equal to $\mathbf{1}_{\{i=j \}}$, for $i,j \in [d]$.
We use $\cS^{d-1}$ to denote the $d$-dimensional unit sphere.
 For a vector $v$, we let $\|v\|_2$ denote its $\ell_2$-norm. We call a vector $k$-sparse if it has at most $k$ non-zero coordinates. We define the set of $k$-sparse $d$-dimensional unit-norm vectors as $\cU_k^d := \{ x \in \R^d \; : \; \text{$x$ is $k$-sparse}, \|x\|_2=1\}$. We will often drop the superscript when it is clear from the context. 
 We use $\langle v,u \rangle$ for the inner product of the vectors $u,v$.
For a matrix $A$, we use $\|A\|_F, \|A\|_{2}, \|A\|_\infty$ to denote the Frobenius, spectral,  and entry-wise infinity-norm. 
We denote  the trace of $A$ by $\tr(A)$ and  the number of nonzero entries  in $A$ by $\|A\|_0$. For two matrices $A,B \in \R^{m \times d}$, we define the inner product $\langle A,B\rangle := \tr(A^TB)$.
For a matrix $A \in \R^{d \times d}$, we use $A^\flat$ to denote the flattened vector in $\R^{d^2}$, and for a $v \in \R^{d^2}$, we use $v^\sharp$ to denote the unique matrix $A$ such that $A^\flat = v^\sharp$. We say a symmetric matrix $A$ is PSD (positive semidefinite) and write $A \succeq 0$ if $x^T A x \geq 0$ for all vectors $x$. We write $A\preceq B$ when $B-A$ is PSD.  We will use $\cdot ^{\otimes s}$ to denote the standard Kronecker product. 

\paragraph{Probability Notation} We use capital letters for random variables. For a random variable $X$, we use $\E[X]$ for its expectation.	
We use $\cN(\mu,\Sigma)$ to denote the Gaussian distribution with mean $\mu$ and covariance matrix $\Sigma$. We let $\phi$ denote the pdf of the one-dimensional standard Gaussian.
When $D$ is a distribution, we use $X \sim D$ to denote that the random variable $X$ is distributed according to $D$. When $S$ is a set, we let $\E_{X \sim S}[\cdot]$ denote the expectation under the uniform distribution over $S$. For any sequence $a_1, \dots, a_m \in \mathbb{R}^d$, we will also use $\E_{i \sim [m]}[a_i]$ to denote $\frac 1 m \sum_{i \in [m]} a_i$. 
For a real-valued random variable $X$ and $p \geq 1$, we use $\|X\|_{L_p}$ to denotes its $L_p$ norm, i.e., $\|X\|_{L_p} := (\E[|X|^p])^{1/p}$.

\medskip

The following fact (proved in \Cref{app:addDetails} for completeness) can be used to translate bounds from the $(2,k)$-norm to the usual $\ell_2$-norm when the underlying mean $\mu$ is sparse:
\begin{restatable}{fact}{FactTruncSparse}
	\label{fact:sparseTruncation}
	Let $h_k: \R^d \to \R^d$ denote the function where $h_k(x)$ is defined to truncate $x$ to its $k$ largest coordinates in magnitude and zero out the rest. For all $\mu \in \R^d$ that are $k$-sparse, we have that $\|h_k(x) - \mu\|_2 \leq 3 \|x-\mu\|_{2,k}$.
\end{restatable}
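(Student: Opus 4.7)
The plan is to decompose the error $h_k(x) - \mu$ according to the supports of $\mu$ and $h_k(x)$, and bound each piece by the $(2,k)$-norm of $x - \mu$. Let $S^\star = \mathrm{supp}(\mu)$ and $T^\star = \mathrm{supp}(h_k(x))$, both of size at most $k$. Then $h_k(x) - \mu$ is supported on $S^\star \cup T^\star$, which I partition into the three disjoint pieces $T^\star \setminus S^\star$, $S^\star \cap T^\star$, and $S^\star \setminus T^\star$. On the first two pieces the coordinates of $h_k(x) - \mu$ agree with those of $x - \mu$ (since $\mu_i = 0$ on $T^\star \setminus S^\star$ and $h_k(x)_i = x_i$ on $T^\star$), so the contribution is at most $2\|x-\mu\|_{2,k}^2$ directly, because each piece is supported on at most $k$ coordinates and thus its squared $\ell_2$-norm is bounded by $\|x-\mu\|_{2,k}^2$.

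The only interesting piece is $S^\star \setminus T^\star$, on which $h_k(x) - \mu$ equals $-\mu$. Here I would use the key combinatorial observation that $|T^\star \setminus S^\star| \geq |S^\star \setminus T^\star|$ (since both $|S^\star|, |T^\star|\le k$ and $|T^\star|=k$ whenever the bound is tight), so there is an injection $\sigma\colon S^\star \setminus T^\star \to T^\star \setminus S^\star$. By the defining property of $h_k$, we have $|x_{\sigma(i)}| \geq |x_i|$ for each such $i$. Writing $\mu_i = -(x_i - \mu_i) + x_i$ and applying $(a+b)^2\le 2a^2+2b^2$ gives
\[
\sum_{i\in S^\star\setminus T^\star}\mu_i^2 \leq 2\sum_{i\in S^\star\setminus T^\star}(x_i-\mu_i)^2 + 2\sum_{i\in S^\star\setminus T^\star}x_i^2 \leq 2\sum_{i\in S^\star\setminus T^\star}(x_i-\mu_i)^2 + 2\sum_{j\in T^\star\setminus S^\star}x_j^2.
\]
The first sum is bounded by $2\|x-\mu\|_{2,k}^2$ because $S^\star\setminus T^\star$ has at most $k$ elements and the entries are exactly entries of $x-\mu$. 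For the second sum, observe that on $T^\star\setminus S^\star$ we have $\mu_j=0$, so $x_j = (x-\mu)_j$, and again $|T^\star\setminus S^\star|\le k$, giving a bound of $2\|x-\mu\|_{2,k}^2$.

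Combining the three pieces yields $\|h_k(x)-\mu\|_2^2 \le \|x-\mu\|_{2,k}^2 + \|x-\mu\|_{2,k}^2 + 4\|x-\mu\|_{2,k}^2 = 6\|x-\mu\|_{2,k}^2$, and taking square roots gives $\|h_k(x)-\mu\|_2 \le \sqrt{6}\,\|x-\mu\|_{2,k} \le 3\|x-\mu\|_{2,k}$, as desired. The only subtle step is the injection-and-pigeonhole argument on $S^\star\setminus T^\star$, which is where the hypothesis that $\mu$ is $k$-sparse (and that $h_k$ selects the $k$ largest-magnitude coordinates) is actually used; the rest is just support bookkeeping and the triangle inequality.
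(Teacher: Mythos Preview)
Your proof is correct and follows essentially the same approach as the paper's, arriving at the same constant $\sqrt{6}$: both decompose over $T^\star\setminus S^\star$, $S^\star\cap T^\star$, $S^\star\setminus T^\star$, bound the first two pieces directly, and control $\|\mu_{S^\star\setminus T^\star}\|_2$ via the triangle inequality together with $\|x_{S^\star\setminus T^\star}\|_2 \le \|x_{T^\star\setminus S^\star}\|_2$. The only cosmetic differences are that the paper derives the last inequality by a min-is-at-most-average pigeonhole instead of your injection (equivalent here, since every coordinate in $T^\star\setminus S^\star$ dominates every one in $S^\star\setminus T^\star$), and it explicitly disposes of the case $h_k(x)=x$ first, whereas your parenthetical ``$|T^\star|=k$ whenever the bound is tight'' leaves that degenerate case (where the injection need not exist) slightly implicit.
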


\subsection{SoS Preliminaries} 
\label{sec:sosprelims}
The following notation and preliminaries are specific to the SoS part of this paper.  We refer the reader to \cite{BarakSteurerNotes} for a complete treatment of basic definitions about the SoS hierarchy and SoS proofs. Here, we review the basics. Our algorithms will work under the condition that the numerical precision of all the numbers involved is controlled. To describe these conditions formally, we use the notion of bit complexity, which captures the size of the representation of a number.

\begin{definition}[Bit complexity]
The bit complexity of an integer $z \in \Z$ is $1 + \lceil \log_2 z \rceil$. The bit complexity of a rational number $r/t$ is the sum of the individual bit complexities of $r$ and $t$. The bit complexity of a vector is the sum of the bit complexities of its coordinates and the bit complexity of a set of vectors is the sum of the bit complexities of the set's elements. 
\end{definition}

\begin{definition}[Symbolic polynomial]
	A degree-$t$ symbolic polynomial $p$ is a collection of indeterminates $\widehat{p}(\alpha)$, 
	one for each multiset $\alpha \subseteq [d]$ of size at most $t$.
	We think of it as representing a polynomial $p \, : \, \R^d \rightarrow \R$ whose coefficients are themselves 
	indeterminates via $p(x) = \sum_{\alpha \subseteq [d], |\alpha| \leq t} \widehat{p}(\alpha) x^\alpha$.
\end{definition}

\begin{definition}[SoS Proof]\label{def:sos-proof}
	Let $x_1,\ldots,x_d$ be indeterminates and let $\cA$ be a set of polynomial inequalities $\{ p_1(x) \geq 0,\ldots,p_m(x) \geq 0 \}$.
	An SoS proof of the inequality $r(x) \geq 0$ from axioms $\cA$ is a set of polynomials $\{r_S(x)\}_{S \subseteq [m]}$ such that each $r_S$ is a sum of square polynomials and $r(x) = \sum_{S \subseteq [m]} r_S(x) \prod_{i \in S} p_i(x).$
	If the polynomials $r_S(x) \cdot \prod_{i \in S} p_i(x)$ have degree at most $t$ for all $S \subseteq[m]$, we say that this proof is of degree $t$ and denote it by $\cA \sststile{t}{} r(x) \geq 0$.  The bit complexity of the SoS proof is the sum of the bit complexities of the coefficients of the polynomials $r_S$ and $p_i$.
	
	When we need to emphasize what indeterminates are involved in a particular SoS proof, we denote it by  $\cA \sststile{t}{x} r(x) \geq 0$.
	When $\cA$ is empty, we directly write $\sststile{t}{} r(x) \geq 0$ and $\sststile{t}{x} r(x) \geq 0$.
	We also often refer to $\cA$ containing polynomial equations $q(x) = 0$, by which we mean that $\cA$ contains both $q(x) \geq 0$ and $q(x) \leq 0$. 
\end{definition}

We frequently compose SoS proofs without comment --- see \cite{BarakSteurerNotes} for basic facts about composition of SoS proofs and bounds on the degree of the resulting proofs.

Our algorithm also uses the dual objects to SoS proofs, commonly called \emph{pseudoexpectations}.
\begin{definition}[Pseudoexpectation]
	Let $x_1,\ldots,x_d$ be indeterminates.
	A degree-$t$ pseudoexpectation $\pE$ is a linear map $\pE \, : \, \R[x_1,\ldots,x_d]_{\leq t} \rightarrow \R$
	from degree-$t$ polynomials to $\R$ such that $\pE \Brac{p(x)^2} \geq 0$ for any $p$ of degree at most $t/2$ and $\pE \Brac{1} = 1$.
	If $\cA = \{p_1(x) \geq 0, \ldots, p_m(x) \geq 0\}$ is a set of polynomial inequalities, we say that $\pE$ satisfies $\cA$ if for every $S \subset [m]$,
	the following holds: $\pE [s(x)^2 \prod_{i \in S} p_i(x)] \geq 0$  for all squares $s(x)^2$ such that $s(x)^2 \prod_{i \in S} p_i(x)$ has degree at most $d$. 
	
	We say that a pseudoexpectation is $\tau$-approximate if it satisfies all the conditions up to slack $\tau$, i.e., $\pE[p(x)^2] \geq -\tau \|p\|_2^2$ for any $p$ of degree at most $t/2$ and $\pE [s(x)^2 \prod_{i \in S} p_i(x)] \geq -\tau \|s^2\|_2 \prod_{i \in S} \|p_i\|_2$ for all sets $S$ and polynomials $s(x)$ such that $s(x)^2\prod_{i \in S} p_i(x)$ has degree at most $d$, where $\|p\|_2$ denotes the $\ell_2$-norm of the vector of coefficients of $p$.
\end{definition}

We will also rely on the algorithmic fact that given a satisfiable system $\cA$ of $m$ polynomial  inequalities in $d$ variables, there is an algorithm which runs in time $(dm)^{O(t)}$ and computes a pseudoexpectation of degree $t$ approximately satisfying $\cA$. 
\begin{theorem}[The SoS Algorithm \cite{Sho87,lasserre2001new, nesterov2000squared,bomze1998standard}] \label{thm:sos-algo}
	Let $\cA$ be a satisfiable system of $m$ polynomial inequalities in variables 
	$x_1,\ldots,x_d$, each with coefficients having bit complexity at most $B$ and degree at most $t$. 
	Suppose that $\cA$ contains an inequality of the form $\|x\|_2^2 \leq M$, with $M$ having bit complexity at most $B$.
	There is an algorithm which takes $t \in \Z_+$, $\tau$, and $B$
	and returns in time $\poly(B,\log(1/\tau),d^t,m^t)$ a degree-$t$ pseudo-expectation $\pE$ which satisfies $\cA$ 
	up to error $\tau$.
\end{theorem}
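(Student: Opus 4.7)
The plan is to reformulate the search for a degree-$t$ pseudo-expectation as a semidefinite feasibility problem and then apply the ellipsoid method (or any polynomial-time SDP solver) to find an approximately feasible point. Concretely, a degree-$t$ pseudo-expectation $\pE$ is specified by a vector of ``pseudo-moments'' $y = (y_\alpha)_{|\alpha|\le t}$, where $y_\alpha = \pE[x^\alpha]$ for each multiset $\alpha \subseteq [d]$ with $|\alpha|\le t$. This vector has dimension $N = \binom{d+t}{t} = d^{O(t)}$. Each axiomatic constraint $\pE[s(x)^2\prod_{i\in S}p_i(x)] \ge 0$ is linear in $y$, so collecting the set of such constraints for all $S\subseteq[m]$ and all monomials $s(x)$ of appropriate degree yields, for each $S$, a ``localizing matrix'' $M_S(y)$ whose entries are linear in $y$, and the constraint is equivalent to $M_S(y)\succeq 0$. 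Together with the normalization $y_\emptyset = 1$, these are $m^{O(t)}$ PSD constraints of size $d^{O(t)}$, and the number of such matrices is $2^{O(t)}\cdot m^{O(t)} \le m^{O(t)}$ once we use the fact that only subsets of size at most $t/(2\min_i \deg p_i)$ need to be enumerated.

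Next I would argue boundedness: the axiom $\|x\|_2^2\le M$ is in $\cA$, so by pseudo-Cauchy--Schwarz applied iteratively one obtains $|y_\alpha| = |\pE[x^\alpha]| \le M^{|\alpha|/2}\le M^{t/2}$ for every $\alpha$. In particular, the feasible set $K \subseteq \R^N$ of pseudo-expectations satisfying $\cA$ sits inside a Euclidean ball of radius $R = N^{1/2}M^{t/2}$, which has bit complexity $\poly(B, t, \log d)$. Satisfiability of $\cA$ over $\R^d$ guarantees $K$ is nonempty: given any point $x^\ast$ satisfying $\cA$, the ``Dirac'' pseudo-expectation $y_\alpha = (x^\ast)^\alpha$ lies in $K$.

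To apply the ellipsoid method one needs a weak separation oracle for $K$: given a candidate $y$, either declare $y$ approximately feasible or output a separating hyperplane. This is standard for SDPs --- for each of the $m^{O(t)}$ matrices $M_S(y)$, compute the minimum eigenvalue/eigenvector in time $\poly(d^t, B)$; if some eigenvalue is below $-\tau'$ with eigenvector $u$, then $u^\top M_S(y) u < 0$ is a violated linear inequality in $y$, providing a separating hyperplane with coefficients of controlled bit complexity. Running the ellipsoid method on the ball of radius $R$, seeking tolerance $\tau' = \tau / \poly(R,m,d,t)$, terminates in $\poly(B,\log(1/\tau),d^t,m^t)$ iterations, each performing $\poly$-cost oracle calls and arithmetic on numbers of bit complexity $\poly(B,\log(1/\tau),d^t)$. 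The output $\widehat{y}$ satisfies all constraints up to additive slack $\tau$; rescaling the slack through the definition of $\tau$-approximate pseudo-expectation (using the norm bounds $\|p\|_2$ of the constraint polynomials, which are polynomially bounded given $B$) yields a $\tau$-approximate pseudo-expectation as required.

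The main obstacle is the interaction between satisfiability of $\cA$ over $\R^d$ and the ellipsoid method, which strictly speaking needs the feasible set to contain an interior ball of non-negligible radius. The boundedness of $K$ inside a ball of radius $R$ gives the outer ball, but an inner ball is not automatic from feasibility alone. The standard workaround, which I would use, is to relax the PSD constraints by $-\tau$, i.e.\ search for $y$ with $M_S(y)\succeq -\tau I$ for all $S$; this enlarged set has volume at least $\tau^{\poly(d^t)}$, enabling the ellipsoid method to terminate in $\poly(B,\log(1/\tau),d^t,m^t)$ steps. Since our target guarantee is only an approximate pseudo-expectation anyway, this relaxation is exactly the notion of ``$\tau$-approximate'' from the definition, so no further conversion is needed. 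The remaining bookkeeping --- tracking bit complexity through eigenvalue computations and ellipsoid updates --- is routine given the bit complexity bound on coefficients of $\cA$ and on $M$.
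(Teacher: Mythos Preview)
The paper does not prove this theorem; it is stated as a known background result with citations to \cite{Sho87,lasserre2001new,nesterov2000squared,bomze1998standard} and invoked as a black box. Your proposal is the standard argument underlying those references --- encode the degree-$t$ pseudo-expectation as a vector of pseudo-moments, express the constraints as PSD conditions on localizing matrices, bound the feasible region via the $\|x\|_2^2\le M$ axiom, and run the ellipsoid method with a spectral separation oracle on the $\tau$-relaxed system --- so there is nothing to compare against in the paper itself, and your sketch is correct in outline.
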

All of our SoS proofs will be of bit complexity $\poly(m^t,d^t)$.
We thus apply \Cref{thm:sos-algo} with  $B=\poly(m^t,d^t)$ and $\tau = 2^{-\poly(tB)}$ to ensure that the total error that we incur is at most  $O( 2^{-md})$.
Since this error is negligible, we will not treat it explicitly in the remainder of the paper. 

Pseudoexpectations satisfy several basic inequalities, some of which are Cauchy-Schwartz, H\"older and a modified version of the triangle inequality. We will use these extensively. See~\Cref{sec:sos_prelims} for details.

\section{Certifiably Bounded Moments in Sparse Directions}
\label{sec:sparsity}

Our algorithm succeeds whenever the uncorrupted samples have \emph{certifiably bounded moments in $k$-sparse directions}. We first formally define this property in \Cref{sec:certifiable-def}. We then show in \Cref{sec:sampling} that this property is preserved under sampling.

\subsection{Definitions} \label{sec:certifiable-def}
To define the property of certifiably bounded moments in sparse directions we first need to capture the sparsity of vectors using polynomial equations, which we do as follows:

\begin{definition}\label[definition]{def:A-k-sparse}
	We use $\cAksparse$ to denote the following system of equations over $v_1,\dots,v_d,z_1,\dots,z_d$:
\begin{align*}
\cAksparse := \{ z_i^2 = z_i\}_{i \in [d]} \cup \{v_i z_i = v_i\}_{i \in [d]}\cup \Set{ \sum_{i=1}^d z_i = k } \cup \Set{\sum_{i=1}^d v_i^2 = 1 }\;.
\end{align*}	
\end{definition} 
A vector $v= (v_1,\dots,v_d)$ is $k$-sparse if and only if there exists $z = (z_1,\dots,z_d)$ such that $v,z$ satisfy $\cAksparse$. Here, the $z_i$'s (roughly) correspond to the support of the vector $v$ in the sense that $v_i$ being non-zero implies $z_i$ being one.
We will crucially need the notion of the $t$-th moment of a distribution being certifiably bounded. 
\begin{definition}[$(M,t)$ Certifiably Bounded Moments in $k$-sparse Directions]
	\label[definition]{def:bounded-moments-k-sparse}
	For an $M > 0$ and even  $t \in \N$, we say that the distribution $D$ with mean $\mu$ satisfies \emph{$(M,t)$ certifiably bounded moments in $k$-sparse directions} with bit complexity bounded by $B$ if 
	\begin{align}\label{eqn:cert-def}
	\cAksparse \sststile{O(t)}{v,z} \E_{X \sim D}\Brac{\iprod{ v,  X - \mu}^{t} }^2 \leq M^2 \;,
	\end{align}
	and the bit complexity of the SoS proof of \eqref{eqn:cert-def} is bounded by $B$. \footnote{We will assume $B \leq \poly(m^t,d^t)$ in this paper.}

\end{definition}

A broad and natural class of distributions satisfying \Cref{def:bounded-moments-k-sparse} is implicit in Theorem 4.1 from \cite{KStein17}. Their result  says that if a distribution $D$ is $\sigma$-Poincar\'e, i.e.,  for all differentiable functions $f:\R^d \rightarrow \R$,  $\Var_{X \sim D}\Brac{f(X)} \leq \sigma^2 \E_{X \sim D} [ \Norm{\nabla f(X)}_2^2]$, then it has certifiably bounded moments \emph{in every direction $v$},  i.e., the appropriate inequality follows even ignoring the $z$ constraints in $\cAksparse$. Moreover, the bit complexity of this proof is at most $\poly(t,b)$, where $b$ is the bit complexity of the coefficients of the polynomial $M - \E_{X \sim D}\Brac{\iprod{ v,  X - \mu}^{t} }^2$.
It can be seen (see \Cref{sec:poincareAppendix}) that this class also satisfies \Cref{def:bounded-moments-k-sparse}.

\subsection{Sampling and Certifiably Bounded Moments in Sparse Directions} \label{sec:sampling}

Our algorithm will work when the uniform distribution over the samples has certifiably bounded central moments in $k$-sparse directions. In this section we show that sampling from distributions with not too heavy tails preserves this property of having certifiably bounded moments. In particular, we show the following:
\begin{restatable}{lemma}{SamplingPreserves}
\label[lemma]{lem:pop-to-empirical}
	Let $D$ be a distribution over $\mathbb{R}^d$ with mean $\mu$ and suppose $D$ has $c$-sub-exponential tails around $\mu$ for a constant $c$.  Suppose that $D$ satisfies $\cAksparse \sststile{2t}{v, z} \E_{X \sim D} \Brac{\iprod{v, X-\mu}^t}^2 \leq M^2$ with bit complexity at most B . 
	Let $S= \{X_1, \ldots, X_m\}$ be a set of $m$ i.i.d.\ samples from $D$, $D'$ be the uniform distribution over $S$, and $\overline{\mu}:=\E_{X \sim D'}[X]$.
	If $m > C (t k (\log d))^{5 t} ~\max(1,M^{-2})/\eps^{2-2/t}$ for a sufficiently large constant $C$, then with probability at least $0.9$ we have the following two: 
	\begin{enumerate}
	\item $\cAksparse \sststile{2t}{v, z} \E_{X \sim D'} \Brac{\iprod{v, X-\overline{\mu}}^t}^2 \leq 8M^2$,  and the bit complexity of the proof is at most $\poly(tb,d^t,B)$, where $b$ is the bit complexity of the set $S$. 
	\item $\|\overline{\mu} -\mu \|_{2,k} \leq M^{1/t}\eps^{1-1/t}$. 
\end{enumerate}		
\end{restatable}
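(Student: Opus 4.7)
The plan is to decompose the lemma into two largely independent parts. For the SoS certificate (conclusion 1), I would first reduce to controlling moments centered at the true mean $\mu$ rather than $\bar\mu$ via the binomial expansion $\langle v, X-\bar\mu\rangle^t = \sum_{j=0}^t \binom{t}{j} \langle v, X-\mu\rangle^j (-\langle v, \bar\mu - \mu\rangle)^{t-j}$; after applying $\E_{X \sim D'}$ and squaring, an SoS triangle inequality reduces the task to bounding each $\E_{X \sim D'}\iprod{v, X-\mu}^j$ for $j \leq t$ together with the shift quantity $\iprod{v, \bar\mu - \mu}^{2(t-j)}$, all inside $\cAksparse$.

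For the dominant term $\E_{X \sim D'}[\iprod{v, X-\mu}^t]^2$, my plan is to invoke the sparse SoS upper bound alluded to in the techniques overview (\Cref{claim:sparse-upper-bound}): for any degree-$t$ polynomial $r(v) = \sum_{I \in [d]^t} r_I \prod_{j} v_{I_j}$, one has $\cAksparse \sststile{O(t)}{v,z} r(v)^2 \leq k^t \max_I r_I^2$. Applied to $r(v) := \E_{X \sim D'}[\iprod{v, X-\mu}^t] - \E_{X \sim D}[\iprod{v, X-\mu}^t]$, whose coefficient tensor is $T := \E_{X \sim D'}[(X-\mu)^{\otimes t}] - \E_{X \sim D}[(X-\mu)^{\otimes t}]$, this combines with the hypothesized population certificate to give $\cAksparse \sststile{2t}{v,z} \E_{X \sim D'}[\iprod{v, X-\mu}^t]^2 \leq 2M^2 + 2 k^t \|T\|_\infty^2$. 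It thus suffices to show $\|T\|_\infty \leq M / k^{t/2}$ with probability $0.95$ (and similar bounds for the lower-order tensors needed for the recentering step).

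For the concentration of $T$, I would work coordinate-wise: since $D$ has $c$-subexponential marginals, H\"older's inequality yields $\| \prod_{j \in [t]}(X_{I_j} - \mu_{I_j})\|_{L_p} \leq (ctp)^t$, and a standard truncation combined with Bernstein's inequality then produces $|T_I| \leq O((t\log d)^{2t}/\sqrt{m})$ with failure probability at most $d^{-3t}$. A union bound over the $d^t$ multi-indices gives $\|T\|_\infty \leq O((t\log d)^{2t}/\sqrt{m})$ with probability $0.99$. Substituting the hypothesized sample size $m \geq C (tk\log d)^{5t}\max(1, M^{-2}) / \epsilon^{2-2/t}$ certifies $k^t \|T\|_\infty^2 \leq 6M^2$ and simultaneously bounds the lower-order tensors, completing conclusion (1) after recentering. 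Conclusion (2) follows from the same concentration machinery applied to the first-moment tensor: coordinate-wise Bernstein gives $\|\bar\mu - \mu\|_\infty \leq O(\sqrt{(\log d)/m})$, so $\|\bar\mu - \mu\|_{2,k} \leq \sqrt{k}\,\|\bar\mu - \mu\|_\infty \leq M^{1/t}\epsilon^{1-1/t}$ under the same sample size.

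The main obstacle is establishing the sparse SoS inequality itself. The natural approach is to use the axiom $v_i z_i = v_i$ to rewrite $r(v) = \sum_I r_I \prod_j v_{I_j} z_{I_j}$ and then apply SoS Cauchy--Schwarz iteratively, distributing the $z$-factors and exploiting $\sum_i z_i = k$ together with $z_i^2 = z_i$ to pay a factor of $k$ per summation axis; this yields the $k^t$ loss in the claim. A secondary technical point is controlling bit complexity: the final SoS proof is built from polynomially many multiplications and additions of the input data, the population SoS certificate, and constants of polylogarithmic size, so the claimed $\poly(tb, d^t, B)$ bound follows by routine accounting of degrees and coefficient sizes.
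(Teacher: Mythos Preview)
Your proposal is correct, and the overall architecture---sparse SoS bound plus coordinate-wise concentration of the moment tensor---matches the paper. The difference is in how you handle the recentering from $\mu$ to $\overline{\mu}$.

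You expand $\langle v, X-\overline{\mu}\rangle^t$ binomially \emph{inside the SoS proof}, which forces you to control every lower-order empirical moment $\E_{X\sim D'}[\langle v,X-\mu\rangle^j]$ for $j\le t$ together with powers of the shift $\langle v,\overline{\mu}-\mu\rangle$, and to combine them via an $(t{+}1)$-term SoS triangle inequality. This works, but the extra $(t{+}1)^2$ factor and the separate bounds on the lower-order tensors make it awkward to land on the clean constant $8M^2$. The paper instead applies \Cref{claim:sparse-upper-bound} directly to the single degree-$t$ polynomial
\[
p(v)=\E_{X\sim D'}\bigl[\langle v,X-\overline{\mu}\rangle^t\bigr]-\E_{X\sim D}\bigl[\langle v,X-\mu\rangle^t\bigr],
\]
i.e.\ with the \emph{mixed} centering already in place, so that one SoS Cauchy--Schwarz plus one SoS triangle inequality immediately give $8M^2$. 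All of the recentering work is pushed into the concentration lemma (\Cref{lem:basic_linf_consc-full}), which bounds $\|\E_{i}[(X_i-\overline{\mu})^{\otimes t}]-\E_{X\sim D}[(X-\mu)^{\otimes t}]\|_\infty$ directly and does the binomial expansion there, outside SoS. Your route and the paper's are equivalent in spirit; the paper's packaging is tidier and yields the stated constant without extra bookkeeping.

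One small remark on your sketch of \Cref{claim:sparse-upper-bound}: a single SoS Cauchy--Schwarz suffices (split $\sum_T a_T z_T v_T$ into $\sum_T a_T^2 z_T^2$ and $\sum_T v_T^2$, then factor each as a $t$-th power of a coordinate sum); no iteration is needed.
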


Before presenting the full proof, we sketch the proof of \Cref{lem:pop-to-empirical}. The claim $\|\overline{\mu} -\mu \|_{2,k} \leq M^{1/t}\eps^{1-1/t}$ follows from a standard Markov inequality. We thus focus on the first claim. The first part depends on the following key lemma, stating that polynomials over $\cAksparse$ are bounded by the square of the maximum coefficient times $k^t$:  

\begin{restatable}[Polynomials of $k$-sparse vectors are bounded]
{lemma}{LemSparsePolyUB}\label{claim:sparse-upper-bound}
	Let $p(v_1, \dots, v_d) = \sum_{T \in  [d]^t} a_T v_T$ be a polynomial of degree $t$, where the coefficients $\{ a_{T} \}_{T \in [d]^t} \subset \mathbb{R}$ are real numbers (not variables of the SoS program), then
 	\[ 
	\cAksparse \sststile{2t}{v, z} p(v_1, \dots, v_d)^2 \leq  k^{t} \max \{a_T^2 \mid T \in [d]^t\} \;. 
 	\] 
The bit complexity of the SoS proof above is $\poly(d^t,  \max_{T} b_T)$, where $b_T$ denotes the bit complexity of $a_T$. 
\end{restatable}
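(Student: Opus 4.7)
The plan is to exploit the indicator-like variables $z_i$ to cut the effective support of $v$ from $d$ down to $k$ via an SoS Cauchy--Schwarz step. Intuitively, only the $k^t$ monomials $v_T$ with $T$ supported inside $\mathrm{supp}(v)$ can be nonzero; the remaining $d^t - k^t$ contribute nothing, and the axioms $v_i = v_i z_i$, $\sum_i z_i = k$ are exactly what is needed to encode this fact inside the SoS calculus.

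First I would use the axiom $v_i = v_i z_i$ (applied once for each index in $T$) to rewrite $p(v) = \sum_{T \in [d]^t} a_T\, v_T z_T$, where $z_T := \prod_{i \in T} z_i$. Then I apply the unconditional Lagrange / SoS Cauchy--Schwarz identity with $x_T := a_T z_T$ and $y_T := v_T z_T$ to get
\[
p(v)^2 \;=\; \Bigl(\sum_T (a_T z_T)(v_T z_T)\Bigr)^2 \;\leq\; \Bigl(\sum_T a_T^2 z_T^2\Bigr)\Bigl(\sum_T v_T^2 z_T^2\Bigr),
\]
and I simplify via $z_i^2 = z_i$ (iterated to $z_T^2 = z_T$) to obtain $p(v)^2 \leq \bigl(\sum_T a_T^2 z_T\bigr)\bigl(\sum_T v_T^2 z_T\bigr)$ modulo $\cAksparse$.

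Next I would bound the two factors separately. For the $v$-factor, multiplying $v_i z_i = v_i$ through by $v_i$ gives $v_i^2 z_i = v_i^2$; iterating yields $v_T^2 z_T = v_T^2$, so
\[
\sum_{T \in [d]^t} v_T^2 z_T \;=\; \sum_{T \in [d]^t} \prod_{i \in T} v_i^2 \;=\; \Bigl(\sum_{i=1}^d v_i^2\Bigr)^t \;=\; 1,
\]
where the last equality uses $\sum_i v_i^2 = 1$. For the $a$-factor, observe that $z_T = \prod_{i \in T} z_i = \prod_{i \in T} z_i^2 = (\prod_{i \in T} z_i)^2$ modulo the axioms, so $z_T$ is itself a square; therefore, setting $M^\star := \max_T a_T^2$, the polynomial $(M^\star - a_T^2)\, z_T$ is a non-negative scalar times a square and hence SoS. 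Summing over $T$ and using $\sum_T z_T = (\sum_i z_i)^t = k^t$ (which follows by expanding the multinomial and applying the axiom $\sum_i z_i = k$) gives $\sum_T a_T^2 z_T \leq k^t M^\star$.

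Multiplying the two bounds yields $p(v)^2 \leq k^t \max_T a_T^2$. The proof has degree $2t$: the Cauchy--Schwarz factors involve polynomials of degree at most $2t$, and the axiom substitutions do not increase the degree beyond this. For bit complexity, the only nontrivial rational coefficients come from the Lagrange identity (coefficients $\pm\tfrac12$ over at most $d^{2t}$ cross-terms) and from the scalars $M^\star - a_T^2$ of bit complexity $O(\max_T b_T)$, giving a total of $\poly(d^t, \max_T b_T)$ as claimed. I do not anticipate any substantive obstacle; the main care is bookkeeping --- making sure each rewrite is realized by an honest SoS identity (in particular expanding Cauchy--Schwarz through the Lagrange form) rather than just working modulo the ideal generated by $\cAksparse$, and tracking the degree and coefficient sizes across the composition.
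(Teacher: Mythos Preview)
Your proposal is correct and essentially identical to the paper's proof: both insert the $z_T$ factors via the axiom $v_i z_i=v_i$, apply SoS Cauchy--Schwarz, bound the $a$-factor by $(\max_T a_T^2)\sum_T z_T^2=(\max_T a_T^2)k^t$, and bound the $v$-factor by $(\sum_i v_i^2)^t=1$. The only cosmetic difference is that the paper splits the product as $(a_T z_T)\cdot v_T$ whereas you split it as $(a_T z_T)\cdot(v_T z_T)$ and then reduce $v_T^2 z_T=v_T^2$; both lead to the same bound with the same degree and bit-complexity accounting.
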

\begin{proof} 
We have the sequence of inequalities
	\begin{align*}
		\cAksparse &\sststile{2t}{v, z}\left( \sum_{T \in [d]^t} a_T v_T\right)^2 = \left( \sum_{T \in [d]^t} a_T z_T v_T\right)^2 \\
		&\leq  \left( \sum_{T \in [d]^t} a_T^2 z_T^2 \right)  \left( \sum_{T \in [d]^t} v_T^2\right) \\
		&\leq \Paren{\max_{T \in [d]^t} (a_T)^2} \left( \sum_{T \in [d]^t} z_T^2 \right) \left( \sum_{T \in [d]^t} v_T^2 \right)  \\
		&= \Paren{\max_{T \in [d]^t} (a_T)^2} \left( \sum_{i = 1}^d z_i^2 \right)^t \left( \sum_{i = 1}^d v_i^2 \right)^t \\
		&= k^{t} \max_{T \in [d]^t} (a_T)^2  \;,
	\end{align*}
where the first line uses $\{v_i z_i = v_i\}_{i \in [d]}$, the second line uses SoS Cauchy-Schwartz (\Cref{fact:sos-holder}), the third line uses that $\sststile{2t}{v, z}\sum_{T \in [d]^t}(\max_{T \in [d]^t} (a_T)^2 - a_T^2) z_T^2 \geq 0$, 
the fourth line uses the identities that $\sum_{T \in [d]^t} z_T  = ( \sum_{i \in [d]} z_i )^t$ and $\sum_{T \in [d]^t} v_T  = ( \sum_{i \in [d]} v_i )^t$,
and the last line uses the axioms $\{z_i^2 = z_i\}_{i \in [d]} \cup \{\sum_{i=1}^d z_i = k\} \cup \{\sum_{i=1}^d v_i^2 = 1\}$.  
\end{proof}

Continuing with the proof sketch of \Cref{lem:pop-to-empirical}, since $D$ has subexponential tails, with high probability, the $\ell_{\infty}$ norm of the difference between the expected and empirical central moments calculated using $m$ samples is at most $M/\sqrt{k^t}$ (see \Cref{lem:basic_linf_consc-full} below). 
    Then, an application of ~\Cref{claim:sparse-upper-bound} to $p(v) = \sum_{ T \in [d]^{t}} \Paren{\E_{i \sim [m]}\left[X_i - \ovl \mu\right]_T - \E_{X \sim D} \left[X - \mu \right]_T } v_T$ and the SoS triangle inequality completes the proof. 
    The formal proof of \Cref{lem:pop-to-empirical} is given in \Cref{sec:sampling_pres_proof}.

    We now state the concentration result and defer its proof to \Cref{sec:concentration_appendix}. 
    Observe that the lemma is in fact applicable to all distributions with bounded $(t^2 \log d)$ moments in the directions of the standard basis, not only subexponential distributions. 
    
\begin{restatable}{lemma}{BadicLinfConscFull}\label{lem:basic_linf_consc-full}
	Let $D$ be a distribution over $\R^d$ with mean $\mu$. Suppose that for all $s \in [1,\infty)$, $D$ has its $s^{th}$ moment bounded by $(f(s))^s$ for some non-decreasing function $f:[1,\infty) \to \R_+$, in the direction $e_j$, i.e., suppose that for all $j \in [d]$ and $X \sim D$:
	$		\|\iprod{e_j, X - \mu}\|_{L_s} \leq f(s)$.
	Let $X_1, \dots, X_m$ be $m$ i.i.d.\ samples from $D$ and define $\overline{\mu}:= \sum_{i=1}^m X_i$. The following are true:
	\begin{enumerate}
	\item If $m \geq \max\left(\frac{1}{\delta^2}, 1\right) C  \left( t \log (d/\gamma)\right)\left(2f(t^2\log(d/\gamma))\right)^{2t} \max\left( 1, \frac{1}{f(t)^{2t}}  \right)$, then with probability $1 -\gamma$, we have that 
\begin{align*}
\left \| \E_{i \sim [m]}[(X_i - \overline{\mu})^{\otimes t}] -  \E_{X \sim D}[(X - \mu)^{\otimes t}] \right \|_{\infty} \leq \delta \;.
\end{align*}

	\item If
		$m  > C (k/ \delta^2)  \log(d/\gamma) ( f(\log(d/\gamma))  )^2 $,
		 then with probability $1 - \gamma$, we have that 
			$\left\| \overline{\mu}- \mu \right\|_{2,k} \leq \delta$.

		\end{enumerate}
\end{restatable}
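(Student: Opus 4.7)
The plan is to reduce both claims to coordinate-wise concentration of empirical averages, via Markov's inequality applied to a high-order $L_s$-norm (controlled by a Rosenthal / Marcinkiewicz--Zygmund bound) together with a union bound over the relevant index sets.

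For the second claim, note that $\|\overline{\mu}-\mu\|_{2,k}^2 \leq k\,\|\overline{\mu}-\mu\|_\infty^2$ since the top-$k$ squared coordinates are each at most the max, so it suffices to show $|\overline{\mu}_j-\mu_j|\leq \delta/\sqrt{k}$ simultaneously for all $j\in[d]$. For each fixed $j$, the empirical deviation $\overline{\mu}_j-\mu_j$ is an average of $m$ i.i.d.\ centered variables with $\|X_{i,j}-\mu_j\|_{L_s}\leq f(s)$, so Rosenthal gives $\|\overline{\mu}_j-\mu_j\|_{L_s}\lesssim \sqrt{s/m}\,f(s)$ in the dominant regime. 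Applying Markov at $s=\Theta(\log(d/\gamma))$ yields the per-coordinate tail bound as soon as $m\gtrsim (k/\delta^2)\log(d/\gamma)\,f(\log(d/\gamma))^2$, and a union bound over $j\in[d]$ finishes.

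For the first claim, I set $Y_i:=X_i-\mu$ and $\overline{Y}:=\overline{\mu}-\mu$, and for each tensor index $T\in[d]^t$ expand
\[ \E_{i\sim[m]}\prod_{j\in[t]}(Y_{i,T_j}-\overline{Y}_{T_j}) - \E\prod_{j\in[t]}Y_{T_j} \;=\; \underbrace{\Bigl(\E_{i}\prod_j Y_{i,T_j} - \E\prod_j Y_{T_j}\Bigr)}_{(\mathrm{I})} \;+\; \underbrace{\sum_{\emptyset\neq S\subseteq[t]}(-1)^{|S|}\prod_{j\in S}\overline{Y}_{T_j}\cdot\E_{i}\prod_{j\notin S} Y_{i,T_j}}_{(\mathrm{II})}. \]
For term~(I), H\"older gives $\|\prod_j Y_{i,T_j}\|_{L_s}\leq f(ts)^t$; applying Rosenthal to the i.i.d.\ sum $m^{-1}\sum_i(\prod_j Y_{i,T_j}-\E\prod_j Y_{T_j})$, followed by Markov at $s=\Theta(t\log(d/\gamma))$, bounds the per-entry deviation by $\delta/2$ with probability $\geq 1-\gamma/d^t$ provided $m\gtrsim f(t^2\log(d/\gamma))^{2t}/\delta^2$; a union bound over the $d^t$ tensor entries controls (I) uniformly. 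For term~(II), $\|\overline{Y}\|_\infty$ is $\tilde O(f(\log(d/\gamma))/\sqrt{m})$ with high probability (by the Claim~2 argument), and each empirical raw moment $|\E_i\prod_{j\notin S}Y_{i,T_j}|$ is uniformly bounded by $f(t)^{t-|S|}+\delta$ with high probability (by applying the same Rosenthal--H\"older argument to the lower-order moments). Each of the $2^t-1$ cross terms is therefore at most $\|\overline{Y}\|_\infty^{|S|}\,f(t)^{t-|S|}\leq \delta/2^{t+1}$ under our sample complexity, which can be verified case-by-case in $|S|$.

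The main obstacle is obtaining the sharp $2t$ exponent on $f(\cdot)$: one must take the moment order $s\approx t\log(d/\gamma)$ to absorb the union bound over the $d^t$ tensor entries, but at this order $\|\prod_j Y_{i,T_j}\|_{L_s}=f(ts)^t=f(t^2\log(d/\gamma))^t$, and Markov at order $s$ forces the square of this quantity to appear in the sample complexity, producing exactly the stated $f(t^2\log(d/\gamma))^{2t}/\delta^2$ bound. The auxiliary factor $\max(1,1/f(t)^{2t})$ accounts for the regime in which $f(t)^t$ is itself below $\delta$: there the Rosenthal bound is instead dominated by its $L_s$-norm ($\|Z\|_{L_s}$) term rather than its $L_2$-norm term, and this regime needs to be handled separately by the second branch of Rosenthal.
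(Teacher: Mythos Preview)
Your proposal is correct and follows essentially the same route as the paper: decompose $X_i-\overline{\mu}=(X_i-\mu)-(\overline{\mu}-\mu)$, control the main term~(I) via H\"older (giving $\|\prod_j Y_{i,T_j}\|_{L_s}\le f(ts)^t$), Marcinkiewicz--Zygmund, and Markov at order $s\asymp t\log(d/\gamma)$ with a union bound over the $d^t$ entries, and control the cross terms~(II) using coordinate-wise concentration of $\overline{Y}$ together with bounds on the empirical absolute moments. One small correction: the factor $\max(1,1/f(t)^{2t})$ does not come from a ``second branch of Rosenthal'' but from the sample complexity needed to guarantee that the empirical absolute moments $\E_i|Y_{i,j}|^r$ stay within a constant of $f(t)^r$ (this is what makes the bound on each cross term $\lesssim \|\overline{Y}\|_\infty^{|S|}f(t)^{t-|S|}$ valid), and that requirement scales like $(f(t\log(d/\gamma))/f(t))^{2t}$.
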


\subsubsection{Proof of \Cref{lem:pop-to-empirical}}\label{sec:sampling_pres_proof}
We now formalize the above sketch.

\begin{proof}
Suppose for now that with $m$ samples, 
	the 
	$\ell_{\infty}$ norm of the difference between the expected and empirical $t$-th tensors of $D$ is $M/\sqrt{k^t}$, i.e.,
	\[\left \| \E_{i \sim [m]}[(X_i - \ovl \mu)^{\otimes t}] -  \E_{X \sim D}[(X - \mu)^{\otimes t}] \right \|_{\infty}\leq \frac{M}{\sqrt{k^t}}.\] 
	Let $p(v_1, \dots, v_d) := \sum_{ T \in [d]^{t}} (\E_{i \sim [m]}[X_i - \ovl \mu]_T - \E_{X \sim D} [X - \mu ]_T)  v_T$. An easy corollary of \Cref{claim:sparse-upper-bound} is its application to $p(v_1, \dots, v_d)$. Combining these two steps we have that: 
	\begin{align}
		\cAksparse \sststile{2t}{v, z}& \left(\E_{i \sim [m]} \Brac{\iprod{ v,  X_i - \overline{\mu}}^{t}} - \E_{X \sim D}\Brac{\iprod{v, X - \mu}^t} \right)^2 \notag \\
		&\leq k^{t}  \left \| \E_{i \sim [m]}[(X_i - \ovl \mu)^{\otimes t}] -  \E_{X \sim D}[(X - \mu)^{\otimes t}] \right \|_{\infty}^2 \leq M^2 \;.		\label{eqn:moment_bound_1}
	\end{align}
	To prove bounded central moments of the uniform distribution over the samples, observe that, 
	\begin{align*}
		\cAksparse &\sststile{2t}{v, z}\E_{i \sim [m]} \Brac{\iprod{ v,  X_i - \overline{\mu}}^{t}}^2 \\
		&= \Paren{\E_{i \sim [m]} \Brac{\iprod{ v,  X_i - \overline{\mu}}^{t}} - \E_{X \sim D}\Brac{\iprod{v, X - \mu}^t}  + \E_{X \sim D}\Brac{\iprod{v, X - \mu}^t} }^2\\
		&\leq 2^2 ~ \Paren{\E_{i \sim [m]} \Brac{\iprod{ v,  X_i - \overline{\mu}}^{t}} - \E_{X \sim D}\Brac{\iprod{v, X - \mu }^t}}^2 + 2^2 ~ \E_{X \sim D}\Brac{\iprod{v, X - \mu}^t}^2 \\
		&\leq 4 \Paren{ M^2 + \E_{X \sim D}\Brac{\iprod{v, X - \mu}^t}^2} \leq 8M^2 \;,
	\end{align*}
	where the third line uses SoS triangle inequality (\Cref{fact:sos-triangle}), the fourth line uses \Cref{eqn:moment_bound_1} and the last one uses our assumption that $D$ has certifiably bounded moments. 
	
	We now calculate the sample complexity for the first claim.
	Since the distribution is sub-exponential, we have that for $Y = (Y_1,\dots,Y_d)  \sim D$, $\|Y\|_{L_s} \leq c s$, i.e., $f(x) = O(x)$.
	\Cref{lem:basic_linf_consc-full} with $\delta = M/\sqrt{k^t}$, $f(x) = O(x)$, and $\gamma=0.1$ implies that the sample complexity is at most the following:
	\begin{align*}
	    C\max\left(1, \frac{k^t}{M^2}\right) (t\log(d/\gamma))(ct^2 \log(d/\gamma))^{2t}\max\left(1, \frac{1}{(ct)^{2t}}\right) \lesssim  (kt\log(d/\gamma))^{5t}\max(1,M^{-2}).
	\end{align*}
	
	For the second claim we use  Part 2 of \Cref{lem:basic_linf_consc-full} with $f(x)=O(x)$, $\delta=M^{1/t}\eps^{1-1/t}$, and $\gamma=0.1$, which means that the required number of samples is at most $C kM^{-2/t}\eps^{-2+2/t} (\log(d/\gamma))^{3}$. 
	Finally, we note that $\max(M^{-2},M^{-2/t},1) = \max(M^{-2},1)$.
\end{proof}

\newcommand*{\ak}{\cAksparse}
\newcommand*{\gax}{\cA_{\textnormal {G-sparse-mean-est}}}

\section{Robust Sparse Mean Estimation with Unknown Covariance}
\label[section]{sec:sparse-mean-est}

Given that the inliers have certifiably bounded moments in $k$-sparse directions (which happens with high probability because of \Cref{lem:pop-to-empirical}), we show that our SoS algorithm finds a vector that is within $O(M^{1/{t}} \eps^{1-1/{t}} )$ of the empirical mean of the inliers. In this section, we show the following theorem, which when combined with \Cref{lem:pop-to-empirical} shows \Cref{thm:main-poincare-informal}.

\begin{restatable}{theorem}{SosMainThmBddMoments}\label[theorem]{thm:main_sparse_mean_estimation} 
Let $t\in \N$ be a power of 2 and $\eps \leq \eps_0$ for a sufficiently small constant $\eps_0$. Let $X_1,\ldots, X_m \in \R^d$ be such that the uniform distribution on $\{X_1, \dots, X_m\}$ has $(M,t)$ certifiably bounded moments in $k$-sparse directions (see \Cref{def:bounded-moments-k-sparse}). Given $\eps,k,M,t $ and any $\eps$-corruption of $X_1,\ldots, X_m$, \Cref{alg:sparse_recovery} runs for time\footnote{We will assume that the bit complexity of the input and the proof of $(M,t)$ certifiably bounded moments is at most $\poly(m^t,d^t)$.}
 $(md)^{O(t^2)}$ and returns a vector $\widehat{\mu}$ with $\|\widehat{\mu} - \E_{i \sim [m]}[X_i] \|_{2,k} = O(M^{1/{t}} \eps^{1-1/{t}} )$. 
\end{restatable}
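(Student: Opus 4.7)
The plan is to follow the standard sum-of-squares \emph{proofs-to-algorithms} paradigm. I set up a polynomial system in ghost variables $Y_1,\dots,Y_m \in \R^d$ and Boolean indicators $w_1,\dots,w_m$ whose intended meaning is that the $Y_i$ are repaired versions of the observed $T_i$, with $w_i=1$ marking the indices where $Y_i = T_i$. The axioms are (i) $w_i^2 = w_i$, (ii) $w_i(Y_i - T_i)=0$, (iii) $\frac{1}{m}\sum_i w_i \ge 1-\eps$, and (iv) a polynomial identity asserting that the uniform distribution over $\{Y_i\}$ has $(8M,t)$-certifiably bounded central moments in $k$-sparse directions, expressed jointly with auxiliary variables $(v,z)$ subject to $\cAksparse$. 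Feasibility is witnessed by $Y_i := X_i$ and $w_i := \mathbf 1\{T_i = X_i\}$, since the moment axiom (iv) is then exactly the hypothesis on $\{X_i\}$. A norm bound $\sum_i \|Y_i\|_2^2 \le \poly(md,M)$ follows from axiom (iv) applied along coordinate directions, so \Cref{thm:sos-algo} applies and yields a degree-$O(t)$ pseudoexpectation $\pE$ in time $(md)^{O(t)}$.

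The heart of the argument is the identifiability SoS proof
\[
  \cA \cup \cAksparse \;\sststile{O(t)}{Y,w,v,z}\; \Bigiprod{v,\ \E_i[Y_i] - \E_i[X_i]}^{t} \;\le\; C\, M\, \eps^{t-1}.
\]
Introduce the fixed scalar constants $b_i := \mathbf 1\{T_i = X_i\}$; these are known numbers, not SoS variables. Axioms (ii) and (iii) give $w_i b_i (Y_i - X_i) = 0$ and $\E_i[w_i b_i] \ge 1-2\eps$, hence $\bar Y - \bar X = \E_i[(1 - w_i b_i)(Y_i - X_i)]$. Applying SoS H\"older (valid for $t$ a power of two via iterated Cauchy--Schwarz, and using $(1-w_i b_i)^2 = 1 - w_i b_i$) one obtains
\[
  \iprod{v,\bar Y - \bar X}^t \;\le\; \Bigl(\E_i[1 - w_i b_i]\Bigr)^{t-1} \cdot \E_i\bigl[\iprod{v, Y_i - X_i}^t\bigr] \;\le\; (2\eps)^{t-1}\, \E_i\bigl[\iprod{v, Y_i - X_i}^t\bigr].
\]
An SoS triangle inequality $\iprod{v, Y_i - X_i}^t \le 3^{t-1}(\iprod{v, Y_i - \bar Y}^t + \iprod{v, \bar Y - \bar X}^t + \iprod{v, X_i - \bar X}^t)$ together with the two certified moment bounds (the one on $\{Y_i\}$ from axiom (iv) and the one on $\{X_i\}$ from the hypothesis) gives $\E_i[\iprod{v, Y_i - X_i}^t] \le O(M) + 3^{t-1}\iprod{v, \bar Y - \bar X}^t$. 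Substituting back and absorbing the self-referential term (possible for $\eps$ smaller than an absolute constant, since $\eps^{t-1} \cdot 3^{t-1}$ is then at most $1/2$) yields the displayed identifiability inequality.

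For rounding, set $\widehat\mu := \pE[\E_i[Y_i]]$. Fix any $k$-sparse unit vector $v$ and choose any $z \in \{0,1\}^d$ with $z_i \ge \mathbf 1\{v_i \neq 0\}$ and $\sum z_i = k$, making $(v,z)$ a feasible assignment for $\cAksparse$. Specializing the identifiability SoS proof at this $(v,z)$ turns it into a pure SoS inequality in $(Y,w)$, which $\pE$ satisfies; combining with the SoS Jensen bound $\pE[p]^t \le \pE[p^t]$ for even $t$ gives $\iprod{v, \widehat\mu - \bar X}^t \le \pE[\iprod{v, \bar Y - \bar X}^t] \le O(M\eps^{t-1})$, uniformly over $k$-sparse unit $v$, so $\|\widehat\mu - \bar X\|_{2,k} \le O(M^{1/t}\eps^{1-1/t})$ as required. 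The main obstacle is carrying out the H\"older step and the absorption step entirely within the SoS proof system without blowing up the degree: this forces $t$ to be a power of two (so $t/(t-1)$-H\"older reduces to iterated Cauchy--Schwarz on Boolean $w_i b_i$) and requires axiom (iv) to be packaged as a genuine SoS identity in $(Y,v,z)$ of degree $O(t)$ so that every pseudoexpectation satisfying the joint program automatically respects the moment bound on the ghost samples.
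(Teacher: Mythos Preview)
Your proposal is correct and follows essentially the same approach as the paper: set up the polynomial system with ghost samples and matching constraints, prove identifiability via SoS H\"older plus the SoS triangle inequality plus an absorption step, and round via $\widehat\mu = \pE[\mu']$ combined with pseudoexpectation Cauchy--Schwarz. The only notable variation is that the paper carries out the identifiability argument at the $2t$-th power level (bounding $\iprod{v,\overline\mu-\mu'}^{2t}$), because \Cref{def:bounded-moments-k-sparse} is phrased as a bound on $\bigl(\E_i[\iprod{v,X_i-\overline\mu}^t]\bigr)^2$; you instead work at the $t$-th power level, which implicitly uses the one-line SoS identity $M-p=\tfrac{1}{2M}\bigl((M^2-p^2)+(M-p)^2\bigr)$ to pass from the squared hypothesis to an unsquared moment bound---this is valid and keeps the degree $O(t)$, so both routes give the same result.
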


\paragraph{Additional Notation} To avoid confusion, we fix the following notation for the rest of the paper. We use $X_1, \ldots, X_m$ to denote the inlier points. Their empirical mean and covariance is denoted by $\overline{\mu}$ and $\overline{\Sigma}$ respectively. The points $Y_1,\ldots, Y_m$ are the $\eps$-corrupted set of samples. We use $X_1',\ldots, X_m'$ to denote vector-valued variables of length $d$ for the SoS program and $\mu',\Sigma'$ to denote their empirical mean and covariance. Finally,  $w_1,\ldots,w_m$ will be scalar-valued variables of the SoS program. 
\newline

Our algorithm is based on the system of polynomial inequalities defined in \Cref{def:axioms} below, which capture the following properties of the uncorrupted samples: (i) $X_i' = Y_i$ for all but $\e m$ indices, and (ii) The $t$-th moment of the uniform distribution on $\{ X_i' \}_{i=1}^m$ is certifiably bounded in every $k$-sparse direction. Although the last constraint seems complicated,  we show in \Cref{app:quantifier_elim} that it can be expressed as $d^{O(t)}$ polynomial constraints, in an additional $\poly((md)^t)$ variables. 
Finally, our algorithm \textsc{Sparse-mean-est} will solve a semidefinite programming (SDP) relaxation of the polynomial system $\cA_{\textnormal {sparse-mean-est}}$. 

\begin{definition}[Sparse Mean Estimation Axioms $\cA_{\textnormal {sparse-mean-est}}$]
	\label[definition]{def:axioms}
	Let $Y_1,\ldots,Y_m \in \R^d$. 
	Let $t \in \N$ be even and let $\delta, \epsilon > 0$. $\cA_{\textnormal{sparse-mean-est}}$ denotes the system of the following constraints. 
	\begin{enumerate}
	   \itemsep0.1em 
		\item Let $\mu' = \tfrac 1 m \sum_{i=1}^m  X_i'$.
		\item Let $\cA_{\textnormal{corruptions}} := \{ w_i^2 = w_i \}_{i \in [m]} \cup \{ w_i (Y_i - X_i') = 0 \}_{i \in [m]} \cup  \{ \sum_{i \in [m]} w_i = (1-\e)m\}$.
		\item \label{item:cons} $X'_1, \dots, X'_m$ satisfy $(M, t)$ certifiably bounded moments in $k$-sparse directions (\Cref{def:bounded-moments-k-sparse}). 
	\end{enumerate}
\end{definition}

\begin{algorithm}[htb]{}
	\begin{algorithmic}[1]
		\Function{Gaussian-Sparse-mean-est}{$Y_1,\ldots,Y_m,t,M,\epsilon, k$}
		\State Find a pseudo-expectation $\pE$ of degree $10t$ which satisfies the system of \Cref{def:axioms}. 
		\State \Return  $\hat{\mu} := \pE \Brac{\mu'}$.
		\EndFunction
	\end{algorithmic}
	\caption{Robust Sparse Mean Estimation}
	\label{alg:sparse_recovery}
\end{algorithm}
\noindent

\subsection{Proof of  \Cref{thm:main_sparse_mean_estimation}}\label{sec:proof_of_main_theorem}

 We first show that the system given in \Cref{def:axioms} is feasible: Observe that the following assignments satisfy the constraints: $X_i' = X_i$,  $w_i = \mathbf{1}_{(Y_i = X_i)}$, $\mu' = \frac{1}{m} \sum_i X_i$.
It is easy to check that the first two constraints are satisfied. The fact that the final constraint is satisfied follows from the assumption of the theorem and \Cref{fact:quant_el_fact}.

In what follows, we assume that $v$ is a fixed sparse vector.
The proof consists of first showing that $\iprod{v,\ovl \mu - \mu'}^{2t} \leq O(M^2\eps^{2t-2})$ %
has an SoS proof and then showing that $\pE[\mu']$ also satisfies the same inequality as $\mu'$ does. 

We start with the first step.
The program variables $X_i'$ have constraints which ensure that a $(1-\eps)$ fraction of these will match the data, $Y_i$. The following standard claim (shown in \Cref{sec:omittedsec4}) shows that the program variables match a $(1-2\eps)$ fraction of the \emph{uncorrupted} samples $X_i$. 	Note that in the claim below the $r_i$ are constants, even though they are not known to the algorithm. 

	\begin{restatable}{claim}{ClaimSimpleCheck}
\label{claim:simplecheck}
		Let $r_i := \mathbf{1}_{X_i = Y_i}$ and $W_i:=w_ir_i$ and $b$ be the bit complexity of $S = \{X_1,\ldots, X_m\}$, then there exists an SoS proof of  $\Set{ W_i^2 = W_i}_{i=1}^m \cup \Set{ \sum_{i=1}^m (1-W_i) \leq 2\epsilon m} \cup \Set{ W_i \Paren{X_i - X'_i} = 0}_{i=1}^m $ from the axioms $\{ W_i = w_i r_i\}_{i=1}^m \cup \cA_{\textnormal{corruptions}}$ of bit complexity at most $\poly(m, d, b)$. 
		\label{claim:indicator_w}
	\end{restatable}
	\noindent We now work towards an upper bound on $\iprod{v, \overline{\mu} - \mu'}^{2t}$. Let $r_i := \mathbf{1}_{X_i = Y_i}$ and $W_i:=w_i r_i$ as above. We first show that there is an SoS proof
	 for $\iprod{v, \overline{\mu} - \mu'}^{2t} \leq (2\epsilon)^{2t-2} ~ \E_{i \sim [m]} \left[\iprod{v, X_i - X'_i}^{t}\right]^2$:
	\begin{align}
		\cA_{\textnormal {sparse-mean-est}} \sststile{O(t)}{} \iprod{v, \overline{\mu} - \mu'}^{2t} &=  \Big(\E_{i \sim [m]} \left[(1-W_i) \iprod{v, X_i - X_i'} \right]\Big)^{2t} \notag \\ 
		&\leq \Big( \Big(\E_{i \sim [m]} [1-W_i] \Big)^{t-1} \E_{i \sim [m]} \left[\iprod{v, X_i - X'_i}^{t}\right]\Big)^2  \notag\\
		&\leq (2\epsilon)^{2t-2} ~ \E_{i \sim [m]} \left[\iprod{v, X_i - X'_i}^{t}\right]^2 \;, %
	\end{align}
	where we used SoS H\"older (\Cref{fact:sos-holder}) and \Cref{claim:indicator_w}.
	Now, to bound $\E_{i \sim [m]} \left[\iprod{v, X_i - X'_i}^{t}\right]^2$, first observe that $\iprod{v, X_i - X'_i} = \iprod{v, X_i - \overline{\mu}} + \iprod{v, \overline{\mu} - \mu'} + \iprod{v, \mu' - X_i'}$. Applying SoS triangle inequality (\Cref{fact:sos-triangle}) twice, we see that there is an $O(t)$-degree SoS proof of the following
	\begin{align*} 
		\E_{i \sim [m]} \left[\iprod{v, X_i - X'_i}^{t}\right]^2\leq 3^{2t+2} \left(  \E_{i \sim [m]} \left[ \iprod{v, X_i - \overline{\mu}}^{t} \right]^2  + \iprod{v, \overline{\mu} - \mu'}^{2t}  + \E_{i \sim [m]} \left[ \iprod{v, \mu' - X'_i}^{t} \right]^2  \right).
	\end{align*} 
	The first and the last term above can be bounded by $M^2$.  The bound on the first term follows from the assumption that the moments of the uncorrupted dataset are bounded, and the bound on the second term follows from constraint \eqref{item:cons} of our program (\Cref{def:axioms}) as in the \Cref{fact:cons_sos_proof}.

	\noindent Putting these together, thus far we have shown that
	\begin{align*}
		\cA_{\textnormal {sparse-mean-est}} \sststile{O(t)}{} \iprod{v, \overline{\mu} - \mu'}^{2t} &\leq (2\eps)^{2t-2} \cdot 3^{2t + 2} \Paren{2M^2 + \iprod{v, \overline{\mu} - \mu'}^{2t} } \\
		&\leq 6^{2t + 2} \cdot \eps^{2t - 2} \cdot \Paren{M^2 + \iprod{v, \overline{\mu} - \mu'}^{2t}}.
	\end{align*}
	Rearranging and using the assumption that $\eps<3/1000$  implies $6^{2t + 2} \cdot \epsilon^{2t-2} \leq 1/2$, we get that
	\begin{align} \label{eq:what_we_got}
		\cA_{\textnormal {sparse-mean-est}} &\sststile{O(t)}{} \iprod{v, \overline{\mu} - \mu'}^{2t} \leq \eps^{2t-2} \cdot \frac{6^{2t + 2} \cdot M^2}{1- 6^{2t + 2} \cdot \epsilon^{2t-2}} \leq 6^{2t+3} M^2 ~\epsilon^{2t-2}.
	\end{align}

\noindent Finally, taking pseudoexpectations on both sides of \Cref{eq:what_we_got} and  using \Cref{fact:pseudo-expectation-cauchy-schwarz} (pseudoexpectation Cauchy-Schwartz), we see that $\iprod{v, \overline{\mu} - \pE [\mu'] } \leq O(M^{1/{t}} \eps^{1-1/t} )$ for all $k$-sparse unit vectors, or equivalently $\| \overline{\mu} -\pE \mu'\|_{2,k} = O(M^{1/{t}} \eps^{1-1/t} )$. This completes the proof of \Cref{thm:main_sparse_mean_estimation}.

To see that the runtime is $(md)^{O(t^2)}$, note that the algorithm looks for a  pseudoexpectation  which has degree $O(t)$ in the variables $\{w_1, \dots, w_m\} \cup \{X'_1, \dots, X'_m, 
\mu'\}$ and the $d^{O(t)}$ variables that come from (3) in \Cref{def:axioms}.

\section{Achieving Near-optimal Error for Gaussian Inliers}
\label{sec:Gaussian-O-eps}

In this section, we show a $(k^4/\eps^2)\polylog(d/\eps)$ sample, polynomial time algorithm to estimate the mean of a multivariate Gaussian distribution in $k$-sparse directions. 
Our starting point is the recent work of \cite{kothari2021polynomial}, who established a similar result in the dense setting. 
While they are able to handle ill-conditioned covariances, it is not possible to directly use their result for the sparse setting because we cannot use off-the-shelf multiplicative concentration inequalities. Instead, we start with assuming a well-conditioned covariance to prove the theorem below, and then use Lepskii's method in \Cref{app:lepski} to get \Cref{thm:main-gaussian-informal} for arbitrary covariances.

\begin{theorem}\label{thm:gaussian_well_cond}
Let $k,d \in \Z_+$ with $k\leq d$ and $\eps<\eps_0$ for a sufficiently small constant $\eps_0$. Let $\mu \in \mathbb{R}^d$ and $\Sigma \in \R^{d \times d}$ such that $I_d \preceq \Sigma \preceq 2 I_d$.
Let $m > C({k^4}/{\eps^2}) \log^5(d/(\gamma\eps))$ for a large enough constant $C>0$.
There exists an algorithm which, given $\eps$,$k$, and an $\eps$-corrupted set of samples from $\cN(\mu,\Sigma)$ of size $m$, returns an estimate $\widehat{\mu}$ in time $\poly(md)$ such that, with probability $1-\gamma$, $\widehat{\mu}$ satisfies $\| \widehat{\mu} - \mu \|_{2, k} \leq \tilde{O}(\eps)  $.
\end{theorem}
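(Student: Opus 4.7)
The plan is to adapt the degree-$4$ sum-of-squares approach of \cite{kothari2021polynomial}, which handled the dense Gaussian setting via certifiable hypercontractivity, to the sparse setting with $\poly(k, \log d)$ samples. The well-conditioning hypothesis $I \preceq \Sigma \preceq 2I$ is exactly what will let us trade additive concentration (which is what \Cref{claim:sparse-upper-bound} and \Cref{lem:basic_linf_consc-full} give us in the sparse regime) for multiplicative concentration (which is what a degree-$4$ identifiability proof requires to reach $\tilde O(\eps)$ error). First, I would set up an SoS program analogous to \Cref{def:axioms} for ghost samples $X_1',\ldots,X_m'$ and indicator weights $w_i$, but with the third constraint replaced by a sparse certifiable hypercontractivity axiom: for $v$ ranging over $k$-sparse unit vectors (encoded by $\cAksparse$),
\[
\cAksparse \sststile{O(1)}{v,z} \E_{i\sim[m]}\bigl[\iprod{v,X_i'-\mu'}^{4}\bigr] \;\leq\; (3+\tilde{O}(\eps))\cdot \Bigl(\E_{i\sim[m]}\bigl[\iprod{v,X_i'-\mu'}^{2}\bigr]\Bigr)^{2}.
\]
This is the natural sparse analog of the degree-$4$ certifiable hypercontractivity used in \cite{kothari2021polynomial} and, by the quantifier-elimination trick referenced in the previous section, can be written as $d^{O(1)}$ polynomial constraints in $\poly(md)$ auxiliary variables.

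Second, I would establish feasibility and identifiability. Feasibility follows from the assignment $X_i'=X_i$, $w_i = \mathbf{1}_{X_i=Y_i}$, provided the empirical distribution on the true inliers satisfies sparse certifiable hypercontractivity; this is the concentration step below. For identifiability, let $r_i := \mathbf{1}_{X_i=Y_i}$ and $W_i := w_i r_i$, so by \Cref{claim:simplecheck} we have $\sum_i (1-W_i) \leq 2\eps m$ with an SoS proof. Writing $\iprod{v,\ovl\mu-\mu'} = \E_i[(1-W_i)\iprod{v,X_i-X_i'}]$ and applying SoS Cauchy--Schwartz as in \Cref{sec:proof_of_main_theorem}, we get
\[
\iprod{v,\ovl\mu-\mu'}^{2} \;\leq\; 2\eps\cdot \E_{i\sim[m]}\bigl[\iprod{v,X_i-X_i'}^{2}\bigr].
\]
Rather than raising this to the $t$-th power as in \Cref{sec:sparse-mean-est}, I apply SoS Cauchy--Schwartz once more, relating $\E_i[\iprod{v,X_i-X_i'}^{2}]$ to $\sqrt{\E_i[\iprod{v,\cdot}^4]}$, and then invoke the hypercontractivity axiom to control the fourth moment by the square of the second. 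After the standard splitting into the inlier, outlier, and shift contributions and a rearrangement, this yields
\[
\iprod{v,\ovl\mu-\mu'}^{2} \;\leq\; \tilde O(\eps^{2})\cdot \bigl(v^\top \Sigma v + \iprod{v,\ovl\mu - \mu'}^{2}\bigr),
\]
which, because $\Sigma \preceq 2I$, rearranges to $\iprod{v,\ovl\mu-\mu'}^{2} \leq \tilde O(\eps^{2})$ with an SoS proof, and therefore $\|\pE[\mu']-\ovl\mu\|_{2,k} = \tilde O(\eps)$ after pseudoexpectation Cauchy--Schwartz.

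Third, and this is the technical heart of the theorem, I would prove that $m = \tilde O(k^4/\eps^2)$ samples from $\cN(\mu,\Sigma)$ with $I\preceq \Sigma \preceq 2I$ suffice to certify sparse hypercontractivity on the empirical distribution. The population inequality $\E[\iprod{v,X-\mu}^{4}] = 3(v^\top\Sigma v)^2$ is a polynomial identity in $v$, so certifying it in $k$-sparse directions reduces to showing that the empirical second- and fourth-moment tensors of $X-\ovl\mu$ are $\ell_\infty$-close to their Gaussian counterparts. Applying \Cref{claim:sparse-upper-bound} to the coefficient-wise difference and using the Gaussian case of \Cref{lem:basic_linf_consc-full} (with $f(s) = O(\sqrt{s})$), the empirical fourth moment of $\iprod{v,X-\ovl\mu}^4$ in any $k$-sparse $v$ is within additive $\tilde O(k^{2}/\sqrt{m}) = \tilde O(\eps)$ of the population value. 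Crucially, the lower bound $\Sigma \succeq I$ forces $(\E[\iprod{v,X-\mu}^{2}])^{2} \geq 1$ for unit $v$, so this additive error can be absorbed into the $(3+\tilde O(\eps))$ multiplicative slack in the hypercontractivity axiom, establishing feasibility.

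The main obstacle will be the third step: pushing the additive concentration machinery of \Cref{sec:sparsity} through to obtain a \emph{multiplicative} certifiable hypercontractivity statement in $k$-sparse directions. This is exactly why the theorem is stated for well-conditioned $\Sigma$ and why \Cref{app:lepski} subsequently uses Lepskii's method together with $\cN(0,I)$ noise addition to reduce the general case of \Cref{thm:main-gaussian-informal} to this one. A secondary technical point will be verifying that the auxiliary variables introduced to encode the hypercontractivity axiom keep the SoS program size at $\poly(md)$, so that the runtime matches the $\poly(md)$ claim of the theorem rather than the $(md)^{O(t)}$ runtime of \Cref{alg:sparse_recovery}.
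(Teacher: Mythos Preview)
Your identifiability step has a genuine gap: the chain of inequalities you describe cannot produce the prefactor $\tilde O(\eps^2)$ you claim. From $\iprod{v,\ovl\mu-\mu'} = \E_i[(1-W_i)\iprod{v,X_i-X_i'}]$, one application of SoS Cauchy--Schwarz gives $\iprod{v,\ovl\mu-\mu'}^2 \leq 2\eps\cdot \E_i[\iprod{v,X_i-X_i'}^2]$, but now the factor $(1-W_i)$ has been consumed; applying Cauchy--Schwarz ``once more'' to pass to a fourth moment does not extract a second $\eps$. If instead you use SoS H\"older at degree $4$ from the start, you get $\iprod{v,\ovl\mu-\mu'}^4 \leq (2\eps)^3\,\E_i[\iprod{v,X_i-X_i'}^4]$, and after the triangle-inequality split and hypercontractivity the right-hand side contains $(v^\top\ovl\Sigma v)^2 + \iprod{v,\ovl\mu-\mu'}^4 + (v^\top\Sigma' v)^2$. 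Since the program only certifies $(v^\top\Sigma' v)^2 \leq O(1)$, this rearranges to $\iprod{v,\ovl\mu-\mu'} = O(\eps^{3/4})$, not $\tilde O(\eps)$. Your displayed inequality silently replaced $v^\top\Sigma' v$ by $v^\top\Sigma v$ and upgraded the prefactor to $\tilde O(\eps^2)$; neither step is justified.

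What is missing is exactly the covariance-estimation component that \cite{kothari2021polynomial} introduced and that the paper adapts here. The paper's argument does \emph{not} bound $\iprod{v,\ovl\mu-\mu'}$ directly. Instead, \Cref{lem:specialized_mean} (via the generic estimation lemma of \cite{kothari2021polynomial} applied with the \emph{resilience} conditions of \Cref{lem:resilience_integrated}) shows
\[
|\iprod{v,\widehat\mu-\ovl\mu}| \;\leq\; \tilde O(\eps)\sqrt{v^\top\ovl\Sigma v} \;+\; \sqrt{\,O(\eps)\,v^\top(\widehat\Sigma-\ovl\Sigma)v \;+\; \tilde O(\eps^2)\,v^\top(\widehat\Sigma+\ovl\Sigma)v\,},
\]
so the $\tilde O(\eps)$ target reduces to proving $|v^\top(\widehat\Sigma-\ovl\Sigma)v|\leq \tilde O(\eps)\,v^\top\ovl\Sigma v$ for all $k$-sparse $v$. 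This is done separately in \Cref{lem:specialized_cov} by running the same generic estimation lemma on the \emph{pairwise differences} $X_{ij}=\tfrac12(X_i-X_j)(X_i-X_j)^\top$, which requires fourth-moment resilience in $k$-sparse directions (\Cref{it:res7} of \Cref{lem:resilience_integrated}) and is the source of the $k^4$ sample complexity. Your proposal has the right concentration machinery for feasibility (Step~3 is essentially \Cref{lem:certifiable4th}), but without this coupled covariance step the identifiability argument stalls at $O(\eps^{3/4})$.
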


An important component of the algorithm for the sparse setting is the SoS program given by \Cref{def:gaxioms}. Our notation is as before, with the addition of $\Sigma'$, which is a $d \times d$ matrix-valued indeterminate. Additionally, define $\widehat{\mu} = \pE[\mu'], \widehat{\Sigma} = \pE[\Sigma'], \overline{\mu}=\E_{i\sim [m]}[X_i], \overline{\Sigma}=\E_{i\sim [m]}[(X_i-\overline{\mu})(X_i-\overline{\mu})^T], Y_{ij} = \frac{1}{2}(Y_i - Y_j)(Y_i - Y_j)^T, X_{ij} = \frac{1}{2}(X_i - X_j)(X_i - X_j)^T$.

\begin{definition}[Gaussian Sparse Mean Estimation Axioms $\gax$]
	\label[definition]{def:gaxioms}
	Let $Y_1,\ldots, Y_m \in \R^d$. 
	Let $0<\eps<1/2$. We define $\gax$ to be the following constraints.
	\begin{enumerate}
	    \itemsep0.1em 
		\item  $\mu' = \tfrac 1 m \sum_{i=1}^m  X_i'$ and $\Sigma' = \frac{1}{m}\sum_{i=1}^m (X_i'-\mu')(X_i'-\mu')^T$. 
		\item  $\cA_{\textnormal{corruptions}} := \{ w_i^2 = w_i \}_{i \in [m]} \cup \{ w_i (Y_i - X_i') = 0 \}_{i \in [m]} \cup  \{ \sum_{i \in [m]} w_i = (1-\e)m\}$. 
		\item $\ak \sststile{8}{}\left(  \E_{i \sim [m]} \Brac{\iprod{v, X_i'-\mu'}^4}- 3(v^T \Sigma' v)^2   \right)^2 \leq \tilde{O}(\eps^2)  (v^T \Sigma' v)^4$.
		\item $\ak \sststile{2}{} (  v^T \Sigma' v  )^2 \leq 9$.

	\end{enumerate}
\end{definition}

More simply, $\gax$ consists of constraints that capture the following: (1) $X_i' = Y_i$ for all but $\e m$ indices; and (2) The fourth moment of the uniform distribution on $\{ X_i' \}_{i \in [m]}$ is bounded in $k$-sparse directions. The algorithm (\Cref{alg:g_sparse_recovery}) consists of finding a degree-12 pseudo-expectation that satisfies $\gax$, and estimates the sparse mean up to an error of  $\tilde{O}(\eps)$. 

\begin{algorithm}[htb]{}
	\begin{algorithmic}[1]
		\Function{Sparse-mean-est}{$Y_1,\ldots,Y_m,\epsilon, k$}
		\State Find a pseudo-expectation $\pE$ of degree-$12$  that satisfies the program of \Cref{def:gaxioms}. 
		\State Let $\widehat{\mu} = \tilde{\E}[\mu']$ \label{line:hatmu} and 
		output $\widehat{\mu}$. 
		\label{line:final_output}
		\EndFunction
	\end{algorithmic}
	\caption{Robust Sparse Mean Estimation}
	\label{alg:g_sparse_recovery}
\end{algorithm}
\noindent

\subsection{Deterministic Conditions for Inliers} \label{sec:resilience_main}

We require a set of deterministic conditions similar to that in \cite{kothari2021polynomial}. However, instead of proving the result for \emph{all} directions, we will instead require this only for \emph{$k$-sparse} directions. 
We show that, with high probability, a set of $(k^4/\eps^2)\polylog(d/\eps)$ samples drawn from $\cN(\mu, \Sigma)$ satisfy the following set of conditions.

\begin{restatable}{lemma}{LemResilienceIntegrated} \label{lem:resilience_integrated} 
Let $m > C (k^4/\eps^2)\log^5(d/(\eps\gamma))$ for a sufficiently large constant $C$. 
Let $X_1,\ldots, X_m \sim \cN(\mu,\Sigma)$ for $\mu \in \R^d$ and a positive definite matrix $I_d \preceq \Sigma \preceq 2 I_d$. 
	Let $T$ denote the set of all $a \in [0,1]^{m \times m}$ and $a' \in [0,1]^{m}$ such that  (i)
 $a_{ij} = a_{ji}$ for all $i,j \in [m]$, (ii)
		 $\E_{ij} [a_{ij}] \geq 1-4\eps$, and (iii)
		 $\E_{j} [a_{ij}] \geq a'_i(1-2\eps)$ for all $i \in [m]$ and $a_{ij}\leq a'_i$ for all $i,j \in [m]$.
		 Denote $X_{ij}:=(1/2)(X_i-X_j)(X_i-X_j)^T$ and $\overline{\Sigma}:=\E_{ij}[X_{ij}]$.
		 With probability $1-\gamma$, the following holds for all $v \in \cU_k$:
	\begin{enumerate}
    	\itemsep0.1em 
		\item $| \iprod{v,\overline{\mu}-\mu}| \leq \tilde{O}(\eps)\sqrt{v^T \Sigma v}$. \label{it:res1}
		\item $| \E_{i \sim [m]}[a'_i\iprod{v,X_i-\overline{\mu}}]   | \leq \tilde{O}(\eps) \sqrt{v^T \overline{\Sigma} v}$.  \label{it:res2}
		\item $\left| \E_{i \sim [m]}\left[a'_i \left(\iprod{v,X_i-\overline{\mu}}^2    -  v^T \overline{\Sigma} v \right)  \right] \right| \leq \tilde{O}(\eps) v^T \overline{\Sigma} v $.  \label{it:res3}
		\item $|v^T(\overline{\Sigma} - \Sigma)v  | \leq \tilde{O}(\eps) v^T \Sigma v$. \label{it:res4}	\item $|\E_{i,j \sim [m]}[a_{ij}(v^TX_{ij}v - v^T\overline{\Sigma}v)  ] | \leq \tilde{O}(\eps) v^T \overline{\Sigma} v$. \label{it:res6} 
		\item $|\E_{i,j \sim [m]}[a_{ij}((v^TX_{ij}v-v^T \overline{\Sigma} v)^2 -        2(v^T\overline{\Sigma} v)^2 )]  | \leq \tilde{O}(\eps) (v^T \overline{\Sigma} v)^2$. \label{it:res7}
	\end{enumerate}
\end{restatable}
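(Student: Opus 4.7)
The plan is to reduce each of the six items to a one-dimensional Gaussian resilience/concentration statement for a fixed direction $v$, and then pass to a union bound over a net of $k$-sparse unit vectors. For fixed $v \in \cU_k$, the random variable $Z_i := \langle v, X_i - \mu\rangle$ is $\cN(0, \sigma_v^2)$ with $\sigma_v^2 := v^T\Sigma v \in [1,2]$. In these variables: items (\ref{it:res1}) and (\ref{it:res4}) become standard empirical concentration statements for the mean and variance of $m$ i.i.d.\ one-dimensional Gaussians; items (\ref{it:res2})--(\ref{it:res3}) and (\ref{it:res6})--(\ref{it:res7}) become \emph{resilience} statements asking that a feasible re-weighting $a', a$ (which, by the constraints defining $T$, can modify only an $O(\eps)$-fraction of samples) perturbs the empirical first and second moments of the $Z_i$'s by at most $\tilde O(\eps)\sigma_v$ and $\tilde O(\eps)\sigma_v^2$ respectively.

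First I would carry out the one-dimensional analysis. For degree-1 statistics (items (\ref{it:res1}), (\ref{it:res2})), Gaussian tail bounds give that the average of any $2\eps m$ points has magnitude at most $\tilde O(\eps)\sigma_v$ with failure probability $\exp(-\Omega(\eps^2 m /\log(1/\eps)))$. For degree-2 statistics (items (\ref{it:res3}), (\ref{it:res4}), (\ref{it:res6})), I would use subexponential concentration for $Z_i^2$, and for the degree-4 statistic (item (\ref{it:res7})) I would use that $Z_i^4$ has sub-$1/2$-exponential tails, so that the worst-case contribution of the $2\eps m$ largest values is $\tilde O(\eps)\sigma_v^4$. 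In every case, the key point is that the supremum over $a, a' \in T$ is realized by placing all the deletion mass on the samples maximizing the relevant polynomial in $Z_i$, reducing the resilience statement to a one-dimensional tail calculation. Next I would cover $\cU_k$ by an $\eps$-net $\mathcal{N}$: any $k$-sparse unit vector is supported on a subset $S$ of size $k$, and a standard volume argument gives $|\mathcal{N}| \leq \binom{d}{k}(O(1/\eps))^k = \exp(O(k\log(d/\eps)))$. A union bound then yields all six items for vectors in $\mathcal{N}$, and extension to all of $\cU_k$ is via Lipschitz continuity of each statistic in $v$ (the degree-4 statistic being the most delicate, since it is Lipschitz with constant proportional to the empirical fourth moment, which is itself controlled by items (\ref{it:res3}) and (\ref{it:res7})).

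The main obstacle will be items (\ref{it:res6}) and (\ref{it:res7}), which are U-statistics in the pair variables $X_{ij}$ and involve quartic polynomials of $v$ applied to the samples. There are two subtleties. First, uniformly bounding a fourth-moment quantity like $\E_{ij}[(v^T X_{ij} v)^2]$ over $v \in \cU_k$ requires controlling the $\ell_\infty$-norm of the centered degree-4 empirical moment tensor and then combining with a sparse-polynomial bound in the spirit of \Cref{claim:sparse-upper-bound}; this is exactly what drives the sample complexity to $\tilde\Omega(k^4/\eps^2)$, matching the $k^t$-dependence in \Cref{lem:pop-to-empirical} at $t=4$. Second, the $a_{ij}$-weighted U-statistic has dependence between the two indices that a naive net argument cannot dissolve; I would handle this by a Hoeffding-style decoupling, replacing $(X_i, X_j)$ with independent copies and reducing to the resilience statement for a sum of independent degree-4 Gaussian polynomials, which can then be analyzed by the one-dimensional argument above. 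The $\log^5(d/(\eps\gamma))$ factor in the sample complexity accounts for the subexponential tails of degree-4 Gaussian polynomials together with the $\exp(k\log(d/\eps))$ net size in the union bound.
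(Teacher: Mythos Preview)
Your high-level strategy is correct and matches the paper's: reduce to one-dimensional Gaussian resilience for fixed $v$, then union-bound over a net of $\cU_k$. The identification of the $k^4/\eps^2$ sample complexity as coming from the degree-4 statistic is also right. However, your plan for items~(\ref{it:res6})--(\ref{it:res7}) differs from the paper in a way worth knowing about.

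You propose Hoeffding decoupling to handle the pair-indexed U-statistic. The paper avoids decoupling entirely by a direct algebraic expansion: writing $X_i-X_j=(X_i-\overline\mu)-(X_j-\overline\mu)$, one has
\[
v^T X_{ij}v-v^T\overline\Sigma v=\tfrac12\bigl(\langle v,X_i-\overline\mu\rangle^2-v^T\overline\Sigma v\bigr)+\tfrac12\bigl(\langle v,X_j-\overline\mu\rangle^2-v^T\overline\Sigma v\bigr)-\langle v,X_i-\overline\mu\rangle\langle v,X_j-\overline\mu\rangle,
\]
and similarly $(v^T X_{ij}v-v^T\overline\Sigma v)^2$ expands into six pieces, each a product of single-index quantities. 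The constraints defining $T$ are then used pointwise: since $a_{ij}\le a'_i$ and $\E_j[a_{ij}]\ge (1-2\eps)a'_i$, any term $\E_{ij}[a_{ij}\,g(X_i)]$ with $g\ge 0$ is bounded by $\E_i[a'_i\,g(X_i)]$, which is exactly a single-index resilience expression already controlled by items~(\ref{it:res2})--(\ref{it:res3}) or by a separate degree-4 single-index resilience lemma; cross terms are handled by Cauchy--Schwarz. This is simpler than decoupling, which would have to commute with the supremum over $a\in T$, and it explains why the constraints on $T$ take the specific form they do.

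On the degree-4 concentration: the paper does use the tensor/$\ell_\infty$ route you mention (viewing $u=(vv^T)^\flat$ as $k^2$-sparse, bounding the Frobenius norm of $k^2\times k^2$ submatrices of the centered degree-4 moment matrix, hence needing each entry accurate to $\eps/k^2$), but your alternative of taking a net of size $\exp(O(k\log(d/\eps)))$ and using the $\exp(-c(\eps\sqrt m)^{1/2})$ tail from hypercontractivity for degree-4 Gaussian polynomials also lands at $m\gtrsim k^4\polylog(d/\eps)/\eps^2$. Either works; the paper's organization is to first prove single-index resilience (degree 1--2 cited from prior work, degree-4 proved fresh) and then derive items~(\ref{it:res6})--(\ref{it:res7}) purely algebraically from those.
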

The proof of this lemma is provided in \Cref{app:gaussian-resilience}. We also need another deterministic condition to ensure the feasibility of the program of \Cref{def:gaxioms}. For that we first need to argue that after taking enough samples, the empirical fourth moment of Gaussian is certifiably close to its population value. This is a consequence of the results of \Cref{sec:sparsity} and the assumption that $\Sigma \succeq I$.

\begin{restatable}{lemma}{LemCertifiableFourth}\label{lem:certifiable4th}
	Let $X_1,\ldots, X_m \sim \cN(\mu, \Sigma)$ for a $d \times d$ symmetric matrix $\Sigma$ satisfying $I_d \preceq \Sigma \preceq 2I_d$.
	Let $\overline{\mu}$ and $\overline{\Sigma}$ be the empirical mean and covariance of these samples, respectively. 
	If the number of samples is $
		m > C({k^4}/{\eps^2}) \log^5(d/(\epsilon\gamma))\;
	$
	for a sufficiently large constant $C$, then, with probability at least $1-\gamma$,
	we have that
	\begin{align*}
	\ak \sststile{8}{v, z} \left(  \E_{i \sim [m]} \Brac{\iprod{v, X_i-\overline{\mu}}^4}- 3(v^T \overline{\Sigma} v)^2   \right)^2 \leq O(\eps^2)  (v^T \overline{\Sigma} v)^4 \;.
	\end{align*}
\end{restatable}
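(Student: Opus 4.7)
The plan is to reduce the SoS certificate to entrywise concentration of the empirical moment tensors, and then lift via \Cref{claim:sparse-upper-bound}. By Isserlis' theorem, the Gaussian identity $\E_D[\iprod{v,X-\mu}^4] = 3(v^T\Sigma v)^2$ is a polynomial identity in $v$, so the deviation $q(v) := \E_{i\sim[m]}[\iprod{v,X_i-\overline{\mu}}^4] - 3(v^T\Sigma v)^2$ equals $\sum_{T \in [d]^4} v_T \, (\widehat T_4 - T_4)_T$, where $\widehat T_4 := \E_i[(X_i-\overline{\mu})^{\otimes 4}]$ and $T_4 := \E_D[(X-\mu)^{\otimes 4}]$ are the empirical and population fourth central moment tensors. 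Applying \Cref{claim:sparse-upper-bound} with these (real) coefficients gives the SoS inequality $\cAksparse \sststile{8}{v,z} q(v)^2 \leq k^4 \|\widehat T_4 - T_4\|_\infty^2$. Analogously, applying the claim to the degree-$2$ polynomial $v^T(\overline\Sigma - \Sigma)v$ gives $\cAksparse \sststile{4}{v,z} (v^T(\overline\Sigma - \Sigma)v)^2 \leq k^2 \|\overline\Sigma - \Sigma\|_\infty^2$.

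For the concentration step, since $\cN(\mu,\Sigma)$ with $\Sigma \preceq 2 I$ is $O(1)$-subgaussian in every coordinate direction (so that \Cref{lem:basic_linf_consc-full} applies with $f(s) = O(\sqrt s)$), I invoke Part~1 of that lemma with $t = 4$ and tolerance $\delta_4 = O(\eps/k^2)$: the sample bound $m \geq C(k^4/\eps^2) \log^5(d/\gamma)$ yields $\|\widehat T_4 - T_4\|_\infty \leq \delta_4$. With $t = 2$ and $\delta_2 = O(\eps/k)$, the same (in fact weaker) sample bound yields $\|\overline{\Sigma}-\Sigma\|_\infty \leq \delta_2$. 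Plugging back into the preceding SoS inequalities produces $\cAksparse \sststile{8}{v,z} q(v)^2 \leq O(\eps^2)$ and $\cAksparse \sststile{4}{v,z} (v^T(\overline\Sigma - \Sigma)v)^2 \leq O(\eps^2)$.

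Next I exploit the algebraic identity
\[\E_i[\iprod{v,X_i-\overline{\mu}}^4] - 3(v^T\overline{\Sigma}v)^2 = q(v) - 3\, v^T(\overline\Sigma - \Sigma)v \cdot v^T(\overline\Sigma + \Sigma)v.\]
An SoS triangle inequality (\Cref{fact:sos-triangle}) bounds its square by $2 q(v)^2 + 18 \, (v^T(\overline\Sigma - \Sigma)v)^2 (v^T(\overline\Sigma + \Sigma)v)^2$. The first summand is $O(\eps^2)$ by the previous step. For the second, the assumption $\Sigma \preceq 2 I$ together with $\sum_i v_i^2 = 1$ and the already-derived SoS bound on $(v^T(\overline\Sigma - \Sigma)v)^2$ gives $(v^T(\overline\Sigma + \Sigma)v)^2 \leq O(1)$ as an SoS fact (via SoS triangle and the fact that products of SoS-bounded, SoS-nonnegative quantities remain SoS-bounded). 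Hence the second summand is also $O(\eps^2)$, yielding $\cAksparse \sststile{8}{v,z} \bigl(\E_i[\iprod{v,X_i-\overline{\mu}}^4] - 3(v^T\overline\Sigma v)^2\bigr)^2 \leq O(\eps^2)$.

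Finally I absorb this constant bound into $(v^T\overline\Sigma v)^4$. Using $\Sigma \succeq I$, the PSD decomposition yields $\sststile{2}{v} v^T(\Sigma - I) v \geq 0$, which combined with the axiom $\sum v_i^2 = 1$ gives $\sststile{2}{v} v^T\Sigma v \geq 1$ and hence $(v^T\Sigma v)^2 \geq 1$ as SoS. The elementary inequality $(a+b)^2 \geq a^2/2 - b^2$ applied to $v^T\overline\Sigma v = v^T\Sigma v + v^T(\overline\Sigma - \Sigma)v$, combined with the SoS bound on $(v^T(\overline\Sigma - \Sigma)v)^2$, yields $(v^T\overline\Sigma v)^2 \geq 1/2 - O(\eps^2) \geq 1/4$ for $\eps$ sufficiently small, so $(v^T\overline\Sigma v)^4 \geq 1/16$ as SoS. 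Multiplying through converts the $O(\eps^2)$ bound into the desired $O(\eps^2)(v^T\overline\Sigma v)^4$. I expect the main obstacle to be this last step: obtaining an SoS-certifiable lower bound on $(v^T\overline\Sigma v)^4$ from only \emph{entrywise} (not spectral) closeness of $\overline\Sigma$ to $\Sigma$. The assumption $\Sigma \succeq I$ together with the squared-sum trick circumvents the need for a spectral estimate of $\overline\Sigma$, which would otherwise cost $\poly(d)$ samples; bit-complexity bookkeeping is routine since all coefficients are polynomially bounded.
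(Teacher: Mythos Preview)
Your proposal is correct and follows essentially the same skeleton as the paper's proof: split via the population fourth moment $3(v^T\Sigma v)^2$, bound the empirical-vs-population fourth-tensor term via \Cref{claim:sparse-upper-bound} together with \Cref{lem:basic_linf_consc-full}, and then absorb the resulting $O(\eps^2)$ into $(v^T\overline\Sigma v)^4$ using $\Sigma\succeq I$.

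The one genuine difference is how you handle the term $3(v^T\Sigma v)^2 - 3(v^T\overline\Sigma v)^2$. The paper rewrites this as $\E_{Y\sim\cN(0,\Sigma)}[\iprod{v,Y}^4]-\E_{Y\sim\cN(0,\overline\Sigma)}[\iprod{v,Y}^4]$ and then applies \Cref{claim:sparse-upper-bound} followed by a dedicated concentration lemma (\Cref{lem:similar-tensor-bound}, proved via Isserlis' theorem) to control the entrywise difference of the two Gaussian fourth-moment tensors. You instead factor directly as $3\,v^T(\overline\Sigma-\Sigma)v\cdot v^T(\overline\Sigma+\Sigma)v$, bound the first factor by $k^2\|\overline\Sigma-\Sigma\|_\infty^2$ via \Cref{claim:sparse-upper-bound} and the second by $O(1)$ via $\Sigma\preceq 2I$. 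Your route is shorter and avoids the auxiliary lemma entirely; the paper's route has the minor advantage that it packages the argument in a form reusable for arbitrary even $t$. Your final absorption step, making explicit the SoS inequality $(v^T\overline\Sigma v)^4\geq 1/16$ from $\Sigma\succeq I$ and the $(a+b)^2\geq a^2/2-b^2$ trick, is in fact cleaner than the paper's somewhat informal invocation of \Cref{it:res4} of \Cref{lem:resilience_integrated} (which is stated pointwise, not as an SoS fact) at this juncture.
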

\begin{proof}
	We have the following by the SoS triangle inequality (\Cref{fact:sos-triangle}):
	\begin{align*}
		&\ak \sststile{8}{v, z} \left(\E_{i \sim [m]} \Brac{\iprod{ v,  X_i - \overline{\mu}}^{4}} - 3(v^T \overline{\Sigma} v)^2  \right)^2\\
		&= \left(\E_{i \sim [m]} \Brac{\iprod{ v,  X_i - \overline{\mu}}^{4}}  {-} \E_{X \sim \cN(\mu,\Sigma)}\Brac{\iprod{ v,  X - \mu}^{4}} {+}  \E_{X \sim \cN(\mu,\Sigma)}\Brac{\iprod{ v,  X - \mu}^{4}} {-} 3(v^T \overline{\Sigma} v)^2 \right)^2 \\
		&\leq 4\left(\E_{i \sim [m]} \Brac{\iprod{ v,  X_i - \overline{\mu}}^{4}} - \E_{X \sim \cN(\mu,\Sigma)}\Brac{\iprod{v, X - \mu}^4} \right)^2 
		 +  4 \left(\E_{X \sim \cN(\mu,\Sigma)}\Brac{\iprod{ v,  X - \mu}^{4}} - 3(v^T \overline{\Sigma} v)^2  \right)^2. 
	\end{align*}
	We will upper bound each of the two terms above separately.  Focusing on the first term, we first define $\delta':= \eps$
	Then, similarly to \Cref{eqn:moment_bound_1},    we use  \Cref{claim:sparse-upper-bound} and \Cref{lem:basic_linf_consc-full}
	with $\delta=\delta'/k^2$ and $t=4$ to get that
	\begin{align*}
		\ak &\sststile{8}{v, z} \left(\E_{i \sim [m]} \Brac{\iprod{ v,  X_i - \overline{\mu}}^{4}} - \E_{X \sim \cN(\mu,\Sigma)}\Brac{\iprod{v, X - \mu}^4} \right)^2 \\
		&\leq k^{4}  \left \| \E_{i \sim [m]}[(X_i - \overline{\mu})^{\otimes 4}] -  \E_{X \sim \cN(\mu,\Sigma)}[(X - \mu)^{\otimes 4}] \right \|_{\infty}^2 \\
		&\leq k^4 \delta^2 \leq  (\delta')^2 \lesssim (\delta')^2 (v^T \overline{\Sigma} v)^4 = {O}(\eps^2) (v^T \overline{\Sigma} v)^4 \;,
	\end{align*}
	where in the last line we used $\Sigma  \succeq I_d$ combined with \Cref{it:res4} from \Cref{lem:resilience_integrated}.
	The sample complexity of $(k^4/\eps^2)\log^5(d/(\epsilon\gamma))$ comes from \Cref{lem:resilience_integrated} and  \Cref{lem:basic_linf_consc-full},
	with $f(s) \leq  \sqrt{C s}$ and $\delta=\eps/k^2$.
	
	We similarly bound the second term:
	\begin{align*}
		\ak &\sststile{8}{v, z} \left(\E_{X \sim \cN(\mu,\Sigma)}\Brac{\iprod{ v,  X - \mu}^{4}} - 3(v^T \overline{\Sigma} v)^2  \right)^2  \\
		&= \left(\E_{X \sim \cN(\mu,\Sigma)}\Brac{\iprod{ v,  X - \mu}^{4}}- \E_{X \sim \cN(\overline{\mu},\overline{\Sigma})}\Brac{\iprod{v, X - \overline{\mu}}^4} \right)^2 \\
&= \left(\E_{Y \sim \cN(0,\Sigma)}\Brac{\iprod{ v,  Y}^{4}}- \E_{Y \sim \cN(0,\overline{\Sigma})}\Brac{\iprod{v, Y}^4} \right)^2 \\
		&\leq k^{4}  \left \| \E_{Y \sim \cN(0,\Sigma)}[Y^{\otimes 4}] -  \E_{Y \sim \cN(0,\overline{\Sigma})}[Y^{\otimes 4}] \right \|_{\infty}^2\;,
	\end{align*} 
	where we used the specific form of Gaussian moments (\Cref{fact:gaussian-moments}) for the first equality. In order to bound all elements of the tensor, we use the following lemma which is shown in \Cref{sec:simple_tensor_concentration}.
	\begin{restatable}{lemma}{LemBoundSimilarTensor}\label{lem:similar-tensor-bound}
		Let $X_1, \ldots, X_m \sim \cN(\mu,\Sigma)$ where $I \preceq \Sigma \preceq 2I$,  and denote $\overline{\mu}=\E_{i \sim [m]}[X_i]$,   $\overline{\Sigma}=\E_{i \sim [m]}[(X_i - \overline{\mu})(X_i - \overline{\mu})^T]$. For any even integer $t$ and $\tau < 1$, if $m > C (1/\tau^2) t^{2t+1} 4^t \log(d/\gamma)$ for some absolute constant $C$, it holds 
		\begin{align*}
		\left\| \E_{Y \sim \cN(0,\Sigma)}[Y^{\otimes t}] - \E_{Y \sim \cN(0,\overline{\Sigma})}[Y^{\otimes t}]  \right\|_{\infty} \leq \tau \;,
		\end{align*}
		with probability $1-\gamma$. 
	\end{restatable}
	Using the above with $t=4$ and $\tau = \delta'/k^2$ with $\delta'=\tilde{O}(\eps)$, we get that
	\begin{align*}
		\ak &\sststile{8}{v, z} \left(\E_{X \sim \cN(\mu,\Sigma)}\Brac{\iprod{ v,  X - \mu}^{4}} - 3(v^T \overline{\Sigma} v)^2  \right)^2 \leq (\delta')^2 \lesssim (\delta')^2 (v^T \overline{\Sigma} v)^4 = \tilde{O}(\eps^2) (v^T \overline{\Sigma} v)^4\;.
	\end{align*}
	This completes the proof of \Cref{lem:certifiable4th}. 
\end{proof}

As a corollary, we establish the feasibility of the system of  \Cref{def:gaxioms}. 

\begin{restatable}{corollary}{ProgramFeasibility} \label{cor:feasibility-check}
Under the conditions of \Cref{lem:resilience_integrated},
	$\gax$ in \Cref{def:gaxioms} is feasible with high probability.
\end{restatable}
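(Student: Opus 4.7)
The plan is to establish feasibility by exhibiting an explicit assignment and verifying each block of constraints of $\gax$ in turn. I would set $X_i' := X_i$ (the clean samples), $\mu' := \overline{\mu}$, $\Sigma' := \overline{\Sigma}$, and $w_i := \mathbf{1}_{X_i = Y_i}$, dropping a few additional indicators as needed to enforce $\sum_i w_i = (1-\eps)m$ exactly (feasible because the adversary modified at most $\eps m$ samples). Under this assignment, constraint 1 of \Cref{def:gaxioms} holds by definition of $\mu'$ and $\Sigma'$, and for $\cA_{\textnormal{corruptions}}$ the identities $w_i^2 = w_i$, $w_i(Y_i - X_i') = 0$, and $\sum_i w_i = (1-\eps)m$ are immediate from the $0/1$ structure of $w_i$ together with the fact that $w_i = 1$ only on indices where $Y_i = X_i = X_i'$.

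Constraint 3 is essentially the content of \Cref{lem:certifiable4th}: the sample complexity hypothesized by \Cref{lem:resilience_integrated} is exactly the one required there, so on the intersection of the two high-probability events we obtain the required SoS bound on $(\E_{i \sim [m]}[\iprod{v, X_i - \overline{\mu}}^4] - 3 (v^T \overline{\Sigma} v)^2)^2$.

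For constraint 4, I would produce an SoS proof, from $\ak$, that $(v^T \overline{\Sigma} v)^2 \leq 9$ by splitting $v^T \overline{\Sigma} v = v^T \Sigma v + v^T(\overline{\Sigma} - \Sigma) v$. The first piece is bounded by $v^T \Sigma v \leq 2\|v\|_2^2 = 2$ in SoS, using the spectral hypothesis $\Sigma \preceq 2 I_d$ (so that $2 I_d - \Sigma$ is a sum of squares) together with the axiom $\sum_i v_i^2 = 1$ from $\ak$. For the second piece, I would apply \Cref{claim:sparse-upper-bound} to the degree-$2$ polynomial $v^T(\overline{\Sigma} - \Sigma) v = \sum_{i,j} (\overline{\Sigma} - \Sigma)_{ij} v_i v_j$, which yields an SoS certificate $\ak \sststile{O(1)}{v,z} \bigl(v^T(\overline{\Sigma} - \Sigma) v\bigr)^2 \leq k^2 \|\overline{\Sigma} - \Sigma\|_\infty^2$. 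A straightforward entrywise concentration argument for the centered empirical covariance of a Gaussian gives $\|\overline{\Sigma} - \Sigma\|_\infty \leq 1/k$ with high probability under our sample complexity (in fact only $O(k^2 \log d)$ samples are needed). Adding the two bounds gives $v^T \overline{\Sigma} v \leq 3$ in SoS, and squaring (which is justified since $v^T \overline{\Sigma} v \geq 0$ is itself a sum of squares from $\overline{\Sigma} \succeq 0$) yields $(v^T \overline{\Sigma} v)^2 \leq 9$.

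The main obstacle, such as it is, lies in constraint 4: one must simultaneously track the SoS proof degree and union-bound the entrywise concentration event against the events in \Cref{lem:resilience_integrated} and \Cref{lem:certifiable4th}. Since each concentration statement holds with probability $1 - \gamma$ under the stated sample complexity, and the SoS manipulations are all of constant degree, this bookkeeping is routine.
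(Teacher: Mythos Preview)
Your proposal is correct and follows essentially the same route as the paper: assign $X_i' = X_i$, $w_i = \mathbf{1}_{X_i = Y_i}$, invoke \Cref{lem:certifiable4th} for constraint~3, and for constraint~4 combine \Cref{claim:sparse-upper-bound} with entrywise concentration of $\overline{\Sigma}-\Sigma$. The only cosmetic difference is that the paper bounds $(v^T\overline{\Sigma}v)^2$ directly via the SoS triangle inequality $(v^T\overline{\Sigma}v)^2 \le 2(v^T\Sigma v)^2 + 2(v^T(\overline{\Sigma}-\Sigma)v)^2 \le 8 + O(\eps^2)$, whereas you first bound $v^T\overline{\Sigma}v \le 3$ and then square; both are valid low-degree SoS arguments.
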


\begin{proof}
	The pseudo-distribution that is defined to be the uniform distribution on inliers (i.e., $X_i' = X_i$) satisfies the constraints of the program. The first three conditions are trivially satisfied by choosing the $w_i$'s to be the indicators of whether the $i$-th sample is an inlier. The second last constraint is satisfied if and only if the inequality $\left(  \E_{i \sim [m]} \Brac{\iprod{v, X_i-\overline{\mu}}^4}- 3(v^T \overline{\Sigma} v)^2   \right)^2 \leq \tilde{O}(\eps^2)(v^T \overline{\Sigma} v)^4 $ has an SoS proof. By \Cref{lem:certifiable4th} we know that this is indeed the case.  
	
	We now focus on the last constraint. 
	We need to show  an SoS proof of $(v^T \ovl \Sigma v)^2 < 9$. We will show an SoS proof of $(v^T \overline \Sigma v - v^T \Sigma v)^2 \leq O(\eps^2)$. By techniques similar to the ones used in \Cref{lem:certifiable4th},
	we see that 
\[ \ak \sststile{}{} (v^T \overline \Sigma v - v^T \Sigma v)^2 \leq k^2 \norm{\overline \Sigma - \Sigma}_\infty. \]
Since $m > C({k^4}/{\eps^2}) \log^5(d/\gamma)$ for large enough constant $C$, we have the following with high probability:
\[ \ak \sststile{}{} (v^T \overline \Sigma v - v^T \Sigma v)^2 \leq k^2 \norm{\overline \Sigma - \Sigma}_\infty \leq O(\eps^2) \;.\]
Finally, to get an upper bound on $(v^T \ovl \Sigma v)^2$ we apply the SoS triangle inequality, as shown below
\begin{align*}
\ak \sststile{}{} (v^T \overline \Sigma v)^2 &=  (v^T \overline \Sigma v - v^T \Sigma v + v^T \Sigma v)^2 \\
&\leq 2(v^T \Sigma v)^2 + 2(v^T \overline \Sigma v - v^T \Sigma v)^2 \leq 8 + 2 O(\eps^2) \leq 8 + O(\eps^2) \leq 9,
\end{align*}
where we use that  $\Sigma \preceq 2I$ and $\eps$ is chosen to be small enough. 
\end{proof}

\subsection{Proof of \Cref{thm:gaussian_well_cond}} \label{sec:proof_of_main_thm_right_error}

	In this section we prove  \Cref{thm:gaussian_well_cond}, deferring proofs of intermediate lemmata to \Cref{sec:estimation_lemmata}. As explained in \Cref{sec:techniques_gaussian}, the assumption $I \preceq \Sigma \preceq 2I$ is removed in \Cref{app:lepski}, where we finally prove \Cref{thm:main-gaussian-informal}.

	Given $m>C(k^4/\eps^2)\log^5(d/(\eps\gamma))$ samples, the conclusions of  
	\Cref{lem:resilience_integrated,lem:certifiable4th} are true. 
	Further, by \Cref{cor:feasibility-check}, we know that the program is feasible.
	The first step is to show that our theorem holds given that $\tilde{\E}[\Sigma']$ is a good enough approximation of $\Sigma$. 
	
	\begin{restatable}{lemma}{SpecializedMean} \label{lem:specialized_mean}
		Let $Y_1,\ldots, Y_m$ be an $\eps$-corruption of the set $X_1,\ldots, X_m$, satisfying \Cref{it:res2,it:res3} of \Cref{lem:resilience_integrated}. Let $\pE$ be a degree-6 pseudo-expectation in variables $w_i,X_i',\Sigma',\mu'$ satisfying the system of \Cref{def:gaxioms}. Denote by $\overline{\mu},\overline{\Sigma}$  the empirical mean and covariance of $X_1,\ldots, X_m$ and let $\widehat{\Sigma}:=\pE[\Sigma']$. Then, for all $v \in \cU_k$ it holds 
\begin{align*}
|\iprod{v,\widehat{\mu} - \overline{\mu}}| \leq \tilde{O}(\eps) \sqrt{v^T \overline{\Sigma} v} + \sqrt{O(\eps) v^T(\widehat{\Sigma} - \overline{\Sigma})v + \tilde{O}(\eps^2) v^T(\widehat{\Sigma} + \overline{\Sigma})v} \;.
\end{align*}		
	\end{restatable}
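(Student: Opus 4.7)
The plan is to establish, in SoS, an inequality of the shape $\Delta^2 \leq O(\eps)\, v^T(\Sigma' - \overline{\Sigma}) v + \tilde{O}(\eps^2)\, v^T \overline{\Sigma} v$, where $\Delta := \iprod{v, \mu' - \overline{\mu}}$, and then pass to pseudoexpectation. Introduce the shorthands $Z_i := \iprod{v, X_i' - \mu'}$, $V_i := \iprod{v, X_i - \overline{\mu}}$, $W_i := w_i \mathbf{1}\{X_i = Y_i\}$, and $\alpha := \E_{i \sim [m]}[1 - W_i]$; by \Cref{claim:simplecheck}, $\alpha \leq 2\eps$ holds in SoS from $\cA_{\textnormal{corruptions}}$. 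The axiom $w_i(Y_i - X_i') = 0$ combined with the constant identity $r_i(X_i - Y_i) = 0$ (with $r_i := \mathbf{1}\{X_i = Y_i\}$) gives the key SoS identity $W_i(X_i' - X_i) = 0$, and then $W_i^2 = W_i$ yields $W_i Z_i = W_i(V_i - \Delta)$ and $W_i Z_i^2 = W_i (V_i - \Delta)^2$. This is the bridge that lets the program variables $(\mu', \Sigma')$ be controlled by the data-side resilience bounds.

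Starting from $\iprod{v, X_i' - X_i} = Z_i - V_i + \Delta$, multiplying by $(1-W_i)$ and averaging over $i$ gives $(1-\alpha)\Delta = \E_i(1-W_i)\, Z_i - \E_i(1-W_i)\, V_i$. Squaring and applying the SoS triangle inequality together with SoS Cauchy--Schwartz (using $(1-W_i)^2 = 1-W_i$ to obtain $(\E_i(1-W_i) Y_i)^2 \leq \E_i(1-W_i) \cdot \E_i(1-W_i) Y_i^2 \leq 2\eps\, \E_i(1-W_i) Y_i^2$ for $Y_i \in \{Z_i, V_i\}$), I obtain
\[
(1-\alpha)^2\, \Delta^2 \;\leq\; 4\eps\, \E_i(1-W_i)\, Z_i^2 \;+\; 2\,(\E_i(1-W_i)\, V_i)^2 .
\]
Since $\E_i V_i = 0$, the last term equals $2(\E_i W_i V_i)^2$, which is bounded by $\tilde{O}(\eps^2)\, v^T \overline{\Sigma} v$ using \Cref{lem:resilience_integrated}(\ref{it:res2}) with $a'_i = W_i$.

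The heart of the proof is controlling $\E_i(1-W_i) Z_i^2$. Using $W_i Z_i^2 = W_i(V_i - \Delta)^2$ and expanding,
\[
\E_i W_i Z_i^2 \;=\; \E_i W_i V_i^2 \;-\; 2\Delta\, \E_i W_i V_i \;+\; (1-\alpha)\, \Delta^2 .
\]
\Cref{lem:resilience_integrated}(\ref{it:res3}) yields $\E_i W_i V_i^2 \geq (1 - \tilde{O}(\eps))\, v^T \overline{\Sigma} v$, and (\ref{it:res2}) yields $(\E_i W_i V_i)^2 \leq \tilde{O}(\eps^2)\, v^T \overline{\Sigma} v$. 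The cross term is handled via the SoS AM--GM inequality $-2\Delta \cdot \E_i W_i V_i \geq -\Delta^2 - (\E_i W_i V_i)^2$, producing the SoS lower bound $\E_i W_i Z_i^2 \geq (1 - \tilde{O}(\eps))\, v^T \overline{\Sigma} v - 2\eps\, \Delta^2$. Subtracting from $\E_i Z_i^2 = v^T \Sigma' v$ gives the desired SoS upper bound $\E_i (1-W_i) Z_i^2 \leq v^T(\Sigma' - \overline{\Sigma}) v + \tilde{O}(\eps)\, v^T \overline{\Sigma} v + 2\eps\, \Delta^2$.

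Substituting this back into the displayed bound and using $(1-\alpha)^2 \geq 1 - 4\eps$, the $O(\eps^2) \Delta^2$ feedback term is absorbed into the left-hand side for $\eps$ small enough, giving the target SoS inequality $\Delta^2 \leq O(\eps)\, v^T(\Sigma' - \overline{\Sigma}) v + \tilde{O}(\eps^2)\, v^T \overline{\Sigma} v$. Applying pseudoexpectation and pseudoexpectation Cauchy--Schwartz $(\pE[\Delta])^2 \leq \pE[\Delta^2]$ gives $\iprod{v, \widehat{\mu} - \overline{\mu}}^2 \leq O(\eps)\, v^T(\widehat{\Sigma} - \overline{\Sigma}) v + \tilde{O}(\eps^2)\, v^T \overline{\Sigma} v$; adding the non-negative quantity $\tilde{O}(\eps^2)\, v^T \widehat{\Sigma} v$ to the right-hand side and taking square roots produces the stated bound (the leading $\tilde{O}(\eps) \sqrt{v^T \overline{\Sigma} v}$ summand is simply loose additional slack). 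I expect the main obstacle to be the self-referential $\Delta^2$ term inside the SoS upper bound on $\E_i(1-W_i) Z_i^2$: absorbing it while simultaneously extracting a \emph{signed} $v^T(\Sigma' - \overline{\Sigma}) v$ factor — rather than a much weaker $O(\eps)\, v^T \overline{\Sigma} v$ — is what drives the entire argument, and it is enabled precisely by the identity $W_i Z_i^2 = W_i(V_i - \Delta)^2$ that transports the analysis from the program-side covariance $\Sigma'$ to the data-side resilience bounds.
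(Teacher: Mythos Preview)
Your approach is correct but differs from the paper's. The paper does not re-derive the bound: it simply instantiates the black-box ``generic estimation lemma'' (Lemma~22 of \cite{kothari2021polynomial}) with $V(\overline{\mu},v)=v^T\overline{\Sigma}v$, $V'(\mu',v)=v^T\Sigma'v$, and $S=\cU_k$, and checks the hypotheses. That black-box lemma works at the \emph{pseudoexpectation} level from the outset: it applies the resilience conditions of \Cref{lem:resilience_integrated} to the \emph{real-valued} weights $a_i=\pE[w_i]\,\mathbf{1}(X_i=Y_i)$, never asserting resilience as an SoS inequality in the program variables. Your route instead carries the whole computation inside SoS and takes $\pE$ only at the end; this is more self-contained and in fact yields the slightly tighter bound $\iprod{v,\widehat\mu-\overline\mu}^2\le O(\eps)\,v^T(\widehat\Sigma-\overline\Sigma)v+\tilde O(\eps^2)\,v^T\overline\Sigma v$, from which the stated form follows by slackening.

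The one step that needs explicit justification is your invocation of \Cref{lem:resilience_integrated}(\ref{it:res2}),(\ref{it:res3}) \emph{with the SoS indeterminate $W_i$ in place of the real weights $a'_i$}. As stated, those items are real-analytic facts about the data $\{X_i\}$, valid for every $a'\in[0,1]^m$ with $\E_i a'_i\ge 1-2\eps$; they are not, a priori, SoS inequalities in $w_i$. The bridge you are implicitly using is LP duality: each of the inequalities $\E_i W_i V_i^2\ge(1-\tilde O(\eps))v^T\overline\Sigma v$ and $\pm\E_i W_i V_i\le \tilde O(\eps)\sqrt{v^T\overline\Sigma v}$ is \emph{linear} in $W$, and resilience certifies it over the entire polytope $\{a'\in[0,1]^m:\sum_i a'_i\ge(1-2\eps)m\}$; LP duality then furnishes a degree-$2$ SoS certificate from $\{W_i^2=W_i\}\cup\{\sum_i W_i\ge(1-2\eps)m\}$, and multiplying the two one-sided linear certificates gives the squared bound $(\E_i W_i V_i)^2\le\tilde O(\eps^2)v^T\overline\Sigma v$ at degree~$4$. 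This is exactly the mechanism the paper sidesteps by moving to $\pE$ earlier; you should state it explicitly, since without it the ``$a'_i=W_i$'' substitution is a type error.
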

	
	It now suffices to show that $|v^T(\widehat{\Sigma}-\overline{\Sigma})v| \leq \tilde{O}(\eps) v^T \overline{\Sigma} v$  since  \Cref{lem:specialized_mean} combined with \Cref{it:res1,it:res4} of \Cref{lem:resilience_integrated} implies that  $| \iprod{v, \widehat{\mu}-\mu } | \leq \tilde{O}(\eps) \sqrt{v^T \Sigma v} \leq \tilde{O}(\eps)$ and thus proves our main theorem. 
	Thus, we focus on showing that $|v^T(\widehat{\Sigma}-\overline{\Sigma})v| \leq \tilde{O}(\eps) v^T \Sigma v$ for all $v \in \cU_k$.

\begin{restatable}{lemma}{SpecializedCov} \label{lem:specialized_cov}
Let $Y_1,\ldots, Y_m$ be an $\eps$-corruption of $X_1,\ldots, X_m$ 
satisfying \Cref{it:res6,it:res7} of \Cref{lem:resilience_integrated}. 
Let $\pE$ be a degree-$12$ pseudo-expectation in variables 
$w_i,X_i',\Sigma',\mu'$ satisfying the system of \Cref{def:gaxioms}. 
Define $Y_{ij}=(1/2)(Y_i-Y_j) (Y_i-Y_j)^T$, 
$X_{ij}=(1/2)(X_i-X_j) (X_i-X_j)^T$, $X'_{ij}=(1/2)(X'_i-X'_j) (X'_i-X'_j)^T$, 
$\widehat{\Sigma}=\pE[\Sigma']$, $w_{ij}' = w_{i}w_j\mathbf{1}(X_{ij} = Y_{ij})$, 
and $R = \pE [\E_{ij}[(1-w_{ij}')v^T(X_{ij}'-\overline{\Sigma})v]^2]$. 
Then, for every $v \in \cU_k$, we have that, 
\begin{enumerate}
\item $|v^T(\widehat{\Sigma}-\overline{\Sigma})v  | \leq \tilde{O}(\eps) v^T \overline{\Sigma} v +  \sqrt{R}$ and
\item $R \leq O(\eps) ( \pE[(v^T\Sigma'v)^2] - (v^T\overline{\Sigma}v)^2  )  + \tilde{O}(\eps)  ( \pE[(v^T\Sigma'v)^2] + (v^T\overline{\Sigma}v)^2  )$.
\end{enumerate}
	\end{restatable}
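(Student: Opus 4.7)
The plan is to mirror the structure of \Cref{lem:specialized_mean} but using the pairwise symmetrization $X_{ij} = \tfrac{1}{2}(X_i - X_j)(X_i - X_j)^T$, which satisfies $\E_{ij}[X_{ij}] = \overline{\Sigma}$ and analogously (via the axioms) $\E_{ij}[X'_{ij}] = \Sigma'$. The central identity becomes $v^T(\widehat{\Sigma} - \overline{\Sigma}) v = \pE[\E_{ij}[v^T(X'_{ij} - X_{ij}) v]]$. The ``matching indicator'' $w'_{ij}$ eliminates the contribution from good pairs: since $w_i(X'_i - Y_i) = 0$ is an axiom and $\mathbf{1}(X_{ij}=Y_{ij})$ is a scalar constant, $\cA_{\textnormal{corruptions}}$ yields a low-degree SoS proof of $w'_{ij}(v^T(X'_{ij} - X_{ij}) v) = 0$. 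This reduces the whole analysis to controlling the $(1-w'_{ij})$-weighted contribution, and the two parts of the lemma correspond to two parametrically different ways of doing so.

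For Part 1, I would start from the decomposition
\[
v^T(\widehat{\Sigma}-\overline{\Sigma})v = \pE\!\left[\E_{ij}\bigl[(1-w'_{ij})\bigl(v^T(X'_{ij}-\overline{\Sigma})v - v^T(X_{ij}-\overline{\Sigma})v\bigr)\bigr]\right]
\]
and split into an $X'$-half and an $X$-half. The $X'$-half is controlled by pseudoexpectation Cauchy--Schwartz $(\pE[Z])^2 \leq \pE[Z^2]$: the square of the $X'$-half is exactly $R$, so its magnitude is at most $\sqrt{R}$. The $X$-half involves no SoS variables except through $\pE[w'_{ij}]$; writing $a_{ij} := \pE[w'_{ij}]$ and $a'_i := \pE[w_i]$, a direct calculation from the corruption axioms verifies that $(a_{ij}, a'_i)$ satisfies the conditions defining the set $T$ of \Cref{lem:resilience_integrated} (symmetry, $\E_{ij}[a_{ij}] \geq 1-O(\eps)$, $a_{ij} \leq a'_i$, $\E_j[a_{ij}] \geq a'_i(1-O(\eps))$). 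Resilience \Cref{it:res6} then bounds $|\E_{ij}[a_{ij}(v^T X_{ij} v - v^T\overline{\Sigma}v)]|$ by $\tilde O(\eps)\, v^T\overline{\Sigma}v$, and subtracting from the identity $\E_{ij}[v^T(X_{ij}-\overline{\Sigma})v] = 0$ gives the analogous bound for the $(1-a_{ij})$ weighting.

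For Part 2, I would apply SoS Cauchy--Schwartz on the $ij$ average to the quantity inside $\pE$: since $(1-w'_{ij})^2 = (1-w'_{ij})$, one obtains $(\E_{ij}[(1-w'_{ij}) f_{ij}])^2 \leq \E_{ij}[1-w'_{ij}] \cdot \E_{ij}[(1-w'_{ij}) f_{ij}^2]$ with $f_{ij} := v^T(X'_{ij}-\overline{\Sigma}) v$. A pairwise analogue of \Cref{claim:simplecheck} SoS-proves $\E_{ij}[1-w'_{ij}] \leq O(\eps)$, so taking pseudoexpectations and using $(1-w'_{ij})\leq 1$ yields $R \leq O(\eps)\, \pE[\E_{ij}[f_{ij}^2]]$. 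Setting $a_i := \iprod{v, X'_i - \mu'}$ so that $\E_i[a_i]=0$ and $\E_i[a_i^2] = v^T\Sigma' v$, expand $\E_{ij}[(v^T X'_{ij} v)^2] = \tfrac{1}{4}\E_{ij}[(a_i-a_j)^4] = \tfrac{1}{2}\E_i[a_i^4] + \tfrac{3}{2}(v^T\Sigma' v)^2$, and apply the fourth-moment axiom (item 3 of \Cref{def:gaxioms}) to get $\E_i[a_i^4] \leq (3+\tilde O(\eps))(v^T\Sigma'v)^2$. Combining with the remaining linear and constant terms in the expansion of $\E_{ij}[f_{ij}^2]$ and using the SoS triangle inequality to split $(v^T\Sigma'v - v^T\overline{\Sigma}v)^2 \leq 2(v^T\Sigma'v)^2 + 2(v^T\overline{\Sigma}v)^2$ gives a bound of the stated form.

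The main obstacles I expect are two bookkeeping steps: (a) giving the SoS proof of $\E_{ij}[1-w'_{ij}] \leq O(\eps)$ cleanly from $\cA_{\textnormal{corruptions}}$ in the presence of the deterministic indicator $\mathbf{1}(X_{ij}=Y_{ij})$ (this is a pairwise variant of \Cref{claim:simplecheck} and requires checking that the outer-product indicator does not leak an extra $\eps$ factor); and (b) verifying that the scalars $(\pE[w'_{ij}], \pE[w_i])$ actually satisfy the ``$\E_j[a_{ij}] \geq a'_i (1-O(\eps))$'' condition of the set $T$, since $\pE$ of a product is not a product of $\pE$'s. Both are standard once the right symmetrization is set up, but the degree budget of the pseudoexpectation ($12$) is essentially tight, so the calculations need to be kept lean.
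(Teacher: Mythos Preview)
Your approach is correct and structurally matches the paper's: both treat covariance estimation as a mean-estimation problem for the flattened pair matrices $(X_{ij})^\flat\in\R^{d^2}$ with ``mean'' $\overline{\Sigma}^\flat$ and ``variance proxy'' $(v^T\overline{\Sigma}v)^2$. The paper simply invokes \Cref{lem:generic_estimation} (imported from \cite{kothari2021polynomial}) as a black box with these substitutions and spends its effort checking the preconditions --- chiefly that $w_{ij}:=w_iw_j$ satisfies the corruption axioms and that $\pE\bigl[\E_{ij}[(v^T(X'_{ij}-\Sigma')v)^2]\bigr]\le (2+\tilde O(\eps))\pE[(v^T\Sigma'v)^2]$, which it derives from the fourth-moment axiom via exactly the identity $\E_{ij}[(v^TX'_{ij}v-v^T\Sigma'v)^2]=\tfrac12(\E_i[\langle v,X'_i-\mu'\rangle^4]+(v^T\Sigma'v)^2)$ you wrote down. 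Your Part~1 decomposition and the $T$-membership check for $a_{ij}=\pE[w'_{ij}]$ are precisely what the generic lemma needs internally.

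One point to flag in Part~2: after SoS H\"older, you pass from $\E_{ij}[(1-w'_{ij})f_{ij}^2]$ to $\E_{ij}[f_{ij}^2]$ via $(1-w'_{ij})\le 1$. The resulting bound $R\le O(\eps)\bigl(\pE[(v^T\Sigma'v)^2]+(v^T\overline{\Sigma}v)^2\bigr)$ does imply the inequality \emph{as stated} (the $\tilde O(\eps)$ sum term swallows any $O(\eps)$ difference term), but it loses the genuine \emph{difference} structure. The route inside \Cref{lem:generic_estimation} instead keeps $(1-w'_{ij})$, uses $w'_{ij}(f'_{ij})^2=w'_{ij}(f_{ij})^2$ to swap in the uncorrupted $X_{ij}$, and lower-bounds $\pE\bigl[\E_{ij}[w'_{ij}(f_{ij})^2]\bigr]\ge 2(v^T\overline{\Sigma}v)^2-\tilde O(\eps)(v^T\overline{\Sigma}v)^2$ via \Cref{it:res7}. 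That subtraction is exactly what produces $\pE[(v^T\Sigma'v)^2]-(v^T\overline{\Sigma}v)^2$ in the $O(\eps)$ term, and it is this cancellation that \Cref{sec:last_part_of_proof} exploits to eventually get $R=\tilde O(\eps^2)(v^T\overline{\Sigma}v)^2$. So your shortcut is fine for the lemma as written, but if you later need the sharper bound that drives the downstream argument, do not drop $(1-w'_{ij})$ at that step.
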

	
	The final part of the proof is identical to \cite{kothari2021polynomial} and is provided in \Cref{sec:last_part_of_proof} for completeness. It consists of showing that $R = \tilde{O}(\eps^2) (v^T \overline{\Sigma} v)^2$.

\subsection{Making the Error Scale with $\sqrt{\| \Sigma \|_2}$} \label{app:lepski}

In this section we complete the proof of \Cref{thm:main-gaussian-informal}, which we restate below:

\begin{theorem}\label{thm:main-gaussian-formal}
Let $k,d \in \Z_+$ with $k\leq d$ and $\eps<\eps_0$ for a sufficiently small constant $\eps_0>0$. 
Let $\mu \in \R^d$ and $\Sigma \in \R^{d \times d}$ be a  positive semidefinite matrix. 
There exists an algorithm which, given $\eps$,$k$, and an $\eps$-corrupted set of samples from $\cN(\mu,\Sigma)$ 
of size $m = O(({k^4}/{\eps^2}) \log^5(d/(\eps)))$, runs in time $\poly(md)$,
and returns an estimate $\widehat{\mu}$ such that 
$\|\widehat{\mu} - \mu \|_{2, k} \leq \tilde{O}(\eps) \, \sqrt{\|\Sigma\|_2}$ with high probability.
\end{theorem}

Thus far, we have obtained an estimator that is $\tilde{O}(\eps)$-accurate given samples from $\cN(\mu,\Sigma)$ with $I_d \preceq \Sigma \preceq 2I_d$. Note that the assumption $I_d \preceq \Sigma$ can trivially be removed by having a pre-processing step that adds a zero-mean identity covariance Gaussian noise to all samples (since a zero-mean noise does not affect the mean). However, when $\Sigma \preceq \sigma^2 I_d$ with $\sigma$ much smaller than $1$, the optimal error rate is $\sigma \tilde{O}( \eps)$, which is much better than $\tilde{O}(\eps)$. If $\sigma$ is known to the algorithm in advance, the simple normalization step that is shown in \Cref{alg:black_box_estimator} with $\tilde{\sigma}=\sigma$  is enough to yield the desired error of $\tilde \sigma \tilde{O}( \eps)$. In other words,  we have so far obtained an estimator $\mathrm{RobustMean}(S,\tilde{\sigma},\eps,k)$ that is guaranteed to return a vector within $\tilde \sigma \tilde{O}(\eps)$ from the true mean with probability $1-\gamma$ (given that the number of samples is as specified in \Cref{thm:gaussian_well_cond}), so long as $\tilde{\sigma} \geq \sigma$.

\begin{algorithm}[h!]  
	\caption{Improved estimator when $\sigma$ is known.} 
	\label{alg:black_box_estimator}
	\begin{algorithmic}[1] 
		\Statex  
		\Function{RobustMean}{$S=\{x_1,\ldots, x_m\},\tilde{\sigma},\eps, k$}
		\State Let $e_1,\ldots, e_m \sim \cN(0,I_d)$.
		\State Let $\tilde{S}= \{ x_i/\tilde{\sigma} + e_i \; : \; i \in [m]\}$.
		\State $\tilde{\mu} \gets \mathrm{\textsc{Gaussian-Sparse-mean-est}}(\tilde{S},\eps,k$). \Comment{\Cref{alg:g_sparse_recovery}}\\
		\Return $\tilde{\sigma} \tilde{\mu}$
		\EndFunction
	\end{algorithmic}  
\end{algorithm}

\Cref{thm:lepski}, known as Lepskii's method~\cite{lepskii1991problem,birge2001alternative}, states that even in the case where the only known bounds for $\sigma$ are $\sigma \in [A,B]$ for some $A,B$, a near-optimal error can still be achieved by running  $\mathrm{RobustMean}(S,\tilde{\sigma},\eps,k)$ below.

\begin{theorem}\label{thm:lepski}
	Let $\mu \in \R^d$, $A,B>0$, $\sigma\in [A,B]$, and a non-decreasing function $r:\R^+ \to \R^+$. Suppose $\mathrm{Alg}(\tilde{\sigma},\gamma')$ %
	is a black-box algorithm which is guaranteed to return a vector $\widehat{\mu}$ such that $\|\widehat{\mu}-\mu\|_2\leq r(\tilde{\sigma})$, with probability at least $1-\gamma'$, whenever $\tilde{\sigma} \geq \sigma$. Then, \Cref{alg:lepski}, returns $\widehat{\mu}^{(\widehat{J})}$ such that, with probability at least $1-\gamma$, it holds $\|\widehat{\mu}^{(\widehat{J})} - \mu  \|_2 \leq 3 r( 2\sigma)$. Moreover, \Cref{alg:lepski} calls $\mathrm{Alg}$ at most $O(\log(B/A))$ times.
\end{theorem}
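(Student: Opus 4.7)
The plan is to use the standard Lepskii balancing principle: run the black-box estimator on a geometric grid of candidate noise scales and pick the smallest grid point whose estimate is consistent with all larger-scale estimates. Concretely, I would define $J := \lceil \log_2(B/A) \rceil$, put $\sigma_j := 2^j A$ for $j=0,1,\dots,J$ (so $\sigma_J \ge B \ge \sigma$), and set $\widehat \mu^{(j)} := \mathrm{Alg}(\sigma_j, \gamma/(J+1))$ for each $j$. Define the data-driven index
\[
\widehat J := \min\Bigl\{\, j \in \{0,\dots,J\} \;:\; \bigl\|\widehat\mu^{(j)} - \widehat\mu^{(j')}\bigr\|_2 \le 2\, r(\sigma_{j'}) \text{ for every } j' \in \{j,\dots,J\}\,\Bigr\},
\]
and output $\widehat\mu^{(\widehat J)}$. (The set is nonempty because $j=J$ trivially satisfies the condition.) This only calls $\mathrm{Alg}$ exactly $J+1 = O(\log(B/A))$ times as required.

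For the correctness proof I would first apply a union bound over the $J+1$ calls, giving probability at least $1-\gamma$ that the event $\mathcal{E}$ holds in which $\|\widehat\mu^{(j)} - \mu\|_2 \le r(\sigma_j)$ for every $j$ with $\sigma_j \ge \sigma$. Next, let $j^\star := \min\{j : \sigma_j \ge \sigma\}$; since the grid is geometric with ratio $2$ and $\sigma \ge A = \sigma_0$, we have $\sigma \le \sigma_{j^\star} \le 2\sigma$. Under $\mathcal{E}$, every $j' \ge j^\star$ satisfies $\|\widehat\mu^{(j')} - \mu\|_2 \le r(\sigma_{j'})$, and in particular so does $j^\star$ itself.

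The key consistency step is then to show that $\widehat J \le j^\star$. For any $j' \ge j^\star$, the triangle inequality combined with monotonicity of $r$ gives
\[
\bigl\|\widehat\mu^{(j^\star)} - \widehat\mu^{(j')}\bigr\|_2 \le r(\sigma_{j^\star}) + r(\sigma_{j'}) \le 2\,r(\sigma_{j'}),
\]
so $j^\star$ itself passes the defining test of $\widehat J$, forcing $\widehat J \le j^\star$. Applying the definition of $\widehat J$ with $j' = j^\star$ yields $\|\widehat\mu^{(\widehat J)} - \widehat\mu^{(j^\star)}\|_2 \le 2\, r(\sigma_{j^\star})$, and combining with $\|\widehat\mu^{(j^\star)} - \mu\|_2 \le r(\sigma_{j^\star})$ gives
\[
\bigl\|\widehat\mu^{(\widehat J)} - \mu\bigr\|_2 \le 3\, r(\sigma_{j^\star}) \le 3\, r(2\sigma),
\]
using monotonicity of $r$ in the last step. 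This matches the claimed bound.

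The only genuinely subtle point --- and the place I would pay the most attention --- is the asymmetry of the guarantee of $\mathrm{Alg}$: it is only meaningful when the input scale $\tilde\sigma \ge \sigma$, so the estimates $\widehat\mu^{(j)}$ for $j < j^\star$ can be arbitrary. The reason the scheme still works is that the consistency test is \emph{one-sided}: the defining inequality only compares $\widehat\mu^{(j)}$ with estimates at larger scales $j' \ge j$, so a bad small-scale estimate can only \emph{decrease} $\widehat J$ (helpful) and never force $\widehat J > j^\star$. That is why the $j^\star$ test is entirely driven by well-behaved estimates, and why the constant $2$ (rather than $1$) in the test is exactly what the triangle inequality needs. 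A minor cleanup to worry about is the choice of per-call failure probability $\gamma/(J+1)$ and verifying that $J+1 = O(\log(B/A))$ (handling edge cases like $A=B$), which is routine.
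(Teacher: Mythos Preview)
Your proof is correct and follows essentially the same Lepskii balancing argument as the paper. The only differences are cosmetic: the paper's Algorithm~3 uses a \emph{decreasing} grid $\tilde\sigma_j = B/2^j$ (so it walks from large to small scales and stops when consistency first fails), and its consistency threshold is $r(\tilde\sigma_J) + r(\tilde\sigma_j)$ rather than your $2r(\sigma_{j'})$. Both variants yield the same final bound $3r(2\sigma)$ via the identical triangle-inequality step. Since the theorem statement refers to Algorithm~3 specifically, you would just need to rephrase your argument to match its grid direction and threshold, but no new idea is required.
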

\begin{proof}
 For $j=0,1,\ldots, \log(B/A)$, denote by $\cE_j$ the event that $\| \widehat{\mu}^{(j)} - \mu \|_2 \leq r( \tilde{\sigma}_j)$. Let  $J$ be the index corresponding to the value of the unknown parameter $\sigma$, i.e., $\tilde{\sigma}_{J+1} \leq \sigma \leq \tilde{\sigma}_{J}$.
	Conditioned on the event $\cap_{j=0}^J \cE_j$, we have that $\| \widehat{\mu}^{(j)} - \mu \|_2 \leq r(\tilde{\sigma}_j)$ for all $j=0,1,\ldots, J$. 
	Using the triangle inequality, this gives that $\|\widehat{\mu}^{(J)} - \widehat{\mu}^{(j)} \|_2 \leq r( \tilde{\sigma}_J ) + r(\tilde{\sigma}_j)$. 
	This means that the stopping condition of the while loop in \Cref{alg:lepski} is satisfied during round $J$ and thus, if $\widehat{\mu}^{(\widehat{J})}$ denotes the vector returned by the algorithm, we have that $\widehat{J}\geq J$ and 
	\begin{align*}
		\|\widehat{\mu}^{(\widehat{J})} - \widehat{\mu}^{(J)} \|_2 \leq r( \tilde{\sigma}_{\widehat{J}}) + r(\tilde{\sigma}_J) \leq 2r(\tilde{\sigma}_J) \leq 2 r(2\sigma)\;,
	\end{align*}
	where the first inequality uses  the condition of the while loop, the second uses that $r$ is non-decreasing and $ \tilde{\sigma}_{\widehat{J}} \leq \tilde{\sigma}_{J}$, and the last one uses that $J$ was defined to be such that $\tilde{\sigma}_{J+1} \leq \sigma \leq \tilde{\sigma}_{J}$ so multiplying $\sigma$ by 2 makes it greater than $\tilde{\sigma}_{J}$. Using the triangle inequality once more, we get $\|\widehat{\mu}^{(\widehat{J})} - \mu  \|_2 \leq 3 r( 2\sigma)$. 
	Finally, by union bound on the events $\cE_j$, the probability of error is upper bounded by $\sum_{j=0}^{J} \gamma' \leq \gamma$. 
\end{proof}

\begin{algorithm}[h!]  
    \caption{Adaptive search for $\sigma$} 
    \label{alg:lepski}
    \begin{algorithmic}[1] 
      \Statex  
    \textbf{input:} $A,B,r(\cdot)$,$\gamma$ 
    \State{Denote $\tilde{\sigma}_j := B/2^j$ for $j=0,1,\ldots,\log(B/A)$ and set $\gamma':=\gamma/\log(B/A)$.   }
      \State $J \gets 0$
      \State $\widehat{\mu}^{(0)} \gets \mathrm{Alg}(\tilde{\sigma}_0,\gamma')$
      \While{ $\tilde{\sigma}_j \geq A$ and $\| \widehat{\mu}^{(J)} - \widehat{\mu}^{(j)} \|_2 \leq r(\tilde{\sigma}_J) + r(\tilde{\sigma}_j)$ for all $j =0,1,\ldots,J-1$}   
      \State $J \gets J+1$.
      \State $\widehat{\mu}^{(J)} \gets \mathrm{Alg}(\tilde{\sigma}_J,\gamma'$).
      \EndWhile
      \State $\widehat{J} \gets J-1$\\   
      \Return $\widehat{\mu}^{(\widehat{J})}$
    \end{algorithmic}  
  \end{algorithm}
  
In our setting, we use the following claim to get estimates for $A$ and $B$ such that $B/A$ is at most polynomial in $d$.

\begin{claim}\label{claim:rough_bounds}
    Let $S = \{Y_1,\dots,Y_m\}$ be an $\eps$-corrupted set from $\cN(\mu,\Sigma)$. Then we can obtain estimates $A$ and $B$ such that $B/A = \poly(d)$ and with probability $1 - \exp(-m)$, $\|\Sigma\|_2 \in [A,B]$.
\end{claim}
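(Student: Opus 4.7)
My plan is to robustly estimate the scale of $\Sigma$ using the median of $m/2$ independent pairwise squared distances, combined with the dimension-free sandwich $\|\Sigma\|_2 \leq \tr(\Sigma) \leq d\,\|\Sigma\|_2$ --- the latter accounts for the $\poly(d)$ gap between $A$ and $B$ that the claim allows.

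First, form the $m/2$ disjoint pairs $(Y_{2i-1},Y_{2i})$ and let $Q_i := \|Y_{2i-1}-Y_{2i}\|_2^2$. Since at most $\eps m$ points are corrupted, at most a $2\eps$ fraction of these pairs involve a corrupted point. For a clean pair, $Q_i = 2 W_i^\top \Sigma W_i$ with the $W_i \sim \cN(0,I_d)$ mutually independent. Two simple tail estimates sandwich $W_i^\top \Sigma W_i$: projecting onto the top eigenvector $v_1$ of $\Sigma$ gives $W_i^\top \Sigma W_i \geq \|\Sigma\|_2 \langle W_i,v_1\rangle^2$ with $\langle W_i,v_1\rangle \sim \cN(0,1)$, yielding $\Pr[W_i^\top \Sigma W_i < c_1\|\Sigma\|_2] \leq 0.1$ for a small enough absolute constant $c_1$ (the Gaussian density is bounded near $0$), while $W_i^\top \Sigma W_i \leq \|\Sigma\|_2\|W_i\|_2^2 = \|\Sigma\|_2 \chi_d^2$ together with a standard $\chi^2$ upper tail (uniform in $d$) gives $\Pr[W_i^\top \Sigma W_i > c_2 d \|\Sigma\|_2] \leq 0.1$ for a large enough absolute constant $c_2$. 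Hence each clean $Q_i$ lies in the interval $I := [2c_1\|\Sigma\|_2,\, 2c_2 d\,\|\Sigma\|_2]$ with probability at least $0.8$, and a Chernoff bound over the independent clean pairs shows that at least a $3/4$ fraction of them lie in $I$ with probability $1-\exp(-\Omega(m))$.

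Let $T := \mathrm{median}\{Q_i\}_{i=1}^{m/2}$. Conditional on the Chernoff event and on $\eps < \eps_0$ small enough (say $\eps_0 = 1/8$), strictly more than half of the $Q_i$'s lie in $I$: the adversary perturbs at most a $2\eps$ fraction, and at most a $1/4$ fraction of the remaining clean ones lie outside $I$, totalling less than $1/2$. Consequently $T \in I$, i.e., $T \in [2c_1\|\Sigma\|_2,\, 2c_2 d\,\|\Sigma\|_2]$; setting $A := T/(2c_2 d)$ and $B := T/(2c_1)$ then gives $\|\Sigma\|_2 \in [A,B]$ with $B/A = c_2 d / c_1 = \poly(d)$, as required. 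The only non-routine point is the Gaussian anti-concentration estimate on $\langle W_i,v_1\rangle^2$, which is immediate from the bounded density of $\cN(0,1)$ at the origin; everything else reduces to a $\chi^2$ upper tail and a Chernoff bound, and the $\Theta(m)$ independent clean pairs automatically yield the $1-\exp(-m)$ probability guarantee stated in the claim.
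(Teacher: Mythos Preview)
Your proposal is correct and follows essentially the same strategy as the paper: form $m/2$ pairwise differences to obtain (up to $2\eps$ corruption) i.i.d.\ centered Gaussians, sandwich the squared norm of each in an interval of width $\poly(d)$ around $\|\Sigma\|_2$, apply a Chernoff bound, and conclude that the median lands in that interval. The only cosmetic difference is in how the sandwich is obtained: the paper pivots on $\tr(\Sigma)$ (Markov for the upper tail, Gaussian anti-concentration for the lower tail, then $\|\Sigma\|_2 \le \tr(\Sigma)\le d\|\Sigma\|_2$), whereas you pivot directly on $\|\Sigma\|_2$ via the top-eigenvector projection for the lower tail and $W^\top\Sigma W \le \|\Sigma\|_2\|W\|_2^2$ for the upper tail. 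Your route yields $B/A = O(d)$ versus the paper's $O(d^2)$, but both are $\poly(d)$ and the argument is otherwise the same.
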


\begin{proof}
Suppose that $m$ is even
and define $m' := m/2$.   
Let $T = \{Z_1,\dots,Z_{m'}\}$, where  $Z_i = (Y_i - Y_{m' + i})/\sqrt{2}$.
Note that $T$ is an $2\eps$-corrupted set of $m'$ points from $\cN(0, \Sigma)$.
Let $X \sim \cN(0,\Sigma)$. 
We know that there exist constants $0 < c_1 < c_2$ such that $\pr(\|X\|_2^2 \in [c_1 \tr(\Sigma)/d, c_2{\tr(\Sigma)}] ) \geq 3/4$, which follows by anti-concentration of Gaussian and Markov inequality.
Thus a Chernoff bound implies that  with probability at least $1 - \exp(-cm)$, at least $60\%$ percent of the  points have squared norm lying in  $[c_1 {\tr(\Sigma)}/d, c_2{\tr(\Sigma)}]$.
Since $\eps < 0.1$, we have that with same probability, the empirical median of squared norms also lies in the same range.
Assume that this event holds for the remainder of the proof.
Let $D = \text{Median}_{z \in T}(\|z\|_2^2)$.
We have that $c_1 \|\Sigma\|_2/d \leq c_1 \tr(\Sigma)/d \leq D \leq c_2 \tr(\Sigma) \leq c_2 d \|\Sigma\|_2$.
Let $A = D/(c_2d)$ and $B = dD/c_1$.
\end{proof}

Putting everything together, we get our final theorem 
for Gaussian sparse mean estimation with unknown covariance.

\begin{proof}(Proof of \Cref{thm:main-gaussian-formal})
Let $S$ be an $\eps$-corrupted set from $\cN(\mu,\Sigma)$ of size $m$ as specified in the theorem. The algorithm is the following: We first obtain rough bounds $A,B$ for $\|\Sigma\|_2$ using the estimator of \Cref{claim:rough_bounds}. We then use the procedure of \Cref{alg:lepski} with $\mathrm{Alg}(\tilde{\sigma},\gamma)$ being the $\mathrm{RobustMean}(S,\tilde{\sigma},\eps,k)$ from \Cref{alg:black_box_estimator}, which is guaranteed to succeed with probability $1 - \gamma'$, where  $\gamma' = \gamma/(c'\log d)$ for a large enough constant $c'$. By \Cref{thm:gaussian_well_cond}, it suffices to use  $ C({k^4}/{\eps^2}) \log^5(d/(\gamma\eps))$ samples for a large constant $C$.
By \Cref{thm:gaussian_well_cond}, the black-box mean estimator $\mathrm{RobustMean}$ satisfies the guarantees required by \Cref{thm:lepski} with $\sigma= \sqrt{\|\Sigma\|_2}$, $r(\tilde{\sigma}) = \tilde{\sigma} \tilde{O}(\eps)$, and $A,B$ given by those found using the estimator of \Cref{claim:rough_bounds}. Therefore, the final estimate is that   \Cref{alg:lepski} has error $3r(2\sigma)= \sqrt{\|\Sigma\|_2} \tilde{O}(\eps)$ with probability at least $1- \gamma$.
Since Lepskii's method only calls the black-box estimator $\log(B/A) = O(\log(d))$ times, the computational complexity increases only by a logarithmic factor.
\end{proof}

\section{Statistical Query Lower Bounds}
\label{sec:sq-lowerbd}

We begin by summarizing the necessary background  and then move to showing our results on Gaussians and distributions with bounded $t$-th moments in \Cref{sec:k4lowerbound,sec:t-bounded-lb} respectively. We refer the reader to \Cref{sec:low_degree} for the implications of the lower bounds of this section to hardness against low-degree polynomial tests.

\subsection{Background}

\subsubsection*{Statistical Query Lower Bounds Framework} We start with the basic definitions and facts from \cite{FGR+13,DKS17-sq} that we will use later. Although we are interested in proving hardness of estimation problems, we will focus on simpler hypothesis testing (or decision) problems. 

\begin{definition}[Decision Problem over Distributions] \label{def:decision}
	Let $D$ be a fixed distribution and $\D$ be a family of distributions.
	We denote by $\mathcal{B}(\D, D)$ the decision (or hypothesis testing) problem
	in which the input distribution $D'$ is promised to satisfy either
	(a) $D' = D$ or (b) $D' \in \D$, and the goal
	is to distinguish between the two cases.
\end{definition}

\begin{definition}[Pairwise Correlation] \label{def:pc}
	The pairwise correlation of two distributions with probability density functions
	$D_1, D_2 : \R^d \to \R_+$ with respect to a distribution with
	density $D: \R^d \to \R_+$, where the support of $D$ contains
	the supports of $D_1$ and $D_2$, is defined as
	$\chi_{D}(D_1, D_2) \eqdef \int_{\R^d} D_1(x) D_2(x)/D(x)\, \d x - 1$.
\end{definition}

\begin{definition} \label{def:uncor}
	We say that a set of $s$ distributions $\mathcal{D} = \{D_1, \ldots , D_s \}$
	over $\R^d$ is $(\gamma, \beta)$-correlated relative to a distribution $D$
	if $|\chi_D(D_i, D_j)| \leq \gamma$ for all $i \neq j$,
	and $|\chi_D(D_i, D_j)| \leq \beta$ for $i=j$.
\end{definition}

\begin{definition}[Statistical Query Dimension] \label{def:sq-dim}
	For $\beta, \gamma > 0$, a decision problem $\mathcal{B}(\D, D)$,
	where $D$ is a fixed distribution and $\D$ is a family of distributions,
	let $s$ be the maximum integer such that there exists a finite set of distributions
	$\mathcal{D}_D \subseteq \D$ such that
	$\mathcal{D}_D$ is $(\gamma, \beta)$-correlated relative to $D$
	and $|\mathcal{D}_D| \geq s.$ The {\em statistical query dimension}
	with pairwise correlations $(\gamma, \beta)$ of $\mathcal{B}$ is defined to be $s$,
	and is denoted by $\mathrm{SD}(\mathcal{B},\gamma,\beta)$.
\end{definition}

A lower bound on the SQ dimension of a decision problem
implies a lower bound on the complexity of any SQ algorithm
for the problem via the following standard result.

\begin{lemma} \label{lem:sq-from-pairwise}
	Let $\mathcal{B}(\D, D)$ be a decision problem, where $D$ is the reference distribution
	and $\mathcal{D}$ is a class of distributions. For $\gamma, \beta >0$,
	let $s= \mathrm{SD}(\mathcal{B}, \gamma, \beta)$.
	For any $\gamma' > 0,$ any SQ algorithm for $\mathcal{B}$ requires queries of tolerance at most $\sqrt{\gamma + \gamma'}$ or makes at least
	$s  \gamma' /(\beta - \gamma)$ queries.
\end{lemma}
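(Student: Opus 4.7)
The plan is to prove the contrapositive: assuming an SQ algorithm makes $q$ queries all with tolerance strictly greater than $\sqrt{\gamma+\gamma'}$, I will show that $q \geq s\gamma'/(\beta-\gamma)$. The central object is the set $\cD_D = \{D_1, \dots, D_s\} \subseteq \cD$ witnessing the statistical-query dimension, so $|\chi_D(D_i,D_j)| \leq \gamma$ for $i \neq j$ and $\chi_D(D_i,D_i) \leq \beta$. For each $i$, let $h_i := D_i/D - 1$ (working in $L^2(D)$), and note the basic identities $\E_D[h_i] = 0$, $\E_D[h_i h_j] = \chi_D(D_i, D_j)$, and $\E_{D_i}[f] - \E_D[f] = \E_D[f h_i]$ for any bounded $f$.

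The heart of the proof is the following per-query bound. Fix a query $f : \R^d \to [-1,1]$ posed with tolerance $\tau$, and define the index set $I_f := \{i \in [s] : |\E_{D_i}[f] - \E_D[f]| > \tau\}$ of distributions that this query can force the oracle to ``reveal'' (any $D_i \notin I_f$ can be answered by the value $\E_D[f]$, indistinguishable from the response on $D$). Writing $a_i := \E_D[f h_i]$, so $|a_i| > \tau$ for $i \in I_f$, I consider the signed combination $w := \sum_{i \in I_f} \mathrm{sgn}(a_i)\, h_i$. On one hand, $\E_D[f w] = \sum_{i \in I_f} |a_i| > |I_f|\tau$; on the other, Cauchy-Schwarz combined with $\|f\|_{L^2(D)} \leq 1$ gives $(\E_D[fw])^2 \leq \E_D[w^2]$, and expanding $\E_D[w^2]$ by the correlation bounds yields $\E_D[w^2] \leq |I_f|\beta + |I_f|(|I_f|-1)\gamma$. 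Chaining these inequalities and dividing through by $|I_f|$ produces $|I_f|(\tau^2-\gamma) \leq \beta - \gamma$, hence
\begin{equation*}
|I_f| \;\leq\; \frac{\beta - \gamma}{\tau^2 - \gamma},
\end{equation*}
valid whenever $\tau^2 > \gamma$.

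Substituting the hypothesis $\tau^2 > \gamma + \gamma'$, each query distinguishes fewer than $(\beta-\gamma)/\gamma'$ of the $D_i$'s from $D$. Over all $q$ queries, the total number of distributions the algorithm could possibly separate from $D$ is at most $q(\beta-\gamma)/\gamma'$. If this quantity were strictly less than $s$, some $D_{i^\ast} \in \cD_D$ would produce the same oracle transcript as $D$, contradicting correctness of the algorithm on the decision problem $\cB(\cD, D)$. Therefore $q(\beta-\gamma)/\gamma' \geq s$, i.e., $q \geq s\gamma'/(\beta-\gamma)$, finishing the proof.

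The only subtle point is the Cauchy-Schwarz step together with correctly tracking signs in $\E_D[w^2]$; since $|\chi_D(D_i,D_j)| \leq \gamma$ for $i \neq j$, the off-diagonal contributions are bounded in absolute value regardless of the signs $\mathrm{sgn}(a_i)$, so no additional argument about positive semidefiniteness is required. Everything else is bookkeeping, and the condition $\tau^2 > \gamma$ used to divide by $\tau^2 - \gamma$ is automatic from $\tau > \sqrt{\gamma + \gamma'}$.
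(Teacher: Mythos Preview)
The paper does not actually prove this lemma; it is stated as a standard background result from \cite{FGR+13,DKS17-sq} and invoked without proof. Your argument is the standard one from that literature and is correct: the key per-query bound $|I_f| \leq (\beta-\gamma)/(\tau^2-\gamma)$ via Cauchy--Schwarz on the signed combination $\sum_{i\in I_f}\mathrm{sgn}(a_i)h_i$, followed by the adversarial-oracle transcript argument, is exactly how this is done. One minor point worth making explicit is the handling of adaptivity: the queries $f_1,\dots,f_q$ you union over are those generated when the oracle always returns $\E_D[f_t]$ (a valid response for $D$), and then any $D_{i^\ast}\notin\bigcup_t I_{f_t}$ admits the identical transcript under that same oracle strategy---you gesture at this with ``the same oracle transcript as $D$'' but spelling it out removes any doubt about why the sets $I_{f_t}$ are well-defined despite adaptive querying.
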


\subsubsection*{Sparse Non-Gaussian Component Analysis} We will focus on a specific kind of decision problem given by \Cref{prob:generic_hypothesis_testing} below.

\begin{problem}[Sparse Non-Gaussian Component Analysis]\label{prob:generic_hypothesis_testing}
Let a distribution $A$ on $\R$. For a unit vector $v$, we denote by $P_{A,v}$ the distribution with the density $P_{A,v}(x) := A(v^Tx) \phi_{\perp v}(x)$, where $\phi_{\perp v}(x) = \exp\left(-\|x - (v^Tx)v\|_2^2/2\right)/(2\pi)^{(d-1)/2}$, 
i.e., the distribution that coincides with $A$ on the direction $v$ and is standard Gaussian in every orthogonal direction.  We define the following hypothesis testing problem:
\begin{itemize}
    \item $H_0$: The underlying distribution is $\cN(0,I_d)$.
    \item $H_1$: The underlying distribution is $P_{A,v}$, for some unit vector $v$ that is $k$-sparse.
\end{itemize}
\end{problem}

Specializing the result of \Cref{lem:sq-from-pairwise} for the sparse non-Gaussian component analysis, gives the following SQ lower bound. The proof is standard and is deferred to \Cref{sec:SQappendix}.
\begin{corollary} \label{cor:genericSQbound}
	Let $k,d,m \in \Z_+$ with $k\leq \sqrt{d}$. For any distribution $A$  on $\R$ that matches its first $m$ moments with $\cN(0,1)$, any constant $0<c<1$, and any SQ algorithm $\cA$ that solves the hypothesis testing \Cref{prob:generic_hypothesis_testing}, $\cA$  either makes $\Omega(d^{ck^{c}/8} k^{-(m+1)(1-c)})$ many queries or makes at least one query with tolerance at most $2^{(m/2+1)}k^{-(m+1)(1/2-c/2)}\sqrt{\chi^2(A,\cN(0,1))}$. 
\end{corollary}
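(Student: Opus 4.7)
The plan is to instantiate the generic SQ-dimension machinery of \Cref{lem:sq-from-pairwise} on the class $\D = \{P_{A,v} : v \in \R^d \text{ is $k$-sparse and unit}\}$, with reference distribution $D = \cN(0,I_d)$. This requires two ingredients: a large collection of hidden directions that are pairwise near-orthogonal, and a bound on the pairwise $\chi^2$-correlation of $P_{A,v}$ and $P_{A,v'}$ in terms of $\langle v,v'\rangle$ that exploits the moment-matching hypothesis on $A$.

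First I would recall (or re-derive) the standard hidden-direction correlation identity used in the DKS17-sq framework: if $A$ matches the first $m$ moments of $\cN(0,1)$, then expanding $P_{A,v}/\phi$ in Hermite polynomials along $v$ shows
\[
\chi_{\cN(0,I_d)}\!\bigl(P_{A,v}, P_{A,v'}\bigr) \;=\; \sum_{j \geq m+1} \widehat{A}_j^{\,2}\,\langle v, v'\rangle^{j} \;\leq\; 2^{m+1}\,|\langle v,v'\rangle|^{m+1}\,\chi^{2}(A,\cN(0,1)),
\]
where the last step bounds the tail sum using $\sum_j \widehat{A}_j^2 = 1+\chi^2(A,\cN(0,1))$ and absorbs the geometric series when $|\langle v,v'\rangle|\le 1/2$. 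This is the mechanism by which matching $m$ moments \emph{amplifies} near-orthogonality from $\langle v,v'\rangle$ to $\langle v,v'\rangle^{m+1}$.

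Next I would invoke the standard $k$-sparse packing lemma (e.g., Lemma~3.7 of \cite{DKS17-sq} or a direct probabilistic construction on vectors supported on random $k$-subsets of $[d]$ with $\pm 1/\sqrt{k}$ entries): for $k \leq \sqrt{d}$ and any $0 < c < 1$, there exist $s \geq d^{\Omega(c k^c)}$ unit vectors $v_1,\dots,v_s$ that are $k$-sparse and satisfy $|\langle v_i, v_j\rangle| \leq k^{-(1-c)/2}$ for all $i\ne j$. Plugging these into the correlation bound above gives a family $\{P_{A,v_i}\}$ that is $(\gamma,\beta)$-correlated relative to $\cN(0,I_d)$ with $\gamma \leq 2^{m+1}\,k^{-(m+1)(1-c)/2}\,\chi^2(A,\cN(0,1))$ and $\beta = 1+\chi^2(A,\cN(0,1))$.

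Finally I would apply \Cref{lem:sq-from-pairwise} with the choice $\gamma' = \gamma$. This yields either a tolerance bound of $\sqrt{2\gamma} \leq 2^{(m/2)+1}\,k^{-(m+1)(1/2-c/2)}\sqrt{\chi^2(A,\cN(0,1))}$ or a query-count lower bound of $\Omega\!\bigl(s\gamma/\beta\bigr) = \Omega\!\bigl(d^{c k^c/8}\,k^{-(m+1)(1-c)}\bigr)$, which is exactly the corollary (the exponent $c k^c/8$ vs.\ $c k^c$ absorbs constant slack from the packing lemma). The main technical obstacle is the packing step: obtaining $d^{\Omega(k^c)}$ vectors with correlations as small as $k^{-(1-c)/2}$ is tight and relies on concentration of the overlap of two random $k$-subsets of $[d]$; everything else is bookkeeping once the correlation identity and the packing are in place.
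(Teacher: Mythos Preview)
Your proposal is correct and follows the same approach as the paper: combine the DKS17 hidden-direction correlation lemma (Hermite expansion plus moment matching gives the $|\langle v,v'\rangle|^{m+1}$ amplification) with a packing of nearly-orthogonal $k$-sparse unit vectors, then plug into \Cref{lem:sq-from-pairwise}. One small arithmetic slip: to recover the exact exponents in the corollary you need the packing to give $|\langle v_i,v_j\rangle|\le O(k^{-(1-c)})$ rather than $k^{-(1-c)/2}$ (this is what the random-$k$-subset construction actually yields when $k\le\sqrt d$, since the overlap is $O(k^c)$ and the inner product is overlap$/k$); with your quoted bound the $k$-exponent in the tolerance would come out as $-(m+1)(1-c)/4$, not $-(m+1)(1/2-c/2)$.
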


When proving our main results, we will apply \Cref{cor:genericSQbound} to different choices of $A$ to get \Cref{thm:SQ4th_informal} and \Cref{thm:subgaussianSQ_informal}.

\subsubsection*{From Estimation to Hypothesis Testing}  \label{sec:reduction}
Our lower bounds will be for estimating the unknown sparse mean in $\ell_2$-error\footnote{Recall that estimating a $k$-sparse vector in $\ell_2$-norm is an easier problem than estimating an arbitrary vector in $(2,k)$-norm.}. To establish these results, we prove a stronger claim: We consider a hypothesis testing version of the robust sparse mean recovery (\Cref{prob:hypothesis_testing}). We first prove that this is an easier task than the corresponding estimation problem (\Cref{prob:search_problem}) in \Cref{lem:reduction}. We then show hardness of the hypothesis testing problem in the SQ model.

\begin{problem}[Robust Sparse Mean Estimation] \label{prob:search_problem}
	Fix $\rho > 0$. Let $\cD$ be a family of distributions such that the mean of each distribution $D$ in $\cD$ is $k$-sparse and has norm at most $\rho$. Given access to the mixture distribution $(1-\eps)D + \eps B$, for some (unknown) $D \in \cD$ and some arbitrary distribution $B$, the goal is to find a vector $u \in \R^d$ such that $\|u-\E_{X \sim D}[X]\|_2 < \rho /2$.
\end{problem}

\begin{problem}[Robust Sparse Mean Hypothesis Testing] \label{prob:hypothesis_testing}
	Fix $\rho > 0$. Let $\cD$ be a family of distributions such that the mean of each distribution $D$ in $\cD$ is $k$-sparse and has norm \emph{exactly} $\rho$. We define the following hypothesis testing problem:
	\begin{itemize}
		\item $H_0$: The underlying distribution is $\cN(0,I_d)$.
		\item $H_1$: The underlying distribution is $(1-\eps)D {+} \eps B$, for a $D {\in} \cD$ and an arbitrary distribution $B$.
	\end{itemize}
\end{problem}

\begin{claim}[Reduction] \label{lem:reduction}
	Given an algorithm $\cA$ that solves \Cref{prob:search_problem} for some $\cD$, then there exists another algorithm that solves \Cref{prob:hypothesis_testing} for $\cD'$, where $\cD'$ is the set of  all distributions in $\cD$ that have norm exactly $\rho$.
\end{claim}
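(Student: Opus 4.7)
The plan is a direct black-box reduction: the testing algorithm runs the estimator $\cA$ on its input and declares $H_1$ exactly when the returned vector $u$ satisfies $\|u\|_2 \geq \rho/2$, and $H_0$ otherwise.

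In the $H_1$ case, the input is by construction of the form $(1-\eps)D+\eps B$ with $D \in \cD' \subseteq \cD$ and $\|\mu_D\|_2 = \rho$, so the promise of \Cref{prob:search_problem} is met; the guarantee of $\cA$ yields $\|u - \mu_D\|_2 < \rho/2$, and the reverse triangle inequality gives $\|u\|_2 > \|\mu_D\|_2 - \rho/2 = \rho/2$, so the tester answers $H_1$ correctly.

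In the $H_0$ case, the input is $\cN(0,I_d)$. The key observation is the trivial identity $\cN(0,I_d) = (1-\eps)\cN(0,I_d) + \eps\,\cN(0,I_d)$: whenever $\cN(0,I_d) \in \cD$---a mild and natural condition satisfied by every family $\cD$ used later in the SQ lower bounds, since the null mean lies inside any parametric family indexed by sparse means of norm at most $\rho$---this is a legal instance of \Cref{prob:search_problem} with $\mu_D = 0$. The guarantee of $\cA$ then forces $\|u\|_2 < \rho/2$, so the tester correctly answers $H_0$.

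This reduction preserves query complexity, sample complexity, and runtime up to constants, so an SQ (or low-degree) hardness result for the tester immediately lifts to a matching hardness result for the estimator. The only step requiring attention in each application of \Cref{lem:reduction} is the verification that $\cN(0,I_d)$ belongs to the class $\cD$; I do not anticipate this being a genuine obstacle, since in both \Cref{sec:k4lowerbound} and \Cref{sec:t-bounded-lb} the class $\cD$ is defined parametrically by a $k$-sparse mean (Gaussian or moment-bounded inliers), with the standard Gaussian naturally appearing as the zero-mean element of the family.
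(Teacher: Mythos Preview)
Your approach is identical to the paper's: run $\cA$, threshold on $\|u\|_2 \lessgtr \rho/2$, and use the triangle inequality. The paper's proof is two sentences and leaves the $H_0$ case entirely implicit; your explicit observation that one needs $\cN(0,I_d)$ to be a valid input to $\cA$ (via $\cN(0,I_d) = (1-\eps)\cN(0,I_d) + \eps\,\cN(0,I_d)$ with $\cN(0,I_d)\in\cD$) is a genuine clarification the paper omits, and your remark that this condition must be verified in each application is correct and worth keeping.
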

\begin{proof}
	The algorithm is the following: Let $u$ be the estimate returned by $\cA$. If $\|u\|_2<\rho/2$, then return $H_0$, otherwise return $H_1$. Since, in both the null and alternative hypothesis, $u$ is guaranteed to be within $\rho/2$ of the true mean, the correctness follows.

\end{proof}

\subsection {SQ Lower Bound for Robust Sparse Mean Estimation of Gaussian Distribution with Unknown Covariance}\label{sec:k4lowerbound}
 
 Consider the task of robust sparse mean estimation of Gaussian distribution, $\cN(\mu, \Sigma)$, where $\mu$ is $k$-sparse and $\Sigma$ is unknown and bounded, $\Sigma \preceq I$.
  Information-theoretically $O((k \log (d/k) )/\eps^2 )$ samples should suffice to obtain an estimate $\widehat{\mu}$ such that $\|\widehat{\mu}  - \mu\|_2 = O(\eps)$.
The algorithm in \cite{BDLS17} can be shown to achieve the following: there is a computationally-efficient algorithm for robust sparse mean estimation that uses $O((k^2\log d)/\eps^2)$ samples and achieves error $O(\sqrt{\eps})$.
The main result of this section presents an SQ lower bound roughly stating that any SQ algorithm that achieves an error $o(\sqrt{\eps})$ either uses super-polynomially many number of queries or uses a single query that requires $k^4$ samples to simulate. 
\begin{theorem}[Formal version of \Cref{thm:main-gaussian-informal}]\label{thm:k4lower_bound}
	Let $k,d\in \Z_+$ with $k\leq \sqrt{d}$, $0<c<1, 0<\eps<1/2$, and $c_1 = 1/10001$.  Let $\cA$ be an SQ algorithm that is given access to a distribution of the form $(1-\eps)\cN(c_1 \sqrt{\eps}v,I_d - (1/3)vv^T) + \eps B$, where  $v$ is some unit $k$-sparse vector of $\R^d$ and $B$ is some arbitrary noise distribution. If the output of $\cA$ is a vector $u$ such that $\|u-c_1 \sqrt{\eps}v\|_2 \leq c_1\sqrt{\eps} /4$, then $\cA$ does one of the following:
	\begin{itemize}
		\item Makes $\Omega(d^{ck^c/8}  k^{-4+4c})$ queries,
		\item or makes at least one query with tolerance $O( k^{-2+2c} e^{O(1/\eps)})$.
	\end{itemize}
\end{theorem}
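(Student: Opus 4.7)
\medskip
\noindent
\textbf{Proof plan.} The plan is to fit this result into the non-Gaussian component analysis (NGCA) framework of \Cref{prob:generic_hypothesis_testing} and \Cref{cor:genericSQbound}, by reducing estimation to hypothesis testing and exhibiting a suitable one-dimensional moment-matching distribution. First I would invoke \Cref{lem:reduction} with $\rho := c_1 \sqrt{\eps}$ and the class
\[
\cD := \Set{ \cN\!\bigl(c_1\sqrt{\eps}\,v,\, I_d - \tfrac{1}{3}vv^T\bigr) \; : \; v \in \R^d,\ \|v\|_2 = 1,\ v \text{ is }k\text{-sparse}}.
\]
Every member of $\cD$ has a $k$-sparse mean of exact norm $c_1\sqrt{\eps}$, so an estimator with error $c_1\sqrt{\eps}/4 < \rho/2$ solves \Cref{prob:search_problem} for $\cD$, and hence by \Cref{lem:reduction} also solves the hypothesis testing \Cref{prob:hypothesis_testing}. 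It therefore suffices to establish an SQ lower bound for this testing problem.

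Next I would construct a one-dimensional density $A$ on $\R$ satisfying
\[
A \;=\; (1-\eps)\,\cN\!\bigl(c_1\sqrt{\eps},\,\tfrac{2}{3}\bigr) \;+\; \eps\, B
\]
for some probability distribution $B$, with the two additional properties that (i) $A$ matches the first $m=3$ moments of $\cN(0,1)$, and (ii) $\chi^2(A,\cN(0,1)) = e^{O(1/\eps)}$. With this $A$ in hand, the product-form distribution $P_{A,v}$ from \Cref{prob:generic_hypothesis_testing} decomposes as
\[
P_{A,v} \;=\; (1-\eps)\, \cN\!\bigl(c_1\sqrt{\eps}\,v,\, I_d - \tfrac{1}{3}vv^T\bigr) \;+\; \eps\, B',
\]
where $B'(x) := B(v^Tx)\,\phi_{\perp v}(x)$, because the variance $2/3$ in the $v$-direction matches the one-dimensional marginal of the claimed Gaussian inlier. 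Thus the alternative hypothesis of \Cref{prob:generic_hypothesis_testing} coincides with the alternative hypothesis in the statement of the theorem, so distinguishing them is hard whenever the corresponding NGCA problem is SQ-hard. Applying \Cref{cor:genericSQbound} with $m=3$ yields that any SQ algorithm either uses $\Omega\!\bigl(d^{ck^c/8} k^{-4(1-c)}\bigr) = \Omega(d^{ck^c/8} k^{-4+4c})$ queries or makes one query with tolerance at most $2^{5/2}\, k^{-4(1/2 - c/2)}\sqrt{\chi^2(A,\cN(0,1))} = O\!\bigl(k^{-2+2c} e^{O(1/\eps)}\bigr)$, which matches the claimed bound.

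The main obstacle is the construction and analysis of $A$. Writing $G := \cN(c_1\sqrt{\eps}, 2/3)$, I would invoke the moment-matching lemma from \cite{DKS19} to add an $\eps$-mass corrective component that kills the gap between the first three moments of $(1-\eps)G$ and those of $\cN(0,1)$; the degrees of freedom are exactly the three parameters of an atomic or Gaussian mixture $\eps B$, and the choice of mean $c_1\sqrt{\eps}$ and variance $2/3$ for $G$ leaves enough slack (in particular, $\E_G[X^2] = 2/3 + c_1^2\eps < 1$) for $B$ to be a genuine nonnegative probability distribution, which is where the specific constant $c_1 = 1/10001$ enters. The bound on $\chi^2(A,\cN(0,1))$ is then a direct computation: the Gaussian component of $A$ contributes a finite chi-squared (bounded in terms of the variance ratio $3/2$, which blows up at most like $e^{O(1)}$ for the second moment ratio while the shift of mean contributes $e^{O(c_1^2\eps)}$), and the corrective $\eps B$ component contributes at most $e^{O(1/\eps)}$ from the inverse-Gaussian weight against a possibly heavy-tailed $B$. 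I would then carefully combine these factors to obtain the stated $e^{O(1/\eps)}$ bound, completing the reduction.
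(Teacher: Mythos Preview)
The proposal is correct and follows essentially the same approach as the paper: reduce estimation to the hypothesis testing \Cref{prob:hypothesis_testing} via \Cref{lem:reduction}, identify the alternative with the sparse NGCA \Cref{prob:generic_hypothesis_testing} for the one-dimensional distribution $A=(1-\eps)\cN(c_1\sqrt{\eps},2/3)+\eps B$ that matches three moments with $\cN(0,1)$ and has $\chi^2(A,\cN(0,1))=e^{O(1/\eps)}$, and then apply \Cref{cor:genericSQbound} with $m=3$. The paper obtains the existence of this $A$ by directly citing Lemma~E.2 of \cite{DKS19} rather than sketching the construction, but otherwise the argument is identical.
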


\begin{proof}
	First, we note that there exists a one-dimensional distribution which is an $\eps$-corrupted version of a Gaussian with mean $c_1\sqrt{\eps}$ and matches the first three moments with $\cN(0,1)$.

	\begin{lemma}[Lemma E.2 of~\cite{DKS19}] \label{cor:hard-distr}
		Let $\mu=c_1 \sqrt{\eps}$ with $c_1=1/10001$. For any $0<\eps<1$,
		there exists a distribution $B$ on $\R$ such that the mixture $	A = (1-\eps)\cN(\mu,2/3) + \eps B$ matches the first three moments with $\cN(0,1)$ and $\chi^2(A,\cN(0,1))= e^{O(1/\eps)}$.
	\end{lemma}
	
	We now follow the argument of \Cref{sec:reduction}. We consider \Cref{prob:search_problem,prob:hypothesis_testing} specialized to the case where $\cD$ is the family of distributions $\cD = \{(1-\eps)\cN(c_1\sqrt{\eps}v,I_d - (1/3)vv^T)+\eps B'\}_{v \in \cU_k}$, where $\cU_k$ is the set of $k$-sparse unit vectors and  $B'$ denotes a distribution whose one-dimensional projection along $v$  coincides with $B$ and every orthogonal projection is standard Gaussian, i.e., $B' = P_{B,v}$.  
	Given the reduction of \Cref{lem:reduction}, in order to prove \Cref{thm:k4lower_bound}, it remains to show that \Cref{prob:hypothesis_testing} is hard in the SQ model. To this end, we note that this is the same problem as \Cref{prob:generic_hypothesis_testing} with the distribution $A$ being that of \Cref{cor:hard-distr}. 
	An application of \Cref{cor:genericSQbound} completes the proof of \Cref{thm:k4lower_bound}.
	
\end{proof}

\subsection {SQ Lower Bound for Robust Sparse Mean Estimation of Distributions with Bounded $t$-th Moment} \label{sec:t-bounded-lb}
In this section, we will show that any SQ algorithm to obtain error $o(\eps^{1-1/t})$ either uses super-polynomially many queries or uses queries with tolerance $k^{-\Omega(t)}$.
In order to state our results formally, we define the following distribution class: let $\cP_{k,t}$ be the class of all distributions $P$ that satisfy the following:
\begin{enumerate}
    \item The mean of the distribution $P$, $\mu$, is $k$-sparse, and $\|\mu\|_2 \leq 1$.
    \item $P$ has subgaussian tails, i.e., there is a constant $c$ such that for for all unit vectors $v$ and $i \in \N$, $(\E_{X \sim P}[|v^T(X - \mu)|^i])^{1/i} \leq c \sqrt{i}$.
    \item \label{it:certifiability} For a large constant $C$, there is an SoS proof of the following inequality:
    \begin{align*}
        \{\|v\|_2^2=1\} \sststile{O(t)}{v} \E_{X \sim P}[\langle  X - \mu,v \rangle^t ]^2 \leq (C t)^t \;.
    \end{align*}
\end{enumerate}

\begin{theorem}[Formal version of \Cref{thm:subgaussianSQ_informal}]\label{thm:bounded_moment_sparse}
	Let $d,k,t \in \Z_+$ with $k\leq \sqrt{d}$, let  $0<\eps = (O(t))^{-t}$, $0\leq C < 1/2000$,  $0<c<1$, and $\delta = C \eps^{1-1/t}/t$.  Let $\cA$ be an SQ algorithm that, given access to a distribution of the form $(1-\eps)P + \eps B$,  where $P \in \cP_{k,t}$ (defined above) and $B$ is arbitrary,
$\cA$ is guaranteed to find a vector $\hat{\mu}$ such that $\|\hat{\mu}- \E_{X \sim P}[X] \|_2 \leq \delta$. Then $\cA$  does one of the following:
	\begin{itemize}
		\item Makes $\Omega(d^{ck^c/8}  k^{-(t+1)(1-c)})$ queries.
		\item Makes at least one query with tolerance $O\left(  k^{-(t+1)(1/2-c/2)} 2^{(t/2+1)}e^{ O( \delta^2/\eps^2 ) }\right)$.
	\end{itemize}
\end{theorem}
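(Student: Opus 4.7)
The plan is to mirror the template used for Theorem~\ref{thm:k4lower_bound}: reduce the robust mean estimation task to the sparse non-Gaussian component analysis hypothesis test (\Cref{prob:generic_hypothesis_testing}) via \Cref{lem:reduction}, and then invoke \Cref{cor:genericSQbound} with a carefully chosen univariate distribution $A$. Concretely, I would set $\cD = \{(1-\eps)P_{G,v} + \eps\, P_{B,v} : v \in \cU_k\}$ where $G$ is a one-dimensional distribution with mean $\mu_G = \Theta(\eps^{1-1/t})$, $P_{G,v}$ and $P_{B,v}$ are the product-style lifts defined in \Cref{prob:generic_hypothesis_testing}, and the mixture $A := (1-\eps)G + \eps B$ matches its first $t$ moments with $\cN(0,1)$. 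Under these conditions, \Cref{cor:genericSQbound} (with the parameter $m=t$) gives precisely the query/tolerance tradeoff stated, provided we can also verify $\chi^2(A,\cN(0,1)) = e^{O(\delta^2/\eps^2)}$ and that $P_{G,v} \in \cP_{k,t}$.

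The main construction is the one sketched in \Cref{sec:techniques_gaussian}. Start from the shifted Gaussian $G_0 = \cN(\mu_G,1)$ with $\mu_G = c\,\eps^{1-1/t}$ for a small constant $c$, and modify its density on $[-1,1]$ by adding a degree-$t$ polynomial correction $p(x)$, producing the inlier density $G$. Expanding $p$ in the Legendre basis on $[-1,1]$ and imposing the $t$ moment-matching equations for the mixture $A=(1-\eps)G + \eps B$ against $\cN(0,1)$ yields a linear system in the $t{+}1$ Legendre coefficients. A direct computation shows that, for $\eps$ as small as in the hypothesis, the required corrections have magnitude $O(\eps)$ in each Legendre coefficient, so $G$ stays a nonnegative probability density and $B$ can be chosen as a nonnegative bounded residual supported on $[-1,1]$. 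Because modifications occur only on a bounded interval, $G$ inherits the subgaussian tails of $G_0$; bounding $\chi^2(A,\cN(0,1))$ follows from the explicit form of $A$ on $[-1,1]$ together with the Gaussian tail outside, giving the required $e^{O(\delta^2/\eps^2)}$ bound.

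The hardest step will be verifying that $P_{G,v}$ belongs to $\cP_{k,t}$, i.e., condition~\ref{it:certifiability} on SoS-certifiable bounded $t$-th moments. For any direction $u$, the $t$-th central moment of $P_{G,v}$ equals a polynomial in $\iprod{u,v}$ whose coefficients are the one-dimensional central moments of $G$ (for the component along $v$) and of the standard Gaussian (for the orthogonal component). Since $G$ is subgaussian with a constant depending only on $c$, these one-dimensional moments are bounded by $(Ct)^{t/2}$. The polynomial identity $\E_{X\sim P_{G,v}}[\iprod{u,X-\mu}^t] = \sum_{i=0}^{t} \binom{t}{i} m_i(G)\, \iprod{u,v}^i\, \iprod{u,(I-vv^T)u}^{(t-i)/2}\cdot (\text{Gaussian factor})$ (with odd terms treated appropriately) admits a degree-$O(t)$ SoS proof bounding its square by $(Ct)^t \|u\|_2^{2t}$, by the same argument as in \Cref{sec:poincareAppendix} for Poincar\'e distributions: both Gaussians and bounded-support perturbations on a compact set have certifiable moments, and a convex combination preserves the SoS certificate up to a constant factor.

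Given these three ingredients, the conclusion is immediate: by \Cref{lem:reduction}, any estimator achieving $\ell_2$-error at most $\delta < \mu_G/4$ yields a tester for the hypothesis problem, and \Cref{cor:genericSQbound} applied with our $A$ and $m=t$ produces the claimed bound of $\Omega(d^{ck^c/8} k^{-(t+1)(1-c)})$ queries or a tolerance of $O(k^{-(t+1)(1/2-c/2)} 2^{t/2+1} e^{O(\delta^2/\eps^2)})$. The main obstacle in executing this plan is the moment-matching construction of $A$ together with the SoS certificate: one must simultaneously control the size of the Legendre coefficients (to keep $G$ a valid density with subgaussian tails) and produce an explicit low-degree SoS proof for the $t$-th moment of $P_{G,v}$; the former reduces to a well-conditioned linear system in the scaling $\mu_G = \Theta(\eps^{1-1/t})$, while the latter can be handled by reducing to the Gaussian case via the closed-form moment expression above.
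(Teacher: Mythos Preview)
Your overall architecture is exactly what the paper does: construct a one-dimensional moment-matching distribution $A=(1-\eps)Q_1+\eps Q_2$, lift it to $P_{A,v}$, verify $P_{Q_1,v}\in\cP_{k,t}$, reduce estimation to testing via \Cref{lem:reduction}, and invoke \Cref{cor:genericSQbound} with $m=t$. The SoS-certifiability step is indeed handled separately (the paper defers it to an appendix), and your sketch for it is along the right lines.

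The genuine gap is in your choice of the outlier component. You claim ``$B$ can be chosen as a nonnegative bounded residual supported on $[-1,1]$.'' This is infeasible. Since you require the inlier $G$ to have mean $\mu_G=\delta=C\eps^{1-1/t}/t$ (so the polynomial correction $p$ must have zero first moment), matching the first moment of $A$ with $\cN(0,1)$ forces
\[
\E_{X\sim B}[X]\;=\;-\,\frac{(1-\eps)\,\delta}{\eps}\;=:\;\delta',
\]
and in the regime $\eps=(O(t))^{-t}$ one has $|\delta'|\gtrsim C\,\eps^{-1/t}/t>1$, which is impossible for a distribution supported on $[-1,1]$. The paper instead takes $Q_2=\cN(\delta',1)$, a shifted Gaussian, and this choice is not incidental: it is precisely the source of the $e^{O(\delta^2/\eps^2)}$ factor in the tolerance bound, since $\chi^2(\cN(\delta',1),\cN(0,1))=e^{(\delta')^2}-1$ and $(\delta')^2=\Theta(\delta^2/\eps^2)$. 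Your construction, if it worked, would give $\chi^2(A,\cN(0,1))=O(1)$ and hence a strictly stronger (and incorrect) tolerance bound. A secondary inaccuracy: the Legendre coefficients are not $O(\eps)$. With the Gaussian outlier fixed, a Hermite-polynomial computation gives $|a_i|\lesssim (2i)^{i+4}\eps\,|\delta'|^{i}$, and the relevant control is $\max_{x\in[-1,1]}|p(x)|\le\sum_i|a_i|\le (100C)^t<0.1$, a small constant rather than $O(\eps)$.
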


The rest of the section is dedicated to proving \Cref{thm:bounded_moment_sparse}. We first show the existence of a one-dimensional distribution $A$ that matches the first $t$ moments with $\cN(0,1)$ and is an $\eps$-corruption of a distribution with mean $\Omega(\frac{1}{t}\epsilon^{1 - 1/t})$ and bounded $t$-th moments.
At a high level, we follow the structure of \cite[Proposition 5.2]{DKS17-sq} and \cite[Lemma 5.5]{DKS18-list}. In particular, \cite[Lemma 5.5]{DKS18-list} establishes an analogous result to \Cref{lem:ABddMoment} below but in the large $\eps$ setting, i.e., $\eps \to 1$, and thus it is not applicable here. We also show that the family of hard distributions in \Cref{thm:bounded_moment_sparse} has certifiably bounded moments. We defer this work to \Cref{sec:sos-cert-hard-instance}.

\begin{lemma}\label{lem:ABddMoment}
Fix an $t \in \Z_+$
and $\eps =  O(t)^{-t}$.
	There exists a distribution $A$ over $\R$ such that the following holds:
	\begin{enumerate}
		\item There exist two distributions $Q_1$ and $Q_2$ such that $A = (1 - \epsilon)Q_1 + \epsilon Q_2$.
		\item $A$ matches first $t$-moments with $\cN(0,1)$.
		\item $\E_{X \sim Q_1}[X] = \delta$,
		where $\delta =\frac{1}{2000}\frac{1}{t}\epsilon^{1 - 1/t}$.
		\item For all $i\geq 1$, $\left(\E_{X \sim Q_1}[|X - \delta|^i]\right)^{1/i} = O(\sqrt{i})$.
		
		\item $\chi^2(A , \cN(0,1)) < \exp\left(  O( \delta^2/\eps^2 ) \right)$.
	\end{enumerate}
\end{lemma}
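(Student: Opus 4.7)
The plan is to construct $A$ as an explicit, controlled perturbation of $\cN(0,1)$, and to simultaneously produce a decomposition $A = (1-\eps)Q_1 + \eps Q_2$ satisfying all five conditions. Following the outline in the introduction, the inlier component will take the form
\[
Q_1(x) \;=\; \phi(x-\delta) \;+\; p(x)\,\mathbf{1}_{[-1,1]}(x),
\]
where $\phi$ is the standard Gaussian density and $p$ is a polynomial of degree at most $t$ to be determined. Since $Q_1$ coincides with the shifted Gaussian $\cN(\delta,1)$ outside of $[-1,1]$, its tails are automatically those of $\cN(\delta,1)$; the bounded polynomial correction on $[-1,1]$ contributes only a constant to each $L_i$-norm $\|X-\delta\|_{L_i(Q_1)}$, so condition (4) will follow from standard Gaussian moment estimates as soon as $p$ is uniformly bounded on $[-1,1]$.

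I will use the coefficients of $p$, together with the choice of $Q_2$, to enforce conditions (1)--(3). Expanding $p=\sum_{i=0}^{t} c_i P_i$ in the Legendre basis on $[-1,1]$ decouples the relevant moment integrals. The normalization $\int Q_1 = 1$ and the mean condition $\E_{X\sim Q_1}[X] = \delta$ each impose one linear constraint on $(c_0,\dots,c_t)$. The remaining freedom, together with a freely chosen $Q_2$ supported on $[-1,1]$ (for instance a rescaled uniform distribution with a finite-degree polynomial modification), is used to match the first $t$ moments of $A$ with those of $\cN(0,1)$, yielding a linear system for $c_0,\ldots,c_t$ and a finite list of parameters of $Q_2$. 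The moment discrepancies between $\cN(\delta,1)$ and $\cN(0,1)$ are of size $O(\delta^j)$ in the $j$-th moment, so the $c_i$ and the corresponding coefficients of $Q_2$ come out with magnitude $O(\delta/\eps)$. Under the given scalings $\delta = \frac{1}{2000\,t}\eps^{1-1/t}$ and $\eps = O(t)^{-t}$, one checks $\delta/\eps = O(1)$, so the polynomial correction remains small compared to $\min_{|x|\leq 1}\phi(x-\delta) = \Omega(1)$, and both $Q_1$ and $Q_2$ are non-negative.

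For the $\chi^2$ bound in condition (5), I will expand $A/\phi$ in the probabilist's Hermite basis and write
\[
\chi^2(A,\cN(0,1)) \;=\; \sum_{k\geq 1} k!\, a_k^{2},
\]
where $a_k$ are the Hermite coefficients of $A/\phi$. The moment-matching condition forces $a_0=1$ and $a_1=\cdots=a_t=0$, so only the tail sum over $k>t$ contributes. The shifted-Gaussian part of $Q_1$ has Hermite coefficients $\delta^k/k!$ and contributes at most $e^{\delta^2}$ to the sum; the $[-1,1]$-supported pieces (from both $p$ and $Q_2$) contribute at most a constant after using $1/\phi(x)=O(1)$ on $[-1,1]$ together with the sup-norm bounds on $p$ and $Q_2$. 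Combining and using $\delta^2/\eps^2 = O(\eps^{-2/t}/t^2) = O(1)$ in the relevant parameter regime gives the stated bound $\chi^2(A,\cN(0,1)) < \exp(O(\delta^2/\eps^2))$.

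The main obstacle will be establishing the non-negativity of the implicitly defined $Q_2 = (A - (1-\eps)Q_1)/\eps$ uniformly in $t$ and $\eps$. This reduces to a sharp sup-norm bound on the Legendre expansion solving the moment-matching system, uniformly in the parameters. Well-conditioning of the Legendre moment problem on $[-1,1]$, together with the quantitative smallness of $\delta$ relative to $\eps^{1-1/t}$, should make this estimate routine, but keeping careful track of all $t$-dependent constants is where most of the care will be needed. Once non-negativity is in hand, the subgaussian, moment-matching, and $\chi^2$ claims all follow directly from the construction.
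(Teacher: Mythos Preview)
There is a concrete obstruction to your construction. Once you impose $\int Q_1 = 1$ and $\E_{Q_1}[X] = \delta$ (forcing $\int_{-1}^1 p = \int_{-1}^1 xp = 0$), matching the first moment of $A$ with that of $\cN(0,1)$ forces
\[
\E_{Q_2}[X] \;=\; -\,\frac{(1-\eps)\,\delta}{\eps}\;=:\;\delta',
\qquad |\delta'| \;=\; \frac{1-\eps}{2000\,t}\,\eps^{-1/t}.
\]
In the regime $\eps = (O(t))^{-t}$ this quantity exceeds $1$ (the paper's proof explicitly uses $|\delta'|>1$), so no probability distribution supported on $[-1,1]$ can have the required mean. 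Your assertion $\delta/\eps = O(1)$ is at best an order-of-magnitude claim, not the numerical inequality $|\delta'|\le 1$ that a $[-1,1]$-supported $Q_2$ would need; and if $\eps$ is allowed to range over all values $\le (Ct)^{-t}$ (as the SQ application requires), then $\delta/\eps = \Theta(\eps^{-1/t}/t)$ is unbounded as $\eps\to 0$. The related claim that the $j$-th moment discrepancy is $O(\delta^j)$ also understates things: the dominant term $p$ must absorb scales like $\eps|\delta'|^j$, not $\delta^j$.

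The paper avoids this by taking $Q_2 = \cN(\delta',1)$, a shifted Gaussian with no support restriction and exactly the right mean. The polynomial $p$ (Legendre expansion with $a_0=a_1=0$) then only has to correct the residual mismatch between $(1-\eps)\cN(\delta,1)+\eps\cN(\delta',1)$ and $\cN(0,1)$; the key estimate is $\sup_{[-1,1]}|p|\le\sum_i|a_i|\le \eps\,(100t|\delta'|)^t \le (100/2000)^t<0.1$, which is what yields $Q_1\ge 0$. Finally, the $\exp(O(\delta^2/\eps^2))$ in the $\chi^2$ bound is not generically $O(1)$ as you suggest; it arises precisely from the term $\eps^2\chi^2(\cN(\delta',1),\cN(0,1))=\eps^2 e^{(\delta')^2}$ contributed by this choice of $Q_2$.
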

\begin{proof}
	Let $G(x)$ be the pdf of the standard normal $\cN(0,1)$. Thus $G(x - \delta)$ represents the pdf of $\cN(\delta,1)$.
	We will choose $A$ of the following form:
	\begin{align*}
	Q_1(x) = G(x-\delta) + \frac{1}{1-\eps}p(x) \mathbf{1}_{[-1,1]}(x), \quad Q_2(x) = G(x-\delta'),
	\end{align*}
	where $p(\cdot)$ is a degree $t$ polynomial (to be chosen below) and $\delta' = - (1 - \epsilon) \delta/ \epsilon$.
	To ensure that $Q_1$ is a valid distribution and has mean $\delta$, the following suffices since $|\delta| \leq 0.1$ and $\epsilon \leq 0.1$:
	\begin{enumerate}
		\item $\int_{-1}^1 p(x) \d x = 0$,
		\item $\max_{x \in [-1,1]}|p(x)| \le 0.1$,
		\item $\int_{-1}^1 p(x) x \d x  = 0$.
	\end{enumerate}
	Let $P_i$ be the $i$-th Legendre polynomial. We will choose $p$ to be of the following form for $a_i \in \R$:
	\begin{align*}
		p(x) = \sum_{i=0}^t a_iP_i(x),
	\end{align*}
	 where $a_0 = a_1 = 0$.
	\begin{fact} 
		\label{fact:Legendre}
		Let $P_i$ be the $i$-th Legendre polynomial. We have the following:
		\begin{enumerate}[label=L.\arabic*]
		    \item \label{item:firstwo} $P_0(x) = 1$ and $P_1(x)=x$.
			\item \label{item:Lorth}$\int_{-1}^1 P_i(x)P_j(x) dx = \frac{2}{2i +1} \delta_{i,j}$.
			\item \label{item:Lmax} $\max_{x\in [-1,1]}|P_i(x)| \leq 1$.
			\item \label{item:Lbasis} $\{P_i\}_{i= 0}^k$ form a basis of polynomials of degree up to $k$.
		\end{enumerate} 
	\end{fact}
	\begin{fact}Let $h_i$ be the $i$-th normalized probabilist's polynomials and let $X \sim \cN(0,1)$. 
		\begin{enumerate}[label=H.\arabic*]
			\item \label{item:Horth}$\E[h_i(X)h_j(X)]= \delta_{i,j}$.
			\item \label{item:Hshift} $\E[h_i(X + \mu)] = \frac{1}{\sqrt{i!}} \E[H_{e_i}(X + \mu)] = \frac{\mu^i}{\sqrt{i!}}$.
			\item \label{item:Hbasis} $\{h_i\}_{i= 0}^k$ form a basis of polynomials of degree up to $k$.
		\end{enumerate} 
	\end{fact}
	
	Using \ref{item:firstwo} and \ref{item:Lorth} we  have that {$\int_{-1}^1 p(x)\d x = 0$ and $\int_{-1}^1 p(x) x \d x = 0$.}
	Using \ref{item:Lmax}, we have that $\max_{x \in [-1,1]}|p(x)| \leq \sum_{i=1}^t |a_i|$.
	We will now ensure that it is possible to match moments while keeping $\sum_i|a_i|$ small.
	
	Recall that in order to match the first $t$ moments of $A$ with $\cN(0,1)$, we need to ensure the following holds for all $i \in \{0,\dots,t\}$:
	\begin{align*}
		(1 -\epsilon) \int_{-\infty}^{\infty} x^iG(x - \delta)dx +\int_{-1}^1 x^ip (x) dx + \epsilon \int_{-\infty}^{\infty} x^iG(x - \delta')dx = \int_{-\infty}^{\infty} x^i G(x) dx
	\end{align*}
	Equivalently, letting $X \sim \cN(0,1)$, we need the following for all $i \in \{0,\dots,t\}$:
	\begin{align*}
		\int_{-1}^1 x^ip (x) dx = \E_{X \sim \cN(0,1)}[X^i - (1 - \epsilon)(X + \delta)^i - \epsilon (X + \delta')^i].
	\end{align*}
	By \ref{item:Lbasis}, it suffices to ensure the following for all $i \in \{0,\dots,t\}$:
	\begin{align}
		\label{eq:pMatchMoment}\int_{-1}^1 P_i(x) p (x) dx =  \E_{X \sim \cN(0,1)}[P_i(X) - (1 - \epsilon)P_i(X + \delta) - \epsilon P_i(X + \delta')].
	\end{align}
	Since $\int_{-1}^1 p(x) \d x= 0$, $P_0(x) = 1$, and $ P_1(x) = x$, we have that \Cref{eq:pMatchMoment} holds for $i = 0$ and $i=1$ as both sides are zero. 
	Note that for any $i \in \{0,\dots,t\}$, the left-hand side above can be calculated using \ref{item:Lorth}:
	\begin{align}
	\label{eq:a_i_values}
		\int_{-1}^1 P_i(x) p (x) dx = \sum_{j=0}^t \int_{-1}^1 a_j P_i(x)P_j(x) = \frac{2a_i}{2i + 1}.  
	\end{align}
	We will now bound the expression on the right-hand side in \Cref{eq:pMatchMoment} to show that $a_i$ are small.
	Let $h_i$ be the $i$-th normalized probabilist's Hermite polynomials. Using \ref{item:Hbasis}, we can write $P_i(x) = \sum_{j=0}^i b_{i,j}h_j(x)$ for some $b_{i,j} \in \R$.
	We now calculate the right-hand side of \Cref{eq:pMatchMoment} as follows for a fixed $i \in \{0,\dots,t\}$:
	\begin{align*}
		\E_{X \sim \cN(0,1)}[P_i(X) &- (1 - \epsilon)P_i(X + \delta) - \epsilon P_i(X + \delta')] \\
		&=  \sum_{j=0}^i b_{i,j} \left( \E_{X \sim \cN(0,1)}[h_j(X)] - (1- \epsilon) \E_{X \sim \cN(0,1)}[h_j(X + \delta) - \epsilon h_j(X + \delta')] \right) \\
		&= \sum_{j=0}^i b_{i,j} \left( 0 - (1- \epsilon) \frac{\delta^j}{\sqrt{j!}} - \epsilon \frac{(\delta')^j}{\sqrt{j!}} \right) \\
		&= \sum_{j=0}^i  \frac{-1}{\sqrt{j!}}b_{i,j} \left( (1- \epsilon) \delta^j + \epsilon (\delta')^j \right)\;,
	\end{align*}
	where the second line uses \ref{item:Hshift}. From the proof of \cite[Claim 5.6]{DKS18-list},
	we have that $\sum_{j=0}^i b_{i,j}^2 = O((2i)^i)$.
	We are now ready to calculate the upper bound on $|a_i|$ using \Cref{eq:a_i_values}:
	\begin{align*}
		|a_i| 
		&= \left(\frac{2i + 1}{2}  \right) \left|   \sum_{j=0}^i \frac{-1}{\sqrt{j!}} b_{i,j} ( (1 - \epsilon) \delta^j + \epsilon (\delta')^j) \right|\\
		&\leq  2i     \sum_{j=0}^i \frac{1}{\sqrt{j!}} |b_{i,j}| \left( (1 - \epsilon) |\delta|^j + \epsilon |\delta'|^j \right)\\
		&\leq   4i     \sum_{j=0}^i \frac{1}{\sqrt{j!}} |b_{i,j}| \epsilon \left| \delta' \right|^j\\
		&\leq 8i^2  \epsilon \max(|\delta'|,|\delta'|^i) \max_{j \in [i]}|b_{i,j}| \\
		&\leq (2i)^{i+ 4} \epsilon \max(|\delta'|,|\delta'|^i) \;,
	\end{align*}
	where the third line uses that $\delta'=-(1-\eps)\delta/\eps$ thus $(1-\eps)|\delta|^{j}=|\delta'|^j\eps (\eps/(1-\eps))^{j-1}\leq |\delta'|^j\eps$. 
	Thus, we get the following:
	\begin{align}
		\max_{x \in [-1,1]} |p(x)| &\leq \sum_{i=1}^t|a_i| \leq \sum_{i=1}^t (2i)^{i+ 4} \epsilon \max(|\delta'|,|\delta'|^i) \leq (2t)^{t + 5} \epsilon \max(|\delta'|,|\delta'|^t) \notag \\
		&\leq  \epsilon |100t|^{t}\max(|\delta'|,|\delta'|^t) \;,
	\label{eq:SQLowerBdKmoments}
	\end{align}
	where we bounded the sum by $t$ times its last term. We would like to show that the last expression in \Cref{eq:SQLowerBdKmoments} is less than $0.1$ when $\delta = C\eps^{1-1/t}/t$ for some constant $C$.
	Note that this choice of $\delta$ implies that $|\delta'| \geq 0.5(\delta/\eps) = 0.5C (\eps^{-1/t}/t)$, which is larger than $1$ when $\eps = (O(t))^{-t}$.
	Thus the last expression in \Cref{eq:SQLowerBdKmoments} is at most $\eps (100 t |\delta'|)^t \leq \eps 100^t t^t \delta^t / \eps^t \leq \eps 100^t t^t C^t \eps^{t-1} / (\eps^t t^t) = (100C)^t$, which  is less than $0.1$ if $C \leq 0.0005$.

	Finally, the bound on the $t$-moment of $Q_1$ centered around $\delta$ 
	follows by combining the moment bounds of $\cN(\delta,1)$ 
	and noting that $p(\cdot)$ modifies the Gaussian only on the interval $[-1,1]$.
	
	It remains to bound the $\chi^2$-divergence between our distribution $A$ and $\cN(0,1)$.
	\begin{align*}
		1+\chi^2(A,\cN(0,1)) &= \int_{-\infty}^{\infty} \frac{1}{G(x)}((1-\eps) G(x-\delta)+p(x)\mathbf{1}_{[-1,1]} +\eps G(x-\delta'))^2 \d x \\
		&\leq 9 \left( \int_{-\infty}^{\infty} \frac{G^2(x-\delta)}{G(x)} \d x +  \int_{-1}^{1} \frac{p^2(x)}{G(x)}\d x + \eps^2 \int_{-\infty}^{\infty} \frac{G^2(x-\delta')}{G(x)}\d x \right) \;.
	\end{align*}
	Working with each term separately, the first one is bounded as
	\begin{align*}
		\int_{-\infty}^{\infty} \frac{G^2(x-\delta)}{G(x)} \d x \leq 1+\chi^2(\cN(\delta,1), \cN(0,1)) = e^{\delta^2} \;,
	\end{align*}
	the last one is similarly bounded above by $\eps^2 e^{\delta'^2}$ and for the first one we have that
	\begin{align*}
		\int_{-1}^{1} \frac{p^2(x)}{G(x)}\d x \leq \left(\max_{x \in [-1,1]} |p(x)| \right)  \max_{x \in [-1,1]} \frac{1}{G(x)} = O(1) \;.
	\end{align*}
	Given $(\delta')^2=\Theta(\delta^2/\eps^2)$, all three terms are at most $\exp(  O( \delta^2/\eps^2 )$
	, therefore we have that $\chi^2(A,\cN(0,1)) =  \exp\left(  O( \delta^2/\eps^2 ) \right)$. 
\end{proof}

\subsubsection{Proof of \Cref{thm:bounded_moment_sparse}}

\begin{proof}[Proof of \Cref{thm:bounded_moment_sparse}]
	We will prove \Cref{thm:bounded_moment_sparse} using \Cref{lem:ABddMoment} with the argument of \Cref{sec:reduction}. Let $A$ be the distribution from \Cref{lem:ABddMoment}. We consider \Cref{prob:search_problem,prob:hypothesis_testing} with $\cD=\{P_{A,v} \}_{v \in \cU_k}$ (using the notation from \Cref{prob:generic_hypothesis_testing}). 
Using the notation of \Cref{lem:ABddMoment}, we see that every $P_{A,v}$ in this choice of $\cD$ is of the following form $P_{A,v} = (1 - \eps) P_{Q_1, v} + \eps P_{Q_2,v} $, where $P_{Q_1, v}$ belongs to $\cP_{k,t}$ as defined in the beginning of this section:  (i) its mean is $k$-sparse (since $v$ is $k$-sparse), (ii) it satisfies subgaussian tail bounds (since $Q_1$ has subgaussian tails, see \Cref{lem:ABddMoment}), and (iii) it has $t$-certifiably bounded moments (\Cref{cl:SoS_certifiability}).
	\Cref{prob:search_problem} is then the same as \Cref{prob:generic_hypothesis_testing}. By the reduction of \Cref{lem:reduction}, it remains to show the SQ-hardness of the latter problem. We then use \Cref{cor:genericSQbound}. 
\end{proof}

As a note, by simply replacing the set $S$ of the $k$-sparse direction of \Cref{fact:vectors} by an analogous set of dense $2^{d^c}$ vectors (see, e.g., \cite[Lemma 3.7]{DKS17-sq}) we can get an analog of the previous theorem for the dense case.

\begin{theorem}[SQ Lower Bound in Dense Case]
\label{thm:sq-dense}
Let $t \in \Z_+$, $0<\eps = (O(t))^{-t}$, $C<1/2000$, $0<c<1/2$,and $\delta = C \eps^{1-1/t}/t$.
Any SQ algorithm that, given access to a distribution of the form $(1-\eps)P + \eps N$ where $P$ is a distribution with $\E_{X \sim P}[|v^T X|^i]^{1/i} = O(\sqrt{i})$ for every $i \leq t$ and every $v \in \cS^{d-1}$ and finds a vector $\hat{\mu}$ such that $\|\hat{\mu}- \E_{X \sim P}[X] \|_2 \leq \delta$ does one of the following:
	\begin{itemize}
		\item Makes $2^{\Omega(d^c)} d^{-(t+1)(1/2-c)}$ queries.
		\item Makes at least one query with tolerance $\left(  O(d)^{-(t+1)(1/4-c/2)} e^{ O( \delta^2/\eps^2 )} \right)$.
	\end{itemize}
\end{theorem}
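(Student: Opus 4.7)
The plan is to mirror the proof of \Cref{thm:bounded_moment_sparse}, replacing the sparse Non-Gaussian Component Analysis construction with its dense counterpart. First, I would reuse the one-dimensional hard distribution $A$ constructed in \Cref{lem:ABddMoment}: it matches its first $t$ moments with $\cN(0,1)$, decomposes as $A = (1-\eps)Q_1 + \eps Q_2$ where $\E_{X\sim Q_1}[X] = \delta = C\eps^{1-1/t}/t$ with $Q_1$ having $O(\sqrt{i})$-bounded $i$-th moments for all $i \leq t$, and satisfies $\chi^2(A, \cN(0,1)) \leq \exp(O(\delta^2/\eps^2))$. In particular, the $\eps$-contaminated distribution $P_{A,v}$ (as defined in \Cref{prob:generic_hypothesis_testing}) has inlier component $P_{Q_1,v}$ with mean $\delta v$, and the required moment bound $\E[|u^T X|^i]^{1/i}= O(\sqrt{i})$ for every unit vector $u$ follows coordinatewise from the Gaussian tails of $Q_1$ and the standard Gaussian structure in the orthogonal directions.

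The next step is the reduction to hypothesis testing via (the dense analog of) \Cref{lem:reduction}: any estimator $\hat \mu$ with $\|\hat\mu - \delta v\|_2 \leq \delta/2$ distinguishes $H_0: \cN(0,I)$ from $H_1: P_{A,v}$ for some unit $v$ by thresholding $\|\hat\mu\|_2$ at $\delta/2$. It therefore suffices to SQ-lower-bound this hypothesis-testing problem. I would plug into the framework of \Cref{def:sq-dim} and \Cref{lem:sq-from-pairwise} with the family $\cD = \{P_{A,v} : v \in S\}$, where $S \subseteq \cS^{d-1}$ is a set of $|S| = 2^{\Omega(d^c)}$ unit vectors with $|\langle v, u\rangle| \leq O(d^{-1/2+c})$ for all distinct $v,u \in S$. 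Such a set is produced by \cite[Lemma 3.7]{DKS17-sq}, which is exactly the substitute for the $k$-sparse vector family $\cU_k$ referenced by \Cref{fact:vectors} in the sparse case.

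The pairwise-correlation computation is the standard Hermite-expansion identity: if $A$ matches its first $t$ moments with $\cN(0,1)$, then the Hermite coefficients $\hat A_i$ vanish for $1 \leq i \leq t$, so
\[
\chi_{\cN(0,I)}(P_{A,v}, P_{A,u}) \;=\; \sum_{i \geq t+1} \langle v,u\rangle^{i} \,\hat A_i^{2} \;\leq\; |\langle v, u\rangle|^{t+1}\, \chi^2(A, \cN(0,1)),
\]
while $\chi_{\cN(0,I)}(P_{A,v},P_{A,v}) \leq \chi^2(A,\cN(0,1))$. Hence $\cD$ is $(\gamma,\beta)$-correlated relative to $\cN(0,I)$ with $\gamma = O(d^{-(t+1)(1/2-c)})\,\chi^2(A,\cN(0,1))$ and $\beta = \chi^2(A,\cN(0,1)) \leq \exp(O(\delta^2/\eps^2))$. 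Applying \Cref{lem:sq-from-pairwise} with $\gamma' = \gamma$ yields a tolerance lower bound of $\sqrt{2\gamma} = O(d)^{-(t+1)(1/4-c/2)}\, e^{O(\delta^2/\eps^2)}$ or a query lower bound of $|S|\gamma/\beta = 2^{\Omega(d^c)}\,d^{-(t+1)(1/2-c)}$, matching the theorem.

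The only real content beyond reusing \Cref{lem:ABddMoment} and the reduction is (i) the existence of the dense almost-orthogonal family, which is cited, and (ii) verifying that the inlier $P_{Q_1,v}$ lies in the advertised moment-bounded class so that the theorem's hypothesis is truly met. The main point of care will be checking that the one-dimensional moment bound from \Cref{lem:ABddMoment} lifts to the claimed $\E_{X\sim P}[|v^T X|^{i}]^{1/i} = O(\sqrt i)$ for every unit $v \in \cS^{d-1}$ (not just the hidden direction), using that $P_{Q_1,v}$ is standard Gaussian in directions orthogonal to $v$ and combining via Minkowski's inequality on the decomposition $v = \alpha v^{\parallel} + \beta v^{\perp}$; this is a routine check but is the place where the argument really uses the particular product structure of $P_{A,v}$.
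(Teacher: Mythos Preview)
Your proposal is correct and follows precisely the approach the paper indicates: the paper's ``proof'' is the one-sentence remark preceding the theorem that one obtains \Cref{thm:sq-dense} from the proof of \Cref{thm:bounded_moment_sparse} by replacing the $k$-sparse vector family with the dense near-orthogonal family of \cite[Lemma~3.7]{DKS17-sq}. Your write-up fills in exactly these details (reusing \Cref{lem:ABddMoment}, the reduction of \Cref{lem:reduction}, the Hermite-based pairwise-correlation bound, and \Cref{lem:sq-from-pairwise}), so there is nothing substantive to add.
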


\section*{Acknowledgements}

We thank Pravesh Kothari for useful clarifications regarding prior work. We also thank Jonas Riehle and Markus Schweighofer for pointing out 
a minor oversight in the runtime calculation 
appearing in an earlier version of this paper.

\newpage

\bibliographystyle{alpha}
\bibliography{allrefs}

\newpage
\appendix

\section{Omitted Background from \Cref{sec:prelims}}
\label{app:addDetails}

\medskip

\subsection{Basic facts}

For completeness, we prove \Cref{fact:sparseTruncation} below, stating that estimating in $(2,k)$-norm implies sparse mean estimation. 
\FactTruncSparse*

\begin{proof}
For a vector $x = (x_1,\ldots,x_d) \in \R^d$ and a set of indices $S \subset [d]$, we will use the notation $x_S$ to denote the vector that contains only those elements of $x$ whose indices lie in $S$.
Let $\|x-\mu \|_{2,k} = b$.
	Let $S^* := \supp(\mu)$ and $S' := \supp(h_k(x))$. Then,  $ \| (\mu - h_k(x))_{S^*} \|_2 \leq B$ and $\| (x)_{S' \setminus S^*} \|_2 = \| (\mu - h_k(x))_{S' \setminus S^*} \|_2   \leq b$.
	
	If $h_k(x) = x$, then we are done because $\| (\mu - x)_{(S' \setminus S^*) \cup S^*}  \|_2 \leq 2b$.
	If not, then $|S'| = k$. Since $|S^*| \leq k$, it follows that $|S'\setminus S^*| \geq |S^* \setminus S'|$. 
	Since $S'$ contains the indices for the $k$ largest entries (in magnitude) of $x$, for any $i \in S'\setminus S^*$ and $j \in S^* \setminus S'$, $|x_i| \geq |x_j|$. Since $\| (x)_{S' \setminus S^*} \|_2 \leq b$, at least one coordinate $j \in S' \setminus S^*$ must satisfy $(x_j)^2 \leq  b^2/|S' \setminus S^*|$. Therefore, for every $i \in S^* \setminus S'$ we have $(x_i)^2 \leq  b^2/|S' \setminus S^*|$. Adding these up we get the following upper bound on $\| (x)_{S^* \setminus S'} \|_2$.
	\[ \| (x)_{S^* \setminus S'} \|_2^2 = \sum_{i \in S^* \setminus S'} (x)_i^2 \leq  b^2 \cdot \frac{|S^* \setminus S'|}{|S' \setminus S^*|} \leq  b^2.\]
	Finally, we have that
	\[ \| \mu - h_k(x)\|_2^2 = \| (\mu - x)_{S' \cap S^*}\|_2^2 + \|(\mu)_{S^* \setminus S'} \|_2^2 + \|(x)_{S' \setminus S^*}\|_2^2 \leq 6 b^2, \]
	where the  bound on $\|(\mu)_{S^* \setminus S'} \|^2$ follows by a triangle inequality and the fact $\| (\mu - x)_{S^*} \|_2 \leq b$. 
\end{proof}

The following fact gives an explicit expression for the moments of a Gaussian distribution:
\begin{fact}[Moments of Gaussian]\label[fact]{fact:gaussian-moments}
	For \emph{any} $v \in \mathbb{R}^d$ and any $s \in \bbN$, the moments of $\mathcal{N}(\mu, \Sigma)$ are 
	$\E_{X \sim \mathcal{N}(\mu, \Sigma)}\Brac{\iprod{v, X - \mu}^{2s}} = (2s-1)!! \E_{X \sim \mathcal{N}(\mu, \Sigma)}\Brac{\iprod{v, X - \mu}^{2}}^s$.
\end{fact}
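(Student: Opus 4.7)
The plan is to reduce the multivariate moment computation to a one-dimensional moment computation, and then invoke the standard identity for central Gaussian moments.

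First I would observe that if $X \sim \cN(\mu, \Sigma)$, then the scalar random variable $Y := \iprod{v, X - \mu}$ is itself Gaussian with mean zero and variance $\sigma^2 := v^T \Sigma v$. This is a linear transformation of a multivariate Gaussian, so $Y \sim \cN(0, \sigma^2)$. Consequently the left-hand side equals $\E[Y^{2s}]$ and the factor $\E_{X \sim \cN(\mu,\Sigma)}\Brac{\iprod{v,X-\mu}^2}^s$ appearing on the right equals $\sigma^{2s}$. It therefore suffices to show $\E[Y^{2s}] = (2s-1)!!\, \sigma^{2s}$, or equivalently, by rescaling $Y = \sigma Z$ with $Z \sim \cN(0,1)$, that $\E[Z^{2s}] = (2s-1)!!$.

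Next I would establish $\E[Z^{2s}] = (2s-1)!!$ by induction on $s$. The base case $s=0$ gives $\E[Z^0] = 1 = (-1)!!$. For the inductive step, integration by parts with $u = z^{2s-1}$, $\d v = z\, \phi(z)\, \d z$ (using $\phi'(z) = -z\phi(z)$ for the standard Gaussian density $\phi$) yields
\begin{align*}
\E[Z^{2s}] = \int_{-\infty}^{\infty} z^{2s-1} \cdot z\,\phi(z)\, \d z = \bigl[-z^{2s-1}\phi(z)\bigr]_{-\infty}^{\infty} + (2s-1)\int_{-\infty}^{\infty} z^{2s-2}\phi(z)\, \d z,
\end{align*}
and the boundary term vanishes since $\phi$ decays faster than any polynomial. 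Hence $\E[Z^{2s}] = (2s-1)\,\E[Z^{2s-2}] = (2s-1)(2s-3)!! = (2s-1)!!$ by the inductive hypothesis.

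Combining the two steps gives $\E[Y^{2s}] = \sigma^{2s}(2s-1)!! = (2s-1)!!\bigl(\E[Y^2]\bigr)^s$, which is precisely the claimed identity. There is no significant obstacle here; the only minor bookkeeping is verifying that the boundary terms in the integration by parts vanish, which is immediate from the Gaussian tail decay.
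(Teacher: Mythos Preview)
Your proof is correct. The paper does not actually prove this fact; it is simply stated as a standard identity for Gaussian moments, and your reduction to the one-dimensional case followed by the usual integration-by-parts induction is exactly the textbook derivation.
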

	
\subsection{Additional SoS Background}\label{sec:sos_prelims}
We record some additional facts that we will use in our proofs.
\begin{fact}[Cauchy-Schwarz for Pseudoexpectations]\label{fact:CS_pseudo_exp}
	Let $f,g$ be polynomials of degree at most $t$. Then, for any degree-$2t$ pseudoexpectation $\pE$,
	$\pE[fg] \leq \sqrt{\pE [f^2]} \sqrt{\pE[g^2]}$.
	Consequently, for every squared polynomial $p$ of degree $t$, and $k$ a power of two, 
	$\pE[p^k] \geq  (\pE[p])^k$ for every $\pE$ of degree-$2tk$.
	\label{fact:pseudo-expectation-cauchy-schwarz}
\end{fact}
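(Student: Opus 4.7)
My proof plan has two parts corresponding to the two claims.

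For the first claim (pseudoexpectation Cauchy--Schwarz), I would use the standard discriminant trick. Consider the single-variable polynomial $h(\lambda) := \pE[(f - \lambda g)^2]$ in $\lambda \in \R$. Since $(f - \lambda g)^2$ is a square polynomial of degree at most $2t$ (for every fixed $\lambda$) and $\pE$ is a degree-$2t$ pseudoexpectation satisfying the axiom $\pE[s^2] \geq 0$ for every $s$ of degree $\le t$, we have $h(\lambda) \geq 0$ for all $\lambda \in \R$. Expanding gives
\[
h(\lambda) = \pE[g^2]\,\lambda^2 - 2\pE[fg]\,\lambda + \pE[f^2] \ge 0,
\]
a non-negative real quadratic in $\lambda$. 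Its discriminant is therefore non-positive, which yields $(\pE[fg])^2 \le \pE[f^2]\,\pE[g^2]$; taking square roots gives the stated inequality. (The degenerate case $\pE[g^2] = 0$ is handled separately: then $h$ is affine in $\lambda$ and non-negative, forcing $\pE[fg] = 0$, and the bound holds trivially.)

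For the second claim, I would induct on $j$ where $k = 2^j$. The base case $j=0$ is the trivial equality $\pE[p] \ge \pE[p]$. For the inductive step, write $p = q^2$ with $\deg q \le t/2$ so that $p^{k/2} = q^k = (q^{k/2})^2$ is itself a square polynomial of degree $tk/2$ (since $k/2$ is still a power of two, or equals $1$, in which case $p^{k/2} = p$ is a square by assumption). By the inductive hypothesis applied to this square, $\pE[p^{k/2}] \ge (\pE[p])^{k/2} \ge 0$. Now apply part 1 with $f := p^{k/2}$, $g := 1$ (of degrees $tk/2$ and $0$), which is permitted because $\pE$ has degree $2tk \ge tk$:
\[
\bigl(\pE[p^{k/2}]\bigr)^2 \;\le\; \pE[p^k] \cdot \pE[1] \;=\; \pE[p^k].
\]
Since both sides of $\pE[p^{k/2}] \ge (\pE[p])^{k/2}$ are non-negative, squaring preserves the direction and yields $\pE[p^k] \ge (\pE[p^{k/2}])^2 \ge (\pE[p])^k$, completing the induction.

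The only place that requires care is the sign/non-negativity bookkeeping in the inductive step: squaring an inequality flips direction unless both sides are non-negative. The hypothesis that $k$ is a power of two is used precisely to ensure that every intermediate iterate $p^{k/2^i}$ remains a square polynomial, so $\pE$ assigns it a non-negative value, and the induction carries through. The degree accounting is routine, since $\deg(p^k) = tk$ and the Cauchy--Schwarz step in each induction round requires $\pE$ of degree at least $tk$, which is subsumed by the hypothesis that $\pE$ has degree $2tk$.
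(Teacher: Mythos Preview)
The paper does not supply a proof of this fact; it is recorded in the SoS preliminaries as a standard known result. Your argument is correct and is precisely the standard one: the discriminant trick for the Cauchy--Schwarz inequality, followed by an induction on $\log_2 k$ using Cauchy--Schwarz with $f = p^{k/2}$ and $g = 1$ at each step. The degree bookkeeping and the non-negativity needed to square the inductive inequality are handled correctly.
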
 

\begin{fact}[SoS Cauchy-Schwartz and H\"older (see, e.g., \cite{hopkins2018clustering})]\label{fact:sos-holder}
	Let $f_1,g_1,  \ldots, f_n, g_n$ be indeterminates
	over $\R$. Then, 
	\begin{align*}
	\sststile{2}{f_1, \ldots, f_n,g_1, \ldots, g_n} \Set{ \Paren{\frac{1}{n} \sum_{i=1}^n f_i g_i }^{2} \leq \Paren{\frac{1}{n} \sum_{i=1}^n f_i^2} \Paren{\frac{1}{n} \sum_{i=1}^n g_i^2} } \;.
	\end{align*} 
	The total bit complexity of the SoS proof is $\poly(n)$. Moverover, if $p_1, \dots, p_n$ are indeterminates, for any $t \in \Z_+$ that is a power of $2$, we have that
	\begin{align*}
	\{w_i^2 = w_i \mid i \in [n] \} \sststile{O(t)}{p_1, \dots, p_n} \left( \sum_i w_i p_i\right)^t &\leq \left( \sum_{i \in [n]} w_i\right)^{t-1} \cdot \sum_{i \in [n]} p_i^t \quad \text{ and} \\
	\{w_i^2 = w_i \mid i \in [n] \} \sststile{O(t)}{p_1, \dots, p_n} \left( \sum_i w_i p_i\right)^t &\leq \left( \sum_{i \in [n]} w_i\right)^{t-1} \cdot \sum_{i \in [n]} w_ip_i^t \;.
	\end{align*} 
	The total bit complexity of the SoS proof is $\poly(n^t)$.
\end{fact}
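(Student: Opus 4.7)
The plan is to prove the three inequalities in order, using Part~1 (Cauchy--Schwartz) as the main building block for both H\"older-type inequalities in Parts~2 and~3. For Part~1, I would invoke the classical Lagrange identity
\[ \Paren{\sum_{i=1}^n f_i^2}\Paren{\sum_{i=1}^n g_i^2} - \Paren{\sum_{i=1}^n f_i g_i}^2 \;=\; \sum_{1\le i<j\le n}(f_i g_j - f_j g_i)^2, \]
which is a polynomial identity in the indeterminates $f_i,g_i$ and thus exhibits the desired quantity (after the $1/n^2$ normalization) as a sum of $\binom{n}{2}$ explicit squares of degree-$2$ polynomials. The coefficients are $\pm 1/n^2$, so the bit complexity per coefficient is $O(\log n)$ and the total bit complexity is $\poly(n)$.

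For Part~3, I would induct on $t$ along the powers of two. The base case $t=2$ follows by applying Part~1 with $f_i := w_i$ and $g_i := w_i p_i$, and then using $w_i^2 = w_i$ to rewrite the two factors on the right as $(\sum_i w_i^2) = (\sum_i w_i)$ and $(\sum_i w_i^2 p_i^2) = (\sum_i w_i p_i^2)$. For the inductive step from $t$ to $2t$, set $A := (\sum_i w_i p_i)^t$ and $B := (\sum_i w_i)^{t-1}\sum_i w_i p_i^t$; the inductive hypothesis gives $B - A \ge 0$ in SoS. Observe that both $A$ and $B$ are themselves SoS-nonnegative under the axioms: $A = ((\sum_i w_i p_i)^{t/2})^2$, and $B$ is a product of an even power of $\sum_i w_i$ (hence a square) and $\sum_i w_i^2 p_i^t = \sum_i (w_i p_i^{t/2})^2$. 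Hence $B + A$ admits an SoS certificate as well, and so $B^2 - A^2 = (B-A)(B+A)$ is a product of two SoS-nonnegative polynomials, which yields $A^2 \le B^2$, i.e., $(\sum_i w_i p_i)^{2t} \le (\sum_i w_i)^{2t-2}(\sum_i w_i p_i^t)^2$. Applying Part~1 once more, with $f_i := w_i$ and $g_i := w_i p_i^t$, gives $(\sum_i w_i p_i^t)^2 \le (\sum_i w_i)(\sum_i w_i p_i^{2t})$, and combining with the previous line yields the claim for $2t$.

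Part~2 then reduces to Part~3 by noting that $(1-w_i)^2 = 1 - 2w_i + w_i^2 = 1 - w_i$ under the axioms, so $(1-w_i)$ is itself a square. For $t$ even we obtain
\[ \sum_i p_i^t - \sum_i w_i p_i^t \;=\; \sum_i (1-w_i) p_i^t \;=\; \sum_i ((1-w_i) p_i^{t/2})^2 \;\ge\; 0 \]
in SoS, and multiplying this inequality by the SoS-nonnegative quantity $(\sum_i w_i)^{t-1}$ (a product of an even power of $\sum_i w_i$ and $\sum_i w_i^2$) and chaining with Part~3 delivers the stated bound.

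The main technical subtlety I expect is the squaring step $A^2 \le B^2$ in the inductive proof of Part~3: it requires exhibiting both $A$ and $B$ as explicit sums of squares so that $B + A$ also inherits an SoS certificate, and leveraging closure of SoS polynomials under products. The remaining work is bookkeeping --- the degree grows as $O(t)$ as claimed (each squaring doubles it over $\log_2 t$ iterations) and the coefficient sizes grow at most polynomially per step, giving overall bit complexity $\poly(n^t)$.
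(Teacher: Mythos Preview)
Your proof is correct in structure and essentially complete; the paper itself states this fact with a citation to \cite{hopkins2018clustering} and gives no proof, so there is nothing to compare against. One small slip: in your inductive step you claim $B = (\sum_i w_i)^{t-1}\sum_i w_i p_i^t$ is ``a product of an even power of $\sum_i w_i$ \dots,'' but $t-1$ is odd for $t$ a power of two. The fix is immediate --- write $(\sum_i w_i)^{t-1} = \bigl((\sum_i w_i)^{(t-2)/2}\bigr)^2 \cdot (\sum_i w_i^2)$ using the axiom $w_i^2=w_i$, so $B$ is a product of three SoS-nonnegative factors under the axioms. With that adjustment, the $(B-A)(B+A)$ argument and the subsequent Cauchy--Schwartz application to $(\sum_i w_i p_i^t)^2$ go through as you wrote, and the reduction of Part~2 to Part~3 via $(1-w_i)^2 = 1-w_i$ is clean. (Minor remark: the Lagrange identity exhibits the Cauchy--Schwartz difference as a sum of squares of degree-$2$ polynomials, so the SoS proof is degree $4$; the ``$2$'' in the statement's turnstile appears to be a typo in the paper, not an issue with your argument.)
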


\begin{fact}[SoS Triangle Inequality]\label{fact:sos-triangle}
	If $k$ is a power of two, 
	$\sststile{k}{a_1, a_2, \ldots, a_n} \Set{ \left(\sum_i a_i \right)^k \leq n^k \Paren{\sum_i a_i^k} }.$ The total bit complexity of the SoS proof is $\poly(n^k)$.
\end{fact}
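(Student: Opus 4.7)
The plan is induction on $\log_2 k$, exploiting the assumption that $k$ is a power of two. For the base case $k=2$, I would exhibit the explicit sum-of-squares identity
\[
n \sum_{i=1}^n a_i^2 - \Paren{\sum_{i=1}^n a_i}^2 = \sum_{1 \leq i < j \leq n} (a_i - a_j)^2,
\]
which provides a degree-$2$ SoS certificate in the $a_i$ (this is also a direct consequence of \Cref{fact:sos-holder} applied with one vector being all-ones). The bit complexity here is $O(\log n)$.

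For the inductive step, suppose we already have the degree-$k/2$ inequality $\sststile{k/2}{} n^{k/2-1}\sum_i a_i^{k/2} - (\sum_i a_i)^{k/2} \geq 0$. The key move is to ``square'' it. Naively, an SoS proof of $P - Q \geq 0$ does not yield one of $P^2 - Q^2 \geq 0$. However, $P^2 - Q^2 = (P-Q)(P+Q)$, and the product of two SoS polynomials is itself SoS via the identity $(\sum p_i^2)(\sum q_j^2) = \sum_{i,j}(p_i q_j)^2$. So it suffices to also exhibit $P + Q$ as an SoS, where $P = n^{k/2-1}\sum_i a_i^{k/2}$ and $Q = (\sum_i a_i)^{k/2}$. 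Since $k \geq 4$ is a power of two, $k/2$ is even, so $a_i^{k/2} = (a_i^{k/4})^2$ makes $P$ manifestly SoS, and $Q = ((\sum_i a_i)^{k/4})^2$ is a single square; hence $P + Q$ is SoS. Multiplying the two SoS certificates together yields $\sststile{k}{} (\sum_i a_i)^k \leq n^{k-2} (\sum_i a_i^{k/2})^2$.

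To close the induction, I would apply the base case with the substitution $b_i := a_i^{k/2}$: since the base case is a universal identity in indeterminates, substituting polynomials for the $b_i$ preserves the SoS proof, giving $\sststile{k}{} (\sum_i a_i^{k/2})^2 \leq n \sum_i a_i^k$. Composing with the previous step,
\[
\sststile{k}{} \Paren{\sum_i a_i}^k \leq n^{k-2} \Paren{\sum_i a_i^{k/2}}^2 \leq n^{k-1} \sum_i a_i^k \leq n^k \sum_i a_i^k,
\]
which is the claimed inequality (in fact with an $n^{k-1}$ bound, which is slightly tighter than stated).

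The main subtlety is the squaring step: one has to check SoS-nonnegativity of the \emph{sum} $P+Q$, not just the difference $P-Q$, which is what makes the ``$k$ is a power of two'' hypothesis essential (otherwise $P$ may fail to be SoS). For bit complexity, each inductive level multiplies two SoS representations of bit complexity $\poly(n^{k/2})$, yielding a new representation of bit complexity $\poly(n^k)$; there are only $\log_2 k$ levels and the number of degree-$k$ monomials in $n$ variables is bounded by $n^k$, so the overall bit complexity is $\poly(n^k)$ as claimed.
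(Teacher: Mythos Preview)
Your argument is correct. The paper states this as a standard background fact without proof, so there is nothing to compare against; your induction-on-$\log_2 k$ via the factorization $P^2-Q^2=(P-Q)(P+Q)$ together with the observation that both factors are manifestly SoS when $k$ is a power of two is exactly the intended elementary proof, and the substitution $b_i=a_i^{k/2}$ into the base-case identity is legitimate because that identity is unconditional.

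One small correction: the base-case certificate $n\sum_i a_i^2-(\sum_i a_i)^2=\sum_{i<j}(a_i-a_j)^2$ involves $\binom{n}{2}$ square terms, so its bit complexity is $\poly(n)$, not $O(\log n)$ as you wrote. This does not affect the final $\poly(n^k)$ bound.
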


We will apply the above facts in a way so that the final bit complexity of these SoS proofs will be bounded by $\poly(m^t,d^t)$.

\begin{restatable}{fact}{factAtMostDegree}
 \label{fact:AtMostDeg}
    Any degree-$t$ polynomial $r(x)$ in $d$ variables which is a sum of square polynomials, can always be written as a sum of at most $d^{t/2}$ square polynomials.
\end{restatable}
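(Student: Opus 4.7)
The plan is to invoke the Gram matrix representation of sum-of-squares polynomials combined with a rank decomposition of a positive semidefinite matrix. First, I would observe that if $r(x)$ has degree at most $t$ and is a sum of squares, then each square in any such decomposition must involve a polynomial of degree at most $t/2$. Let $v(x)$ denote the column vector of a monomial basis for polynomials of degree at most $t/2$ in the variables $x_1,\dots,x_d$, and let $N$ denote its length. By the standard Gram matrix characterization of SoS, there exists $Q \in \R^{N \times N}$ with $Q \succeq 0$ such that $r(x) = v(x)^T Q\, v(x)$.

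Second, I would use the spectral decomposition of $Q$. Writing $Q = \sum_{j=1}^{\mathrm{rank}(Q)} \lambda_j u_j u_j^T$ with $\lambda_j > 0$ and orthonormal vectors $u_j \in \R^N$, and substituting back, I obtain
\[
r(x) \;=\; \sum_{j=1}^{\mathrm{rank}(Q)} \lambda_j \bigl(u_j^T v(x)\bigr)^2 \;=\; \sum_{j=1}^{\mathrm{rank}(Q)} \Bigl(\sqrt{\lambda_j}\, u_j^T v(x)\Bigr)^2,
\]
exhibiting $r(x)$ as a sum of $\mathrm{rank}(Q) \leq N$ squares, each of degree at most $t/2$.

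Third, I would bound $N \leq d^{t/2}$ by counting monomials. Passing to a homogeneous representation (introducing a dummy variable for the constant term if necessary), the basis $v(x)$ has size $\binom{d+t/2-1}{t/2}$, and the elementary inequality
\[
\binom{d+s-1}{s} \;=\; \prod_{i=1}^{s} \frac{d+i-1}{i} \;\leq\; d^s \qquad (d,s \geq 1)
\]
applied with $s = t/2$ gives the desired bound.

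The main ``obstacle'' is essentially bookkeeping: matching the monomial basis convention so that the count comes out cleanly to $d^{t/2}$ rather than the slightly larger binomial coefficient $\binom{d+t/2}{t/2}$. All of the algebraic content is standard linear algebra of PSD matrices, so no further technical difficulty is anticipated.
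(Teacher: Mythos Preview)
Your approach is essentially identical to the paper's: both form the Gram matrix $Q$ of $r$ with respect to the monomial basis in degree $\leq t/2$, then read off a decomposition into $\mathrm{rank}(Q)$ squares via the spectral decomposition of $Q$.

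One small bookkeeping slip in your third step: after homogenizing with a dummy variable you are working in $d+1$ variables, so the number of homogeneous monomials of degree $t/2$ is $\binom{(d+1)+t/2-1}{t/2}=\binom{d+t/2}{t/2}$, not $\binom{d+t/2-1}{t/2}$, and your product inequality then only yields $(d+1)^{t/2}$ rather than $d^{t/2}$. The paper's own proof glosses over exactly this point and simply asserts the bound $d^{t/2}$; in context this is harmless, since the fact is only ever invoked to get a $|F|^{O(t)}$ bound.
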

 \begin{proof}
Let $r(x) = \sum_{j} q_j(x)^2$. Observe that $q_j(x) = \Iprod{u_j, m(x)}$ where $m(x)$ is the vector of all possible monomials up to degree $t/2$ of the variables $x_1, \dots, x_d$ and $u_j$ is the vector containing the coefficients used for each of them in the polynomial $q_j$. Let $\sum_j u_j u_j^T = U$, then $r(x) = m(x)^T U ~m(x)$. Note that $U$ is a positive semidefinite matrix. It therefore has an eigen-decomposition of at most $d^{t/2}$ vectors $v_1, \dots, v_{d^{t/2}}$ with eigenvalues $\lambda_1,\ldots,\lambda_{d^{t/2}}\geq 0$. This means that we can write $r(x) = \sum_{j=1}^{d^{t/2}} \lambda_j m(x)^T v_j v_j^T m(x) = \sum_{j=1}^{d^{t/2}} h_j(x)^2$ where $h_j(x) = \sqrt{\lambda_j}\Iprod{v_j, m(x)}$. 
\end{proof}

The following fact is a simple corollary of the fundamental theorem of algebra:
\begin{fact} \label{fact:univariate}
	For any univariate degree $d$ polynomial $p(x)$, with $p(x) \geq 0$ for all $x \in \R$, 
	$\sststile{t}{x} \Set{p(x) \geq 0}$.
 \end{fact}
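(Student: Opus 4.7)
The plan is to invoke the fundamental theorem of algebra to factor $p$ into linear and irreducible quadratic factors over $\mathbb{R}$, then argue that the non-negativity of $p$ forces the real-root factors to appear with even multiplicity, and finally combine the resulting quadratic-factor pieces using the classical two-squares identity.

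More concretely, I would proceed as follows. First, factor $p(x) = c \prod_{i}(x - r_i)^{m_i}\,\prod_{j}(x^2 + b_j x + c_j)^{n_j}$, where the $r_i$ are the distinct real roots and the $x^2 + b_j x + c_j$ are the irreducible quadratic factors (arising from conjugate pairs of complex roots, so $c_j - b_j^2/4 > 0$). Since $p(x) \geq 0$ for all real $x$, each $m_i$ must be even and the leading coefficient $c$ must be positive. Thus $\prod_{i}(x - r_i)^{m_i}$ is already a perfect square (of a polynomial of degree at most $d/2$), and $c$ itself is a square. It remains to express $\prod_{j}(x^2 + b_j x + c_j)^{n_j}$ as a sum of two squares of polynomials.

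For each irreducible quadratic factor I would use the completion of the square $x^2 + b_j x + c_j = (x + b_j/2)^2 + (c_j - b_j^2/4)$, which writes it as $A_j(x)^2 + B_j(x)^2$ for real polynomials $A_j, B_j$ of degree at most $1$ (with $B_j$ a nonzero constant). Then I would iteratively apply the Brahmagupta–Fibonacci identity in polynomial form,
\[
(A^2 + B^2)(C^2 + D^2) = (AC + BD)^2 + (AD - BC)^2,
\]
to combine the quadratic factors one by one into a single expression of the form $Q_1(x)^2 + Q_2(x)^2$. Multiplying through by the already-square factor $c\prod_i (x-r_i)^{m_i}$ then yields $p(x) = q_1(x)^2 + q_2(x)^2$ with $\deg(q_i) \leq d/2$, which is the desired degree-$d$ SoS certificate.

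The only mild subtlety is handling the powers $n_j$: one either regroups $(x^2 + b_j x + c_j)^{n_j}$ by pulling out $(x^2 + b_j x + c_j)^{\lfloor n_j/2 \rfloor}$ as a square and keeping at most one remaining copy, so that each quadratic contributes at most once to the sum-of-two-squares combination. Apart from this bookkeeping, every step is a straightforward polynomial identity, so I do not anticipate a genuine obstacle; the main care is simply to verify that the Brahmagupta–Fibonacci combination is an \emph{SoS identity} (an actual polynomial identity, not merely a non-negativity statement), which it is by direct expansion, so the resulting decomposition qualifies as a degree-$d$ SoS proof in the sense of \Cref{def:sos-proof}.
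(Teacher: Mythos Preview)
Your proposal is correct and is precisely the standard argument the paper has in mind: the paper does not spell out a proof at all but simply states that the fact ``is a simple corollary of the fundamental theorem of algebra,'' and your write-up is exactly the usual elaboration of that remark (factor over $\R$, observe even multiplicities of real roots, write irreducible quadratics as sums of two squares, and combine via the Brahmagupta--Fibonacci identity to get a two-square decomposition of total degree $d$).
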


This can be extended to univariate polynomial inequalities over intervals of $\R$. 

\begin{fact}[Fekete and Markov-Lukács, see \cite{laurent2009sums}]
	For any univariate degree $d$ polynomial $p(x) \geq 0$ for $x \in [a, b]$,  $\Set{x\geq a, x \leq b} \sststile{d}{x} \Set{p(x) \geq 0}$.  \label{fact:univariate-interval}
\end{fact}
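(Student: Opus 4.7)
The plan is to invoke the classical Markov--Luk\'acs representation theorem after first normalizing the interval. I would begin with the affine change of variables $y = (2x - (a+b))/(b-a)$, which maps $[a,b]$ to $[-1,1]$ and converts the axioms $\{x\geq a,\, x\leq b\}$ into $\{1+y\geq 0,\, 1-y\geq 0\}$. Under this substitution $p(x)$ becomes a polynomial $\tilde p(y)$ of the same degree $d$ with $\tilde p(y)\geq 0$ on $[-1,1]$. Because the substitution is a degree-one polynomial bijection, any SoS proof of $\tilde p(y)\geq 0$ of degree $d$ from the transformed axioms pulls back, term by term, to an SoS proof of $p(x)\geq 0$ of degree $d$ from the original axioms. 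So it suffices to establish the claim for $[a,b]=[-1,1]$.

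Next I would invoke the Markov--Luk\'acs theorem: if $\tilde p$ has degree $d$ and is nonnegative on $[-1,1]$, then
(i) for $d=2m$ even, $\tilde p(y) = q(y)^2 + (1-y)(1+y)\,r(y)^2$ with $\deg q\leq m$ and $\deg r\leq m-1$, and
(ii) for $d=2m+1$ odd, $\tilde p(y) = (1+y)\,q(y)^2 + (1-y)\,r(y)^2$ with $\deg q,\deg r\leq m$.
In either case the right-hand side exhibits $\tilde p$ in the form $\sigma_0(y)+\sigma_1(y)(1+y)+\sigma_2(y)(1-y)+\sigma_3(y)(1+y)(1-y)$ with each $\sigma_i$ a sum of squares; the degree of every summand is at most $d$ (in the even case $2m$ and $2+2(m-1)=2m$; in the odd case $1+2m$ in both terms). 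By \Cref{def:sos-proof} this is precisely an SoS proof of $\tilde p(y)\geq 0$ of degree $d$ from $\{1+y\geq 0,\,1-y\geq 0\}$, which is what we need.

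The remaining task is to establish Markov--Luk\'acs itself. I would proceed by real factorization of $\tilde p$: pair up complex conjugate roots into irreducible quadratic factors $(y-\mathrm{Re}\,\alpha)^2+(\mathrm{Im}\,\alpha)^2$, which are already sums of two squares; observe that real roots inside $(-1,1)$ must have even multiplicity (otherwise $\tilde p$ would change sign on $[-1,1]$), so they contribute perfect squares; and handle each real root $\rho$ with $\rho\geq 1$ via the identity $\rho - y = \tfrac{\rho-1}{2}(1-y)+\tfrac{\rho+1}{2}(1+y)$, with the reflected identity for $\rho\leq -1$. Pairing these out-of-interval linear factors two at a time and using $((1+y)a+(1-y)b)((1+y)c+(1-y)d)=(1+y)^2 ac+(1-y)^2 bd+(1-y^2)(ad+bc)$ collapses them into the desired shape, and a parity argument on the number of out-of-interval factors selects between cases (i) and (ii).

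The main obstacle will be the bookkeeping in the last step: the degree bounds in Markov--Luk\'acs are tight, and one must verify that pairing and regrouping the factors does not inflate $\deg q$ or $\deg r$ beyond $m$ or $m-1$, respectively. This is what pins the overall SoS degree to exactly $d$ rather than $d+1$ or $d+2$, and is the content of the classical theorem (it can alternatively be imported as a black box from, e.g., Laurent's survey cited in the statement).
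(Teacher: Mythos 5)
Your proposal is correct. Note that the paper does not actually prove this statement at all: it is recorded as a black-box fact with a pointer to Laurent's survey. What you have written is the standard proof that the cited reference gives, so there is no conflict of approach — you are simply filling in the citation. The two substantive steps are both sound: the affine normalization to $[-1,1]$ turns the axioms $x-a\geq 0$ and $b-x\geq 0$ into positive scalar multiples of $1+y$ and $1-y$ (namely $\tfrac{b-a}{2}(1+y)$ and $\tfrac{b-a}{2}(1-y)$), so degree-$d$ SoS certificates transport in both directions; and the Markov--Luk\'acs decomposition, with the degree bounds $\deg q\leq m$, $\deg r\leq m-1$ (even case) and $\deg q,\deg r\leq m$ (odd case), lands exactly in the form required by the paper's Definition of an SoS proof at degree $d$. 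Two small remarks. First, your identity for an out-of-interval root has the coefficients swapped: as written, $\tfrac{\rho-1}{2}(1-y)+\tfrac{\rho+1}{2}(1+y)$ evaluates to $\rho+y$, not $\rho-y$; the correct split is $\rho-y=\tfrac{\rho+1}{2}(1-y)+\tfrac{\rho-1}{2}(1+y)$. Since both coefficients are nonnegative for $\rho\geq 1$ in either arrangement, this is a typo rather than a gap. Second, you are right that the only genuinely delicate point is the degree bookkeeping when collapsing paired out-of-interval factors — that is precisely the content of the classical theorem, and importing it from the survey (as the paper itself does) is entirely legitimate; proving the tight bounds from scratch would require the interpolation/orthogonal-polynomial argument rather than the naive pairing, which in general only yields degree $d+1$.
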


\subsubsection{Quantifier Elimination}\label{app:quantifier_elim}

In this section, we describe a set of constraints that guarantee that the  variables of a given SoS program satisfy a certain polynomial inequality \emph{for all} (possibly infinite) values of some subset of the given variables, i.e., essentially leave a desired subset of the variables free. 
This is particularly useful to us since we would like to ensure that our samples have certifiably bounded moments in \emph{all} $k$-sparse directions.
Concretely, let $V$ be the set of variables, let $F \subseteq V$ be the set of free variables, $\cA$ be a set of polynomial constraints on $F$, and let $b \in \R[V]$. Suppose we like to ensure that $b(V) \geq 0$ for all values of $F$ that satisfy $\cA$. 
The basic idea here is to observe that it is enough to ensure that there is an SoS proof of this inequality in the variables $F$, and that this proof can be obtained by ensuring that a certain list of polynomials exist whose coefficients satisfy specific equalities. Hence it is sufficient to add a list of polynomial equality constraints. 
These constraints will become clearer in the following discussion. 

We will need the following notation: if $a_1(x),\dots,a_d(x)$ are polynomials in $x$ and $T \in [d]^t$ is an ordered tuple,  $a_T(x)$ is defined to be $a_T(x) := \prod_{i\in T}a_i(x)$. Also, let $d, t  \in \mathbb{N}$ and $V := \{x_1, \dots, x_d\}$ be formal variables and let $b \in \R[x_1, \dots, x_d]$ of degree at most $t$.

We are now ready to provide the details below. Define the following:
	\begin{enumerate}
		\item Let $F \subset V$ denote the subset of variables that we would like to leave free.
		\item Let $\cA = \{a_1, \dots, a_r \} \subset \R[F]$ be a set of polynomials in $F$ of degree at least 1. Suppose the variables $F$ satisfy $\{ a(F) = 0 \mid a \in \cA \}$.
	\end{enumerate}
	Consider an assignment $\pi$ to the variables $V \setminus F$. We define $b_\pi(F)$ to be the polynomial that is obtained by assigning the variables in $V \setminus F$ in $b(V)$ according to the assignment $\pi$. 
	We know from \Cref{def:sos-proof} that $ \{ a \geq 0 \mid a \in \cA \} \sststile{t}{F} b_\pi(F) \geq 0$, if and only if 
	\begin{align*}
	\label{eq:quant-elimin}
	    b_\pi(F) = \sum_{T \subset [r], |T| \leq t} a_T(F) q_T(F),
	\end{align*} 
	where each $q_T$ is a sum of $D$ square polynomials,
	where by 
     \Cref{fact:AtMostDeg} we can assume that $D < |F|^{O(t)}$. If the constraints are instead $\{ a = 0 \mid a \in \cA \}$, then the condition can be changed to 
	\begin{equation} \label{eqn:sos-pf-b-nng}
		b_\pi(F) = \sum_{i \in [r]} a_i(F) p_i(F) + q(F) ,
	\end{equation}
	where each $p_i$ is an arbitrary polynomial in $F$	and $q$ is a sum of at most $D$ square polynomials in $F$ for $D < |F|^{O(t)}$, and the degree of each term on the right-hand side is at most $t$.  In the context of our paper, $\cA$  above will be a set of polynomial equalities which are satisfied only by sparse vectors.

\begin{definition}[Quantifier Elimination]\label[definition]{def:proof_constraint}
Let $d, t  \in \mathbb{N}$ and $V := \{x_1, \dots, x_d\}$ be formal variables and let $b \in \R[x_1, \dots, x_d]$ of degree at most $t$. Let $F \subset V$ and $\cA=\{a_1,\ldots, a_r\} \subset \R[F]$ polynomial axioms of degree at least 1. 
We define $\textbf{cons}_F(\cA, \{b\}, t)$ to be the set of equality constraints that equate the coefficients of $F$ of the polynomials in \Cref{eqn:sos-pf-b-nng}, where the coefficients may involve polynomials of $V\setminus F$.
This is done by introducing variable vectors $\{ P_{i} \mid i \in [r] \}$ for the coefficients of $p_i$ and $ \{ Q_{j} \mid j \in [D] \}$ for the coefficients of $q$ in \Cref{eqn:sos-pf-b-nng} (where $D<|F|^{O(t)}$) and equating the coefficients of the LHS and RHS when both sides are interpreted to be polynomials in $F$.
This leads to at most $|F|^{O(t)}$ many equality constraints in the variables $\{x_i \mid  i \in V \setminus F\}$, $\{P_i \mid i \in [r]\}$, $\{Q_j \mid j \in [D]\}  $, and each $P_i$ and $Q_j$ is of dimension at most $|F|^{O(t)}$. 
\footnote{Note that if there is an SoS proof of $b$ subject to $\cA$ having bounded bit-complexity, then there is a solution to $\textbf{cons}_F(\cA, \{b\}, t)$ which has bounded $\ell_2$ norm. }

\end{definition}
The following fact from \cite{kothari2017outlier} allows us to effectively use the constraints defined above.

\begin{fact}\label{fact:quant_el_fact}
In the setting of \Cref{def:proof_constraint}, for any fixed $F \subset V$ and fixed assignment $\pi$ to $V \setminus F$, we can extend this assignment to a solution of $\textbf{cons}_F(\cA, \{b\}, t)$ 
	iff  $\cA \sststile{t}{F} \{b_\pi(F) \geq 0\},$ where $b_\pi  \in \R[F]$ is obtained by assigning the variables in $V \setminus F$ in $b(V)$ according to the assignment $\pi$.
\end{fact}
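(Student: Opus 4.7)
The plan is to prove this as a direct unfolding of the definitions: the constraint system $\textbf{cons}_F(\cA,\{b\},t)$ is designed precisely to encode, via its equality constraints, the coefficient-matching conditions that witness an SoS proof of $b_\pi(F)\ge 0$ of degree $t$ from $\cA$. Once $\pi$ is fixed, the left-hand side of \Cref{eqn:sos-pf-b-nng} becomes the concrete polynomial $b_\pi(F)\in\R[F]$, and the variables $P_i$ and $Q_j$ play the role of the coefficient vectors of the polynomials $p_i$ and the squared components of $q$ in the candidate SoS decomposition. So I would structure the argument around this observation, then verify each implication of the ``iff'' separately.

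For the ($\Leftarrow$) direction, suppose $\cA\sststile{t}{F}\{b_\pi(F)\ge 0\}$. By \Cref{def:sos-proof}, this means there exist polynomials $p_1,\dots,p_r\in\R[F]$ and a sum-of-squares polynomial $q\in\R[F]$ such that $b_\pi(F)=\sum_{i\in[r]} a_i(F)p_i(F)+q(F)$, where every term has degree at most $t$. By \Cref{fact:AtMostDeg}, $q$ can be written as a sum of at most $|F|^{O(t)}$ squares $q_j(F)^2$, each of degree at most $t/2$. Reading off the coefficient vectors of each $p_i$ (for $P_i$) and of each $q_j$ (for $Q_j$), and combining with the given assignment $\pi$ on $V\setminus F$, yields an assignment to all the variables appearing in $\textbf{cons}_F(\cA,\{b\},t)$. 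Each equality constraint in $\textbf{cons}_F$ is exactly an equation obtained by equating the $F$-coefficients of both sides of \Cref{eqn:sos-pf-b-nng}, so the assignment we constructed satisfies them all by construction.

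For the ($\Rightarrow$) direction, suppose we are given an extension of $\pi$ to a solution of $\textbf{cons}_F(\cA,\{b\},t)$. Use the assigned values of $P_i$ and $Q_j$ to reconstruct polynomials $p_i(F)$ and $q_j(F)$ in $\R[F]$ of the appropriate degrees. The equality constraints in $\textbf{cons}_F$ say exactly that the $F$-coefficients of $b_\pi(F)$ agree with those of $\sum_{i\in[r]} a_i(F)p_i(F)+\sum_j q_j(F)^2$, since after substituting $\pi$ into the coefficients that involve $V\setminus F$ these become concrete real numbers. Hence
\[
b_\pi(F) \;=\; \sum_{i\in[r]} a_i(F)\,p_i(F) + \sum_j q_j(F)^2,
\]
which is a degree-$t$ SoS proof of $b_\pi(F)\ge 0$ from $\cA$.

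The only genuinely technical point is the bookkeeping around degrees and the number of squares: we need to know in advance how many $P_i$, $Q_j$ variables to introduce, and we need each squared term $q_j(F)^2$ in the candidate decomposition to be expressible with coefficient vectors of the same dimension. This is handled cleanly by \Cref{fact:AtMostDeg}, which caps the number of squares at $|F|^{O(t)}$ and the degree of each at $t/2$, matching the dimensions used in \Cref{def:proof_constraint}. The remark about bit complexity is immediate from the fact that if one SoS proof has bounded bit complexity, so does the coefficient representation of its $p_i$ and $q_j$ pieces.
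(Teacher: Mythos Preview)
Your proof is correct and is the natural unfolding of the definitions. Note that the paper does not actually prove this fact itself; it attributes it to \cite{kothari2017outlier} and states it without proof, so there is no ``paper's proof'' to compare against. Your argument is exactly the intended one: after fixing $\pi$, the equality constraints in $\textbf{cons}_F(\cA,\{b\},t)$ are precisely the coefficient-matching conditions for the identity \Cref{eqn:sos-pf-b-nng} specialized to $b_\pi$, so solvability in the auxiliary variables $P_i,Q_j$ is equivalent to the existence of a degree-$t$ SoS proof of $b_\pi(F)\ge 0$ from the equality axioms $\cA$. The appeal to \Cref{fact:AtMostDeg} to bound the number of square terms (so that finitely many $Q_j$ suffice) is the one nontrivial ingredient, and you invoke it correctly.
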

\begin{restatable}{fact}{BASICFACT}\label{fact:cons_sos_proof}
Consider the setting in \Cref{def:proof_constraint}. 
Let $V' = \{P_i \mid i \in [r]\} \cup \{Q_j \mid j \in [D]\}$.
Let $\pi$ be an assignment to $F$ that satisfies $a_i \in \cA$, i.e., $a_i(\pi(F)) = 0$ for each $i \in [r]$.
Let $b_\pi(V\setminus F)$ be the polynomial in $V \setminus F$ that is obtained by assigning the variables in $F$ in $b(V)$ according to the assignment $\pi$.
Then $\textbf{cons}_F(\cA, \{b\}, t) \sststile{t}{V \setminus F, V'} b_\pi(V\setminus F) \geq 0$.
\end{restatable}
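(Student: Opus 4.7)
The plan is to exploit the fact that the constraints $\textbf{cons}_F(\cA,\{b\},t)$ force the polynomial identity
\[ b(V) \;=\; \sum_{i=1}^r a_i(F)\, p_i(F) + q(F) \]
to hold at the level of each coefficient of $F^\alpha$, where $p_i(F) = \sum_\beta P_{i,\beta} F^\beta$ is linear in the $P$-variables, $q(F) = \sum_{j=1}^D q_j(F)^2$ with each $q_j(F) = \sum_\beta Q_{j,\beta} F^\beta$ of degree at most $t/2$ in $F$, and by construction the total degree of every term on the right-hand side (in all variables combined) is at most $t$. Expanding both sides as polynomials in $F$ and equating coefficients at each monomial $F^\alpha$ produces the equality axioms $E_\alpha = 0$ making up $\textbf{cons}_F$, each of which is a polynomial in $V\setminus F$ and $V'$ of total degree at most $t$.

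Next I would form the specific linear combination $\sum_\alpha \pi^\alpha E_\alpha$ with the real \emph{constants} $\pi^\alpha$ determined by the assignment $\pi$. Summing the coefficient identities against the monomials $F^\alpha$ evaluated at $F = \pi$ is exactly the same as evaluating the underlying identity at $F = \pi$, so this combination equals $b_\pi(V\setminus F) - \sum_i a_i(\pi)\, p_i(\pi) - q(\pi)$. The hypothesis that $\pi$ satisfies $\cA$ kills every $a_i(\pi)$, and the combination collapses to
\[ \sum_\alpha \pi^\alpha E_\alpha \;=\; b_\pi(V\setminus F) - q(\pi) \;=\; b_\pi(V\setminus F) - \sum_{j=1}^D q_j(\pi)^2. \]
Rearranging gives the identity
\[ b_\pi(V\setminus F) \;=\; \sum_\alpha \pi^\alpha E_\alpha \;+\; \sum_{j=1}^D q_j(\pi)^2, \]
which is in the SoS-proof format of \Cref{def:sos-proof}: a linear combination of the (equality) axioms $E_\alpha$ with constant multipliers $\pi^\alpha$, plus a manifest sum of squares of polynomials in the variables $V'$ (namely the $Q_{j,\beta}$'s).

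Finally I would check the degree bookkeeping to confirm that this is a valid degree-$t$ proof: each $E_\alpha$ has degree at most $t$ in $V\setminus F \cup V'$, inherited from the degree bound on \Cref{eqn:sos-pf-b-nng}, so multiplying by the constant $\pi^\alpha$ keeps each term of degree at most $t$; and each $q_j(\pi)$ is a linear form in $V'$ with coefficients determined by $\pi$, so $q_j(\pi)^2$ has degree $2 \leq t$. I do not anticipate any substantive obstacle beyond this: the entire content is that once $\pi$ satisfies $\cA$, the coefficient-matching certificate produced by $\textbf{cons}_F$ can be linearly recombined to eliminate the $a_i(\pi)p_i(\pi)$ terms and leave a genuine SoS certificate for $b_\pi \geq 0$ in the remaining variables. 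The only care required is being explicit about which variables live where and tracking the degrees through the substitution.
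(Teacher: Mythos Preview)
Your proposal is correct and follows essentially the same argument as the paper: both expand $b - \sum_i a_i p_i - q$ in the monomial basis of $F$, identify the coefficient equalities with the axioms in $\textbf{cons}_F$, substitute $F = \pi$ (equivalently, take the $\pi^\alpha$-weighted combination of those axioms), use $a_i(\pi)=0$ to kill the $a_i p_i$ terms, and read off the SoS certificate $b_\pi = \sum_\alpha \pi^\alpha E_\alpha + \sum_j q_j(\pi)^2$. Your degree bookkeeping is slightly more explicit than the paper's, but the content is identical.
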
 
\begin{proof}
Consider the polynomial $h(F,V') = \sum_{i=1}^r a_i(F) p_i(F) + \sum_{j=1}^D q_j^2(F)$, where $\{P_i\}$ and $\{Q_j\}$ are coefficients of $p_i$ and $q_j$ respectively. 
Note that $\textbf{cons}_F(\cA, \{b\}, t)$ is a set of polynomial equality constraints in the variables $(V \setminus F) \cup V' $ that enforce the coefficients of the two polynomials $b(F, V\setminus F)$ and $h(F,V')$, when expanded in the monomial basis in $F$, to be equal. 
That is, for each $S \in [|F|^t]$, $\textbf{cons}_F(\cA, \{b\}, t)$ contains the constraint  $c_S(V\setminus F,V') = 0$, where $b(F, V \setminus F) - h(F,V') = \sum_{S \in [|F|^t]} c_S(V \setminus F, V') F_S$ and $c_S(V \setminus F, V')$ is a polynomial in $V \setminus F$ and $V'$.  

Our goal is to show that the inequality $b_\pi(V\setminus F) \geq 0$ has an SoS proof subject to $\textbf{cons}_F(\cA, \{b\}, t)$. We show this below. Observe that, 
\begin{align*}
    b(F, V\setminus F) &= (b(F, V\setminus F) - h(F,V')) + h(F,V') = \sum_{S \in [|F|]^t}   c_S(V\setminus F, V') F_S  + h(F,V').
\end{align*}  
Let $f = \pi(F)$. Since the assignment $\pi$ satisfies the $a_i$'s, we see that $h(f, V') = \sum_{i=1}^r a_i(f) p_i(f) + \sum_{j=1}^D q_j^2(f) = \sum_{j=1}^D q_j^2(f)$. Hence, 
\begin{align*}
    b(f, V\setminus F) &= \sum_{S \in [|F|]^t} f_S c_S(V\setminus F, V') + \sum_{j=1}^D q_j^2(f).
\end{align*}  
This is a valid SoS proof from the axioms $\textbf{cons}_F(\cA, \{b\}, t)$.
\end{proof}

\section{Certifiability for $\sigma$-Poincar\'e Distributions} \label{sec:poincareAppendix}

Previous work has shown that $\sigma$-Poincar\'e distributions have certifiably bounded moments. In this section we show that this implies that $\sigma$-Poincar\'e distributions also have certifiably bounded moments in $k$-sparse directions. At the end of the section, we demonstrate that certifiability of the moments in $k$-sparse directions does not always imply the same condition for all (possibly dense) directions.

\begin{restatable}{lemma}{PoincareToBounded}\label{lem:poincareToBounded} 
	If $D$ is a $\sigma$-Poincar\'e distribution over $\R^d$ with mean $\mu$, then for some constant $C_t$ depending on $t$, we have that 
	$	\cAksparse \sststile{O(t)}{v,z} \E_{X \sim D}\Brac{\iprod{v, X - \mu}^t}^2 \leq C_t^2 \sigma^{2t}$. The bit complexity of the proof is a factor of at most some $\poly(t)$ more than the bit complexity of the polynomial  $\E_{X \sim D}\Brac{\iprod{v, X - \mu}^t}^2 - C_t^2 \sigma^{2t}$.
\end{restatable}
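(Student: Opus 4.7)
The plan is to invoke the existing certifiability result for $\sigma$-Poincar\'e distributions in the dense setting and observe that it transfers verbatim to the sparse axiom system, since $\cAksparse$ is a strengthening (not a weakening) of the dense unit-norm axiom.

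First, by Theorem~4.1 of \cite{KStein17}, for any $\sigma$-Poincar\'e distribution $D$ on $\R^d$ with mean $\mu$ and even $t$, there is a constant $C_t$ (depending only on $t$) and an SoS proof in the single vector of variables $v$ of
\[
\{\|v\|_2^2 = 1\} \;\sststile{O(t)}{v}\; \E_{X \sim D}\Brac{\iprod{v, X - \mu}^t}^2 \leq C_t^2 \sigma^{2t}.
\]
That is, there exist polynomials $\{r_S(v)\}$, each a sum of squares in the indeterminates $v_1,\dots,v_d$, such that
\[
C_t^2 \sigma^{2t} - \E_{X \sim D}\Brac{\iprod{v, X - \mu}^t}^2 \;=\; r_\emptyset(v) + r_{\{1\}}(v)\Paren{\textstyle\sum_i v_i^2 - 1} + r_{\{2\}}(v)\Paren{1 - \textstyle\sum_i v_i^2},
\]
where every monomial on the right has degree at most $O(t)$.

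Second, recall that by \Cref{def:A-k-sparse} the axiom system $\cAksparse$ contains the equation $\sum_{i=1}^d v_i^2 = 1$ among its constraints. Hence the identity displayed above is itself a valid degree-$O(t)$ SoS derivation of the same inequality from $\cAksparse$, merely ignoring the extra axioms $\{z_i^2 = z_i\}$, $\{v_i z_i = v_i\}$, and $\sum_i z_i = k$ (equivalently, using zero SoS multipliers against them). This yields
\[
\cAksparse \;\sststile{O(t)}{v,z}\; \E_{X \sim D}\Brac{\iprod{v, X - \mu}^t}^2 \leq C_t^2 \sigma^{2t},
\]
which is exactly the claim.

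For the bit-complexity statement, observe that the coefficients appearing in the transferred proof are precisely the coefficients of the polynomials $r_S$ furnished by \cite{KStein17}. The explicit construction there expresses $C_t^2\sigma^{2t} - \E_{X\sim D}[\langle v,X-\mu\rangle^t]^2$ as a sum-of-squares combination whose coefficients are fixed rational combinations (with denominators depending only on $t$) of the coefficients of the target polynomial, so the total bit complexity is at most a $\poly(t)$ factor larger than the bit complexity of $\E_{X\sim D}[\langle v,X-\mu\rangle^t]^2 - C_t^2\sigma^{2t}$. Since no nontrivial work is required beyond quoting \cite{KStein17} and inspecting which axioms are used, there is no real obstacle here; the only thing to verify is that the cited SoS proof genuinely uses only the unit-norm constraint, which is apparent from the statement of Theorem~4.1 of \cite{KStein17}.
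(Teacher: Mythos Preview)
Your proposal is correct and follows essentially the same approach as the paper: invoke the dense certifiability result from \cite{KStein17} and observe that, since $\cAksparse$ contains the constraint $\sum_i v_i^2 = 1$, the SoS proof transfers to the sparse axiom system by taking zero multipliers on the extra $z$-axioms (this is exactly the paper's Claim~\ref{cl:PrfTransfer}). One small imprecision: the paper records the result of \cite{KStein17} as giving the \emph{unsquared} bound $\E_{X\sim D}[\langle v,X-\mu\rangle^t]\le C_t\sigma^t$, and then invokes the SoS fact $\{0\le A\le B\}\sststile{}{}A^2\le B^2$ to obtain the squared inequality you state; you have folded this trivial step into your citation.
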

\begin{proof}
Previous work focused on the notion of \emph{certifiably bounded moments} in the absence of sparsity constraints, i.e.,  $\{\sum_i v_i^2 = 1 \}\sststile{t}{v} M^2 \geq \E_{X \sim D}\Brac{\iprod{ v,  X - \mu}^{t} }^2$. The following claim implies that if a distribution has certifiably bounded moments, then it also satisfies \Cref{def:bounded-moments-k-sparse}. 
\begin{restatable}
[Proofs transfer to unit $k$-sparse vectors]{claim}{ClaimPrfTransfer}
	\label{cl:PrfTransfer}
	For every polynomial $p:\bbR^d \rightarrow \bbR$, if there is a proof of  $\Set{\sum_i v_i^2 = 1}\sststile{t}{v} p(v_1, \dots, v_d) \geq 0$ with bit complexity $B$,  then there is a proof of $\cAksparse \sststile{t}{v, z} p(v_1, \dots, v_d) \geq 0$ with bit complexity at most $B$. 
\end{restatable}
\begin{proof}
	To show $\cAksparse \sststile{t}{v, z} p(v_1, \dots, v_d) \geq 0$, it suffices to demonstrate that there exists a set of polynomials $\{r_c(v, z)\}_{c \in  \cAksparse  }$ and a sum of square polynomials $Q(\cdot)$ such that:
	\[
	p(v_1, \dots, v_d) = \sum_{c \in \cAksparse} r_c(v, z) c(v,z) + Q(v, z),
	\]
	where the polynomials $r_c(v, z) \cdot c(v,z)$ and $Q(v,z)$ have degree at most $t$. However, we know that $p(v) = q(v,z) + (\sum_j v_j^2 - 1) q'(v,z)$ for some polynomial $q'$ of degree $t$ and some sum of square polynomials $q$ also of degree $t$. Setting $r_{\{\sum_j v_j^2 - 1\}} = q'$, $Q = q$, and  $r_{c} = 0$ for all $c \neq  \{\sum_j v_j^2 - 1\} $ proves our claim.
\end{proof}
The following lemma, implicit in Theorem 1.1 from \cite{KStein17}, says that if a distribution $D$ is $\sigma$-Poincar\'e, i.e.,  it holds  $\Var_{X \sim D}\Brac{f(X)} \leq \sigma^2 \E_{X \sim D} \Brac{ \Norm{\nabla f(X)}_2^2}$ for all differentiable functions $f:\R^d \rightarrow \R$,
then it has certifiably bounded moments in every (possibly dense) direction.
\begin{lemma}[\cite{KStein17}] 
	\label{lem:PoincareKStein}
	If $D$ is a $\sigma$-Poincar\'e distribution over $\R^d$ with mean $\mu$, then there exists some constant $C_t$ depending only on $t$,  such that 
	\begin{align*}
	\Set{\sum_{i=1}^d v_i^2 = 1} \sststile{O(t)}{v} \E_{X \sim D}\Brac{\iprod{v, X - \mu}^t} \leq C_t \sigma^t \;.
	\end{align*}
	Moreover, the bit complexity of this proof is at most $\poly(t,b)$, where $b$ is the bit complexity of the coefficients of the polynomial $ C_t \sigma^t - \E_{X \sim D}\Brac{\iprod{v, X - \mu}^t} $.  
\end{lemma}
 
Combining \Cref{cl:PrfTransfer} and \Cref{lem:PoincareKStein}, and using the fact that for any polynomials $A, B$,  $\{ 0 < A < B\}\sststile{}{} A^2 < B^2$,   completes the proof of \Cref{lem:poincareToBounded}. 
\end{proof}

Regarding the difference between the two definitions of certifiably bounded moments, one for the dense setting (\Cref{def:Sosbddmoments}) and one for the sparse setting (\Cref{def:bounded-moments-k-sparse}), we note that there exist distributions that satisfy \Cref{def:bounded-moments-k-sparse} but do not have certifiably bounded moments in every direction (with a dimension-independent $M$):
Let $\xi$ be the Rademacher random variable and define $D$ to be the distribution of the random variable $X = (X_1,\dots,X_d)$, where each $X_i = \xi$.
Let $\mu$ and $\Sigma$ be the mean and covariance matrix of $D$. 
Since the operator norm of $\Sigma$ is $\sqrt{d}$, it follows that there exists a unit vector $v^*$ (we can take $v^*= (1/\sqrt{d},\dots,1/\sqrt{d})$) 
such that for any even $t$, $\E_{X\sim D}[\langle v^*, X - \mu  \rangle^{t}]^2 \geq d^{t}$.
Thus the distribution $D$ does not satisfy \Cref{def:Sosbddmoments} with any dimension-independent bound.
However, we have that $\cAksparse \sststile{O(t)}{v,z} \E_{X \sim D}[ \langle v, X - \mu\rangle ^t]^2  \leq k^{t}$ by noting that $\E_{X \sim D}[ \langle v, X - \mu\rangle ^t] = \E [(\sum_i v_i \xi)^{t}] = (\sum_i v_i)^t$ and applying \Cref{claim:sparse-upper-bound}.

\section{Concentration Inequalities for SoS-sparse-certifiability}\label{sec:concentration_appendix}
The goal of this section is to understand the sample complexity required for the result of \Cref{sec:sampling}. Throughout this section, we let $\| X \|_{L_p}$ denote the $L_p$-norm of the real-valued random variable $X$, which is defined as $(\E[|X|^p])^{1/p}$.
We begin by showing the following concentration result that will be useful in the subsequent proofs.
\begin{lemma}
	\label[lemma]{lem:ConcTensorInf}
	Let $P$ be a random variable over $\R^d$ with mean $\mu$ and suppose that for all $s \in [1,\infty)$, $P$ has its $s^{th}$ moment bounded by $(f(s))^s$ for a non-decreasing function $f: [1,\infty) \to \R_+$, in the direction $e_j$, i.e., suppose that for all $j \in [d]$ and $X \sim P$:
$
		\|\iprod{e_j, X - \mu}\|_{L_s} \leq  f(s).
$
	Let $S$ be a set of $m$ i.i.d.\ samples of $P$.
	For some sufficiently large absolute constant $C>0$, we have the following:
	\begin{enumerate}
		\item ($t$-th moment tensor)  If $m > C \frac{1}{\delta^2} \left( t \log d + \log(1/\gamma)  \right) \left( f(t^2 \log d + t\log(1/\gamma))   \right)^{2t}$, then, with probability $1 - \gamma$, the $t^{th}$ central moment tensor of $P$ is bounded in $\ell_{\infty}$ by $\delta$, i.e. 
		\begin{align}
			\label{eq:concInfNorm}
			\left\| \E_S \Brac{(X - \mu)^{\otimes t}} -\E_{X \sim P} \Brac{(X - \mu)^{\otimes t}} \right\|_\infty \leq \delta.
		\end{align}
		\item (Absolute moments) If $m > C  \log(d/\gamma)(f(t\log(d/\gamma))/f(t))^{2t}$, then,  with probability $ 1 - \gamma$, for all $i \in [d]$ and for all $r \in [t]$:
		\begin{align}
			\label{eq:absoluteMoments}
			\left[  \E_S[|(X - \mu)_i|^r]\right]^{\frac{1}{r}} \leq 2 f(t). 
		\end{align}
		\item (Sample Mean)  If
		$m  > C (1/ \delta^2)  \log(d/\gamma) ( f(\log(d/\gamma))  )^2 $,
		then, with probability $1 - \gamma$,
		\begin{align}
			\norm{\E_S[X]- \mu}_{\infty} \leq \delta.
		\end{align}
		
	\end{enumerate}
\end{lemma}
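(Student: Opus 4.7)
My plan is to prove all three parts by moment-based concentration applied entry-wise (or coordinate-wise), followed by a union bound over the relevant index set. The single underlying tool is a Rosenthal-type inequality: for i.i.d.\ real random variables $Y_1,\dots,Y_m$ (copies of $Y$) with $\|Y\|_{L_2} \leq B_2$ and $\|Y\|_{L_p} \leq B_p$ for even $p$, one has
\[
\|\bar Y - \E Y\|_{L_p} \leq C \sqrt{p}\, B_2 / \sqrt{m} + C p\, B_p / m^{1-1/p},
\]
where $\bar Y := m^{-1}\sum_i Y_i$. Combined with Markov, $\pr[|\bar Y - \E Y| > \delta] \leq (\|\bar Y - \E Y\|_{L_p}/\delta)^p$, which is exponentially small in $p$ once $p$ is chosen comparable to the logarithm of the desired per-index failure probability. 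In each part my task will therefore reduce to identifying the correct scalar random variable, computing its $L_2$ and $L_p$ norms from the moment hypothesis on $P$, and then choosing $p$ to match the size of the union bound.

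For part (1), I fix a multi-index $I = (I_1,\dots,I_t) \in [d]^t$ and consider the scalar $Y_I := \prod_{j=1}^t (X - \mu)_{I_j}$. Generalized H\"older yields $\|Y_I\|_{L_s} \leq \prod_j \|(X - \mu)_{I_j}\|_{L_{ts}} \leq f(ts)^t$ for every $s \geq 1$. The per-entry failure probability needed after union bounding over $d^t$ entries is $\gamma/d^t$, so I take $p := t\log d + \log(1/\gamma)$. Substituting $B_2 \leq f(2t)^t$ and $B_p \leq f(tp)^t$ into Rosenthal and requiring each term to be at most $\delta/(2e)$ gives $m > C\, p\, f(tp)^{2t}/\delta^2$, which is exactly the stated bound since $tp = t^2\log d + t\log(1/\gamma)$.

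Part (3) is the same argument in a single coordinate: with $Y := (X-\mu)_i$ we have $\|Y\|_{L_s} \leq f(s)$, so taking $p := C\log(d/\gamma)$ for a union bound over $d$ coordinates yields the claimed $m > C\log(d/\gamma)\, f(\log(d/\gamma))^2 / \delta^2$. For part (2), the appropriate scalar is $Z_{i,r} := |(X - \mu)_i|^r$ for each $(i,r) \in [d]\times[t]$. One has $\|Z_{i,r}\|_{L_s} = \|(X-\mu)_i\|_{L_{rs}}^r \leq f(rs)^r$ and $\E Z_{i,r} \leq f(r)^r \leq f(t)^r$, so enforcing the additive deviation $|\bar Z_{i,r} - \E Z_{i,r}| \leq f(t)^r$ suffices to conclude $(\bar Z_{i,r})^{1/r} \leq 2 f(t)$. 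Choosing $p := C\log(dt/\gamma)$ and applying Rosenthal as above gives $m > C\log(d/\gamma)\,(f(t\log(d/\gamma))/f(t))^{2t}$ uniformly over $r \leq t$.

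The main point requiring care is the interplay between the two Rosenthal terms in part (1): to obtain the clean $f(\cdot)^{2t}$ scaling rather than a weaker $f(\cdot)^t$ from the ``large deviation'' term, I must verify that $f(tp)^t / \delta \geq 1$ in the regime of interest, so that the variance-like first term dominates and forces the final exponent to be $2t$. This holds automatically whenever $\delta$ is not absurdly small relative to the moment scale, which is the parameter regime of all intended applications. Everything else is routine bookkeeping of constants and verifying that $f$ non-decreasing lets us upper-bound $f(2t)$ by $f(tp)$ throughout.
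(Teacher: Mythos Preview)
Your plan---entry-wise moment bounds, Markov on a high $L_p$ norm, then union bound over the index set---is exactly the paper's proof. The only real difference is in the concentration tool: the paper uses the one-term Marcinkiewicz--Zygmund inequality $\|\bar Y - \E Y\|_{L_p} \leq C\sqrt{p/m}\,\|Y - \E Y\|_{L_p}$ rather than the two-term Rosenthal form. With MZ the argument is cleaner: setting $p = t\log d + \log(1/\gamma)$ and $\|Y_I\|_{L_p} \leq f(tp)^t$ directly gives $|\bar Y_I - \E Y_I| \lesssim \sqrt{p/m}\,f(tp)^t$ with probability $1 - \gamma/d^t$, so $m > C p\, f(tp)^{2t}/\delta^2$ suffices with no side condition on $\delta$.

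Your Rosenthal route leaves a loose end. Controlling the second term $C p\, f(tp)^t / m^{1-1/p}$ under the stated sample bound requires $f(tp)^t/\delta \gtrsim 1$, i.e., $\delta \lesssim f(tp)^t$; you phrase this as ``$\delta$ not absurdly small,'' but the condition is that $\delta$ not be too \emph{large}. More importantly, this side condition is not among the lemma's hypotheses and is simply absent if you use MZ. (The large-$\delta$ regime is in fact trivial---already a single sample works---but the writeup should not leave it dangling.) One more minor simplification the paper makes in Part~2: it proves only the case $r=t$ and then invokes monotonicity of $L_p$ norms for $r<t$, avoiding the union bound over $r$.
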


\begin{proof}
	It suffices to consider the case when $\mu = 0$.
	\paragraph{Part 1}
	For any ordered tuple $T \in  [d]^t$, we define $p_T : \R^d \to \R$ as $p_T(x) := \prod_{j \in T } x_j$.
	Let $Y \sim P$. 
	It suffices to show the following:
	\begin{align*}
		\forall T \in [d]^t:  \left|\frac{1}{m}\sum_{i=1}^m \Paren{ p_T(X_i) - \E[p_T(Y)]} \right| \leq \tau.	\end{align*}
	Define $Z_{T,i}:= p_T(X_i) - \E[p_T(Y)]$ for $i\in[m]$ and $Z_T = \frac{1}{m} \sum_{i=1}^m Z_{T,i}$.
	Let $s \in \Z_+$. We will control the $s$-th moment of $Z_T$ using the bound on the $s$-th moment of $Z_{T,i}$ and independence of $(Z_{T,i})_{i=1}^m$.
	Recall that $X_i$ has the same distribution as $Y$.
	\begin{align*}
		\|Z_{T,i}\|_{L_s} = \|p_T(X_i) - \E[p_T(Y)]\|_{L_s} \leq 2 \|p_T(Y)\|_{L_s},
	\end{align*}
	where we use triangle inequality and Jensen's inequality.
	We use $Y_j$ to denote the $j$-th coordinate of $Y$. Using the moment bounds on $p_T(Y)$ and H\"older inequality, we get the following:
	\begin{align}
	\label{eq:sthmomentBound}
\|p_T(Y)\|_{L_s}^s &=		\E\left[\left( \prod_{j \in T} Y_j \right)^s\right] = \E\left[\prod_{j \in T}\left( Y_j^s \right)\right]
		\leq \prod_{j \in T}\left( \E\left[Y_j^{st}\right]  \right)^{\frac{1}{t}}\leq \prod_{j \in T} \left( f(st)  \right)^s =  (f(st))^{st},
	\end{align}
	where the first inequality above uses the Cauchy-Schwarz inequality for products of $t$ variables and the second inequality uses the assumption on the moments of $Y_j$. 
	Thus, $\|Z_{T,i}\|_{L_s}  \leq 2\|p_T(Y)\|_{L_s} \leq 2 (f(st))^{t}$.
	
	We will use the following inequalities:
	\begin{fact} (Marcinkiewicz-Zygmund’s inequality)
		\label{fact:MZineq} Let $W_1,\dots,W_m, W$ be identical and
		independent centered random variables on $\R$ with a finite $s$-th moment for $s\geq 2$. Then,
		\begin{align*}
			\left\| \frac{1}{m} \sum_{i=1}^m W_i\right\|_{L_s} \leq \frac{3 \sqrt{s}}{\sqrt{m}} \|W\|_{L_s}. 
		\end{align*}
	\end{fact}
	\begin{fact}
		\label{fact:tailsMoments}
		For a random variable $X$, we have that w.p. $1- \gamma$, $|X - \E [X]| \leq e\|X - \E [X] \|_{L_{\log(1/\gamma)}}$.
	\end{fact} 
	\begin{proof}
		Let $Y = X - \E [X]$.
		We have the following:
		\begin{align*}
			\pr \left[|Y| \geq e \|Y\|_{L_{\log(1/\gamma)}}\right] &\leq
			\frac{\E [|Y|^{\log(1/\gamma)}] } { e^{\log(1/\gamma)} \E [|Y|^{\log(1/\gamma)}]} 
			= \frac{1}{e^{\log(1/\gamma)}} = \gamma. 
		\end{align*}
	\end{proof}
\noindent Using \Cref{fact:MZineq} and the moment bounds in \eqref{eq:sthmomentBound}, we get that for any $T \in [d]^t$,
	\begin{align*}
		\|Z_{T}\|_{L_s} \lesssim \frac{\sqrt{s}}{\sqrt{m}} ( f(st))^{t}. 
	\end{align*}
	Using \Cref{fact:tailsMoments} with the above claim, we have that with probability $1 - \gamma'$,
	\begin{align*}
		|Z_T| \leq e \|Z_T\|_{L_{\log(1/\gamma')}} \lesssim \frac{\sqrt{\log(1/\gamma')}}{\sqrt{m}} ( f(t\log(1/\gamma')))^{t}.
	\end{align*}
	Taking a union bound over $T\in [d]^t$ ordered tuples and taking $\gamma' = \gamma/d^t$, we get that with probability $1 - \gamma$,
	\begin{align*}
		\forall T \in [d]^t: |Z_T| \lesssim \frac{\sqrt{t \log d + \log(1/\gamma)}}{\sqrt{m}} ( f(t^2 \log d + t\log(1/\gamma)))^{t}   \;.
	\end{align*}
	This completes the proof of the first claim.
	\paragraph{Part 2} %
	\label{par:part_2}
	
	Let $Y:= (Y_1,\dots,Y_d)$ be distributed as $P$.
	Using monotonicity of $L_p$ norms, it suffices to bound, for all $i \in [d]$, $\left[  \E_S[|(X - \mu)_i|^t]\right]^{\frac{1}{t}}$.
	
	Recall that we assume $\mu = 0$ without loss of generality.  For an $i \in [d]$, $j \in [m]$, let $Z_{i,j}:= |(X_{j})_i|^t$ and $Z_i := \frac{1}{m} \sum_{j=1}^mZ_{i,j}$.
	By assumption, we have the following for all $r \geq 1$:
	\begin{align*}
		\E[|Z_{i,j}|^r] = \E[|Y_i|^{rt}] \leq \left( f(rt)  \right)^{rt}.
	\end{align*}
	Thus  $\|Z_{i,j}\|_{L_r} \leq f(rt) ^t$. In particular, for all $r \geq 1$, we have $|\E[Z_i]| = |\E[Z_{i,j}]| \leq  \|[Z_{i,j}\|_{L_r} \leq (f(rt))^t$, where the first inequality follows from the monotonicity of $L_p$-norms.
	Thus we have that $ \|Z_{i,j} - \E[Z_{i,j}]\|_{L_r} \leq 2(f(rt) )^t$.

	Applying \Cref{fact:MZineq}, we have that for all $r\geq 1$
	\begin{align*}
		\|Z_i - \E[Z_i]\|_{L_r} \lesssim \sqrt{\frac{r}{m}} (f(rt) )^t.
	\end{align*}
	Applying \Cref{fact:tailsMoments}, we have that, with probability $1 - \gamma'$, we have that
	\begin{align*}
		|Z_i - \E[Z_i]| \lesssim  \sqrt{\frac{\log(1/\gamma')}{m}} \left(f(t\log(1/\gamma')) \right)^t . 
	\end{align*}
	Taking $\gamma' = \gamma/d$ with a union bound, we have the following:
		\begin{align*}
	\forall i \in [d]:\,\,\,\,	Z_i \leq  (f(t))^t \left( 1 +  C \sqrt{\frac{\log(d/\gamma)}{m }} \left(  \frac{f(t\log(d/\gamma)}{f(t)}\right) ^t\right),
	\end{align*}
	where $C$ is a large enough constant.
	The bound follows by noting that $Z_i^{1/t} = \left[  \E_S[|(X - \mu)_i|^t]\right]^{\frac{1}{t}}$.
	
	\paragraph{Part 3} %
	\label{par:part_3}
	For $i \in [d]$, $j \in [m]$, let $Z_{i,j} := (X_j)_i$ and  $Z_i := \frac{1}{m}  \sum_{j=1}^m Z_{i,j}$. 
	
	We have that $\|Z_{i,j}\|_{L_s} \leq f(s)$. Applying \Cref{fact:MZineq}, we get the following: with probability $1 - \gamma/d$,
		\begin{align*}
	\|Z_i - \E[Z_i]\|_{L_{\log(d/\gamma)}} \lesssim \frac{ \sqrt{\log(d/\gamma)}}{\sqrt{m}}  f(\log(d/\gamma)) .
	\end{align*}
	Applying a union bound, we get the following: with probability $1 - \gamma$,
	\begin{align*}
		\|\E_S[X] - \mu \|_ \infty \lesssim \frac{ \sqrt{\log(d/\gamma)}}{\sqrt{m}}  f(\log(d/\gamma))  .
	\end{align*}

\end{proof}

\noindent Using the above result, we are now ready to prove the concentration result that was required in \Cref{sec:sampling}.

\BadicLinfConscFull*
\begin{proof}
The second part follows from Part 3 of \Cref{lem:ConcTensorInf} and the fact that $\|x\|_{2,k} \leq \sqrt{k} \|x\|_{\infty}$. We now show the first part.
We can safely assume that  $\delta \leq 1$.
    Let $S := \{X_1, \dots, X_m\}$. The goal is to bound the following: 
    \[\Norm{\E_{X \sim S}\Brac{( X-\ovl \mu)^{\otimes t}}  - \E_{X \sim D}\Brac{(X-\mu)^{\otimes t}} }_{\infty}.\]
    We first add and subtract $\mu$ in the first term. To prove our lemma, we will bound each entry indexed by an ordered tuple  $T \in [d]^t$ of the resulting tensor. 
We will use the following guarantees on our samples: (i) $	\left\| \E_S \Brac{(X - \mu)^{\otimes t}} -\E_{X \sim D} \Brac{(X - \mu)^{\otimes t}} \right\|_\infty \leq \delta_1$,  (ii) 
$\max_{i \in d} \max_{r \leq t } (\E_S|X_i - \mu_i|^r)^{1/r} \leq \delta_2$, and (iii) $\| \mu - \ovl \mu\|_\infty \leq \delta_3$, which appear in \Cref{lem:ConcTensorInf}, for some values of $\delta_1,\delta_2, \delta_3$ to be defined later.
We begin with the following decomposition:
\begin{align}
	\nonumber&\Abs{ \E_{X \sim S}\Brac{(X- \mu + \mu -\ovl \mu)^{\otimes t}}_T - \E_{X \sim D}\Brac{(X-\mu)^{\otimes t}}_T } \\
	&\qquad\qquad= \Abs{ \E_{X \sim S}\Brac{\prod_{q \in T}  (X- \mu + \mu -\ovl \mu)_{q}}  - \E_{X \sim D}\Brac{(X-\mu)^{\otimes t}}_T }.
	\label{eq:ConcOrigExp}
\end{align}
We can expand $\prod_{q \in T} (X- \mu + \mu -\ovl \mu)_{q} = \sum_{Q \subseteq T} \prod_{q \in Q} (X- \mu)_{q} \prod_{q \in T \setminus Q} (\mu -\ovl \mu)_{q} = \prod_{q \in T} (X-\mu)_q + \sum_{Q \subsetneq T} \prod_{q \in Q} (X- \mu)_{q} \prod_{q \in T \setminus Q} (\mu -\ovl \mu)_{q}$, and apply the triangle inequality to get
\begin{align}
	\Bigg| \E_{X \sim S}\Brac{(X -\ovl \mu)^{\otimes t}}_T - \E_{X \sim D}\Brac{ (X-\mu)^{\otimes t}}_T \Bigg|  &\leq \Norm{\E_{X \sim S}\Brac{( X- \mu)^{\otimes t}} - \E_{X \sim D}\Brac{(X-\mu)^{\otimes t}} }_{\infty} \nonumber \\
	\label{eq:ConcTrianIneq} & \quad \quad+ \Abs{\E_{X \sim S} \Brac{\sum_{Q \subsetneq T} \Brac{  \prod_{q \in Q} (X- \mu)_{q} \prod_{q \in T \setminus Q} (\mu -\ovl \mu)_{q}}}}.
\end{align}
By assumption, the first term is upper bounded by $\delta_1$.
We will now focus on the second term.  For a particular $Q \subsetneq T$,  we get the following using Holder's inequality:
\begin{align*}
	\Bigg|\E_{X \sim S}  \Brac{  \prod_{q \in Q} (X- \mu)_{q} \prod_{q \in T \setminus Q} (\mu -\ovl \mu)_{q}}\Bigg| 
	& \leq 	 \|\mu - \ovl \mu\|_{\infty}^{|T \setminus Q|} \E_{X \sim S} \Brac{  \prod_{q \in Q} |(X- \mu)_q|} \\
	&\leq (\delta_3)^{|T \setminus Q|} \prod_{q \in Q} \Brac{ \E_{X \sim S}    |(X- \mu)_q|^{|Q|} }^{\frac{1}{|Q|}} \\
	&\leq \delta_3^{|T \setminus Q|} \delta_2^{|Q|}.
\end{align*}
Using the fact that $|\{Q: Q \subsetneq T\}| \leq 2^t$ and $|T\setminus Q| \geq 1$, we get the following bound on the second term in \eqref{eq:ConcTrianIneq},
\begin{align*}
	\Abs{\E_{X \sim S} \Brac{\sum_{Q \subsetneq T} \Brac{  \prod_{q \in Q} (X- \mu)_{q} \prod_{q \in T \setminus Q} (\mu -\ovl \mu)_{q}}}} &\leq 2^t \delta_3  \max(1, \delta_2^{t-1}, \delta_3^{t-1}). 
\end{align*}
This leads to the following bound on the expression in \eqref{eq:ConcOrigExp}:
\begin{align}
	&\Abs{ \E_{X \sim S}\Brac{(X- \mu + \mu -\ovl \mu)^{\otimes t}}_T - \E_{X \sim D}\Brac{(X-\mu)^{\otimes t}}_T } \leq \delta_1 + \delta_32^t    \max(1, \delta_2^{t-1}, \delta_3^{t-1}). 
\label{eq:ConcOrigExp2}
\end{align}
We can choose $\delta_1 = \delta/2$, $\delta_2 = 2f(t) $ and $\delta_3 = 2^{-t}\max(1, \delta_2)^{-t+1} \delta/2$, we get that the expression in \eqref{eq:ConcOrigExp2} is upper bounded by $\delta$ by noting that $\delta_3 \leq 1$  (since $\delta \leq 1$) and $2^t\delta_3\max(1,\delta_2,\delta_3)^{t-1} \leq 2^t \delta_3 \max(1, \delta_2)^{-t+1}) \leq (\delta/2)$.
By \Cref{lem:ConcTensorInf}, we get that the total sample complexity is at most 
\begin{align}
	m &= \frac{1}{\delta^2}   
	\left( t \log (d/\gamma)\right)\left(Cf(t^2\log(d/\gamma))\right)^{2t}	\max\left( 1 , \frac{1}{f(t)^{2t}}\right),
	\label{eq:linftyconcbasecase}
\end{align}
for a sufficiently large constant $C > 0$, where we performed the following crude upper bounds on the sample complexity guarantee in \Cref{lem:ConcTensorInf} for the ease of presentation:
\begin{align*}
    &\frac{1}{\delta^2}(t \log(d/\gamma))f(t^2\log d/\gamma)  + \log(d/\gamma) \left(\frac{f(\log(d/\gamma))}{f(t)}\right) ^{2t} \\
    &\qquad \qquad + \frac{1}{\delta^2} (\log(d/\gamma)) f(\log d/\gamma)^2 2^{8t}( \max(1,  2f(t)  )^{2t-2} \\
& \leq (t \log d/\gamma) (10f(t^2 \log d/\gamma))^{2t} \Big( \frac{1}{\delta^2}  + \frac{1}{f(t)^{2t}} + \frac{1}{\delta^2}\max\left( 1, \frac{1}{f(t)}  \right)^{2t-2}\Big)\\
& \leq \frac{1}{\delta^2}(t \log d/\gamma) (10f(t^2 \log d/\gamma))^{2t} \Big(1  + \frac{1}{f(t)^{2t}} + \max\left( 1, \frac{1}{f(t)}  \right)^{2t-2}\Big)\\% 
& \leq 3\frac{1}{\delta^2}(t \log d/\gamma) (10f(t^2 \log d/\gamma))^{2t} \max\Big(1, \frac{1}{f(t)^{2t}}, \frac{1}{f(t)  ^{2t-2}}\Big)\\% \Big(1 +   \sqrt{\|\Sigma\|_\infty} \Big)^{2t-2} \Big) \\
& =3 \frac{1}{\delta^2}(t \log d/\gamma) (10f(t^2 \log d/\gamma))^{2t} \max\Big(1, \frac{1}{f(t)^{2t}}\Big).
\end{align*}

\end{proof}

\section{Omitted Proof from \Cref{sec:sparse-mean-est}} \label{sec:omittedsec4}
We provide the proof of \Cref{claim:simplecheck} below that was omitted from  \Cref{sec:sparse-mean-est}.
\ClaimSimpleCheck*

	\begin{proof} 
		Since $r_i = \mathbf{1}_{X_i = Y_i}$,  then $\sum_i r_i = (1-\epsilon) m, r_i^2 = r_i$ and $r_i(X_i - Y_i) = 0$ for all $i \in [m]$. We see that 
		\begin{enumerate}
			\item $W_i^2 = w_i^2 r_i^2 = w_i r_i = W_i$. 
			\item $\cA_{\textnormal{corruptions}} \sststile{}{} W_i (X_i - X'_i) = w_i r_i (X_i - Y_i + Y_i - X'_i) = r_i w_i (Y_i - X_i) + w_i r_i (Y_i - X_i) = 0$.
			\item Additionally, since $\{W_i^2 = W_i\} \sststile{}{} (1-W_i)^2 = 1-2W_i + W_i^2 = 1-W_i$,
			and using the fact that $\{x^2 = x\} \sststile{O(1)}{} \{x > 0, x < 1\}$, we see
			\begin{align*}
				\cA_{\textnormal{corruptions}} \sststile{O(1)}{} 1-W_i \leq 2(1-W_i) = 1-w_ir_i + 1-w_ir_i \leq (1-w_i) + (1-r_i).
			\end{align*} 
			A sum over $i \in [m]$ gives us $\sststile{O(1)}{} \sum_i (1-W_i) \leq 2\epsilon m$. 
		\end{enumerate}
		
	\end{proof}

\section{Omitted Proofs from \Cref{sec:Gaussian-O-eps}}
\label{app:gaussian-o-eps}
 
This section contains the omitted details from \Cref{sec:Gaussian-O-eps}. We begin by proving that inliers satisfy deterministic conditions with high probability in \Cref{app:gaussian-resilience}. We prove the proofs of estimation lemmata (\Cref{lem:specialized_mean,lem:specialized_cov}) in \Cref{sec:estimation_lemmata}. Remaining technical details are provided in \Cref{sec:last_part_of_proof}.

\subsection{Deterministic Conditions on Inliers}
\label{app:gaussian-resilience}

In this section, we prove that the deterministic conditions required in \Cref{sec:resilience_main} hold with high probability. In particular, we provide the proofs of \Cref{lem:resilience_integrated,lem:similar-tensor-bound}. 
\subsubsection{Proof of \Cref{lem:resilience_integrated}}
We prove \Cref{lem:resilience_integrated} in this section. To this end, we need the  series of lemmata below.

\begin{lemma}[\cite{li18thesis}]\label{lem:jerry-resilience}
 Let $k\in \Z_+$ with $k\leq d$, $0<\eps\leq 1/2$ and $0<\gamma<1$. Let $X_1,\ldots, X_m \sim \cN(0,\Sigma)$. There exists an absolute constant $C$ such that, if
\begin{align*}
m> C\frac{\min(d,k^2) + \log\binom{d^2}{k^2}+\log(1/\gamma)}{\eps^2 \log(1/\eps)}
\end{align*} 
   then, with probability at least $1-\gamma$, we have that for any choice of weights $a_i \in [0,1]$ with $\E_{i \sim [m]}[a_i]\geq 1-\eps$,  the following two inequalities hold for all vectors $v \in \cU_k$:
   \begin{enumerate}
   \item $\left| \E_{i \sim [m]}[a_i \iprod{v,X_i}]  \right| \leq O\left(\eps \sqrt{\log(1/\eps)} \right)\sqrt{v^T \Sigma v}$.
   \item $\left| \E_{i \sim [m]}\left[ a_i\iprod{v,X_i}^2  \right]  - v^T \Sigma v  \right| \leq O(\eps \log(1/\eps) )v^T \Sigma v$.
   \end{enumerate} 
\end{lemma}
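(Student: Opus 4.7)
The lemma is a uniform resilience statement for the Gaussian distribution restricted to $k$-sparse test vectors. My plan has three components: (i) reduce the adversarial weighting $a \in [0,1]^m$ with $\sum a_i \geq (1-\eps)m$ to a subset-selection problem over sets of size $\lceil \eps m\rceil$; (ii) establish, for a single direction $v$, a concentration bound on the sum of the top-$\eps m$ order statistics of $\iprod{v,X_i}$ and $\iprod{v,X_i}^2$; and (iii) upgrade to a uniform statement over $v \in \cU_k$ via an $\eps$-net argument, tracked in the $k^2$-sparse rank-one matrix $vv^T$ for the quadratic statement.

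\textbf{Step (i): reduction to subset selection.} Setting $Y_i := \iprod{v,X_i}$ and $b_i := 1 - a_i \in [0,1]$ with $\sum_i b_i \leq \eps m$, we have $\E_{i\sim[m]}[a_i Y_i] = \E_{i\sim[m]}[Y_i] - \E_{i\sim[m]}[b_i Y_i]$. The functional $b \mapsto \E_i[b_i Y_i]$ is linear on the polytope $\{b \in [0,1]^m : \sum b_i \leq \eps m\}$, so its maximum and minimum are attained at $0/1$ vertices. Hence both inequalities reduce to controlling $(1/m)\sum_{i \in S} |Y_i|$ and $(1/m)\sum_{i \in S} Y_i^2$ uniformly over subsets $S$ of size $\lceil \eps m \rceil$, together with concentration of the full empirical mean $\E_i[Y_i]$ and empirical second moment $\E_i[Y_i^2]$ around $0$ and $v^T\Sigma v$ respectively.

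\textbf{Step (ii) and (iii): concentration and net.} Fix $v$ with $\|v\|_2 = 1$ and set $\sigma^2 := v^T\Sigma v$, so that $Y_i/\sigma \sim \cN(0,1)$ i.i.d. Standard Gaussian tail bounds give that the $\lceil \eps m\rceil$-th largest value of $|Y_i|/\sigma$ is of order $\sqrt{\log(1/\eps)}$, so the sum of the top $\lceil\eps m\rceil$ values is at most $O(\eps m \sqrt{\log(1/\eps)})\sigma$ with probability at least $1 - e^{-c\eps m}$; dividing by $m$ yields the $O(\eps\sqrt{\log(1/\eps)})$ rate for inequality~1. The analogous computation for the subexponential variables $Y_i^2/\sigma^2$ (applying Bernstein to the truncated sum) yields $O(\eps\log(1/\eps))\sigma^2$ for inequality~2. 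To make these uniform in $v$, we place a net on the relevant object: for inequality~1 on the $k$-sparse unit sphere (size $\binom{d}{k}(C/\delta)^k$), and for inequality~2 on the rank-one $k^2$-sparse PSD matrices $vv^T$ of unit Frobenius norm. Relaxing the rank constraint, the second net has cardinality $\binom{d^2}{k^2}(C/\delta)^{\min(d,k^2)}$ (the inner factor is the $\delta$-net of the unit Frobenius ball on a fixed $k^2$-sparse support, capped by the ambient Euclidean dimension). Choosing $\delta = \Theta(\eps\sqrt{\log(1/\eps)})$ for inequality~1 and $\delta = \Theta(\eps\log(1/\eps))$ for inequality~2, and union-bounding the per-direction failure probability against the log-net size plus $\log(1/\gamma)$, produces the stated threshold $m \gtrsim \frac{\min(d,k^2) + \log\binom{d^2}{k^2} + \log(1/\gamma)}{\eps^2 \log(1/\eps)}$.

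\textbf{Main obstacle.} The principal technical difficulty is extracting the \emph{sharp} $\eps\sqrt{\log(1/\eps)}$ and $\eps\log(1/\eps)$ rates uniformly over $\cU_k$ rather than the weaker $\sqrt{\eps}$ and $\eps^{1/2}$ bounds that a naive Bernstein + net argument yields. The key is to split the contribution of each subset $S$ into its small-magnitude part (handled by concentration of the \emph{full} empirical mean/covariance in the $k$-sparse/$k^2$-sparse operator pseudonorm) and its truly large-magnitude part (handled by a direct Gaussian/$\chi^2$ order-statistic estimate on subsets of size $\eps m$). Balancing the two and ensuring that the net granularity does not pollute the leading-order $\log(1/\eps)$ factors is the delicate piece of the argument; once this balance is struck, the remainder is a routine concentration-and-union-bound calculation.
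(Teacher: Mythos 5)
The paper does not actually prove this lemma: it imports it verbatim from \cite{li18thesis}, remarking only that the identity-covariance version there extends to general $\Sigma$ by renormalizing along each direction and union-bounding over sparse supports. Your sketch is a faithful reconstruction of the standard argument behind the cited result --- linearity over the weight polytope to reduce to subsets of size $\lceil\eps m\rceil$, truncation plus tail/order-statistic estimates per direction, and a net over $k$-sparse directions (resp.\ $k^2$-sparse rank-one matrices) --- and I see no structural gap. Two points of bookkeeping are loose. First, the sentence ``the $\lceil \eps m\rceil$-th largest value of $|Y_i|/\sigma$ is of order $\sqrt{\log(1/\eps)}$, so the sum of the top $\lceil\eps m\rceil$ values is at most $O(\eps m\sqrt{\log(1/\eps)})\sigma$'' is a non sequitur as written (the top few order statistics can far exceed the $\eps m$-th one); the correct step is the threshold decomposition $\sum_{i\in S}|Y_i|\le |S|T+\sum_i|Y_i|\mathbf{1}(|Y_i|>T)$ with $T=\Theta(\sqrt{\log(1/\eps)})$, which you do describe in your final paragraph, so this is an ordering issue rather than a missing idea. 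Second, the net accounting is internally inconsistent: if you ``relax the rank constraint'' the per-support net exponent is $k^2$ with no cap at $d$; the $\min(d,k^2)$ cap comes precisely from \emph{keeping} the rank-one parametrization $v\mapsto vv^T$ (netting over $v\in S^{d-1}$). Moreover, taking net resolution $\delta=\Theta(\eps\log(1/\eps))$ costs an extra $\log(1/\delta)=O(\log(1/\eps))$ factor in the numerator relative to the stated threshold; recovering the bound as stated requires a constant-resolution net combined with the quadratic-form structure (or a reduction to a fixed-matrix Frobenius bound on $k^2$-sparse blocks). Neither issue is fatal for the way the lemma is used downstream, where polylogarithmic factors are absorbed.
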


\noindent The result in  \cite{li18thesis} is for $\Sigma = I_d$. This version follows by taking a union bound over the support and re-normalizing the distribution. We also require a similar  property for the fourth moment of the inliers.
\begin{restatable}{lemma}{LemResilienceFourth}
     \label{lem:resilience-4moment}
	Let $k\in \Z_+$ with $k\leq d$, $0<\eps\leq 1/2$,   $0<\gamma<1$.
	Let $m> C({k^4}/{\eps^2})\log^4(d/(\eps\gamma))$ for a sufficiently large constant $C$ and  $X_1,\ldots, X_m \sim \cN(0,\Sigma)$. Then, with probability at least $1-\gamma$, for any weights $a_i \in [0,1]$ with $\E_{i \sim [m]}[a_i]\geq 1-\eps$ it holds
	\begin{align*}
		\left| \E_{i \sim [m]}\left[  a_i\left(\left( \iprod{v,X_i}^2- v^T \Sigma v \right)^2 - 2(v^T \Sigma v)^2  \right) \right]  \right| \leq \tilde{O}(\eps)(v^T \Sigma v)^2
	\end{align*}
	for all vectors $v \in \cU_k$.
\end{restatable}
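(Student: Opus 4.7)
The plan is to generalize the approach used in \Cref{lem:jerry-resilience} from quadratic forms to quartic forms of a centered Gaussian. The key observation is that for $X \sim \cN(0,\Sigma)$, setting $Y := \iprod{v,X}/\sqrt{v^T \Sigma v} \sim \cN(0,1)$ gives $\E[(Y^2-1)^2]=2$, so the random variables $W_i(v) := (\iprod{v,X_i}^2 - v^T\Sigma v)^2 - 2(v^T\Sigma v)^2$ are centered for every fixed $v$. The target expression is therefore a weighted empirical average of mean-zero random variables, and the standard ``resilience'' philosophy applies: it should be close to the true mean (which is $0$) uniformly over all admissible $(a_i)$ and $v \in \cU_k$.

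First I would write $a_i = 1 - b_i$ with $b_i \in [0,1]$ and $\E_i[b_i] \leq \eps$, and decompose
\[
\E_i[a_i W_i(v)] \;=\; \E_i[W_i(v)] \;-\; \E_i[b_i W_i(v)],
\]
controlling each piece separately. For the first term, $W_i(v)$ is a centered degree-four polynomial of a Gaussian, hence sub-exponential with $\|W(v)\|_{L_p} \lesssim p^2 (v^T\Sigma v)^2$; Bernstein-type concentration for a fixed $v$ combined with a union bound over a $1/2$-net of $\cU_k$ then yields $|\E_i[W_i(v)]| = \tilde{O}(\eps)(v^T\Sigma v)^2$ uniformly over $v$. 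For the second term, I would apply Hölder with exponent $p = \log(1/\eps)$:
\[
|\E_i[b_i W_i(v)]| \;\leq\; \E_i[b_i]^{1-1/p}\; \E_i[|W_i(v)|^p]^{1/p} \;\leq\; \eps^{1-1/p}\; \E_i[|W_i(v)|^p]^{1/p},
\]
and then show that $\E_i[|W_i(v)|^p]^{1/p} \lesssim p^2(v^T\Sigma v)^2$ uniformly over $v \in \cU_k$ with high probability; substituting $p = \log(1/\eps)$ yields a bound of $\tilde O(\eps)(v^T\Sigma v)^2$ for this term as well.

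The uniformity over $k$-sparse directions is what determines the $k^4$ sample complexity. For a fixed $k$-subset $S \subseteq [d]$, the quartic form $v \mapsto \E_i[W_i(v)]$ (and its $p$-th-moment analogue) lives in a $k^4$-dimensional symmetric-tensor space; a covering-net argument in this space combined with sub-exponential Bernstein per net point, and a union bound over all $\binom{d}{k} \leq \exp(k\log d)$ supports, gives the required control as long as $m \gtrsim (k^4/\eps^2)\,\polylog(d/(\eps\gamma))$. This mirrors the way \cite{li18thesis} treats the quadratic case but with effective dimension $k^4$ rather than $k^2$.

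The main obstacle will be making the $\eps$-net and the empirical $p$-th-moment concentration cooperate without inflating the sample complexity: we need uniform closeness of $\E_i[|W_i(v)|^p]^{1/p}$ to the population $L_p$ norm at $p = \log(1/\eps)$, which is essentially concentration of a degree-$\Theta(\log(1/\eps))$ polynomial tensor of a Gaussian over $k$-sparse directions. A clean alternative is to bypass the explicit $L_p$ step and instead directly bound the sum of the top-$\eps m$ values of $|W_i(v)|$ via a truncated Bernstein argument, analogous to the treatment leading to the ``$\log(1/\eps)$''-type factors in \Cref{lem:jerry-resilience}; this route avoids needing very-high-moment concentration and is likely what produces the claimed $\log^4$ dependence.
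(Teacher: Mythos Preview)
Your outline is essentially correct and mirrors the paper's proof: first handle the unweighted case $a_i\equiv 1$, then extend to arbitrary weights by controlling what can happen on the worst $\eps$-fraction of indices, and finally cover over $k$-sparse directions. The paper proceeds exactly this way, and in particular it uses your ``alternative'' route for the weighted part (bounding the contribution of the top-$\eps m$ values of $|W_i(v)|$ via a truncation argument in the style of \cite{DKKLMS16}), not the H\"older-$L_p$ step you list first; the H\"older route would require uniform concentration of degree-$\Theta(\log(1/\eps))$ Gaussian polynomials and is the shakier of your two options.

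One execution detail differs in the unweighted step. Rather than a tensor net over $\cU_k$, the paper flattens: with $u=(vv^T)^\flat$ a $k^2$-sparse vector in $\R^{d^2}$, it writes $\E_i[W_i(v)] = u^\top M u$ for a $d^2\times d^2$ matrix $M$ and bounds $\sup_{u\in\cU_{k^2}} u^\top M u \le \sup_{|Q|\le k^2}\|M_Q\|_F$. It then only needs entrywise control $|M_{(ij),(k\ell)}|\le \eps/k^2$, obtained from Gaussian hypercontractivity for a single degree-$4$ polynomial plus a union bound over $d^4$ entries; this is where the $k^4/\eps^2$ sample requirement comes from. Your tensor-net description would arrive at the same place but is less direct. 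The final cover over $\cU_k$ (with a Lipschitz bound requiring $\|X_i\|_2$ to be controlled) is the same in both.
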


\begin{proof}
	We first show the condition in the case where there are no weights ($a_i=1$, for all $i \in [m])$.
	\begin{lemma} \label{eq:bound_for_entire_set}
		If $m >C (k^4/\eps^2)\log^4(d/\gamma)$ for a sufficiently large constant $C$, then a set of $m$ samples from $\cN(0,\Sigma)$ for $I_d \preceq \Sigma \preceq 2 I_d$, with probability at least $1-\gamma$,  satisfies least $1-\gamma$
		\begin{align*}
			\left| \E_{i \sim [m]}\left[ \left( \iprod{v,X_i}^2- v^T \Sigma v \right)^2 - 2(v^T \Sigma v)^2   \right]  \right| \leq {O}(\eps)(v^T \Sigma v)^2
		\end{align*}
		for all $v \in \cU_k$. %
	\end{lemma}
	\begin{proof}
		We want to show concentration of polynomials of the form $(\iprod{v,x}^2 - v^T \Sigma v)^2$ for $v$, a $k$-sparse vector. Let $S$ be a set of $m$ samples from $\cN(0,\Sigma)$. First, let $u = (v v^T)^{\flat}$ (i.e., the vector having as elements all the products $v_i v_j$). This is a $k^2$-sparse vector. Define $M$ as the $d^2 \times d^2$ matrix with $M_{(ij),(k \ell)} = \E_{X \sim S}[(X_i X_j - \Sigma_{ij})(X_k X_\ell - \Sigma_{k \ell})] - 2\Sigma_{ij}\Sigma_{k \ell}$ for all $i,j,k,\ell \in [d]$. We note the rewriting:
		\begin{align*}
			&\E_{X \sim S}[(\iprod{v,X}^2 - v^T \Sigma v)^2] - 2(v^T\Sigma v)^2 \\
			&= \E_{X \sim S}\Big[\Big(\sum_{i,j \in [d]} v_i v_j(X_i X_j - \Sigma_{ij})\Big)^2 \Big] - 2 \Big(\sum_{i,j \in [d]} v_i v_j \Sigma_{ij} \Big)^2   \\
			&= \E_{X \sim S}\Big[\sum_{i,j \in [d]} v_i v_j(X_i X_j - \Sigma_{ij}) \sum_{k,\ell \in [d]} v_k v_\ell(X_k X_\ell - \Sigma_{k \ell})  
			 - 2 \sum_{i,j \in [d]} v_i v_j \Sigma_{ij}  \sum_{k,\ell \in [d]} v_k v_\ell \Sigma_{ij} \Big] \\
			&= \E_{X \sim S}\Big[\sum_{i,j \in [d]} u_{ij}(X_i X_j - \Sigma_{ij})\sum_{k,\ell \in [d]} u_{k \ell}(X_k X_\ell - \Sigma_{k \ell})  - 2 \sum_{i,j \in [d]} u_{ij} \Sigma_{ij}\sum_{k,\ell \in [d]}u_{k \ell} \Sigma_{ij}  \Big] \\
			&= u^T M u
		\end{align*}
		Hence, it is sufficient to show that $u^T M u \leq \tilde{O}(\eps)$ for all $u \in \cU_{k^2}(d^2)$. For a $Q \subset [d^2]$, we denote by $M_Q$ the $Q \times Q$ submatrix of $M$. We have that
		\begin{align*}
			\sup_{u \in \cU_{k^2}(d^2)} u^T M u = \sup_{|Q| \leq k^2} \| M_Q \|_2
			\leq \sup_{|Q| \leq k^2} \| M_Q \|_F \;,
		\end{align*}
		Thus, it suffices for every element of $M_Q$ to be $O(\eps)$, which holds if %
\begin{align} \label{eq:end_goal}
			\Abs{\E_{X \sim S}[p(x)] - \E_{X \sim \cN(0,\Sigma)}[p(x)]} \leq \frac{\eps}{k^2} \;,
		\end{align}
		for the polynomial $p(x):= (x_ix_j-\Sigma_{ij})(x_k x_\ell-\Sigma_{k \ell}) -2\Sigma_{i j}\Sigma_{k \ell}$.
		To this end, we use the following concentration inequality, which is a consequence of Gaussian Hypercontractivity:
		\begin{fact}[see, e.g., Corollary 5.49 in \cite{aubrun2017alice}]\label{fact:gaussianPoly}
		    Let $Z_1, \dots, Z_m$ be independent $\mathcal N(0, 1)$ variables and let $X = h(Z_1, \dots, Z_m)$, where $h$ is a polynomial of total degree at most $q$. Then, for any $t \geq (2e)^{q/2}$, 
		    \[ \Pr\Brac{ \Abs{X - \E[X] } \geq t \sqrt{\Var[X]}} \leq \exp\Paren{-\frac{q}{2e} t^{2/q}}.\]
		\end{fact}
		Note that we can still apply this lemma for polynomials $h(Z')$ of Gaussians $Z' \sim \mathcal{N}(0, \Sigma)$ with covariance $\Sigma \neq I$ by noting that  $Z' =\sqrt{\Sigma} Z$ where $Z \sim \cN(0, I)$ and replacing $h(Z_1', \dots, Z_m')$ by $h'(Z_1, \dots, Z_m) = h(Z_1', \dots, Z_m') = h( (\sqrt{\Sigma} Z)_1, \dots, (\sqrt{\Sigma} Z)_m)$ in \Cref{fact:gaussianPoly}.
		
		We apply the above to the appropriate degree $q=4$ polynomial of \Cref{eq:end_goal}, i.e., $h(X_1,\ldots, X_m) = \frac{1}{m}\sum_{i=1}^m(p(X_i) - \E_{X \sim \cN(0,\Sigma)}[p(X)])$. We note that in our case $\Var[h(X)] = \Var[p(X)]/m = O(1/m)$. $\Var[p(X)]$ is bounded by a constant since it is a degree $4$ polynomial with constant coefficients of Gaussian variables with bounded covariance.  We thus obtain that for $m > C (k^4/\eps^2) \log^4(1/\gamma')$ samples \Cref{eq:end_goal} holds with probability $1-\gamma'$. Using $\gamma' = \gamma/d^4$ and a union bound over $(i,j,k,\ell)$ yields the final sample complexity. We have thus shown  \Cref{eq:bound_for_entire_set} with $O(\eps)$ in the RHS. Assuming $I_d \preceq \Sigma$, this implies the final claim.

	\end{proof}

	Having \Cref{eq:bound_for_entire_set} in hand, we use it to complete the proof of \Cref{lem:resilience-4moment}. By convexity, it suffices to assume $a_i \in \{0,1\}$. Let $I$ be the set of indices such that $a_i=1$.
	For a given $v \in \cU_k$, define $p_v(x) = (\iprod{v,x}^2-v^T \Sigma v)$. Let $J^*$ be the set of $2\eps m$ indices with greatest $(p_v^2(X_i) - 2(v^T \Sigma v)^2)$ and define $J_1^* = \{i: p^2_v(X_i) \geq c \log^2(1/\eps) (v^T\Sigma v)^2 \}$. If $m > C \log(1/\gamma')/\eps^2$, with probability $1-\gamma'$, we have the following:
	\begin{enumerate}
		\item $\| X_i \|_2 = O(\sqrt{d \log(m/\gamma')})$ for all $i \in [m]$. \label{it:fact0}
		\item $\frac{1}{m}|\{ p^2(X_i) >c \log^2(1/\eps)(v^T \Sigma v)^2  \}| \leq 2\eps$. \label{it:fact1}
		\item $J_1^* \subseteq J^*$.\label{it:fact2}
		\item $\left|\frac{1}{m} \sum_{i \not\in J_1^*} (p^2(X_i) - 2(v^T \Sigma v)^2) \right| = O(\eps \log^2(1/\eps) ) (v^T \Sigma v)^2$.\label{it:fact3}
	\end{enumerate}
	The above claims can be shown like in Appendix B.1 of \cite{DKKLMS16} (see Equations (44), (45), (46) of the first arxiv version of that paper; Concretely, the second item follows from \Cref{fact:gaussianPoly}, the third follows from the second, and the last is shown in Claim B.4 of \cite{DKKLMS16}).
	
	Fix any $I \subseteq [m]$ with $|I|=(1-2\eps)m$. Partition $[m] \setminus I$ into $J^+ \cup J^{-}$, where $J^+ = \{ i \not\in I : p^2(X_i) \geq 2(v^T \Sigma v)^2 \}$, 
	and $J^- = \{ i \not\in I : p^2(X_i) < 2(v^T \Sigma v)^2 \}$. We will show that $(1/|I|)|\sum_{i \in I}(p^2_v(X_i) - 2(v^T \Sigma v)^2)| = \tilde{O}(\eps )(v^T \Sigma v)^2$. We first show the upper bound
	\begin{align*}
		\frac{1}{|I|} \sum_{i \in I}(p^2_v(X_i) - 2(v^T \Sigma v)^2) &\leq 
		\frac{1}{|I|} \sum_{i \in I \cup  J^+}(p^2_v(X_i) - 2(v^T \Sigma v)^2) - \frac{1}{|I|} \sum_{i \in J^-}(p^2_v(X_i) - 2(v^T \Sigma v)^2) \\
		&\leq \left| \frac{1}{|I|} \sum_{i=1}^m (p^2_v(X_i) - 2(v^T \Sigma v)^2)   \right| + 2\frac{1}{|I|} \left|  \sum_{i \in J^-}^m (p^2_v(X_i) - 2(v^T \Sigma v)^2) \right|\\
		&\leq {O}(\eps)(v^T \Sigma v)^2 + \frac{|J^-|}{|I|}O((v^T \Sigma v)^2) \\
		&=  O(\eps)(v^T \Sigma v)^2 \;,
	\end{align*}
	where we used \Cref{eq:bound_for_entire_set}. We now focus on the other direction. %
	We note that the lower bound is achieved when $I  = [m] \setminus J^*$. Thus  we obtain the following using \Cref{it:fact3,it:fact2}:
		\begin{align*}
		&\frac{1}{|I|} \sum_{i \in I}(p^2_v(X_i) - 2(v^T \Sigma v)^2)  \\
		&\geq \frac{1}{(1-2 \eps)m }  \Paren{\sum_{i \in m}(p^2_v(X_i) - 2(v^T \Sigma v)^2) -  \sum_{i \in J^*} (p^2_v(X_i) - 2(v^T \Sigma v)^2) } \\
		&= \frac{1}{(1-2 \eps)m }  \Paren{ \sum_{i \not \in J_1^* }(p^2_v(X_i) - 2(v^T \Sigma v)^2)  - \sum_{i \in J \setminus J_1^*} (p^2_v(X_i) - 2(v^T \Sigma v)^2)} \\
		&\geq  - \left|\frac{O(1)}{m }  \Paren{ \sum_{i \not \in J_1^* }(p^2_v(X_i) - 2(v^T \Sigma v)^2) }\right|  - \frac{ O(1)}{m} \sum_{i \in J \setminus J_1^*} p^2_v(X_i)  \\
&\geq - O(\eps \log^2(1/\eps) (v^T\Sigma v)^2)  - \frac{O(1)|J|}{m} c \log^2(1/\eps) (v^T\Sigma v)^2 \\
		&\geq - {O}(\eps \log^2(1/\eps) )(v^T \Sigma v)^2 \;.
	\end{align*}
	Note that this holds for a fixed $v$, and all subsets $I$ with $|I|=(1-2\eps)m$.  The last step is a cover argument. To that end, we first state that the desired expression is Lipschitz with respect to $v$:  
	\begin{claim} \label{claim:relate_p}
		Conditioned on the event of \Cref{it:fact0}, for any unit-norm $u,v \in \R^d$ and $i \in [m]$ we have that $|p_v(X_i)^2 - 2(v^T \Sigma v)^2 - (p_u(X_i)^2 - 2(u^T \Sigma u)^2)| \lesssim \|u-v\|_2 (R^2 + \| \Sigma \|^2_2)^2$, where $R=O(\sqrt{d \log(m/\gamma)})$.
	\end{claim}
	\begin{proof}
		We first claim the following for the difference between the polynomials without the squares:
		\begin{align*}
			| p_v(X_i) -  p_u(X_i)| &\leq |\iprod{v,X_i}^2 - \iprod{u,X_i}^2| + | v^T \Sigma v - u^T \Sigma u|
			\leq 2 \|v-u\|_2 (R^2 + \|\Sigma\|_2) \;.
		\end{align*}
		The first line uses the triangle inequality. For the second term of the following line we use that $| v^T \Sigma v - u^T \Sigma u| = |v^T \Sigma (v-u) - (u-v)^T \Sigma u| \leq \|\Sigma\|_2(\|v\|_2 \|v-u \|_2 + \|u\|_2 \|v-u \|_2) \lesssim \| \Sigma \|_2 \| u- v\|_2$. The second term is bounded using the same argument but with $X_i (X_i)^T$ in place of $\Sigma$ and using that $\|X_i\|_2 = O(R)$.
		
		We can now complete our proof.
		\begin{align*}
			|p_v(X_i)^2 - &2(v^T \Sigma v)^2 - (p_u(X_i)^2 - 2(u^T \Sigma u))| \leq |p_v(X_i)^2 - p_u(X_i)^2| + 2 | (v^T \Sigma v)^2 + (u^T \Sigma u)^2| \\
			&\lesssim \max\{|p_v(X_i)|,|p_u(X_i)| \} | p_v(X_i) -  p_u(X_i)| + \max\{v^T \Sigma v,u^T \Sigma u \}  | v^T \Sigma v - u^T \Sigma u| \\
			&\lesssim \|v-u\|_2 (R^2 + \|\Sigma\|_2)^2 \;,
		\end{align*}
		where the first line uses the triangle inequality, the second line uses $|a^2-b^2| \leq 2\max\{|a|,|b| \}|a-b|$, and the last one uses the bound $|p_v(X_i)| \leq \|X_i\|^2 + \|\Sigma\|_2 \leq R^2 + \|\Sigma\|_2$. 
	\end{proof}
	Recalling that $\|\Sigma\|_2 \leq 2$, the RHS of \Cref{claim:relate_p} is essentially $\|v-u\|_2 R^4$. In order for it to become $O(\eps)$, we would like our cover $S$ of $k$-sparse unit vectors to have accuracy $O(\eps/R^4)$, which results in a set of size  $|S| = \binom{d}{k} O(R^4/\eps)^k$ vectors. We thus choose the probability of failure $\gamma' = \gamma/|S|$, which means that we  need 
	\begin{align*}
		m > C \frac{\log(|S|/\gamma)}{\eps^2} = \frac{\log \binom{d}{k} + k\log(d \log(m/\gamma)/\eps) + \log(1/\gamma)}{\eps^2} \;.
	\end{align*}
	The sample complexity of \Cref{eq:bound_for_entire_set} scales with $k^4$ and dominates the sample complexity of this paragraph.
	This completes the proof of \Cref{lem:resilience-4moment}.
\end{proof}
We now have all the ingredients to prove \Cref{lem:resilience_integrated}, which we restate below for convenience.

\LemResilienceIntegrated*
\begin{proof}
	Without loss of generality, we assume $\mu=0$. We let $Z_i = X_i - \overline{\mu}$. We condition on the events of \Cref{lem:jerry-resilience,lem:resilience-4moment}. For simplicity, we also assume that the $a'_i$'s are scaled so that $\E_{ij}[a_{ij}]=\E_{i}[a'_i] = 1$ (since this consists of only a scaling of $1+O(\eps)$, it is without loss of generality). We show the individual claims below:
	\begin{enumerate}
		\item Proof of $| \iprod{v,\overline{\mu}}| \leq \tilde{O}(\eps) \sqrt{v^T \Sigma v}$: \label{it:proof1}
			Use \Cref{lem:jerry-resilience} with $a_i=1$.
		\item Proof of $| \E_{i \sim [m]}[a'_i\iprod{v,X_i-\overline{\mu}}]   | \leq \tilde{O}(\eps) \sqrt{v^T \overline{\Sigma} v}$: 
			By \Cref{it:proof1} and \Cref{lem:jerry-resilience}, we have that
			\begin{align*}
				\left| \E_{i \sim [m]}[a'_i\iprod{v,X_i - \overline{\mu}}]  \right| \leq \left| \E_{i\sim [m]}[a'_i\iprod{v,X_i}] \right| + \left| \E_{i\sim [m]}[a'_i\iprod{v,\overline{\mu}}]  \right|
				\leq  \tilde{O}(\eps) \sqrt{v^T \Sigma v} 
				\leq \frac{\tilde{O}(\eps) \sqrt{v^T\overline{\Sigma} v}}{\sqrt{1-\tilde{O}(\eps)}} \;,
			\end{align*}
			where the last inequality uses \Cref{it:res4}.
		\item Proof of $\left| \E_{i \sim [m]}\left[a'_i \left(\iprod{v,X_i-\overline{\mu}}^2    -  v^T \overline{\Sigma} v \right)  \right] \right| \leq \tilde{O}(\eps) v^T \overline{\Sigma} v $: We have the following inequalities.
			\begin{align*}
				&\left| \E_{i \sim [m]}\left[a'_i \left(\iprod{v,X_i-\overline{\mu}}^2    -  v^T \overline{\Sigma} v \right)  \right] \right|\\
				&= \left| \E_{i \sim [m]}\left[a'_i\iprod{v,X_i}^2 + a'_i\iprod{v,\overline{\mu}}^2 - 2a'_i \iprod{v,X_i}\iprod{v,\overline{\mu}} - a'_i v^T \overline{\Sigma} v  \right] \right| \\
				&\leq \left| (1 \pm \tilde{O}(\eps))v^T \Sigma v +  \tilde{O}(\eps^2) v^T \Sigma v \pm 2 \E_{i \sim [m]}[a'_i \iprod{v,X_i}]\tilde{O}(\eps) \sqrt{v^T \Sigma v} -  v^T\overline{\Sigma} v \right| \\
				&= |v^T(\overline{\Sigma} - \Sigma)v| + \tilde{O}(\eps) v^T \Sigma v + \tilde{O}(\eps^2) v^T \Sigma v \\
				&= \tilde{O}(\eps) v^T \overline{\Sigma} v \;,
			\end{align*}
			where the second line uses \Cref{lem:jerry-resilience} along with \Cref{it:res1} and $\E_i [a_i] = 1$, the next line uses \Cref{lem:jerry-resilience} and the last line relates $v^T \Sigma v$ to $v^T \overline{\Sigma} v$  using  \Cref{it:res4}.
		\item Proof of $|v^T(\overline{\Sigma} - \Sigma)v  | \leq \tilde{O}(\eps) v^T \Sigma v$:
			Repeating the steps from the proof of \Cref{it:res3}, we can show $\left| \E_{i \sim [m]}\left[a'_i \left(\iprod{v,X_i-\overline{\mu}}^2    -  v^T \Sigma v \right)  \right] \right| \leq \tilde{O}(\eps) v^T \Sigma v$. Taking $a'_i=1$ gives \Cref{it:res4}.
		\item Proof of $|\E_{i,j \sim [m]}[a_{ij}(v^TX_{ij} v - v^T\overline{\Sigma}v)  ] | \leq \tilde{O}(\eps) v^T \overline{\Sigma} v$:
			We clarify that we will often use the following: Whenever we have a term of the form $\E_{ij}[a_{ij} b_i]$ with $b_i\geq 0$, we will use that $\E_{ij}[a_{ij} b_i] \leq  \E_{i}[a'_i b_i]$ (since $0<a_{ij}\leq a'_i$).
			\begin{align*}
				 \Big|\E_{i,j \sim [m]}&\Brac{a_{ij}(v^TX_{ij}v - v^T\overline{\Sigma}v)  } \Big| \\ 
				&= \left|\E_{i,j \sim [m]}[a_{ij}\frac{1}{2}\iprod{v,X_i-\overline{\mu} -(X_j-\overline{\mu})}^2 - v^T\overline{\Sigma}v  ] \right| \\ 
				&\lesssim \left| \E_{i}[a'_i \frac{1}{2}\iprod{v,X_i-\overline{\mu}}^2] - \frac{1}{2} v^T \overline{\Sigma} v \right| 
				+ \left| \E_{j}\Brac{a'_j \frac{1}{2}\iprod{v,X_j-\overline{\mu}}^2}- \frac{1}{2} v^T \overline{\Sigma} v \right| \\
				&+  |\E_{ij}\Brac{a_{ij}\iprod{v,X_i-\overline{\mu}}\iprod{v,X_j-\overline{\mu}}} | \\
				&\leq \tilde{O}(\eps) v^T \overline{\Sigma} v \;,
			\end{align*}
			since each of the first two terms is $\tilde{O}(\eps)$ using \Cref{it:res2,it:res3,it:res4} and the same holds for the last term: Using Cauchy–Schwarz, this term becomes $|\E_{ij}[a_{ij}\iprod{v,X_i-\overline{\mu}}\iprod{v,X_j-\overline{\mu}}] |  \leq \sqrt{|\E_{ij}[ a_{ij} \iprod{v,X_i-\overline{\mu}}^2] \E_{ij}[a_{ij} \iprod{v,X_j-\overline{\mu}}^2] |}$ and then applying again \Cref{it:res2,it:res3,it:res4} bounds it by $\tilde{O}(\eps) v^T \overline{\Sigma} v$.
		\item Proof of $|\E_{i,j \sim [m]}[a_{ij}((v^TX_{ij}v-v^T\overline{\Sigma} v)^2 -        2(v^T\Sigma v)^2 )]  | \leq \tilde{O}(\eps) (v^T \overline{\Sigma} v)^2$:
			Using that $\E_{i,j}[a_{ij}]=1$ and some algebraic manipulations, we have that:
			\begin{align} 
				&\left|\E_{i,j \sim [m]}[a_{ij}((v^TX_{ij}v-v^T\overline{\Sigma} v)^2 -        2(v^T\Sigma v)^2 )]  \right| \notag \\
				&= \left|\E_{i,j \sim [m]}\left[a_{ij}\left(\frac{\iprod{v,X_i-X_j}^2-2v^T\overline{\Sigma} v}{2}\right)^2\right] -        2(v^T\Sigma v)^2  \right|  \notag\\
				&=\left| \frac{1}{4}\E_{i,j}\left[a_{ij}\left(\left(  \iprod{v,X_i}^2-v^T\overline{\Sigma} v + \iprod{v,X_j}^2-v^T\overline{\Sigma} v {-} 2\iprod{v,X_i}\iprod{v,X_j} \right)^2  \right)  \right] -        2(v^T\Sigma v)^2 \right| \notag \notag \\
				&= \left| \frac{1}{4}\E_{ij \sim [m]}\left[a_{ij}\left( A^2 + B^2 + C^2 + 2AB + 2 AC + 2 BC \right)\right] - 2(v^T\overline{\Sigma}v)^2  \pm \tilde{O}(\eps)     \right| \;, \label{eq:bunch_of_triangle_ineqs}
			\end{align}
			where we replaced $(v^T{\Sigma}v)^2$ by $(v^T\overline{\Sigma}v)^2 \pm \tilde{O}(\eps)$ using \Cref{it:res4} and the fact that $I \preceq \Sigma \preceq 2I$ and we let $A:= \iprod{v,X_i}^2-v^T\overline{\Sigma} v$, $B:= \iprod{v,X_j}^2-v^T\overline{\Sigma} v$, $C:= 2\iprod{v,X_i}\iprod{v,X_j}$. We work with each term individually. We have that
			\begin{align*}
				\E_{ij \sim [m]}[A^2] &= \E_{i\sim [m]}[a'_i(\iprod{v,X_i}^2-v^T \overline{\Sigma} v)^2] \notag \\
				&= \E_{i \sim [m]}[a'_i(\iprod{v,X_i}^2-v^T\Sigma v + v^T(\Sigma-\overline{\Sigma})v)^2] \notag  \\
				&\leq \E_{i \sim [m]}[a'_i(\iprod{v,X_i}^2-v^T\Sigma v)^2] 
				+ \E_{i \sim [m]}[a'_i](v^T(\Sigma-\overline{\Sigma})v)^2\\
				& \qquad + 2 \E_{i \sim [m]}[a'_i(\iprod{v,X_i}^2-v^T\Sigma v)] (v^T(\Sigma-\overline{\Sigma})v) \notag \\
				&\leq 2(v^T \Sigma v)^2+ \tilde{O}(\eps)(v^T \Sigma v)^2  \;, \notag
			\end{align*}
			since the first term is $(2+\tilde{O}(\eps))(v^T \Sigma v)^2$ and the other two $\tilde{O}(\eps)(v^T \Sigma v)^2$: the first term is bounded because of \Cref{lem:resilience-4moment}, the second because of \Cref{it:res4} and the last one because of both (an application of Cauchy-Schwarz is needed there).
			Similarly, we have that $\E_{ij \sim [m]}[B^2]\leq (2 \pm \tilde{O}(\eps))(v^T \Sigma v)^2$. Furthermore, $\E_{ij \sim [m]}[C^2]=(4\pm\tilde{O}(\eps))(v^T \Sigma v)^2$ and that all cross terms involving $AB$, $AC$, $BC$ are $\tilde{O}(\eps)$. Putting these together we get that the right-hand side of \Cref{eq:bunch_of_triangle_ineqs} is $\tilde{O}(\eps)$. Using \Cref{lem:resilience-4moment} one last time we have that this is at most $\tilde{O}(\eps)(v^T \overline{\Sigma }v)^2$.
	\end{enumerate}
	This completes the proof of \Cref{lem:resilience_integrated}.
	\end{proof}

\subsubsection{Proof of \Cref{lem:similar-tensor-bound}} \label{sec:simple_tensor_concentration}

This section contains the proof of the following result that was used in \Cref{sec:resilience_main}.

\LemBoundSimilarTensor*

Our proof uses  the following standard concentration inequality and Isserlis' Theorem.

\begin{claim} \label{claim:cov_est}
	Let $X^{(1)},\ldots , X^{(N)} \sim \cN(0,\Sigma)$ and $\Sigma_N:= \frac{1}{N} \sum_{i=1}^N X^{(i)}X^{(j)^T}$. Let $\eps',\delta \in (0,1)$. Then, with probability at least $1-\delta$, it holds that $\|  \Sigma_N -  \Sigma \|_\infty \leq \eps' \|  \Sigma \|_2$ 
	provided that $N > C\log(d/\delta)/\eps'^2$ for a sufficiently large universal constant $C$.
\end{claim}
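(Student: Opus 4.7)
\textbf{Proof proposal for \Cref{claim:cov_est}.} The plan is to control each entry $(\Sigma_N - \Sigma)_{ij}$ separately as an empirical mean of i.i.d.\ sub-exponential random variables, and then take a union bound over the $d^2$ entries.

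Fix $i,j \in [d]$ and define $Y_k^{(ij)} := X^{(k)}_i X^{(k)}_j - \Sigma_{ij}$ for $k \in [N]$. Then $(\Sigma_N - \Sigma)_{ij} = \frac{1}{N}\sum_{k=1}^N Y_k^{(ij)}$ is a centered empirical mean of i.i.d.\ random variables. The first step is to bound the sub-exponential norm of $Y_k^{(ij)}$. Since $X^{(k)} \sim \cN(0,\Sigma)$, each coordinate $X^{(k)}_i$ is a centered Gaussian with variance $\Sigma_{ii} \leq \|\Sigma\|_2$, hence is sub-gaussian with $\|X^{(k)}_i\|_{\psi_2} \leq C \sqrt{\|\Sigma\|_2}$. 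The product of two sub-gaussians is sub-exponential, so $\|X^{(k)}_i X^{(k)}_j\|_{\psi_1} \leq \|X^{(k)}_i\|_{\psi_2} \|X^{(k)}_j\|_{\psi_2} \leq C \|\Sigma\|_2$, and centering does not increase this norm by more than a constant factor, so $\|Y_k^{(ij)}\|_{\psi_1} \leq C' \|\Sigma\|_2$ for some absolute constant $C'$.

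The second step is to apply Bernstein's inequality for i.i.d.\ centered sub-exponential random variables, which states that for a fixed $\tau \in (0,1]$:
\[
\Pr\Brac{\Bigabs{\tfrac{1}{N} \textstyle\sum_k Y_k^{(ij)}} > \tau \|\Sigma\|_2} \leq 2 \exp\Paren{-c N \min\Paren{\tau^2/C'^2,\, \tau/C'}} = 2\exp(-c'' N \tau^2)\;,
\]
where the last equality uses that $\tau \leq 1$ so the minimum is attained by the quadratic term. Setting $\tau = \eps'$ and requiring the right-hand side to be at most $\delta/d^2$ yields the condition $N \geq C \log(d^2/\delta)/\eps'^2 = O(\log(d/\delta)/\eps'^2)$, as claimed.

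The third and final step is a union bound over all $d^2$ ordered pairs $(i,j) \in [d]\times[d]$. Under this union bound, simultaneously for all $(i,j)$, we have $|(\Sigma_N - \Sigma)_{ij}| \leq \eps' \|\Sigma\|_2$, which is precisely $\|\Sigma_N - \Sigma\|_\infty \leq \eps'\|\Sigma\|_2$. There is no serious obstacle here: the only technical ingredients are (i) identifying the correct sub-exponential scale $\|\Sigma\|_2$ (rather than, say, $\Sigma_{ii}\Sigma_{jj}$ or the Frobenius norm), which is immediate from $\Sigma_{ii}, \Sigma_{jj} \leq \|\Sigma\|_2$, and (ii) invoking the standard Bernstein inequality for sub-exponential sums. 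As an alternative route, one could instead apply \Cref{fact:gaussianPoly} to the degree-$2$ polynomial $h(Z) = \frac{1}{N}\sum_k (\sqrt{\Sigma} Z^{(k)})_i (\sqrt{\Sigma} Z^{(k)})_j - \Sigma_{ij}$ in $Nd$ standard Gaussian variables, but this would give a suboptimal $\log^2(d/\delta)/\eps'^2$ bound due to the $t^{2/q} = t$ exponent when $q=2$, so the Bernstein route is preferable to match the stated $\log(d/\delta)$ dependence.
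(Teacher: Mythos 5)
Your proof is correct and follows essentially the same route as the paper's: bound the $\psi_1$-norm of each product $X^{(k)}_i X^{(k)}_j$ by $O(\|\Sigma\|_2)$, apply Bernstein's inequality for sub-exponential sums to each entry, and union bound over the $d^2$ entries with failure probability $\delta/d^2$ each. The extra remark comparing against the hypercontractivity route is a nice observation but not needed.
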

\begin{proof}
	For $k,\ell \in [d]$, the random variable $\frac{1}{N}\sum_{i=1}^d X^{(i)}_k X^{(i)}_\ell$ is sub-exponential with Orlicz norm $\|\frac{1}{N}X^{(i)}_k X^{(i)}_\ell  \|_{\psi_1} \leq (C/N)  \| \Sigma \|_2$ for some $C>0$. Therefore, by Bernstein's inequality, if $N$ is a large enough multiple of $ \log(1/\delta')/\eps'^2$, we have that
	\begin{align*}
		\Pr \left[ \Bigabs{\frac{1}{N}X^{(i)}_k X^{(i)}_\ell - \Sigma_{k\ell}} \leq \eps' \| \Sigma \|_2 \right] \leq \delta'\;.
	\end{align*}
	By using $\delta' = \delta/d^2$ and taking a union bound over all $d^2$ elements of the matrix, the result follows.
\end{proof}

\begin{fact}[Isserlis' Theorem] \label{fact:isserlis}
	Let $(X_1,\ldots, X_t) \sim \cN(0,\Sigma)$. Then, 
	\begin{align*}
		\E[X_1 \cdots X_t] = \sum_{p \in P_t^2} \prod_{\{i,j\} \in p} \E[X_i X_j] \;,
	\end{align*}
	where $P_t^2$ is the set of all pairings of $[t]$.
\end{fact}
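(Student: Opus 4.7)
The plan is to prove Isserlis' theorem by induction on $t$, using the Gaussian integration by parts identity (Stein's lemma) to peel off one variable at a time. The base case $t=2$ is just the definition of covariance, and for odd $t$ both sides vanish (the left by the symmetry $X \mapsto -X$ of a centered Gaussian, and the right because no perfect matching of an odd set exists), so the content is in the even case.

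The core tool is the following: for $(X_1,\ldots,X_t) \sim \cN(0,\Sigma)$ and any smooth polynomially-bounded function $f$,
\begin{align*}
\E[X_t\, f(X_1,\ldots,X_t)] \;=\; \sum_{j=1}^{t} \Sigma_{tj}\, \E\!\left[\frac{\partial f}{\partial X_j}\right].
\end{align*}
This identity can be derived directly from the density of $\cN(0,\Sigma)$ via integration by parts (writing the gradient of the Gaussian density as $-\Sigma^{-1}X$ times the density, and using that $\Sigma \Sigma^{-1}=I$), or by first reducing to the case $\Sigma=I$ through an affine change of variables, where it reduces to the one-dimensional identity $\E[Z f(Z)] = \E[f'(Z)]$ for $Z\sim\cN(0,1)$.

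Applying this identity with $f(X) = X_1 X_2 \cdots X_{t-1}$ gives
\begin{align*}
\E[X_1 X_2 \cdots X_t] \;=\; \sum_{k=1}^{t-1} \Sigma_{t,k}\; \E\!\left[\prod_{j \in [t-1]\setminus\{k\}} X_j\right].
\end{align*}
By the inductive hypothesis applied to the $(t-2)$-variate Gaussian inside each expectation, each term on the right expands as a sum over perfect matchings of $[t-1]\setminus\{k\}$. Summing over $k$ corresponds exactly to choosing which element of $[t-1]$ is paired with $t$, and then matching what remains, which enumerates every perfect matching of $[t]$ exactly once. This gives the desired formula $\E[X_1\cdots X_t] = \sum_{p \in P_t^2} \prod_{\{i,j\}\in p} \Sigma_{ij}$.

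The main obstacle is really just verifying the Gaussian integration-by-parts identity cleanly without getting bogged down in density manipulations; the combinatorial bookkeeping in the induction step is straightforward once it is phrased as ``the matchings of $[t]$ are in bijection with (a choice of partner for $t$) $\times$ (a matching of what remains).'' An alternative proof, which avoids induction entirely, is to differentiate the moment generating function $M(s) = \exp(\tfrac{1}{2} s^T\Sigma s)$ once with respect to each of $s_1,\ldots,s_t$ at $s=0$: only the term $\tfrac{1}{(t/2)!}(\tfrac{1}{2}s^T\Sigma s)^{t/2}$ in the series expansion contributes, and the prefactor $1/((t/2)!\, 2^{t/2})$ is exactly cancelled by the number of orderings of a given unordered matching (permuting the $t/2$ pairs and swapping within each). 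I would choose the integration-by-parts proof for presentation because it extends transparently to the recursion form used in many concentration arguments elsewhere in the paper.
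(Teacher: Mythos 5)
Your proof is correct. The paper states Isserlis' theorem as a classical fact without proof (it is only used as a black box in the proof of \Cref{lem:similar-tensor-bound}), so there is no internal argument to compare against. Both routes you describe are standard and sound: the inductive step via the multivariate Stein identity $\E[X_t f(X)] = \sum_j \Sigma_{tj} \E[\partial f/\partial X_j]$ correctly peels off the partner of $X_t$, and the combinatorial bijection (a pairing of $[t]$ is a choice of partner for $t$ together with a pairing of the remaining $t-2$ indices) ensures each element of $P_t^2$ is counted exactly once; the moment-generating-function argument with the $(t/2)!\,2^{t/2}$ overcounting factor is likewise a complete alternative. The only point worth noting is that the separate treatment of odd $t$ is not strictly necessary, since the induction itself bottoms out at $\E[X_i]=0$ in that case, but including it does no harm.
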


We are now ready to prove \Cref{lem:similar-tensor-bound}. 
\begin{proof}[Proof of \Cref{lem:similar-tensor-bound}]
	We fix $\ell_1,\ldots, \ell_t \in [d]$ and examine the $(\ell_1,\ldots, \ell_t)$-th entry $\left(\E_{Y \sim \cN(0,\Sigma)}[Y^{\otimes t}]\right)_{\ell_1,\ldots, \ell_t} = \E_{Y \sim \cN(0,\Sigma)}[Y_{\ell_1}\ldots Y_{\ell_t}]$. Using \Cref{fact:isserlis}, we can write it as a sum of products of elements of $\Sigma$:
	\begin{align*}
		\left(\E_{Y \sim \cN(0,\Sigma)}[Y^{\otimes t}]\right)_{\ell_1,\ldots, \ell_t} = \sum_{p \in P_t^2} \prod_{\{i,j\} \in p} \Sigma_{\ell_i \ell_j} \;.
	\end{align*}
	We note that each product has at most $(t/2)$-many factors. The same decomposition holds for each entry of the tensor $ \E_{Y \sim \cN(0,\overline{\Sigma})}[Y^{\otimes t}]$ by replacing $\Sigma$ with $\overline{\Sigma}$. Therefore,
	\begin{align}
		\left\lvert \left(  \E_{Y \sim \cN(0,\Sigma)}[Y^{\otimes t}] - \E_{Y \sim \cN(0,\overline{\Sigma})}[Y^{\otimes t}] \right)_{\ell_1,\ldots, \ell_t} \right\rvert
		&\leq \sum_{p \in P_t^2} \Bigabs{ \prod_{\{i,j\} \in p} \Sigma_{\ell_i \ell_j} - \prod_{\{i,j\} \in p} \overline{\Sigma}_{\ell_i \ell_j} } \label{eq:tensorformula} \;.
	\end{align}
	We now focus on a single term of the right-hand side. Assuming that we have an approximation $\| \Sigma - \overline{\Sigma}  \|_\infty \leq \delta$ for some $\delta < 1$ to be defined later, we can write $\overline{\Sigma}_{ij} = \Sigma_{ij} + \delta_{ij}$ with $|\delta_{ij}| \leq \delta$. Plugging this gives
	\begin{align*}
		\Bigabs{ \prod_{\{i,j\} \in p} \Sigma_{\ell_i \ell_j} - \prod_{\{i,j\} \in p} \overline{\Sigma}_{\ell_i \ell_j} }  \leq \delta \| \Sigma\|_{\infty}^{t/2-1} 2^{t/2} \;,
	\end{align*}
	where we used that $\prod_{\{i,j\} \in p}({\Sigma}_{\ell_i \ell_j} + \delta_{\ell_i \ell_j})$ produces one term which cancels out with $\prod_{\{i,j\} \in p} \Sigma_{\ell_i \ell_j}$ and every one of the $(2^{t/2}-1)$ remaining ones, is at most $\delta \| \Sigma\|_{\infty}^{t/2-1}$, because $\delta < 1$, and $\norm{\Sigma}_{\infty} \geq 1$. 
	
	Combining the above with \Cref{eq:tensorformula}, we have that 
	\begin{align*}
	\left\|  \E_{Y \sim \cN(0,\Sigma)}[Y^{\otimes t}] - \E_{Y \sim \cN(0,\overline{\Sigma})}[Y^{\otimes t}]  \right\|_\infty \lesssim  t^t \delta \| \Sigma\|_{\infty}^{t/2} 2^{t/2} \;,
	\end{align*}		
since a crude upper bound on the number of matchings of $[t]$ is $t!$. Imposing that the right-hand side is at most $\tau$, we find that it is sufficient to have
	\begin{align*}
		\delta \leq \frac{\tau}{ t^t 2^{t/2} \|\Sigma\|_{\infty}^{t/2-1}},
	\end{align*}
	which is indeed less than $1$ since $\tau \leq 1$ and $\|\Sigma\|_{\infty} \geq 1$.
	
	Therefore, we use \Cref{claim:cov_est} with $\eps' \leq \delta/\|\Sigma\|_2$ and $\delta=\gamma/d^t$, in order to do a union bound over the $d^t$ elements of the tensor. Substituting these parameters yields the claimed sample complexity.
\end{proof}

\subsection{Proof of Estimation Lemmata} \label{sec:estimation_lemmata}

We recall the following general result from prior work (note that our theorem statement is slightly different from the one in \cite{kothari2021polynomial}, this is a minor correction and doesn't change the overall correctness of the proof). 
\begin{lemma}[Lemma 22 in \cite{kothari2021polynomial}] \label{lem:generic_estimation}
	Let $X_1, \dots, X_m \in \R^d$ and $\overline{\mu} = \E_{i \sim [m]}[ X_i]$. Let $V(\overline{\mu}, v),V'(\mu', v)$ be degree-2 polynomials in $v$ and $\overline{\mu}$ and $v$ and $\mu'$ respectively and let $S \subseteq \R^d$ be a set such that $V(\overline{\mu}, v) \geq 0$ for all $v \in S$ and $\overline{\mu} \in \R^d$. Let $T \subseteq [0, 1]^m$. Suppose that for every $v \in S$ and $a \in T$ such that $\sum_i a_i \geq (1-\eps)m$, we have the following two
	\begin{align} 
		&\left| \E_{i \sim [m]}[a_i \iprod{v,X_i-\overline{\mu}}]  \right| \leq \tilde{O}(\eps) \sqrt{V(\overline{\mu},v)} \;,\label{eq:assumption11} \\
		&\left| \E_{i \sim [m]}[a_i(\iprod{v,X_i-\overline{\mu}}^2 - V(\overline{\mu},v))] \right| \leq \tilde{O}(\eps) V(\overline{\mu},v)  \;.\label{eq:assumption12}
	\end{align} 
	Let $Y_1,\ldots, Y_m$ be any $\eps$-corruption of $X_1,\ldots,X_m$, let $\pE$ be a degree-6 pseudo-expectation in the variables $X_1',\ldots, X_m' \in \R^d$ and $w_1,\ldots, w_n \in \R$. Let $\mu' = \E_{i \sim [m]}[X_i']$. Suppose that
	\begin{itemize}
		\item $\pE$ satisfies $w_i^2=w_i$ for all $i \in [m]$.
		\item $\pE$ satisfies $\sum_{i \in [m]}w_i = (1-\eps)m$.
		\item $\pE$ satisfies $w_iX'_i=w_iY_i$ for all $i \in [m]$.
		\item $\pE[\E_{i \sim [m]}[\iprod{v,X_i'-\mu'}^2]]\leq (1+\tilde{O}(\eps))\pE[V'(\mu',v)]$ for every $v \in S$.
		\item $a \in T$, where $a$ is the vector with $a_i=\pE[w_i]\mathbf{1}(X_i=Y_i)$ for $i \in [m]$
	\end{itemize}
	Then, for every $v \in S$, the following hold:
	\begin{align*}
		&\pE[\iprod{v,\mu'-\overline{\mu}}^2] \leq O(\eps)\left( \pE[V'(\mu',v)] + V(\overline{\mu},v) \right) \;,\\
		&| \iprod{v,\hat{\mu}-\overline{\mu}}| \leq \tilde{O}(\eps) \sqrt{V(\overline{\mu},v)} + \sqrt{\pE\left[\E_{i \sim [m]}[(1-w_i')\iprod{v,X_i'-\overline{\mu}}]^2    \right]}\;,
	\end{align*}
	where $\hat{\mu}:=\pE[\mu']$ and
	\begin{align*}
		\pE\left[\E_{i \sim [m]}[(1-w_i')\iprod{v,X_i'-\overline{\mu}}]^2    \right]
		&\leq O(\eps) \left( \pE[V'(\mu',v)] - V(\overline{\mu},v) \right) \\
		&\qquad + \tilde{O}(\eps) \left( \pE[V'(\mu',v)] + V(\overline{\mu},v)  \right),
	\end{align*}
	where $w_i' = w_i \mathbf{1}(Y_i = X_i)$.
\end{lemma}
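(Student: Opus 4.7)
The core observation is that the corruption axioms $w_i^2 = w_i$ and $w_i(X_i' - Y_i) = 0$, combined with the auxiliary variable $w_i' := w_i \mathbf{1}(Y_i = X_i)$, yield the polynomial identity $w_i'(X_i' - X_i) = 0$. Together with the hypothesis $\sum_i w_i = (1-\eps)m$ (which gives $\pE[\E_i(1-w_i')] \leq 2\eps$, since at most $\eps m$ of the $X_i$'s disagree with the $Y_i$'s), this lets us rewrite the estimation error as a sum restricted to outlier indices: $\iprod{v, \mu' - \overline{\mu}} = \E_i[(1-w_i')\iprod{v, X_i' - X_i}] = \E_i[(1-w_i')\iprod{v, X_i' - \overline{\mu}}] - \E_i[(1-w_i')\iprod{v, X_i - \overline{\mu}}]$. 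The SoS-aware outlier-removal Cauchy--Schwarz $(\E_i[(1-w_i')h_i])^2 \leq \E_i[1-w_i'] \cdot \E_i[(1-w_i')h_i^2]$ will then extract an $O(\eps)$ factor from every bound we derive. Throughout, we use that $a_i := \pE[w_i'] = \pE[w_i]\mathbf{1}(X_i = Y_i)$ lies in $T$ by hypothesis, enabling both inlier assumptions \eqref{eq:assumption11} and \eqref{eq:assumption12}.

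For the first claim, I would square the identity above and apply the outlier Cauchy--Schwarz to get $\iprod{v, \mu' - \overline{\mu}}^2 \lesssim \eps\,\E_i[(1-w_i')\iprod{v, X_i' - \overline{\mu}}^2] + \eps\,\E_i[(1-w_i')\iprod{v, X_i - \overline{\mu}}^2]$. Taking $\pE$: the $X_i$-piece is bounded via assumption \eqref{eq:assumption12}, applied with $a = 1$ to control $\E_i[\iprod{v, X_i - \overline{\mu}}^2]$ and with $a_i = \pE[w_i']$ to control $\E_i[\pE[w_i']\iprod{v, X_i - \overline{\mu}}^2]$; subtracting gives $\E_i[\pE[(1-w_i')]\iprod{v, X_i - \overline{\mu}}^2] \leq \tilde{O}(\eps)V(\overline{\mu}, v)$. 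The $X_i'$-piece is decomposed as $\iprod{v, X_i' - \overline{\mu}}^2 = \iprod{v, X_i' - \mu'}^2 + \iprod{v, \mu' - \overline{\mu}}^2$ (the average of the cross term vanishes), and the pseudo-moment hypothesis bounds the first summand by $(1+\tilde{O}(\eps))\pE[V'(\mu', v)]$. The $\pE[\iprod{v, \mu' - \overline{\mu}}^2]$ term then appears on both sides with coefficient $O(\eps) < 1/2$ on the right, so rearranging establishes the first claim.

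For the second claim, take pseudo-expectations of the identity. The $X_i$-term becomes $-\E_i[\pE[w_i']\iprod{v, X_i - \overline{\mu}}]$ since $\E_i[\iprod{v, X_i - \overline{\mu}}] = 0$, which is at most $\tilde{O}(\eps)\sqrt{V(\overline{\mu}, v)}$ by assumption \eqref{eq:assumption11}. The $X_i'$-term is bounded using the pseudo-expectation Cauchy--Schwarz $\pE[\E_i[(1-w_i')\iprod{v, X_i' - \overline{\mu}}]] \leq \sqrt{\pE[\E_i[(1-w_i')\iprod{v, X_i' - \overline{\mu}}]^2]} = \sqrt{R}$, and the triangle inequality delivers the stated inequality $|\iprod{v, \hat{\mu} - \overline{\mu}}| \leq \tilde{O}(\eps)\sqrt{V(\overline{\mu}, v)} + \sqrt{R}$.

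The tightest step is bounding the remainder $R$. Apply outlier Cauchy--Schwarz once more to get $R \leq 2\eps\,\pE[\E_i[(1-w_i')\iprod{v, X_i' - \overline{\mu}}^2]]$. The essential trick is the cancellation $w_i'\iprod{v, X_i' - \overline{\mu}}^2 = w_i'\iprod{v, X_i - \overline{\mu}}^2$, which follows from $w_i'(X_i' - \overline{\mu}) = w_i'(X_i - \overline{\mu})$ (by $w_i'(X_i'-X_i)=0$) together with $(w_i')^2 = w_i'$. This lets us write $\E_i[(1-w_i')\iprod{v, X_i' - \overline{\mu}}^2] = \E_i[\iprod{v, X_i' - \overline{\mu}}^2] - \E_i[w_i'\iprod{v, X_i - \overline{\mu}}^2]$, and the previous analyses show this is at most $\pE[V'(\mu', v)] - V(\overline{\mu}, v) + \tilde{O}(\eps)(\pE[V'(\mu',v)] + V(\overline{\mu}, v))$ (invoking the first claim to absorb $\pE[\iprod{v, \mu' - \overline{\mu}}^2]$). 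Multiplying by $2\eps$ yields the claimed bound. The main obstacle is precisely this cancellation: without the negative $-V(\overline{\mu}, v)$ appearing in $R$, the bound would scale with $\pE[V'(\mu',v)] + V(\overline{\mu},v)$, destroying the $\tilde{O}(\eps)$ error guarantee when propagated through the second claim. The rest is careful bookkeeping to ensure every manipulation is a valid low-degree SoS proof within the degree-$6$ budget of $\pE$, which is routine given the Cauchy--Schwarz/Hölder/triangle toolkit of \Cref{sec:sos_prelims}.
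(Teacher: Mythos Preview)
The paper does not prove this lemma; it is quoted verbatim as Lemma~22 of \cite{kothari2021polynomial} and then only \emph{applied} (to derive \Cref{lem:specialized_mean,lem:specialized_cov}). So there is no in-paper proof to compare against. Your sketch follows the standard argument for this kind of result and is essentially correct: the key moves---the identity $\iprod{v,\mu'-\overline{\mu}}=\E_i[(1-w_i')\iprod{v,X_i'-X_i}]$, the SoS H\"older step extracting the $2\eps$ factor, and (crucially) the cancellation $w_i'\iprod{v,X_i'-\overline{\mu}}^2=w_i'\iprod{v,X_i-\overline{\mu}}^2$ that produces the $\pE[V']-V$ term in the bound on $R$---are exactly the right ones.

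One small point to be aware of: in the first claim you invoke \eqref{eq:assumption12} ``with $a=1$'' to control $\E_i[\iprod{v,X_i-\overline{\mu}}^2]$. The lemma as stated only guarantees the resilience inequalities for $a\in T$, and only asserts that the \emph{particular} vector $a_i=\pE[w_i]\mathbf{1}(X_i=Y_i)$ lies in $T$; it never says $1\in T$. In both applications in the paper this is a non-issue (there $V(\overline{\mu},v)=\E_i[\iprod{v,X_i-\overline{\mu}}^2]$ by definition, and $T$ contains all suitably normalized weight vectors including $1$), so your argument goes through, but strictly within the abstract lemma you would need either to add $1\in T$ as a hypothesis or to route the bound on the $X_i$-piece differently. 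This is a cosmetic gap in the lemma statement rather than in your reasoning.
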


We now prove how \Cref{lem:specialized_mean} and \Cref{lem:specialized_cov} follow from \Cref{lem:generic_estimation}.

\SpecializedMean*
\begin{proof}
	This is a corollary of \Cref{lem:generic_estimation} with  $V(\overline{\mu},v):=v^T \E_{i \sim [m]}[(X_i-\overline{\mu})(X_i-\overline{\mu})^T]v$, $V'(\mu',v):=v^T \E_{i \sim [m]}[(X_i'-\mu')(X_i'-\mu')^T]v$, and the set $S$ chosen to be the set of all $k$-sparse unit vectors of $\R^d$. 
	
	We now check that the assumptions of \Cref{lem:generic_estimation} are true. The assumption of \Cref{eq:assumption11,eq:assumption12} holds because of \Cref{it:res2,it:res3} of \Cref{lem:resilience_integrated}. The first three conditions about the pseudo-expectation $\pE$ hold because $\pE$ satisfies the program of \Cref{def:gaxioms} and the last one holds trivially since $ \pE[\E_{i \sim [m]}[\iprod{v,X_i'-\mu'}^2]] = \pE[v^T \Sigma' v] = \pE[V'(\mu',v)]$. Finally, $a'_i \in [0,1]$ since $\pE$ satisfies $w_i^2=w_i$ and $\sum w_i \geq 1-\eps$. 
\end{proof}

\SpecializedCov*
\begin{proof}
	We use  \Cref{lem:generic_estimation} with the following substitutions: $S := \{ (vv^T)^{\flat} \; | \; v \in \cU_k  \}$, that is, let $S$ be the set of all $d^2$-dimensional vectors that are obtained by flattening matrices $vv^T$ for $v$ $k$-sparse unit vectors. We let the differences of pairs $(X_{ij})^\flat:= (\frac{1}{2}(X_i-X_j)(X_i-X_j)^T)^\flat$ for $i,j \in [m]$ play the role of the vectors $X_i,\ldots, X_j$ that appear in the statement of \Cref{lem:generic_estimation} and $\E_{ij}[(X_{ij})^\flat]$ play the role of $\overline{\mu}$. We choose $V(\E_{ij}[(X_{ij})^\flat],(vv^T)^\flat):=(v^T \overline{\Sigma} v)^2= (v^T\E_{ij \sim [m]}[ X_{ij} ]v)^2 = \iprod{(vv^T)^\flat,\E_{ij}[(X_{ij})^\flat]}^2$ (from this re-writing it is seen that this is a degree-2 polynomial in its arguments). Similarly, we let $V'$ be as $V$ but where $X_i$ are replaced with $X_i'$, i.e., the program variables.   Thus, the assumption of \Cref{eq:assumption11,eq:assumption12} now corresponds to \Cref{it:res6,it:res7} of \Cref{lem:resilience_integrated}. For example, to see the correspondence for the case of \Cref{it:res6}, we note that for $u = (vv^T)^\flat \in S$, the LHS in  \Cref{eq:assumption11} becomes
	\begin{align*}
		\E_{ij}[a_{ij}\iprod{u,(X_{ij})^{\flat}-\E_{ij}[(X_{ij})^\flat]}] &= 
		\E_{ij}[a_{ij}\iprod{(vv^T)^{\flat},((1/2)(X_i-X_j)(X_i - X_j)^T)^{\flat} - \E_{ij}[(X_{ij})^\flat]}] \\
		&=\E_{ij}[a_{ij}( v^TX_{ij} v - v^T \overline{\Sigma} v)] \;,
	\end{align*}
	which is equal to the LHS in \Cref{it:res6} of \Cref{lem:resilience_integrated}. 
	
	It remains to show that the rest of the assumptions used in \Cref{lem:generic_estimation} hold.  We use $w_{ij}:=w_i w_j$ in place of the $w_i$'s appearing in \Cref{lem:generic_estimation}. Note that by requiring $\tilde{\E}$ to be degree-12 pseudo-expectation in the variables $X_i',w_i$, it follows that $\tilde{\E}$ is degree-6 in the new variables $X_{ij}',w_{ij}$. We want to check that
	\begin{enumerate}
		\item $\pE$ satisfies $w_{ij}^2 = w_{ij}$ for all $i,j \in [m]$.
		\item $\pE$ satisfies $\sum_{i,j \in [m]} w_{ij} = (1-\eps)^2m^2$.
		\item $\pE$ satisfies $w_{ij} X_{ij}' = w_{ij} Y_{ij}$ for every $i,j \in [m]$.
		\item $\pE[\E_{i,j \sim [m]}[(v^T(X_{ij}'-\Sigma')v)^2]] \leq (2+ \tilde{O}(\eps))\pE[(v^T\Sigma' v)^2]$. %
	\end{enumerate}
	The first three follow immediately from the constraints of the program of \Cref{def:gaxioms}. The last one is verified below.
	\begin{claim}
		Let $\tilde{\E}$ be a degree-$4$
		pseudo-expectation in $X_{ij}',\Sigma'$ (as defined in \Cref{lem:specialized_cov}) satisfying \Cref{def:gaxioms}. Then
		\begin{align*}
			\pE\left[ \E_{i,j \sim [m]}[(v^T(X_{ij}' - \Sigma')v)^2] \right] \leq (2 + \tilde{O}(\eps)) \pE\left[ (v^T \Sigma' v)^2 \right] \;.
		\end{align*}
	\end{claim}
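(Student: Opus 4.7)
The plan is to reduce the claim to an SoS bound on $\pE[\E_i[\langle v, X_i'-\mu'\rangle^4]]$ and then apply the third and fourth axioms of $\gax$ via pseudoexpectation Cauchy--Schwarz.

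First, I would establish the underlying polynomial identity. Writing $a_i := \langle v, X_i'-\mu'\rangle$ and $W := (v^T\Sigma' v)^2$, the constraints $\mu' = \E_i[X_i']$ and $\Sigma' = \E_i[(X_i'-\mu')(X_i'-\mu')^T]$ give $\E_i[a_i] = 0$ and $\E_i[a_i^2] = v^T\Sigma' v$ as polynomial identities. Since $v^T X_{ij}' v = \tfrac12(a_i - a_j)^2$, expanding $(a_i-a_j)^4$ and using the independence of the averaging indices yields the identity
\[
\E_{i,j \sim [m]}[(v^T(X_{ij}'-\Sigma')v)^2] \;=\; \tfrac12\,\E_i[a_i^4] + \tfrac12\, W,
\]
which collapses the claim, after taking $\pE$, to proving $\pE[\E_i[a_i^4]] \leq (3+\tilde O(\eps))\,\pE[W]$.

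Next, I would invoke the third axiom of $\gax$, which is an SoS proof of $(\E_i[a_i^4] - 3W)^2 \leq \tilde O(\eps^2) W^2$. Applying pseudoexpectation Cauchy--Schwarz (\Cref{fact:pseudo-expectation-cauchy-schwarz}) in the form $\pE[X]^2 \leq \pE[X^2]$ to $X = \E_i[a_i^4] - 3W$, I get $|\pE[\E_i[a_i^4]-3W]| \leq \tilde O(\eps)\sqrt{\pE[W^2]}$. To bound $\pE[W^2]$, I would combine the fourth axiom $W \leq 9$ with the fact that $W = (v^T\Sigma' v)^2$ is trivially SoS as a square: since the product of two SoS polynomials is SoS, we get $W(9-W) \geq 0$ as SoS, i.e.\ $W^2 \leq 9W$, and therefore $\pE[W^2] \leq 9\pE[W]$, so $\sqrt{\pE[W^2]} \leq 3\sqrt{\pE[W]}$.

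Combining these gives $\pE[\E_i[a_i^4]] \leq 3\pE[W] + \tilde O(\eps)\sqrt{\pE[W]}$, and the last step I would apply Young's inequality ($2ab \leq a^2/\tau + \tau b^2$) with parameter $\tau = \tilde O(\eps)$ to absorb $\sqrt{\pE[W]}$ into $\pE[W]$; this yields $\pE[\E_i[a_i^4]] \leq (3+\tilde O(\eps))\,\pE[W]$, and plugging back into the polynomial identity gives the claimed bound $(2+\tilde O(\eps))\pE[W]$. The main technical subtlety is this last conversion, since the pseudoexpectation Cauchy--Schwarz step naturally produces a $\sqrt{\pE[W]}$ term rather than $\pE[W]$; however, because $\pE[W]$ is bounded (by $9$) and the coefficient is $\tilde O(\eps)$, the Young's inequality step introduces only a negligible slack that is comfortably absorbed into the $\tilde O(\eps)$ factor already being tracked in the statement.
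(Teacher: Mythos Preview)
Your approach is essentially the paper's: establish the polynomial identity $\E_{i,j}[(v^T(X_{ij}'-\Sigma')v)^2] = \tfrac12\E_i[a_i^4] + \tfrac12 W$, then use axiom~3 plus pseudoexpectation Cauchy--Schwarz to bound $|\pE[\E_i[a_i^4] - 3W]|$ by $\tilde O(\eps)\sqrt{\pE[W^2]}$, and axiom~4 to control $\pE[W^2]$.

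The one slip is in your last step. Young's inequality with $\tau=\tilde O(\eps)$ gives $\tilde O(\eps)\sqrt{\pE[W]} \leq \tilde O(\eps) + \tilde O(\eps)\pE[W]$, leaving an \emph{additive} $\tilde O(\eps)$ term; the upper bound $\pE[W]\leq 9$ that you cite does not absorb it --- for that you need the \emph{lower} bound $\pE[W]\geq \Omega(1)$. The paper handles this more directly: from $W(9-W)\geq 0$ it gets $\pE[W^2]\leq O(1)\,\pE[W]$, and then, invoking explicitly that $\pE[(v^T\Sigma' v)^2]$ is bounded between constants, it chains this to $\pE[W^2]\leq O(1)\,\pE[W]^2$; taking square roots yields $|\pE[\E_i[a_i^4]-3W]|\leq \tilde O(\eps)\,\pE[W]$ immediately, with no Young's step needed. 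So your detour through Young's inequality is unnecessary and, as justified, incomplete --- but the fix is exactly the lower bound the paper already uses.
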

	\begin{proof}
		Since $\pE$ satisfies the system of \Cref{def:gaxioms}, by taking pseudoexpectations on the second to last constraint of the program, we see,
		\[ \pE\Brac{  \left(  \E_{i \sim [m]} \Brac{\iprod{v, X_i'-\mu'}^4}- 3(v^T \Sigma' v)^2   \right)^2 } \leq \tilde{O}(\eps^2)  \pE \Brac{ (v^T \Sigma' v)^4 }. \]
		 Applying Cauchy-Schwarz for pseudoexpectations (\Cref{fact:CS_pseudo_exp}), we get,
		 \[ \left(   \pE\Brac{ \E_{i \sim [m]} \Brac{\iprod{v, X_i'-\mu'}^4}- 3(v^T \Sigma' v)^2 }   \right)^2  \leq \tilde{O}(\eps^2)  \pE \Brac{ (v^T \Sigma' v)^4 } \;. \]
		 We know from \Cref{fact:univariate-interval} that $\{ 0 \leq x \leq 9\} \sststile{}{} \{ 81x^2 - x \geq 0\}$. Letting $x = (v^T \Sigma' v)^2$ we see that $v^T \Sigma' v = \sum_{ij} \iprod{X'_i - X'_j, v}^2 > 0$ and the last constraint of the program implies $x \leq 9$. Taking pseudoexpectations, we see that 
		$\pE[(v^T \Sigma' v)^4] \leq O(1) \cdot \pE[(v^T \Sigma' v)^2] \leq O(1)  \pE[(v^T \Sigma' v)^2]^2$, where the final inequality follows from the fact that $\pE[(v^T \Sigma' v)^2]$ is bounded between constants. Taking square roots of the previous inequality, since all terms involved are powers of two, and hence positive, this implies 
		  \[  \pE\Brac{ \E_{i \sim [m]} \Brac{\iprod{v, X_i'-\mu'}^4}} \leq (3+ \tilde{O}(\eps))  \pE \Brac{ (v^T \Sigma' v)^2}. \]
		Hence, we have that 
		\begin{align} \label{eq:consequence}
			\pE\left[ \E_{i \sim [m]}[\iprod{v,X_i'-\mu'}^4] \right] \leq (3+\tilde{O}(\eps))\pE[(v^T\Sigma'v)^2] \;.
		\end{align}
		We also have the following polynomial equality
		\begin{align}  \label{eq:simple_ineq}
			\gax \sststile{4}{ } \E_{i,j \sim [m]}[(v^T(X'_{ij}-\Sigma')v)^2] = \frac{1}{2} \left( \E_{i \sim [m]}[\iprod{v,X_i'-\mu'}^4] + (v^T\Sigma' v)^2  \right) \;.
		\end{align}
		Taking pseudo-expectations in \Cref{eq:simple_ineq} and combining with \Cref{eq:consequence}, yields that
		\begin{align*}
			\pE\left[ \E_{i,j \sim [m]}[(v^T(X_{ij}' - \Sigma')v)^2] \right]
			&= \frac{1}{2}\left( \pE\left[ \E_{i \sim [m]}[\iprod{v,X_i'-\mu'}^4] \right] + \pE[(v^T\Sigma' v)^2]   \right)\\
			&\leq \frac{1}{2}\left( 4+\tilde{O}(\eps) \right)\pE[(v^T\Sigma' v)^2] \;,
		\end{align*}
		which is the claimed bound.
	\end{proof}
	This completes the proof of \Cref{lem:specialized_cov}.
\end{proof}

\subsection{Omitted Details from \Cref{sec:proof_of_main_thm_right_error}} \label{sec:last_part_of_proof}

We complete the proof of \Cref{thm:gaussian_well_cond} as in \cite{kothari2021polynomial} by using the estimation lemmata proved above.
By \Cref{lem:specialized_cov} we have that 
\[ \abs{\iprod{\widehat \Sigma - \Sigma^*, vv^T} } \leq \tilde O(\eps) v^T \Sigma_0 v + \sqrt{R}, \]
and additionally  
\begin{align*} 
R &:= \pE\left[ \E_{ij \sim [m]} \left[(1-w_{ij}') \cdot v^T (X_{ij}-\Sigma^*) v \right]^2 \right] \\
&\leq O(\eps) \left( \pE \brac{(v^T \Sigma' v)^2} - (v^T \Sigma^* v) \right) +   \tilde O(\eps^2) \left( \pE\paren{(v^T \Sigma' v)^2 } + (v^T \Sigma^* v)^2 \right). 
\end{align*}
We can write $\Sigma' = A + B$ with $B = \E_{ij} \brac{(1-w_{ij}') X_{ij}'}$ and $A = \E_{ij} \brac{w'_{ij} X_{ij}} = \E_{ij} \brac{w'_{ij} X'_{ij} }$.
The latter equality follows by the definitions of the quantities. We will use the notation $M_v := v^T M v$ for any matrix $M$ (in particular, we will use this for $M \in \{A, B, \Sigma^* \}$). We have that 
\begin{align*}
    \pE[A_v^2] &= \pE[(\E_{ij \sim [m]}[w_{ij}' v^T X_{ij}v])^2] = \E_{i_1, j_1} \E_{i_2, j_2} \pE[w'_{i_1 j_1} w'_{i_2 j_2}] \cdot v^T X_{i_1 j_1} v \cdot v^T X_{i_2 j_2}v \\
    &\leq\E_{i_1, j_1} \E_{i_2, j_2} \sqrt{\pE[w'_{i_1 j_1}] \pE[w'_{i_2 j_2}]} \cdot v^T X_{i_1 j_1} v \cdot v^T X_{i_2 j_2}v\\
    &=  \left(\E_{i, j \sim [m]}\left[\sqrt{\pE[w'_{ij}]} v^T X_{ij} v \right]\right)^2 \leq (1 + \tilde O(\eps)) \Sigma_v^2 \;,
\end{align*}
where the final inequality follows from the deterministic condition (\Cref{lem:resilience_integrated}) with $a_{ij} = \sqrt{\pE[w'_{ij}]}$ (note that $(a_{ij})_{i, j}$ satisfy the required properties of that lemma with $a'_i = \pE[w_i] \mathbf{1}(X_i=Y_i)$ for all $i$). We can now rewrite upper bound $R$ in terms of $A, B, \Sigma$. 
\begin{align*} 
R &= \pE[(B_v - \E_{ij}[1-w'_{ij}] \cdot \Sigma_v)^2] \\
&\leq O(\eps) \paren{\pE[(A_v + B_v)^2] - \Sigma_v^2 } + \tilde O(\eps^2) \paren{\pE[(A_v + B_v)^2] + \Sigma_v^2}.   
\end{align*} 
By expanding the square, we can also lower bound $R$ as follows:
\begin{align*} 
\pE[(B_v - \E_{ij}[1-w'_{ij}] \cdot \Sigma_v)^2] \geq \pE[B_v^2] - 4\eps \Sigma_v \pE[B_v],
\end{align*} 
as $\Sigma_v \geq 0$ and $\pE$ satisfies $B_v \geq 0$. As $\pE[A_v^2] \leq (1+\tilde{O}(\eps)) \Sigma_v^2$ and $\pE[A_v B_v] \leq \sqrt{\pE[A_v^2] \pE[B_v^2]}$ by pseudoexpectation Cauchy-Schwartz (\Cref{fact:CS_pseudo_exp}), 
\begin{align*}
    \pE[(A_v + B_v)^2] &\leq \pE[B_v^2] + \sqrt{\pE[A_v^2] \pE[B_v^2]} + (1+\tilde{O}(\eps))\Sigma_v^2 \\
    &\leq \pE[B_v^2] + 2 \Sigma_v \sqrt{\pE[B_v^2]} + (1+\tilde{O}(\eps))\Sigma_v^2.
\end{align*}
Thus we can combine upper and lower bounds on $R$ to get, 
\[ \pE[B_v^2] - 4\eps \Sigma_v \pE[B_v] \leq O(\eps) \left(\pE[B_v^2] + 2 \Sigma_v \sqrt{\pE[B_v^2]} + \tilde{O}(\eps)\Sigma_v^2\right) + \tilde{O}(\eps^2) \Sigma_v^2. \]
Rearranging, applying $\pE[B_v] \leq \sqrt{ \pE[B_v^2]}$ and solving for $\pE[B_v^2]$ yields $\pE[B_v^2] \leq \tilde{O}(\eps^2) \Sigma_v^2$. This implies an upper bound on $R$ of $\tilde O(\eps^2) \Sigma_v^2$. This, in turn implies 
\begin{align*}
    \abs{v^T(\widehat \Sigma - \Sigma^*) v} \leq \tilde O(\eps) v^T \Sigma^* v + \sqrt{R} = \tilde O(\eps) v^T \Sigma^* v.
\end{align*}
This is the desired guarantee with $\Sigma^*$ instead of $\Sigma$. Using property 4  in \Cref{lem:resilience_integrated} and the triangle inequality, we get the desired spectral norm guarantee in terms of $\Sigma$. This finishes the proof (we have already shown in the main body that $\widehat \mu$ satisfies this property given that $\widehat \Sigma$ does).

\section{Omitted Details from \Cref{sec:sq-lowerbd}} \label{sec:SQappendix}

We start by providing additional background of \Cref{sec:sq-lowerbd}. In \Cref{sec:sos-cert-hard-instance}, we show that the hard instance in \Cref{thm:bounded_moment_sparse} has SoS-certifiable bounded moments. Finally, we present the lower bounds against low-degree polynomial tests in \Cref{sec:low_degree}.

\subsection{Omitted Background}

We provide the proof of \Cref{cor:genericSQbound} below.

\begin{proof}[Proof of \Cref{cor:genericSQbound}]
	\Cref{prob:generic_hypothesis_testing} is a decision problem in the sense of \Cref{def:decision}, where $D=\cN(0,I_d)$ and $\cD = \{P_{A,v} \}_{v \in \cU_k}$, where $\cU_k$ is the set of all $k$-sparse unit vectors. We calculate the SQ-dimension of $\cB(\D,D)$ as follows: Let $\cD_D$ be the subset of $\cD$ defined as $\cD_D=\{P_{A,v} \}_{v \in S}$, for $S$ being the set from the following fact:
	
	\begin{fact}[Lemma 6.7 in \cite{DKS17-sq}] \label{fact:vectors} 
		Fix a constant $0<c<1$ and let $k\leq \sqrt{d}$. There exists a set $S$ of $k$-sparse unit vectors on $\R^d$ of cardinality $|S|= \floor{d^{ck^{c}/8}}$ such that for each pair of distinct vectors $v,v' \in S$ we have that $|v^T v'| \leq 2 k^{c-1}$.
	\end{fact}
	
	Using Lemma 3.4 from \cite{DKS17-sq}, for $v,v' \in S$ we have that
	\begin{align*}
		\chi_{D}(P_{A,v},P_{A,v'}) &\leq |v^Tv'|^{m+1} \chi^2(A,\cN(0,1)) = (2 k^{c-1})^{m+1} \chi^2(A,\cN(0,1))\;,
	\end{align*}
	therefore the statistical dimension is $\mathrm{SD}(\cB,\gamma,\beta) = \Omega(d^{ck^c/8})$ for $\gamma=2^{m+1} k^{(c-1)(m+1)} \chi^2(A,\cN(0,1))$ and $\beta= \chi^2(A,\cN(0,1))$. An application of \Cref{lem:sq-from-pairwise} with $\gamma' = \gamma$ yields that any SQ algorithm, either makes at least one query of tolerance at most $2^{(m/2+1)}k^{-(m+1)(1/2-c/2)}\sqrt{\chi^2(A,\cN(0,1))}$ or makes at least the following number of queries:
	\begin{align*}
	    \Omega(d^{ck^c/8}) \frac{2^{m+1} k^{(c-1)(m+1)} \chi^2(A,\cN(0,1))}{\chi^2(A,\cN(0,1))-2^{m+1} k^{(c-1)(m+1)} \chi^2(A,\cN(0,1))} &\geq \Omega(d^{ck^c/8}) 2^{m+1} k^{(c-1)(m+1)} \\
	    &\geq \Omega\left(d^{ck^{c}/8} k^{-(m+1)(1-c)}\right) \;.
	\end{align*}
	This completes the proof.
\end{proof}

\subsection{SoS Certifiability of Hard Instances}\label{sec:sos-cert-hard-instance}
In this section, we will show that the hard instances in our proof have SoS certifiable bounded $t$-th moments. 

\begin{claim}\label{cl:SoS_certifiability}
Fix a $t\in \N$ such that $t$ is a power of $2$.
Denote by $G(x)$ the pdf of $\cN(0,1)$. Let  $Q_1(x), Q_2(x)$ be the distributions from the proof of \Cref{lem:ABddMoment}, and define $Q:=(1-\eps)Q_1+ \eps Q_2$ where $\delta = 1/(2000t) \eps^{1-1/t}  $, $\delta' = -(1 -\eps) \delta/\eps$, and $|\delta'| \geq 1$.
Recall that the first $t$ moments of $Q$ has moments match with $\cN(0, 1)$.
Let $P_1,P_2,P$ the distributions that have $Q_1$, $Q_2$, and $Q$, respectively, in the $u$ direction and are standard Gaussian in all perpendicular directions. 
Let $\cA := \{\sum_i v_i^2 = 1\}$, and define $\mu := \E_{X \sim P_1} [X]$.
Then $P_1$ has SoS certifiably bounded moments, i.e., there exists a constant $C>0$ such that 
\begin{align*}
    \cA \sststile{O(t)}{} \E_{X \sim P_1}[\langle v, X - \mu \rangle^t ]^2 \leq (C t)^t  \;.
\end{align*}

\end{claim}
\begin{proof}
We will use the following claim that depends on the SoS triangle inequality (\Cref{fact:sos-triangle}).
\begin{claim}
\label{cl:FromCenteredToRaw}
Let $P$ be a distribution over $\R^d$ with mean $\mu$ and define $p_i(v) = \E[\langle v, X \rangle^i]$ and $p'_i(v) = \E[\langle v, X - \mu \rangle^i]$. Let $\cA = \{\sum_i v_i^2 = 1\}$.
There exists a $C > 0$ such that the following holds: Let $t$ be a power of $2$.
If $\cA \sststile{O(t)}{}  p'_i(v)^2 \leq R$ for some $R \geq 0$ and all $i \in [t]$, then
 $\cA \sststile{O(t)}{}  p_t(v)^2 \leq C^t R \max(1, \|\mu\|_2^{2t})$.
 Similarly, if $\cA \sststile{O(t)}{}  p_i(v)^2 \leq R$ for some $R \geq 0$ and all $i \in [t]$, then
 $\cA \sststile{O(t)}{}  p'_t(v)^2 \leq C^t R \max(1, \|\mu\|_2^{2t})$.
\end{claim}
\begin{proof}
For $i \in \{0,1,\dots,t\}$,  let $C_i = \binom{t}{i}$.
We have the following polynomial equalities:
\begin{align*}
    p_t'(v) &= \sum_{i=0}^t (-1)^iC_i p_i(v) \langle \mu, v\rangle^{t-i}, \\ 
    p_t(v) &= \sum_{i=0}^t C_i p_i'(v) \langle \mu, v\rangle^{t-i}.
\end{align*}
Applying SoS triangle inequality (\Cref{fact:sos-triangle}), using the fact that $\cA \sststile{2i}{v} \langle v , \mu \rangle ^{2i} \leq \|\mu\|_2^{2i} $, and $C_i \leq 2^t$, we get the desired claim.
\end{proof}

Note that the mean of  $P_1$ is $u\delta$ and $\|u \delta\|_2 \leq 1$, and thus  \Cref{cl:FromCenteredToRaw} implies that it suffices to show that $\cA \sststile{O(t)}{}  \E_{X \sim P_1}[\langle v,X \rangle^i]^2 \leq (O(t))^t$ for all $i \in [t]$.

	 Note that $P_2$ is $\cN(u \delta' , I)$ and $P$ matches the moments of $\cN(0, I)$ up to degree $t$ in every direction.
	 Thus \Cref{fact:gaussian-moments} implies the following: 
	 for all $i \in [t]$
	\begin{align}
	    \cA &\sststile{O(t)}{}  (Ct)^{t} - ( \E_{X \sim P}[\iprod{v,X}^i] )^2 \geq 0, \label{eq:certbddmomentssq'} \\
\cA &\sststile{O(t)}{}  (C't)^{t} -  (\E_{X \sim P_2}[(\iprod{v,X} - \delta' \iprod{u, v})^i] )^2 \geq 0. \label{eq:certbddmomentssq}
\end{align} 		   
Suppose for now that $P_2$ satisfies the following, which we will establish shortly: there exists a constant $C''$ such that for all $i \in [t]$,
\begin{align}
\label{eq:certbddmomentssq2}
    \cA \sststile{O(t)}{}  \eps ^2 \E_{X \sim P_2}[\iprod{v,X}^i]^2 \leq (C''t)^t.
\end{align}

To show $\cA \sststile{O(t)}{}  \E_{X \sim P_1}[\langle v,X \rangle^i]^2 \leq (O(t))^t$, we proceed as follows: 
\begin{align*}
\cA \sststile{O(t)}{}  \E_{X \sim P_1}[\langle v,X \rangle^i]^2
		& = (1/(1-\eps))^2 ( \E_{X \sim P}[\iprod{v,X}^i]  - \eps  \E_{X \sim P_2}[\iprod{v,X}^i] )^2\\
		&\leq 2/(1-\eps)^2 \Paren{\E_{X \sim P}[\iprod{v,X}^i]^2 + \eps ^2 \E_{X \sim P_2}[\iprod{v,X}^i]^2}\\
		&\leq (O(t))^t,
\end{align*}
where the first inequality uses SoS triangle inequality (\Cref{fact:sos-triangle}) and the second inequality uses \Cref{eq:certbddmomentssq'} for the first term and \Cref{eq:certbddmomentssq2} for the second term.
Thus it only remains to show that \Cref{eq:certbddmomentssq2} holds to complete the proof.

Note that the mean of $P_2$ has norm $\delta'$ and $|\delta'| \geq 1$.
\Cref{cl:FromCenteredToRaw} and \Cref{eq:certbddmomentssq} imply the following: 
\begin{align*}
\cA \sststile{O(t)}{}   \eps^2    \E_{X \sim P_2}[\iprod{v,X}^i]^2 \leq  (CC')^t \eps^2|\delta' |^{2t}.
\end{align*}
By definition, $\eps^2 |\delta' |^{2t} \leq \eps^2 (\delta/\eps)^{2t} \leq \eps^2 (\eps^{-1/t})^{2t} = 1 $.  
This completes the proof.

\end{proof}

\subsection{Implications for Low-Degree Polynomial Algorithms}\label{sec:low_degree}
We can get quantitatively similar lower bounds in the low-degree model of computation using its connection with the SQ model \cite{BBHLS20}. The result of this section, roughly speaking, is that any polynomial algorithm for sparse non-Gaussian component analysis where the non-Gaussian component matches $m$ moments with $\cN(0,1)$, either uses more than $k^{m+1}$ samples or has degree more than $k^{\Omega(1)}$ (which in the worst case requires $d^{k^{\Omega(1)}}$ monomial terms that need to be computed). Plugging $m=3$ yields an analog of \Cref{thm:k4lower_bound} and letting $m$ be equal to the number of bounded moments of the inliers' distribution, i.e., $t$, gives an analog of \Cref{thm:bounded_moment_sparse}.

\cite{BBHLS20} uses a slightly different version of hypothesis testing problems, where in the alternative hypothesis, the ground truth is chosen according to a probability measure.

\begin{problem}[Non-Gaussian Component Hypothesis Testing with Uniform Prior] \label{def:nongaussian_new}
	Let a distribution $A$ on $\R$. For a unit vector $v$, we denote by $P_{A,v}$ the distribution with density $P_{A,v}(x) := A(v^Tx) \phi_{\perp v}(x)$, where $\phi_{\perp v}(x) = \exp\left(-\|x - (v^Tx)v\|_2^2/2\right)/(2\pi)^{(d-1)/2}$, 
i.e., the distribution that coincides with $A$ on the direction $v$ and is standard Gaussian in every orthogonal direction. Let $S$ be the set of nearly orthogonal vectors from \Cref{fact:vectors}. Let $\cS = \{ P_{A,v} \}_{u \in S}$. We define the simple hypothesis testing problem where the null hypothesis is $\mathcal{N}(0,I_d)$ and the alternative hypothesis is $P_{A,v}$ for some $v$ uniformly selected from $S$.
\end{problem}

 We now describe the model in more detail. We will consider tests that are thresholded polynomials of low-degree, i.e., output $H_1$ if the value of the polynomial exceeds a threshold and $H_0$ otherwise. We need the following notation and definitions.
For a distribution $D$ over $\cX$, we use $D^{\otimes n}$ to denote the joint distribution of $n$ i.i.d.\ samples from $D$.
For two functions $f:\cX \to \R$, $g: \cX \to R$ and a distribution $D$, we use $\langle f, g\rangle_{D}$ to denote the inner product $\E_{X \sim D}[f(X)g(X)]$.
We use $ \|f\|_{D}$ to denote $\sqrt{\langle f, f \rangle_{D} }$.
We say that a polynomial $f(x_1,\dots,x_n):\R^{n \times d} \to \R$ has sample-wise degree $(r,\ell )$ if each monomial uses at most $\ell$ different samples from $x_1,\dots,x_n$ and uses degree at most $r$ for each of them.
Let $\cC_{r,\ell}$ be linear space of all polynomials of sample-wise degree $(r,\ell)$ with respect to the inner product defined above.
For a function $f:\R^{n \times d} \to \R$, we use $f^{\leq r, \ell}$ to be the orthogonal projection onto $\cC_{r,\ell}$   with respect to the inner product $\langle \cdot , \cdot \rangle_{D_0^{\otimes n}}$.  Finally, for the null distribution $D_0$ and a distribution $P$, define the likelihood ratio $\overline{P}^{\otimes n}(x) := {P^{\otimes n}(x)}/{D_0^{\otimes n}(x)}$.

\begin{definition}[$n$-sample $\tau$-distinguisher]
For the hypothesis testing problem between two distributions $D_0$ (null distribution) and $D_1$ (alternate distribution) over $\cX$,
we say that a function $p : \cX^n \to \R$ is an $n$-sample $\tau$-distinguisher if $|\E_{X \sim  D_0^{\otimes n}}[p(X)] - \E_{X \sim D_1^{\otimes n}}[p(X)]| \geq \tau \sqrt{\Var_{X \sim D_0^{\otimes n}} [p(X)] }$. We call $\tau$ the \emph{advantage} of the polynomial $p$. 
\end{definition}
Note that if a function $p$ has advantage $\tau$, then the Chebyshev's inequality implies that one can furnish a test $p':\cX^n \to \{D_0,D_1\}$ by thresholding $p$ such that the probability of error under the null distribution is at most $O(1/\tau^2)$. 
We will think of the advantage $\tau$ as the proxy for the inverse of the probability of error (see  Theorem 4.3 in \cite{kunisky2019notes}  for a formalization of this intuition under certain assumptions) and we will show that the advantage of all polynomials up to a certain degree is $O(1)$. 
It can be shown that for hypothesis testing problems of the form of   \Cref{def:nongaussian_new}, 
 the best possible advantage among all polynomials in $\cC_{r,\ell}$ is captured by the low-degree likelihood ratio (see, e.g., ~\cite{BBHLS20,kunisky2019notes}):
\begin{align*}
    \left\| \E_{v \sim \cU(S)}\left[ \left( \overline{P}_{A,v}^{\otimes n}  \right)^{\leq r, \ell } \right]  - 1  \right\|_{D_0^{\otimes n}},
\end{align*}
where in our case $D_0 = \cN(0,I_d)$.

To show that the low-degree likelihood ratio is small, we use the result from \cite{BBHLS20} stating that a lower bound for the SQ dimension translates to an upper bound for the low-degree likelihood ratio. Therefore, given that we have already established in  previous section that $\mathrm{SD}(\cB(\{P_{A,v} \}_{v \in S},\cN(0,I_d)), \gamma,\beta)=\Omega(d^{ck^c/8})$ for $\gamma=2^{m+1} k^{(c-1)(m+1)} \chi^2(A,\cN(0,1))$ and $\beta= \chi^2(A,\cN(0,1))$, we obtain the corollary:
    \begin{theorem}\label{thm:hypothesis-testing-hardness}
        Let  $0<c<1$. Consider the hypothesis testing problem of Problem~\ref{def:nongaussian_new} where $A$ matches $m$ moments with $\cN(0,1)$. For any $d,k,m \in \Z_+$ such that $k \leq \sqrt{d}$ and $ck^c \geq \Omega(m \log k)$,  any $n\leq k^{(1-c)(m+1)}/(2^{m+1}\chi^2(A,\cN(0,I_d))$  and any even integer $\ell \leq (c k^c \log d)/(32 m \log k)$, we have that 
        \begin{align*}
            \left\| \E_{u \sim S} \left[(\bar{P}_{A,u}^{\otimes n})^{\leq \infty,\Omega(\ell)}\right] - 1 \right\|_{\cN(0,I_d)^{ \otimes n }}^2 \leq 1\;.
        \end{align*}
    \end{theorem}

The interpretation of this result is that unless the number of samples used $n$ is greater than $k^{(1-c)(m+1)}/(2^{m+1}\chi^2(A,\cN(0,I_d))$, any polynomial of degree roughly up to $k^c\log d$  fails to be a good test (note that any polynomial of degree $\ell$ has sample-wise degree at most $(\ell,\ell)$).
Using the lower bounds for the SQ dimension, we also obtain lower bounds for the low-degree polynomial tests for problems in \Cref{thm:bounded_moment_sparse,thm:k4lower_bound}  with qualitatively similar guarantees.

Finally, the connection to the estimation problem is again done via the reduction of \Cref{lem:reduction}, which also works in the low-degree model family of algorithms. 

\begin{remark}
[Reduction within low-degree polynomial class] \label{lem:reductionn}
Let $\cA$ be a low-degree polynomial algorithm for \Cref{prob:search_problem} with degree $\ell$. Then the  reduction in \Cref{lem:reduction} gives us an algorithm $\cA'$ for \Cref{prob:hypothesis_testing}  which can be implemented as a polynomial test of degree $2\ell$.
\end{remark}

 \section{Information-Theoretic Error and Sample Complexity}
\label{app:info-th-sample-comp}

\begin{theorem}[Sample Complexity of Robust Sparse Mean Estimation with Bounded Moments] \label{thm:sample-comp-bdd-mmnts}
Let $C$ be a sufficiently large constant and $c$ a sufficiently small positive constant. There is a (computationally inefficient) algorithm that, given any $\eps<c$ and an $\eps$-corrupted set of size $n > C k \log (d/k)/\eps^{2 - 2/t}$ from any distribution with $k$-sparse mean and $t$-th moments bounded by $M$, finds a $\widehat{\mu}$, such that $\|\widehat{\mu} - \E_{X \sim D}[X] \|_2 = O(M^{1/t} \eps^{1-1/t})$, with probability at least $0.9$.
\end{theorem}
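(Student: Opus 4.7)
The plan is to follow the standard \emph{resilience + uniform concentration = stability} blueprint from the robust statistics literature, specialized to $k$-sparse directions. The estimator is exponential-time (Tukey-median-style) but information-theoretic.

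First, I would introduce a notion of $(\eps,\delta)$-stability in $k$-sparse directions: a finite set $S \subseteq \R^d$ is stable around $\widehat{\mu}$ if, for every subset $T \subseteq S$ with $|T| \geq (1-\eps)|S|$ and every $k$-sparse unit vector $v$, the deviation $|\langle v, \mu_T - \widehat{\mu}\rangle|$ is at most $\delta$. The classical resilience argument via H\"older's inequality shows that if the empirical distribution on $S$ has $t$-th central moments bounded by $O(M)$ in every $k$-sparse direction, then $S$ is automatically $(\eps,\;O(M^{1/t}\eps^{1-1/t}))$-stable; indeed, writing $W = S \setminus T$ with $|W|\leq \eps |S|$, one has $|\langle v,\mu_T-\mu_S\rangle|\leq \tfrac{1}{|T|}\sum_{i\in W}|\langle v, X_i-\mu_S\rangle| \leq \tfrac{|W|^{1-1/t}|S|^{1/t}M^{1/t}}{|T|}=O(M^{1/t}\eps^{1-1/t})$.

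Second, I would prove that $n=\Omega\bigl(k\log(d/k)/\eps^{2-2/t}\bigr)$ i.i.d.\ samples from any distribution $D$ with $k$-sparse mean $\mu$ and $t$-th moments bounded by $M$ yield, with probability at least $0.95$, an empirical distribution whose $t$-th central moments along every $k$-sparse unit direction are bounded by $O(M)$, and whose empirical mean satisfies $\|\overline{\mu}-\mu\|_{2,k}=O(M^{1/t}\eps^{1-1/t})$. This is done by a union bound over a $\delta$-net of $k$-sparse unit vectors. There are $\binom{d}{k}$ choices of support and $(3/\delta)^k$ net points on each, so the log-cover size is $O(k\log(d/k))$. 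For a fixed $k$-sparse direction $v$, one controls $\frac{1}{n}\sum_i \langle v,X_i-\mu\rangle$ by truncating at level $\tau\asymp M^{1/t}\eps^{-1/t}$: the truncated sum concentrates at rate $\sqrt{k\log(d/k)/n}\cdot \tau$ by Bernstein (since variance is $\leq M^{2/t}\eps^{-2/t+2}$ after truncation), while the untruncated contribution is bounded by $M^{1/t}\eps^{1-1/t}$ via Markov on the $t$-th moment; balancing gives the stated sample complexity.

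Third, I would describe the estimator. Given the $\eps$-corrupted set $Y_1,\ldots,Y_n$, search (inefficiently) over all subsets $T$ of size $(1-2\eps)n$ and output the sample mean of any $T$ whose empirical $t$-th central moments in every $k$-sparse direction are bounded by $O(M)$. Existence of such a $T$ is guaranteed by Step 2 applied to the uncorrupted portion. Any admissible $T$ intersects the inlier set in at least $(1-3\eps)n$ points, so by the stability of the inlier set (Step 2 + resilience) its mean lies within $O(M^{1/t}\eps^{1-1/t})$ of $\overline{\mu}$, and by a second application of resilience to $T$ itself (whose moments are bounded) the inlier-subset mean of $T$ is within $O(M^{1/t}\eps^{1-1/t})$ of the full mean of $T$. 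Triangle inequality and Step 2's bound on $\|\overline{\mu}-\mu\|_{2,k}$ finish the argument, and \Cref{fact:sparseTruncation} converts the $(2,k)$-norm guarantee on a $k$-sparse $\mu$ back to the $\ell_2$-norm.

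The main obstacle is the uniform empirical moment concentration in Step 2: with only $\tilde O(\eps^{-2+2/t})$ samples we cannot hope to concentrate the raw $t$-th moment itself, so one must work at the level of the resilience-style quantity $\sup_{|W|\leq \eps n}\tfrac{1}{n}\sum_{i\in W}|\langle v,X_i-\mu\rangle|$, using the truncation-plus-Bernstein trick above and carefully combining it with the covering-number bound $O(k\log(d/k))$ for $k$-sparse unit vectors. Everything else is bookkeeping along the now-standard stability-based analysis.
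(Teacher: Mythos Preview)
Your resilience-based blueprint has a genuine gap that you yourself flag in the final paragraph but do not resolve. The estimator in Step~3 outputs the mean of a subset $T$ whose empirical $t$-th moments in every $k$-sparse direction are $O(M)$, and the correctness argument applies resilience both to the inlier set and to $T$ using this moment bound. But as you observe, with only $n \asymp k\log(d/k)/\eps^{2-2/t}$ samples and only $t$ population moments, you cannot concentrate the empirical $t$-th moment at all: there is no variance bound on $|\langle v, X_i-\mu\rangle|^t$, so Bernstein is unavailable, and Markov alone cannot survive a union bound over $\exp(\Theta(k\log(d/k)))$ directions. Hence neither the feasibility of your estimator (why does such a $T$ exist?) nor the second application of resilience (why does $T$ itself have bounded moments?) is justified. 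Replacing ``bounded empirical moments'' by ``empirical resilience'' in the definition of the search is the natural repair, but then you need a direct proof that the inlier set is uniformly resilient in all $k$-sparse directions, i.e.\ concentration of $\sup_{|W|\leq\eps n}\tfrac{1}{n}\sum_{i\in W}|\langle v,X_i-\mu\rangle|$ with exponentially small failure probability in a fixed direction; you gesture at this but do not carry it out.

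The paper takes a different and simpler route that sidesteps empirical moment concentration entirely. It invokes the one-dimensional $\tau$-trimmed mean of Lugosi--Mendelson, which for a single direction $v$ achieves $|f_\tau(S_v)-v^T\mu| \leq O\bigl(M^{1/t}(\eps^{1-1/t}+\sqrt{\log(1/\gamma')/n})\bigr)$ with probability $1-\gamma'$, using only the \emph{population} moment bound. One then takes a $1/2$-net $\cC$ of $\cU_k$ of size $\binom{d}{k}5^k$, union-bounds over $\cC$ with $\gamma'=\gamma/|\cC|$, outputs any $\widehat{\mu}'$ consistent (to within the error) with all directional estimates $\{f_\tau(S_v)\}_{v\in\cC}$, extends from the net to all of $\cU_k$ by the standard doubling argument, and finally applies \Cref{fact:sparseTruncation}. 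The black-box 1D estimator absorbs exactly the truncation-plus-Bernstein mechanics you were heading toward, but packaged so that no stability condition on the sample itself ever has to be verified.
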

\begin{proof}
Let $S$ be the given (corrupted) data set of cardinality $n$ and $\mu = \E_{X \sim D}[X]$.
For a unit vector $v$, let $S_v$ be the projection of the points along $v$, that is $S_v = \{v^Tx:x \in S\}$.
Note that we have assumed that in any $k$-sparse direction $v$, the $t$-th moment of inliers is at most $M$.

Let $\cC$ be a $1/2$-net of the unit-norm $k$-sparse vectors (which we denote by $\cU_k$). The cardinality of $\cC$  is bounded by ${d \choose k}5^k$ since there are at most ${d \choose k}$ ways to select the non-zero coordinates and a $(1/2)$-net of $\R^k$ has size at most $5^k$.

For $\tau<1$, let $f_{\tau}$ be the real-valued function on univariate sets that computes the $\tau$-trimmed mean of the given data set as in \cite{LugosiM19robust}.
From that, it is implied that for any unit vector $v$ and $\tau = \Theta(\eps + \log(1/\gamma')/n)$ (where the parameters are such that $\tau<1$), with probability $1 - \gamma'$ we have that 
\begin{align*}
   |f_\tau(S_v) - \mu^Tv| =  O(M^{1/t}(\eps^{1-1/t} + \sqrt{\log(1/\gamma)/n} ) ) \;.
\end{align*}

Setting $\gamma' = \gamma/|\cC|$ and  $n \geq C \log(1/\gamma')/\eps^{2 - 2/t}$ and using a union bound, we have that with probability $1- \gamma$,
for each $v \in \cC$,
$|f_{\tau}(S_v) - \mu^Tv| \leq \delta $, where $\delta =O(M^{1/t}\eps^{1-1/t})$. By noting that $\log(1/\gamma') = O(k \log(d/k) + \log(1/\gamma))$ it is sufficient to have $n > C (k \log (d/k))/\eps^{2 - 2/t}$ for the above to hold for a constant failure probability $\gamma$.
We denote this event by $\cE$.
We will assume that $\cE$ holds for the remainder of the proof. 
For each $v \in \cC$, define $\widehat{\mu}_v := f_{\tau}(S_v)$ and let  the estimate $\widehat{\mu}'$ to be any point with the property $|v^T \widehat{\mu}' - \widehat{\mu}_v| \leq \delta $ for all $v \in \cC$ (such a point always exists under $\cE$, since the true mean satisfies that property). For that $\widehat{\mu}'$, we have that
\begin{align*}
    |v^T(\widehat{\mu}' - \mu)| \leq |v^T \widehat{\mu}' - \widehat{\mu}_v | + |\widehat{\mu}_v - v^T \mu | 
    \leq 2\delta \;,
\end{align*}
for every $v \in \cC$. We claim that $|v^T(\widehat{\mu}' - \mu)| \leq 4\delta$ for all $v \in \cU_k$. To see this, let $v_0:= \arg\max_{v \in \cU_k} |v^T(\widehat{\mu}' - \mu)|$  and $w := \arg\min_{x \in \cC} \|v_0-x\|_2$. We have that
\begin{align*}
|v_0^T(\widehat{\mu}' - \mu)|  \leq  |w^T(\widehat{\mu}' - \mu)| + |(w-v)^T(\widehat{\mu}' - \mu)|
\leq |w^T(\widehat{\mu}' - \mu)| + \frac{1}{2}|v_0^T(\widehat{\mu}' - \mu)| \;.
\end{align*}
Solving for $|v_0^T(\widehat{\mu}' - \mu)|$ shows the claim. Let the final estimate be $\widehat{\mu} = h_k(\widehat{\mu}')$, where $h_k$ is the operator that truncates a vector to its largest $k$ coordinates. Applying \Cref{fact:sparseTruncation}, we get that $\|\widehat{\mu} - \mu\|_2 = O(\delta)$.
\end{proof}

\begin{theorem}[Sample Complexity of Robust Sparse Mean Estimation of Gaussian]
Let $C$ be a sufficiently large constant and $c$ a sufficiently small positive constant. There is a (computationally inefficient) algorithm that, given any $\eps<c$ and any $\eps$-corrupted set of size $n > C k \log (d/k)/\eps^{2}$ from $\cN(\mu,\Sigma)$ with $k$-sparse $\mu$, finds a $\widehat{\mu}$, such that $\|\widehat{\mu} - \mu \|_2 = O(\eps \sqrt{\|\Sigma\|_2} )$, with probability at least $0.9$.
\end{theorem}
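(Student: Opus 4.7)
The plan is to mimic the proof of \Cref{thm:sample-comp-bdd-mmnts}, but replace the bounded-moment one-dimensional robust mean estimator with one that exploits the Gaussian tails in each direction, so that the error in any $k$-sparse direction $v$ becomes $O(\eps \sqrt{v^T \Sigma v})$ rather than $O((v^T\Sigma v)^{1/2} \eps^{1-1/t})$. Let $S$ be the corrupted set and, for a unit vector $v$, let $S_v := \{v^T x : x \in S\}$ be the projections. For the inliers, $S_v$ is an i.i.d.\ sample from $\cN(v^T \mu, v^T \Sigma v)$, and $v^T \Sigma v \leq \|\Sigma\|_2$ for every unit $v$. I would apply, for each fixed $v$, the Lugosi--Mendelson trimmed-mean estimator $f_\tau$ with $\tau = \Theta(\eps + \log(1/\gamma')/n)$, which for a Gaussian one-dimensional distribution of variance $\sigma^2$ gives
\[ |f_\tau(S_v) - v^T \mu| \,=\, O\!\bigl(\sqrt{v^T \Sigma v}\,(\eps + \sqrt{\log(1/\gamma')/n})\bigr) \]
with probability $1-\gamma'$.

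Next, I would take a $(1/2)$-net $\cC$ of the set $\cU_k$ of $k$-sparse unit vectors. Since one can choose the $k$ coordinates in at most $\binom{d}{k}$ ways and cover the $k$-dimensional unit sphere by at most $5^k$ points, $|\cC| \leq \binom{d}{k} 5^k$, so $\log|\cC| = O(k \log(d/k))$. Setting $\gamma' = \gamma/|\cC|$ and taking $n \geq C k \log(d/k)/\eps^2$ for a large enough constant $C$, a union bound over $\cC$ gives, simultaneously for all $v \in \cC$, that $|f_\tau(S_v) - v^T \mu| \leq \delta$ with $\delta = O(\eps \sqrt{\|\Sigma\|_2})$; call this event $\cE$ and condition on it.

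From this point the recovery procedure is identical to that of \Cref{thm:sample-comp-bdd-mmnts}: define $\widehat{\mu}_v := f_\tau(S_v)$ for $v \in \cC$, let $\widehat{\mu}'$ be any point in $\R^d$ satisfying $|v^T \widehat{\mu}' - \widehat{\mu}_v| \leq \delta$ for all $v \in \cC$ (the true mean is a witness, so such a point exists under $\cE$), and then output $\widehat{\mu} := h_k(\widehat{\mu}')$, the truncation of $\widehat{\mu}'$ to its top $k$ coordinates. The standard net argument (comparing any $v_0 \in \cU_k$ to its closest point $w \in \cC$ and absorbing the $(1/2)$-Lipschitz slack into the left-hand side) gives $\max_{v \in \cU_k} |v^T (\widehat{\mu}' - \mu)| \leq 4\delta$, i.e.\ $\|\widehat{\mu}' - \mu\|_{2,k} = O(\eps \sqrt{\|\Sigma\|_2})$, and then \Cref{fact:sparseTruncation} yields $\|\widehat{\mu} - \mu\|_2 \leq 3\|\widehat{\mu}' - \mu\|_{2,k} = O(\eps \sqrt{\|\Sigma\|_2})$.

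The only step that requires genuine care is the one-dimensional guarantee used for each $v$: we need an estimator whose error, applied to an $\eps$-corruption of a one-dimensional Gaussian of variance $\sigma^2$, is $O(\sigma\eps + \sigma \sqrt{\log(1/\gamma')/n})$ with probability $1-\gamma'$. The Lugosi--Mendelson trimmed mean supplies this (its guarantee actually holds for any distribution with finite variance, giving the sharper $\sigma\eps$ rate rather than the $\sigma M^{1/t}\eps^{1-1/t}$ rate used in the bounded-moment setting), so the entire proof goes through verbatim with this single substitution. Once the $\sqrt{\log(1/\gamma')/n}$ term is balanced against $\eps$ using $\log(1/\gamma') = O(k \log(d/k) + \log(1/\gamma))$, the stated sample complexity $n > C k \log(d/k)/\eps^2$ suffices for constant failure probability.
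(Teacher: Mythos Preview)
Your overall structure---build a $(1/2)$-net $\cC$ of $\cU_k$, apply a one-dimensional robust estimator directionwise, union bound over $\cC$, then recover via the net argument and \Cref{fact:sparseTruncation}---is exactly the paper's approach. The only difference is the one-dimensional estimator you invoke, and that is where your argument breaks down.

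You claim the Lugosi--Mendelson trimmed mean $f_\tau$ achieves $|f_\tau(S_v)-v^T\mu| = O(\sigma\eps + \sigma\sqrt{\log(1/\gamma')/n})$ for any distribution with finite variance $\sigma^2$. This is false: for a general finite-variance distribution the trimmed mean only guarantees $O(\sigma\sqrt{\eps})$ (take $t=2$ in the bound you quoted from \Cref{thm:sample-comp-bdd-mmnts}). Even if you exploit all Gaussian moments and optimize over $t$, the best you extract from the trimmed-mean analysis is $\tilde{O}(\sigma\eps)$, with an unavoidable $\sqrt{\log(1/\eps)}$ factor, so you would prove only $\|\widehat\mu-\mu\|_2 = \tilde O(\eps\sqrt{\|\Sigma\|_2})$, not the stated $O(\eps\sqrt{\|\Sigma\|_2})$.

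The paper avoids this by using the \emph{median} of $S_v$ instead of the trimmed mean. For a one-dimensional Gaussian $\cN(v^T\mu, v^T\Sigma v)$, the sample median of an $\eps$-corrupted sample satisfies $|\mathrm{Median}(S_v)-v^T\mu|\leq O(\eps\sqrt{v^T\Sigma v})$ with probability $1-\exp(-\Omega(n\eps^2))$ (this is standard; see, e.g., \cite[Lemma~3.3]{LaiRV16}). The exponential tail gives the same union bound over $|\cC|\leq\binom{d}{k}5^k$ directions when $n\geq Ck\log(d/k)/\eps^2$, and the rest of your argument goes through verbatim with $\widehat\mu_v := \mathrm{Median}(S_v)$ in place of $f_\tau(S_v)$.
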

\begin{proof}
We will use the same notation in as the proof of \Cref{thm:sample-comp-bdd-mmnts}.
For each $v \in \cC$, define $\widehat{\mu}_v := \text{Median} (S_v)$.
Standard results (see, for example, \cite[Lemma 3.3]{LaiRV16}) imply that with probability $1 - \exp(-  n\eps^2)$,
$|v^T \mu - \widehat{\mu}_v| \leq \delta$, where  $\delta = O(\eps \sqrt{\|\Sigma\|_2})$.
Let $\cE$ be the event where for each $v \in \cC$, $|\widehat{\mu}_v - \mu^Tv| \leq \delta$.
By a union bound, $\cE$ happens with probability at least $ 1 - \gamma $, if $n \geq C ( k \log (d/k) + \log(1/\gamma))/\eps^{2})$ for a large enough constant $C$.
Following the same argument as the proof of \Cref{thm:sample-comp-bdd-mmnts}, we get the desired result.
\end{proof}

We now state the following folklore results for the information-theoretic lower bound. Although we present the results for univariate distributions, it is easy to see that the same lower bounds also hold for $k$-sparse distributions for any $k\geq 1$.

\begin{fact}[Information-theoretic Lower Bounds]
The following hold:
\begin{itemize}
    \item There exist univariate distributions $D_1,D_2$ such $D_2 = (1-\eps)D_1 + \eps N$ for some $N$, the $t$-th moments of both $D_1,D_2$ are at most $M$, and $|\E_{X \sim D_1}[X] - \E_{X \sim D_2}[X]| = \Omega(M^{1/t}\eps^{1-1/t})$.
    \item There exist Gaussian distributions $D_1=\cN(\mu_1,\sigma^2),D_2 = \cN(\mu_2,\sigma^2)$ such that  $|\mu_1-\mu_2| = \Omega(\eps \sigma)$ and $(1-\eps)D_1+\eps N_1 = (1-\eps)D_2+\eps N_2$.
\end{itemize}

\end{fact}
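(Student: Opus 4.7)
My plan is to establish each bullet point by an explicit two-distribution construction, in the standard folklore style.

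For the first bullet point, the plan is to give a pair of discrete distributions differing by an $\eps$-corruption. Take $D_1 = \delta_0$, the point mass at $0$, and let $D_2 = (1-\eps)\delta_0 + \eps \delta_R$ with $R := M^{1/t}\eps^{-1/t}$. By construction $D_2$ is of the form $(1-\eps)D_1 + \eps N$ for the distribution $N = \delta_R$, so the corruption condition is immediate. The $t$-th central moment of $D_1$ is $0$, and the $t$-th (raw) moment of $D_2$ is $\eps R^t = \eps \cdot M/\eps = M$; the $t$-th central moment of $D_2$ about its mean $\eps R$ is at most a constant multiple of $M$ by a standard centering argument (expand $(X-\E X)^t$ and bound using $|\E X| \leq R$). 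Finally, the mean difference is $|\E_{D_1}[X] - \E_{D_2}[X]| = \eps R = M^{1/t}\eps^{1-1/t}$, which gives the claimed lower bound.

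For the second bullet point, the plan is a coupling argument between two shifted Gaussians with identical variance. Set $D_1 = \cN(0,\sigma^2)$ and $D_2 = \cN(c\eps\sigma,\sigma^2)$ for a sufficiently small absolute constant $c>0$. The key estimate is that for small shifts, $\mathrm{TV}(D_1,D_2) \leq c\eps/\sqrt{2\pi} + O(\eps^2) \leq \eps$, which follows directly from the closed-form expression $\mathrm{TV}(\cN(\mu_1,\sigma^2),\cN(\mu_2,\sigma^2)) = 2\Phi(|\mu_1-\mu_2|/(2\sigma)) - 1$ together with $\Phi'(0) = 1/\sqrt{2\pi}$. Given $\mathrm{TV}(D_1,D_2) \leq \eps$, one can construct a common mixture distribution $P$ with $P \geq (1-\eps)\max(D_1,D_2)$ pointwise and $\int P = 1$ (which is possible precisely because $\int \max(D_1,D_2)\,dx = 1 + \mathrm{TV}(D_1,D_2) \leq 1 + \eps \leq 1/(1-\eps)$). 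Defining $N_i := (P - (1-\eps)D_i)/\eps$ for $i=1,2$ yields valid probability densities $N_1, N_2$ satisfying $(1-\eps)D_1 + \eps N_1 = P = (1-\eps)D_2 + \eps N_2$, and by construction $|\mu_1 - \mu_2| = c\eps\sigma = \Omega(\eps\sigma)$.

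I expect no serious obstacles: both constructions are textbook, and the only part requiring a tiny bit of care is verifying that the $t$-th central moment (rather than the $t$-th raw moment) of $D_2$ in part one is $O(M)$, which is handled by noting $|\E_{D_2}[X]| = \eps R \leq R$ and applying $(a+b)^t \leq 2^{t-1}(a^t + b^t)$ pointwise before taking expectation. The upshot is that both information-theoretic lower bounds reduce to exhibiting a single pair of distributions, and the construction in each case mirrors the lower-bound constructions standard in the robust statistics literature (e.g., for bounded moments the two-point mass, and for Gaussians the shifted pair coupled through a common $(1-\eps)$-fraction of mass).
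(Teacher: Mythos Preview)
Your proposal is correct and matches the paper's approach closely: the paper also takes $D_1 = \delta_0$ and $D_2 = (1-\eps)\delta_0 + \eps\,\delta_{M^{1/t}\eps^{-1/t}}$ for the first part, and for the second part likewise invokes the total-variation bound $d_{\mathrm{TV}}(\cN(\mu_1,\sigma^2),\cN(\mu_2,\sigma^2)) = \Theta(|\mu_1-\mu_2|/\sigma)$ to argue that an adversary can make the two Gaussians indistinguishable. The only minor difference is that the paper computes the central $t$-th moment of $D_2$ directly as $(1-\eps)\eps^{t-1} + (1-\eps)^t \leq \eps + (1-\eps) = 1$ (after scaling to $M=1$), whereas your centering via $(a+b)^t \leq 2^{t-1}(a^t+b^t)$ would introduce a $2^t$ factor---harmless here, since halving $R$ absorbs it and still leaves a universal-constant $\Omega(M^{1/t}\eps^{1-1/t})$ separation.
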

\begin{proof}
By scaling, we focus on the $M=1$ case. Let $D_1$ be the Dirac delta at zero and let $D_2 = (1-\eps)D_1 + \eps N$, where $N$ has all its mass at $\eps^{-1/t}$. Then, the means are indeed separated by $\eps^{1-1/t}$. For the $t$-th moment of $D_2$ we have that $\E_{X \sim D_2}[(X-\mu_2)^t] = (1-\eps)\eps^{t-1} + \eps(\eps^{-1/t} - \eps^{1-1/t})^t\leq  \eps + \eps (\eps^{-1/t}(1-\eps))^t\leq  \eps + (1-\eps) \leq 1$.

For the Gaussian distributions, the claim is based on the fact that $\mathrm{dtv}(\cN(\mu_1,\sigma^2),\cN(\mu_2,\sigma^2))=\Theta(\eps)$ whenever $|\mu_1-\mu_2| = \sigma \eps$, thus an additive adversary can make the two distributions look the same. A version of the lower bound can also be found in 
\cite[Observation 1.4]{LaiRV16} and 
\cite[Theorem 2.2]{chen2018}.

\end{proof}

\section{Bit Complexity Analysis}
We briefly describe how our algorithm can be implemented in the standard RAM model of  computation. We will focus on \Cref{thm:main-poincare-informal} for this note and a similar remark applies to \Cref{thm:main-gaussian-informal}.
Let $R = C M^{1/t}\sqrt{d/\eps}$ for some sufficiently large constant $C$. 
We assume that the bound $M$ on the $t$-th moments of the distribution of inliers has bit complexity bounded  by $\poly(d)$. 
It is a standard fact that there exists a na\"ive estimation step that uses $O(\log(1/\tau))$ samples and finds a $\hat{\mu}$ such that with probability $1-\tau$, $\|\hat{\mu} - \mu\|_{2,k} = O(R)$. Because of the moment bounds on the distribution, we can ignore all samples for which $\| X - \hat{\mu} \|_2> R$, and this will translate to having and additional $\eps$ fraction of corruptions. 

Let $D$ be the original distribution. To implement the algorithm in the RAM model, every coordinate of all samples gets rounded to precision  $\eta \leq \eps M^{1/t}/\sqrt{d}$. 
Let $D'$ be the distribution of these rounded samples.
Since the coordinates of these samples lie in a range of $O(R)$, by the naive-filtering step of the previous paragraph, a word of $O(\log(R/\eta))$ bits is enough for storing each coordinate. The rounding procedure can be treated as an additive noise of magnitude $\eta$ to every coordinate of the inlier distribution. 
This can change its mean in $(2,k)$-norm by at most $\eta \sqrt{k}$. Thus, to estimate the mean of $D$, it suffices to estimate the mean of the rounded distribution $D'$. 
Moreover, $D'$  continues to satisfy appropriate tail bounds, worsened by at most constant factors. 
Additionally, it is not too hard to show that if the bit complexity of the SoS proof of the moment upper bound of $D$ is polynomial, there is an SoS proof of a moment upper bound of $D'$ that is worse by a multiplicative constant factor and also has polynomial bit complexity. 
Since these are the only requirements that the algorithm needs for guaranteed correctness, the algorithm on the rounded data set would yield an estimate that is within $O( \eps M^{1/t})$ of the true mean in $(2,k)$-norm, as desired.

Having implemented the rounding step for the samples, it is guaranteed that the bit complexity of the system of \Cref{def:axioms} is at most some $\poly((md)^{t^2})$. This enables the SoS algorithm to terminate after $\poly((md)^{t^2})$ time with an approximate pseudodistribution satisfying the constraints with slack $\tau = 2^{-\poly(tB)}$, where $B$ is the bit complexity of the SoS proofs. Finally, because of  that slack, there may be some error accumulated in our analysis. The error incurred will be $\tau 2^{O(B)}$ which is negligible because of our choice of $\tau$.

\end{document}